\newcommand{\Oh}{\mathcal{O}}
\newcommand{\OhOp}[1]{\Oh\mathopen{}\mathclose\bgroup\left( #1 \aftergroup\egroup\right)}
\DeclareMathOperator{\poly}{{\rm poly}}
\newcommand{\FPT}{{\sf FPT}\xspace}
\newcommand{\NP}{{\sf NP}\xspace}
\newcommand{\NPh}{\hbox{{\sf NP}-hard}\xspace}
\newcommand{\Wh}[1]{$\mathsf{W[#1]}$-hard\xspace}
\newcommand{\prob}[3]{
\begin{center}
\begin{tabularx}{\textwidth}{lX}
	\multicolumn{2}{l}{#1}\\
	{\bf Input:}&{#2}\\
	{\bf Find:}&{#3}
\end{tabularx}
\end{center}
}
\newcommand{\mytodo}[2]{\todo[size=\tiny, color=#1!50!white]{#2}\xspace}
\newcommand{\myinlinetodo}[2]{\todo[size=\small, color=#1!50!white, inline]{#2}\xspace}
\newcommand{\dkcom}[1]{\mytodo{green}{#1}}
\newcommand{\dkinline}[1]{\myinlinetodo{green}{#1}}
\newcommand{\mkcom}[1]{\mytodo{blue}{#1}}
\newcommand{\mkinline}[1]{\myinlinetodo{blue}{#1}}
\newcommand{\hy}{\hbox{-}\nobreak\hskip0pt}
\newcommand{\suppo}{\textrm{supp}}
\newcommand{\conv}{\textrm{conv}}
\newcommand{\fract}{\mathrm{frac}}
\newcommand{\BB}{\mathcal{B}}
\newcommand{\CC}{\mathcal{C}}
\newcommand{\DD}{\mathcal{D}}
\newcommand{\RR}{\mathcal{R}}
\newcommand{\SSS}{\mathcal{S}}
\newcommand{\JJ}{\mathcal{J}}
\newcommand{\II}{\mathcal{I}}
\newcommand{\LL}{\mathcal{L}}
\DeclarePairedDelimiter\ceil{\lceil}{\rceil}
\DeclarePairedDelimiter\floor{\lfloor}{\rfloor}
\DeclareMathOperator{\leftCritical}{\operatorname{left}}
\DeclareMathOperator{\rightCritical}{\operatorname{right}}
\DeclareMathOperator{\pred}{\operatorname{pred}}
\DeclareMathOperator{\successor}{\operatorname{succ}}
\def\ve#1{\mathchoice{\mbox{\boldmath$\displaystyle\bf#1$}}
{\mbox{\boldmath$\textstyle\bf#1$}}
{\mbox{\boldmath$\scriptstyle\bf#1$}}
{\mbox{\boldmath$\scriptscriptstyle\bf#1$}}}
\newcommand\vea{{\ve a}}
\newcommand\veb{{\ve b}}
\newcommand\vecc{{\ve c}}
\newcommand\veC{{\ve C}}
\newcommand\ved{{\ve d}}
\newcommand\vece{{\ve e}}
\newcommand\veg{{\ve g}}
\newcommand\veh{{\ve h}}
\newcommand\vel{{\ve l}}
\newcommand\veL{{\ve L}}
\newcommand\ven{{\ve n}}
\newcommand\veo{{\ve o}}
\newcommand\vep{{\ve p}}
\newcommand\veq{{\ve q}}
\newcommand\ver{{\ve r}}
\newcommand\ves{{\ve s}}
\newcommand\vet{{\ve t}}
\newcommand\veu{{\ve u}}
\newcommand\vev{{\ve v}}
\newcommand\vew{{\ve w}}
\newcommand\vex{{\ve x}}
\newcommand\vey{{\ve y}}
\newcommand\vez{{\ve z}}
\newcommand\vealpha{{\ve \alpha}}
\newcommand\vegamma{{\ve \gamma}}
\newcommand\vedelta{{\ve \delta}}
\newcommand\velambda{{\ve \lambda}}
\newcommand\vezeta{{\ve \zeta}}
\newcommand\vemu{{\ve \mu}}
\newcommand\vezero{{\ve 0}}
\def\R{\mathbb{R}}
\def\Z{\mathbb{Z}}
\def\N{\mathbb{N}}
\def\G{\mathcal{G}}
\def\Ker{\textrm{Ker}}
\def \la {\langle}
\def \ra {\rangle}
\theoremstyle{plain}
\newtheorem{theorem}{Theorem}
\newtheorem{lemma}[theorem]{Lemma}
\newtheorem{proposition}[theorem]{Proposition}
\newtheorem{corollary}[theorem]{Corollary}
\newtheorem*{fact*}{Fact}
\theoremstyle{definition}
\newtheorem{definition}{Definition}
\newtheorem*{example*}{Example}
\newtheorem*{rep@theorem}{\rep@title}
\newcommand{\newreptheorem}[2]{%
\newenvironment{rep#1}[1]{%
 \def\rep@title{#2 \ref{##1}}%
 \begin{rep@theorem}}%
 {\end{rep@theorem}}}
\theoremstyle{remark}
\newtheorem*{claim*}{Claim}
\begin{document}
  \title{Multitype Integer Monoid Optimization and Applications}

  \author{Du\v{s}an Knop\thanks{Algorithmics and Computational Complexity, Faculty~IV, TU Berlin, and Department of Theoretical Computer Science, Faculty of Information Technology, Czech Technical University in Prague, Prague, Czech Republic. \texttt{dusan.knop@fit.cvut.cz}. Partially supported by DFG, project ``MaMu'', NI 369/19 and by the OP VVV MEYS funded project CZ.02.1.01/0.0/0.0/16\_019/0000765 "Research Center for Informatics".}
     \and Martin Kouteck{\'y}\thanks{Technion - Israel Institute of Technology, Haifa, Israel, and Charles University, Prague, Czech Republic. \texttt{koutecky@kam.mff.cuni.cz}. Partially supported by a postdoctoral fellowship at the Technion funded by the Israel Science Foundation grant 308/18, by Charles University project UNCE/SCI/004, and by the project 17-0914-2S of GA ČR.}
     \and Asaf Levin\thanks{Technion - Israel Institute of Technology, Haifa, Israel. \texttt{levinas@ie.technion.ac.il}. Partially supported by a grant from the GIF, the German-Israeli Foundation for Scientific Research and Development (grant number I-1366-407.6/2016), and Israel Science Foundation grant 308/18.}
     \and Matthias Mnich\thanks{Universit{\"a}t Bonn, Bonn, Germany. \texttt{mmnich@uni-bonn.de}. Supported by DFG grant MN 59/4-1.}
     \and Shmuel Onn\thanks{Technion - Israel Institute of Technology, Haifa, Israel. \texttt{onn@ie.technion.ac.il}. Partially supported by the Dresner chair and Israel Science Foundation grant 308/18.}
  }

  \pagenumbering{gobble}
  \maketitle
  \thispagestyle{empty}

\begin{abstract}
  Configuration integer programs (IP) have been key in the design of algorithms for \NPh high-multiplicity problems since the pioneering work of Gilmore and Gomory [Oper. Res., 1961].
  Configuration IPs have one variable for each possible \emph{configuration}, which describes a placement of items into a location, and whose value corresponds to the number of locations with that placement.
  In high multiplicity problems items come in \emph{types}, and are represented \emph{succinctly} by a vector of multiplicities; solving the configuration IP then amounts to deciding whether the input vector of multiplicities of items of each type can be decomposed into a given number of configurations.

  We make this typically implicit notion explicit by observing that the set of all input vectors which can be decomposed into configurations forms a \emph{monoid} of configurations, and the problem corresponding to solving the configuration IP is the \textsc{Monoid Decomposition} problem.
  Then, motivated by applications, we enrich this problem in two ways.
  First, in certain problems each configuration additionally has an objective value, and the problem becomes an \emph{optimization} problem of finding a ``best'' decomposition under the given objective.
  Second, there are often different types of configurations derived from different \emph{types of locations}.
  The resulting problem is then to optimize over decompositions of the input multiplicity vector into configurations of several types, and we call it \textsc{Multitype Integer Monoid Optimization}, or simply MIMO.

  We develop fast exact (exponential-time) algorithms for various MIMO with few or many location types and with various objectives.
  Our algorithms build on a novel proximity theorem which connects the solutions of a certain configuration IP to those of its continuous relaxation.
  We then cast several fundamental scheduling and bin packing problems as MIMOs, and thereby obtain new or substantially faster algorithms for them.

  We complement our positive algorithmic results by hardness results which show that, under common complexity assumptions, the algorithms cannot be extended into more relaxed regimes.
\end{abstract}

\medskip
\noindent
\textbf{Keywords.} Integer programming, configuration IP, proximity theorems, scheduling.

\clearpage
\pagebreak
\pagenumbering{arabic}

\section{Introduction}
In this paper we introduce a very general polyhedral optimization problem related to configuration integer programs (IPs), such as those which often appear in the design of algorithms for scheduling, packing, facility location, and other problems.
\dkcom{can we add some surveys/examples of use}

For motivation and illustration, consider the \emph{makespan minimization on identical machines} problem (also denoted $P || C_{\max}$~\cite{LawlerEtAl1993}).
There, we have $m$ identical machines and $n$ jobs of sizes $p_1, \dots, p_n \in \N$, and the goal is to assign jobs to machines such that the maximum total size of jobs at each machine, the \emph{makespan}, is minimized.
A natural input encoding does not list the job sizes one by one; rather, the $n$ jobs are classified into $d \ll n$ \emph{job types} and the input specifies the size $p_j \in \N$ and the number $n_j$ of jobs of type $j$.
The so-called \emph{high multiplicity} encoding~\cite{HochbaumShamir1991} thus gives a~size vector $\mathbf{p} = (p_1,\hdots,p_d)$ and a \emph{multiplicity vector} $\ven = (n_1, \dots, n_d)$ with $\|\ven\|_1 = n_1 + \cdots + n_d = n$.
Observe now that, for any schedule, considering a single machine defines a vector $\vex = (x_1,\dots,x_d)$, with $x_j$ being the number of jobs of type $j$ assigned to this machine; such a vector $\vex$ is called a~\emph{configuration}.

Already in 1961, Gilmore and Gomory~\cite{GilmoreGomory1961} discovered (in the context of \emph{Cutting Stock}) that the $P||C_{\max}$ problem (and many others) can be rephrased in the following way.
\dkcom{possibly: mention that Cutting Stock is HM variant of bin packing}
Let $\mathbb{N}$ be the set $\{ 0,1,\ldots \}$.
Fix a target makespan~$T$; the optimal makespan can be found using binary search in polynomial time, as it is an integer in the range $1,\dots,p_{\max}\cdot n$ (cf. \emph{dual approximation}~\cite{HochbaumShmoys1987}).
Let $\CC_T = \left\{\vex \in \N^d \mid \vep \vex \leq T \right\}$ denote the set of configurations of size at most $T$.
Then, deciding whether the given instance admits a schedule of makespan at most $T$ amounts to deciding whether $\ven$ can be written as a sum of $m$ configurations from~$\CC_T$, i.e., $\ven = \sum_{i=1}^m \vex^i$, with $\vex^i \in \mathcal C_T$ for all $i = 1, \ldots, m$.
Viewed as an instance of \emph{Integer Programming}, we have an integral variable $\lambda_{\vex} \in \N$ for each configuration $\vex \in \CC_T$, and we ask for a solution with $\sum_{\vex \in \CC_T} \lambda_{\vex} = m$ and $\sum_{\vex \in \CC_T} \lambda_{\vex} \cdot \vex = \ven$.
This formulation is called the \emph{configuration IP}.

Our starting point is the set $M(\mathcal C_T, m) = \left\{ \sum_{\vex \in \mathcal C_T} \lambda_{\vex} \vex \mid \sum_{\vex \in \mathcal C_T} \lambda_{\vex} = m, \velambda \in \N^{\mathcal C_T} \right\}$, which we call the \emph{$m$-monoid of $\mathcal C_T$}.
The corresponding problem becomes deciding whether $\ven \in M(\mathcal C_T, m)$, and we call it the \textsc{Monoid Decomposition} problem.
However, it is not rich enough to encode many other relevant problems.
Consider the \emph{minimum sum of weighted completion times on uniformly related machines} scheduling problem, or $Q || \sum w_j C_j$ in the standard notation~\cite{LawlerEtAl1993}.
There, each machine~$i$ has a rational \emph{speed} $s_i \in (0,1]$ and the time to process a job of type $j$ on machine $i$ is $p_j / s_i$.
Again, taking the high multiplicity perspective makes it natural to deal with $\tau$ \emph{machine types} with~$\mu^i$ being the number of machines of type $i$, where machines of a common type have a common speed.
Moreover, each job has a \emph{weight} $w_j \in \N$.
The time a job $j$ finishes is called its \emph{completion time} $C_j$, and the goal is to minimize $\sum w_j C_j$, where the sum ranges over all jobs.

In order to approach this problem from the perspective of monoid decomposition, we will enrich our model in two ways.
First, we introduce different \emph{types of configurations} ($C_T$s') which correspond to the different types of machines.
Second, each configuration now has a certain \emph{cost} corresponding to the contribution of a given machine to the total cost of a schedule.
The problem then becomes finding a decomposition of the input multiplicity vector $\ven$ into configurations (if it exists) such that \begin{enumerate*}[label={\alph*)},font={\bfseries}]
  \item for all $i$ there are exactly $\mu^i$ configurations of type $i$, and,
  \item the total cost of all configurations is minimized.
\end{enumerate*}
In conclusion, we study the following problem:
\dkcom{Maybe we would like to mention $X^1, \ldots, X^\tau$ before the MIMO definition}

\prob{\textsc{Multitype Integer Monoid Optimization} (MIMO)}
{Dimension $d \in \N$, sets $X^1, \dots, X^\tau \subseteq \Z^d$, objective functions $f^1, \dots, f^\tau \colon \Z^d \to \Z$, numbers $\mu^1, \dots, \mu^\tau \in \N$, and a target vector $\ven \in \Z^d$.}
{$\min \big\{\sum_{i=1}^\tau \sum_{\vex \in X^i} f^i(\vex) \cdot \lambda_{\vex}^i \mid {\sum_{i=1}^\tau \sum_{\vex \in X^i} \vex \cdot \lambda_{\vex}^i = \ven},\, \, {\sum_{\vex \in X^i} \lambda_{\vex}^i = \mu^i},\, \lambda_{\vex}^i \in \N \, \forall \vex \in X^i \, \forall i \in \{1,\hdots,\tau\}$ $\big\}$.}

\paragraph{Parameterized Complexity.}
We are dealing with \NPh problems; thus, we are interested in tractable special cases, whose superpolynomial dependence on the input is confined to some small \emph{parameter}.
A natural framework for our study is the theory of \emph{Parameterized Complexity}: a typical input of a scheduling problem (and consequently our MIMO problem) has many natural parameters, such as the numbers of job types, machine types, the largest size, or, more generally, the description complexity of the set of configurations (i.e., the number of necessary linear inequalities and additional variables, or size of coefficients).
Let $I$ be an instance and $p$ be its parameter.
In parameterized complexity, we aim to obtain algorithms with run time $\phi(p) \cdot |I|^{\Oh(1)}$ for some computable function $\phi$, called \emph{fixed-parameter algorithms} (or \emph{\FPT algorithms}).
In the case when the existence of an \FPT algorithm with parameter $p$ is unlikely, we show the problem to be \emph{\Wh{1} parameterized by $p$}.
For further background on parameterized complexity, we refer to the textbook of Cygan et al.~\cite{CyganEtAl2015}.

\subsection{Our contributions}
Our contribution is two-fold.
First, we study the MIMO problem and provide several fixed-parameter algorithms and hardness results, delineating the complexity landscape with regard to the most natural parameters.
Second, to showcase the usefulness and versatility of our approach, we apply our algorithms to high multiplicity problems in scheduling, bin packing, and surfing, a general model of facility location and multicommodity flows.
Our main focus is on scheduling problems, where we show how to model several fundamental scheduling problems as MIMO, and by applying the presented algorithms for solving MIMO we obtain fast fixed-parameter algorithms for these scheduling problems.

\subsubsection{Complexity of MIMO}
We postpone the precise definitions and give informal statements of our main results.
Typically, the sets $X^1, \dots, X^\tau$ are represented succinctly as (projections of) integer points of polyhedra, as in the example of $P||C_{\max}$ when $\CC_T = \{\vex \in \N^d \mid \vep \vex \leq T\}$.
Thus, we let $\|X\|_\infty$ denote the largest coefficient in such a representation of the sets $X^i$.
In the example above we have $\|X\|_\infty = \|\vep\|_\infty = p_{\max}$, that is, the largest size.
By $\la \cdot \ra$ we denote the binary encoding length of numbers; for vectors and matrices it means the total encoding length.
A function $f\colon \Z^d \to \Z$ is \emph{separable convex} if it is possible to write $f(\vex) = \sum_{i=1}^d f_i(x_i)$, where each $f_i\colon \Z \to \Z$ is convex, and $f$ is \emph{fixed-charge} if there is a constant $c\in\mathbb N$ for which $f(\vex) = c$ for every non-zero $\vex$ and $f(\vezero) = 0$.
The fixed-charge objective is useful in modeling \emph{bin packing}-type problems, where opening a new bin incurs a fixed cost.

\begin{theorem}[informal]
\label{thm:implicitMIMO}
  There is an algorithm that solves any succinctly represented MIMO system $\SSS = (d,X^1,\hdots,X^\tau, f^1,\dots,f^\tau,\mu^1,\dots,\mu^\tau,\ven)$ with the largest coefficient $\|X\|_\infty$ and $N=\sum_{i=1}^d \mu^i$
  \begin{enumerate}
    \item\label{repthm:implicitMIMO:micp} in time $\phi(d, N) \la \|X\|_\infty, \ven \ra^{\Oh(1)}$ when each $f^i$ is convex or fixed-charge;
    \item\label{repthm:implicitMIMO:ch} in time $\phi(d, N) (\|X\|_\infty + \la \ven \ra)^{\Oh(1)}$ when each $f^i$ is concave;
    \item\label{repthm:implicitMIMO:hugenfold} in time $\phi(d, \|X\|_\infty) (\tau +\la \ven, N \ra)^{\Oh(1)}$ when each $f^i$ is linear or separable convex;
    \item\label{repthm:implicitMIMO:gr} in time $\phi(d,\tau)(\|X\|_\infty + \la \ven, N \ra)^{\Oh(1)}$ when each $f^i$ is linear or fixed-charge.
  \end{enumerate}
  for a computable function $\phi$ which is single-exponential in the parameters for parts~\ref{thm:implicitMIMO:micp}--\ref{thm:implicitMIMO:huge} and double-exponential for part~\ref{thm:implicitMIMO:gr}.
\end{theorem}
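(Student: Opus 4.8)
The plan is to regard a MIMO system $\SSS$ as a (succinctly represented) configuration IP and to connect its integer optima to the optima of its continuous relaxation through a proximity theorem, which I expect to be the technical heart of the argument. Relaxing each $\lambda^i_\vex$ to $\R_{\ge0}$ turns the objective into a separable function over the polytope $P=\{\velambda : \sum_i\sum_{\vex\in X^i}\vex\lambda^i_\vex=\ven,\ \sum_{\vex\in X^i}\lambda^i_\vex=\mu^i,\ \velambda\ge\vezero\}$; a basic optimum of such a program is supported on at most $d+\tau$ configurations, and — separating over each $\conv(X^i)$ from the succinct data — it is computable in polynomial time when every $f^i$ is linear or separable convex, and reducible to polynomially many such programs when every $f^i$ is concave. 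The proximity theorem then asserts that there is an integer optimum whose per-type aggregates $\bigl(\sum_{\vex\in X^i}\vex\lambda^i_\vex\bigr)_i$ lie within $\ell_1$-distance $g(d)$ of those of a fixed continuous optimum, for a single-exponential $g$; the proof is an iterative, Steinitz-type rounding of the difference of a fractional and an integral optimum, exploiting that the coupling matrix has only $d+\tau$ global rows, so the fractional part of any vertex lives in a $g(d)$-bounded box in the relevant projection.

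With proximity in hand, in each regime it remains to solve a bounded \emph{correction} subproblem: a continuous optimum and its $\Oh(d+\tau)$ support configurations are known, and one must redistribute $\Oh(g(d))$ items and recompute, per type, the cheapest configurations realizing the corrected aggregate — a problem in which both the number of distinct configurations in play and, after subtracting the known fractional part, the relevant integer magnitudes are bounded by the parameters.

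For parts~\ref{repthm:implicitMIMO:micp} and~\ref{repthm:implicitMIMO:ch} the parameter is $(d,N)$ with $N=\sum_i\mu^i$, so every feasible solution uses at most $N$ configurations in total (and $\tau\le N$ after discarding empty types). For convex $f^i$ we cast the correction as a mixed-integer \emph{convex} program with $\phi(d,N)$ integer variables in bounded dimension and solve it by a Lenstra-type algorithm for convex integer minimization, in time $\phi(d,N)\la\|X\|_\infty,\ven\ra^{\Oh(1)}$. A fixed-charge objective has a useless continuous relaxation, so it is handled directly: its value equals $c^i$ times the number of non-zero configurations of type $i$, so we guess that number for each type ($2^{\Oh(N)}$ choices, as the numbers sum to at most $N$) and solve the resulting fixed-dimension feasibility IP. For concave $f^i$ we reduce, again via proximity and the bound $N$, to minimizing a concave separable function over the integer points of a polytope in bounded dimension; enumerating the (in fixed dimension, polynomially many) vertices of its integer hull and evaluating $f$ there yields the bound, with the mild value-dependence on $\|X\|_\infty$ reflecting that concave integer minimization offers no polynomial-time convex solver — hence $(\|X\|_\infty+\la\ven\ra)^{\Oh(1)}$ in place of $\la\|X\|_\infty,\ven\ra^{\Oh(1)}$.

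For part~\ref{repthm:implicitMIMO:hugenfold} bounded $d$ and $\|X\|_\infty$ make each $X^i$ of bounded cardinality, so MIMO is literally a \emph{huge $n$-fold} IP: $\tau$ blocks, block $i$ holding the variables of $X^i$ with binary-encoded right-hand side $\mu^i$, coupled by the $d$ global equations. Invoking the huge-$n$-fold algorithm for separable convex objectives — whose guarantee rests on exactly this kind of proximity — gives time $\phi(d,\|X\|_\infty)(\tau+\la\ven,N\ra)^{\Oh(1)}$, as the brick size is controlled by $d$ and $\|X\|_\infty$. Finally, for part~\ref{repthm:implicitMIMO:gr}, with parameter $(d,\tau)$ proximity implies that a near-optimal integer solution puts almost all its mass on $\Oh(d+\tau)$ ``main'' configurations with multiplicities essentially those of a continuous optimum, plus an $\Oh(g(d))$ correction; since these configurations live in coefficient-unbounded sets $X^i$ we cannot enumerate them, but must guess their residue classes and the partition of the $d$ global equations among the $\tau$ families, which costs a double-exponential $\phi(d,\tau)$ and, per guess, one block-structured IP solved by Graver-basis / $n$-fold methods in time $(\|X\|_\infty+\la\ven,N\ra)^{\Oh(1)}$. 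The fixed-charge objective is the recurring obstacle — neither convex nor concave, it defeats proximity and convex minimization alike — so wherever it appears we additionally enumerate the set of ``open'' configuration families and minimize a convex surrogate within each; in part~\ref{repthm:implicitMIMO:gr} this extra enumeration is precisely what raises the parameter dependence from single- to double-exponential.
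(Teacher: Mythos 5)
Your proposal identifies some correct ingredients but misreads the paper's structure in two places and contains a real gap in the third.

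For parts~\ref{repthm:implicitMIMO:micp} and~\ref{repthm:implicitMIMO:ch}, proximity plays no role in the paper. The paper reduces $\SSS$ to a huge $N$-fold IP via Lemma~\ref{lem:MIMO_as_nfold}, observes that its dimension is $N(d+D)$, and then directly applies the Dadush--Vempala convex IP algorithm (for convex $f^i$) or Cook et al.'s vertex enumeration of the integer hull (for concave $f^i$). The fixed-charge case is handled by guessing, for each $i$, how many configurations of type $i$ are non-zero (at most $N^N$ guesses) and then solving a pure feasibility instance. Your framing of this as a proximity-then-correction argument is unnecessary and not what the paper does; you arrive at the right tools but through a detour that, if taken literally, would require you to first prove a proximity statement that the paper never establishes in this regime.

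For part~\ref{repthm:implicitMIMO:hugenfold} you correctly identify proximity and the huge $n$-fold structure, but you miss the central technical obstacle the paper's proximity theorem (Theorem~\ref{thm:proximity}) is built around: the bound must be stated for a \emph{conf-optimal} fractional solution (one obtained from an optimal solution of the Configuration LP with small support), not for an arbitrary optimum of the continuous relaxation, because an arbitrary optimum of the Huge CP can have linearly many fractional bricks and the Eisenbrand--Weismantel cycle argument then fails. Accommodating this requires the auxiliary objective $\hat f$, the notion of configurable cycles, the rise/fall lifting of $\vex^*$ and $\vez^*$, and bounding the size of a scalable decomposition via Egyptian fractions — none of which your ``iterative Steinitz-type rounding of the difference of a fractional and an integral optimum'' captures. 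Without these, the claimed proximity statement is not established, and it is not even clear it is true for a general fractional optimum.

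For part~\ref{repthm:implicitMIMO:gr} the gap is a genuine one. The paper does not use any proximity theorem here — proximity as in Theorem~\ref{thm:proximity} carries a factor exponential in $\|E\|_\infty$, which is exactly what you must avoid when $\Delta$ is large. Instead the paper extends the Goemans--Rothvo{\ss} Structure Theorem to the multitype setting (Lemma~\ref{lem:structure}): for each type one precomputes a polynomial-sized set $Y^i$ of ``important'' configurations such that any decomposition of $\ven$ can be rewritten with almost all support in $\bigcup_i Y^i$, guesses small subsets $Z^i\subseteq Y^i$ plus counts $\bar\mu^i$, and solves a Kannan-type ILP in fixed dimension. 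Your proposed alternative — ``guess their residue classes and the partition of the $d$ global equations among the $\tau$ families'' — is not substantiated: you do not say which residue classes, modulo what, or why there are only $\phi(d,\tau)$ of them, and it is not clear that residue information suffices to recover the cost of a configuration in a coefficient-unbounded $X^i$. Moreover, your explanation of where the double-exponential comes from (extra enumeration for fixed-charge) is wrong: in the paper's proof the double-exponential arises already for the linear objective from the number $\prod_i (S^i)^{\Oh(2^{2(d+D)})}$ of subset guesses, while the fixed-charge case is absorbed into the ILP constraints~\eqref{eq:mimo:gr:nozero}--\eqref{eq:mimo:gr:zero} at no extra enumerative cost.

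In short: the proposal is correct in spirit for part~\ref{repthm:implicitMIMO:hugenfold} but missing its key technical lemma; it takes an unnecessarily roundabout route for parts~\ref{repthm:implicitMIMO:micp}--\ref{repthm:implicitMIMO:ch}; and it proposes a different and unjustified method for part~\ref{repthm:implicitMIMO:gr} in place of the Structure-Theorem argument that actually carries the proof.
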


The proofs of parts~\ref{repthm:implicitMIMO:micp} and~\ref{repthm:implicitMIMO:ch} of Theorem~\ref{thm:implicitMIMO} are relatively straightforward applications of known results from integer programming.
Our most significant technical contribution is the proof of part~\ref{repthm:implicitMIMO:hugenfold}, which is based on showing an \FPT algorithm for the \emph{Huge $n$-fold IP} problem.
The main ingredient is a proximity theorem which states that an integer optimum is not too far from a fractional solution obtained from a certain configuration LP.
Crucially, this proximity bound does \emph{not} have to hold for an arbitrary fractional optimum.
To the best of our knowledge, this is the first proximity theorem regarding configuration LP.
To prove part~\ref{repthm:implicitMIMO:gr}, we build on a structure theorem of Goemans and Rothvo{\ss}~\cite{GoemansRothvoss2014}: we extend their result~\cite[Corollary 5.1]{GoemansRothvoss2014} to the case of optimization (rather than feasibility) and to handle input vectors $\ven$ given in \emph{binary} (rather than unary).

An interesting trade-off is seen between parts~\ref{repthm:implicitMIMO:hugenfold} and~\ref{repthm:implicitMIMO:gr}, which we show to be tight: to obtain an \FPT algorithm, either $\tau$ or $\|X\|_\infty$ have to be parameters.
Similarly, we show that the inability to handle a fixed-charge objective in the case of polynomially many types is inherent.
\begin{proposition}[informal]
\label{prop:hardness}
  Solving MIMO systems is
  \begin{enumerate}
    \item \Wh{1} parameterized by $d$ only, even if $\|X\|_\infty$ is given in unary;
    \item and \NPh with a fixed-charge objective even with $d=1$ and $\|X\|_\infty = 1$.
  \end{enumerate}
\end{proposition}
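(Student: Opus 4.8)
The plan is to prove both parts by reductions that exploit MIMO's freedom to use many configuration types; whenever only feasibility matters I would take all objective functions identically zero, so that ``solving the MIMO'' means deciding whether the system is feasible (equivalently, whether its optimum is finite).

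For part~(i), I would give an fpt-reduction from \textsc{Unary Bin Packing} parameterized by the number of bins~$k$ --- pack items of sizes $a_1,\dots,a_n$, given in unary, into $k$ bins of capacity~$B$ --- which Jansen, Kratsch, Marx and Schlotter showed to be \Wh{1}. We may assume $\sum_j a_j = kB$ (pad with unit-size items; this leaves $k$ unchanged), so the question becomes whether $[n]$ can be partitioned into $k$ classes, each of total size exactly~$B$. Construct a MIMO instance of dimension $d:=k$: for each item~$j$ introduce the configuration type $X^j := \{\,a_j e_1,\dots,a_j e_k\,\}$, where $e_\ell$ is the $\ell$-th unit vector of $\Z^k$; set $\mu^j:=1$ and $\ven := (B,\dots,B)$. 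A feasible decomposition picks exactly one vector $a_j e_{\ell(j)}$ per item, and the target equation asserts precisely that the assignment $j\mapsto\ell(j)$ fills every coordinate to~$B$; hence the MIMO is feasible if and only if the bin packing instance is a yes-instance. Each $X^j$ has the succinct representation $\{\,\vex\ge\vezero : x_\ell = a_j y_\ell\ \forall\ell,\ \sum_\ell y_\ell\le 1,\ \vey\in\{0,1\}^k\,\}$ (the zero configuration it also contains is never usable, since the total size equals $kB$), so $\|X\|_\infty = \max_j a_j$, which is bounded by the unary input size; the parameter is $d=k$; and the construction is polynomial.

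For part~(ii), I would reduce from the \NPh \textsc{Partition} problem: given $a_1,\dots,a_n\in\N$ with $\sum_j a_j = 2A$, is there $S\subseteq[n]$ with $\sum_{j\in S} a_j = A$\,? Build a MIMO with $d:=1$, one type per element, $X^j := \{\,x\in\Z : 0\le x\le a_j\,\}$, $\mu^j:=1$, a fixed-charge objective on type~$j$ with charge $c_j := a_j$, and target $\ven := A$. As $\mu^j=1$, a feasible solution selects one $s_j\in\{0,\dots,a_j\}$ with $\sum_j s_j = A$, at cost $\sum_{j:\,s_j>0} a_j$; such a solution exists because $\sum_j a_j = 2A\ge A$. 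Since $\sum_{j:\,s_j>0} a_j \ge \sum_{j:\,s_j>0} s_j = A$, every feasible solution costs at least~$A$, and cost exactly~$A$ forces $s_j\in\{0,a_j\}$ for all~$j$; hence the optimum is~$A$ if and only if \textsc{Partition} is a yes-instance. In the stated representation the numbers $a_j$ occur only as right-hand sides, not as constraint coefficients, so $\|X\|_\infty=1$, and $d=1$.

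The routine steps --- the padding in part~(i), and the inequality $\sum_{j:\,s_j>0} a_j\ge A$ together with always-feasibility in part~(ii) --- are straightforward. The step that needs care, and the only real obstacle, is checking that the reductions land exactly in the advertised regimes: for part~(i), that it is a genuine fpt-reduction (the MIMO parameter $d=k$ depends on nothing but the source parameter) with $\|X\|_\infty$ polynomially bounded; and for part~(ii), that under the convention by which right-hand sides do not contribute to $\|X\|_\infty$ one indeed obtains $\|X\|_\infty=1$ together with a fixed-charge --- not merely linear --- objective.
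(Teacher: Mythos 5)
Your proposal is correct and matches the paper's own proof essentially step for step: part (i) reduces from \textsc{Unary Bin Packing} with one type per item, $d=k$, configurations $\{o_j e_1,\dots,o_j e_k\}$ enforced by a binary selector $\vey$, and target $(B,\dots,B)$; part (ii) reduces from \textsc{Partition} with $d=1$, $X^j=[0,o_j]$, fixed charge $o_j$, and target $\tfrac12\sum_j o_j$. The only cosmetic differences are that you pad to make the bin-packing instance tight whereas the paper directly invokes the known \Wh{1}-hardness for tight instances, and you encode the coupling constraint as $x_\ell = a_j y_\ell$ rather than via the big-$M$ inequality $x_\ell \le O\,y_\ell$; both give the same $\|X\|_\infty$ bound.
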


\subsubsection{Applications to Scheduling and Bin Packing with Few Types} \label{sec:intro:apps}
A long-standing open question for high multiplicity \textsc{Bin Packing} was whether it admits a polynomial-time algorithm for constantly many item sizes (i.e., types).
A positive answer was recently given by Goemans and Rothvo{\ss}~\cite{GoemansRothvoss2014}.
While their result also bears on the complexity of some scheduling problems, it \begin{enumerate*}[label={\alph*)},font={\bfseries}] \item does not handle many important objectives, and, \item does not handle many machine types like for example in $Q || C_{\max}$.\end{enumerate*}
Moreover, from the perspective of parameterized complexity, their result is only an \FPT algorithm when the job sizes are given in unary.
Thus, our main research paradigm which we comprehensively address in this paper, reads:
\begin{quote}
\emph{Design fast exact algorithms for high-multiplicity scheduling problems with few job types.}
\end{quote}
We define a general framework for scheduling problems and show how it is modeled using MIMO.
It is then easy to cast many previously studied scheduling problems in terms of this framework, and thereby derive fast algorithms for them by applying Theorem~\ref{thm:implicitMIMO}.

In our framework we consider scheduling problems whose input consists of a set $\mathcal J$ of $n$ jobs that can be partitioned into $d$ types, and a set $\mathcal M$ of $m$ machines that can be partitioned into $\tau$ types.
Each machine in $\mathcal{M}$ belongs to a certain type $i \in\{1,\hdots,\tau\}$: the \emph{machine type $i$} is defined through the machine \emph{speed} $s_i$---which is a rational scaling factor from the interval $(0,1]$---as well as the machine \emph{kind} $K_i \in \{1, \dots, \kappa\}$, where a kind defines a $d$-dimensional vector of job sizes from $\N \cup\{\infty\}$, one for each job type.
In turn, each \emph{job type} $j \in\{1,\hdots,d\}$ is identified by a length-$\kappa$ vector $\vep_j = (p^1_j, \dots, p^\kappa_j)$ of integer job sizes from $\N \cup\{\infty\}$, a positive integer weight $w_j \in \N$, and length-$\tau$ non-negative rational vectors $(r^1_j, \dots, r^\tau_j)$ and $(d^1_j, \dots, d^\tau_j)$ of release times and due dates.
A (non-preemptive) schedule is then an assignment of jobs to time slots of sufficient size on the machines such that each job starts after its release time and two time slots only possibly overlap in their endpoints.
Our choice to allow different release times and due dates on different machine types follows Goemans and Rothvo{\ss}~\cite{GoemansRothvoss2014} and is useful for example to model the scheduled downtimes or availability of specific resources required by a given job type (e.g., storage, processing power, etc.).

We handle multiple objectives.
Consider a fixed schedule that assigns each job in $\mathcal J$ to some machine in $\mathcal M$.
For a job $j\in\mathcal J$ scheduled on a machine $i\in\mathcal M$, let $C_j$ be its \emph{completion time}, let $F_j = C_j - r_j^i$ be its \emph{flow time}, let $L_j = C_j - d_j^i$ be its \emph{lateness}, let $T_j = \max\{0, L_j\}$ be its \emph{tardiness}, and let $U_j = 0$ if $C_j \leq d_j^i$ and $U_j = 1$ if $C_j > d_j^i$ be its \emph{unit penalty}.
Moreover, let $F_{\max} = \max_{j} F_j$ be the \emph{maximum flow time} and let $L_{\max} = \max_{j} L_j$ be the \emph{maximum lateness}.
Let $C^i$ denote the time that the last job finishes on machine $i$ and define the \emph{load} of a machine $i$ to be the total amount of time machine $i$ spends processing jobs, and denote it $\LL^i$.
The \emph{completion times vector} of a schedule is the vector $\veC = (C^1, \dots, C^m)$, and the \emph{load vector} of a schedule is $\veL = (\LL^1, \dots, \LL^m)$.
The objective $C_{\max}$ asks to minimize $\|\veC\|_\infty$, and the objective $C_{\min}$ (which appears e.g. in the \emph{Santa Claus} problem\dkcom{shall we reference Svenson?}) asks to maximize $\min_{i \in [m]} \LL^i$.
\dkcom{$C_{\min}$ is never discussed! Any ideas how to solve it?}
The $\ell_p$-norm objective is to minimize $\|\veC\|_p$, where $p \geq 2$ is an integer.
In $\sum w_j C_j$, $\sum w_j F_j$, $\sum w_j T_j$, and $\sum w_j U_j$, the goal is to minimize the sum of weighted completion times, flow times, tardiness's, and unit penalties, respectively.
The objective $\min \sum w_j U_j$ is also known as \emph{maximum throughput}.
We partition the discussed objectives into two classes: \emph{linear} objectives $\mathfrak{C}_{\text{lin}} = \{C_{\max}, C_{\min}, F_{\max}, L_{\max}, \sum w_j U_j\}$ and \emph{polynomial} objectives $\mathfrak{C}_{\text{poly}} = \{\sum w_j C_j$, $\sum w_j F_j$, $\sum w_j T_j, \ell_p\text{-norm}\}$.
The distinction is that, in the high multiplicity representation, the linear objectives $\mathfrak{C}_{\text{lin}}$ can be expressed as a linear function of the configurations, while the polynomial objectives $\mathfrak{C}_{\text{poly}}$ seem to only be expressible as (non-linear) polynomials.
Let $p_{\max} = \max_{1 \leq j \leq d} \|\vep_j\|_\infty$ denote the largest size of a job with respect to a kind of machine; note that even if $p_{\max}$ is small the largest job size $p_{\max}$ might be large due to small speeds.

\begin{theorem}
\label{thm:scheduling}
  Any scheduling problem expressible in the framework above is \FPT parameterized by
  \begin{enumerate}
    \item $d$ and $m$ for any objective from $\mathfrak{C}_{\textnormal{lin}} \cup \mathfrak{C}_{\textnormal{poly}}$,
    \item \label{thm:scheduling:huge} $d$ and $p_{\max}$ for any objective from $\mathfrak{C}_{\textnormal{lin}} \cup \mathfrak{C}_{\textnormal{poly}}$,
    \item $d$ and $\tau$ for any objective from $\mathfrak{C}_{\textnormal{lin}}$ and with the largest job size $p_{\max}$ given in unary.
  \end{enumerate}
\end{theorem}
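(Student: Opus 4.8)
For each objective under consideration the plan is to cast the scheduling problem as a single instance of MIMO and then invoke the matching part of Theorem~\ref{thm:implicitMIMO}. The natural dictionary is: the MIMO dimension is (essentially) the number~$d$ of job types, the target is the job-multiplicity vector~$\ven$, and each machine type~$i$ becomes a MIMO type of cardinality~$\mu^i$ whose elements are the feasible single-machine \emph{configurations}, i.e.\ the vectors of job multiplicities that can be placed on one type-$i$ machine. For the min-max objectives $C_{\max},C_{\min},F_{\max},L_{\max}$ I would first binary-search / dual-approximate the objective value~$T$ and reduce to feasibility, so that afterwards only $\sum w_jU_j$ and the objectives in $\mathfrak{C}_{\mathrm{poly}}$ carry a genuine cost.

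A single-machine schedule may be assumed to process each job type in one contiguous block (this is without loss of generality for $\sum w_jC_j$ and can be arranged for the remaining objectives at the cost of enumerating $\phi(d)$ block patterns rather than one order). I would therefore refine each machine type into one MIMO type per pattern~$\pi$, with $X^{(i,\pi)}$ the multiplicity vectors that~$\pi$ schedules feasibly on a type-$i$ machine (plus the empty configuration) and $f^{(i,\pi)}(\vex)$ the objective value of that schedule; since an optimal single-machine schedule is of this form, optimizing over decompositions of~$\ven$ across all the $X^{(i,\pi)}$ recovers the optimal schedule value, and classical single-machine theory (Smith's WSPT rule, EDD/Lawler, Moore--Hodgson) pins down the relevant pattern for the easy objectives. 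For a fixed~$\pi$ the completion time of a job is a maximum of affine functions of~$\vex$, so feasibility unrolls into $\Oh(d^2)$ linear inequalities whose left-hand sides are the total work of a span of consecutive blocks and whose right-hand sides absorb the speed, release times, due dates and target; hence every matrix coefficient is some size $p_j^{K_i}\le p_{\max}$ and $\|X\|_\infty\le p_{\max}$. When cardinalities must be respected exactly while $\tau$ is a parameter (part~\ref{repthm:implicitMIMO:gr}) I would give each $(i,\pi)$ cardinality~$\mu^i$ and add, per original type~$i$, a budget coordinate, equal to~$\mu^i$ in the target, that counts non-empty configurations; when $\tau$ is not a parameter (part~\ref{repthm:implicitMIMO:hugenfold}) I would instead avoid the refinement, keeping the dimension a function of~$d$ alone and using that the objectives that survive there admit a canonical pattern.

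It remains to place each $f^{(i,\pi)}$ in the class required by the relevant part of Theorem~\ref{thm:implicitMIMO}. After the binary search the $\mathfrak{C}_{\mathrm{lin}}$ objectives are trivial, except $\sum w_jU_j$, which becomes linear once a configuration also records the number $x_j^{\mathrm{on}}$ of on-time jobs of each type: $f^{(i,\pi)}=\sum_j w_j(x_j-x_j^{\mathrm{on}})$. For a fixed~$\pi$ every completion time is convex in~$\vex$, so every $f^{(i,\pi)}$ from $\mathfrak{C}_{\mathrm{poly}}$ is convex, which already gives item~1 through part~\ref{repthm:implicitMIMO:micp} (convex objective, parameters~$d$ and $N=\Oh(m\cdot\phi(d))$). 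For item~2 one needs \emph{separable} convexity; here I would exploit that in the release-date-free case $\sum w_jC_j$ has the single Smith order, its cost equals $\tfrac12\vex^\top H\vex$ plus a linear term, and scaling the $j$-th row and column of~$H$ by $1/p_j^{K_i}$ turns it into the matrix $\bigl(r_{\max(j,k)}\bigr)$ of a non-increasing sequence $r_1\ge\cdots\ge r_d$ of Smith ratios --- a positive semidefinite matrix whose rank-one decomposition rewrites the cost as $\tfrac12\sum_\ell\alpha_\ell y_\ell^2+(\text{linear})$ with $\alpha_\ell\ge0$ and $y_\ell=\sum_{j\le\ell}p_j^{K_i}x_j$, i.e.\ separable convex over the configuration extended by the~$y_\ell$ (new coefficients still $\le p_{\max}$). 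The objective $\sum w_jF_j$ differs by a linear shift, and the $\ell_p$-norm objective equals $\sum_i(C^i)^p$ with $C^i$ linear in the configuration, hence separable convex after one extra coordinate~$C^i$. Then part~\ref{repthm:implicitMIMO:hugenfold} gives item~2 and part~\ref{repthm:implicitMIMO:gr} gives item~3, once a routine scaling clears the rational denominators introduced by the speeds.

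The step I expect to be the real obstacle is item~2 for the polynomial objectives \emph{with} release dates, and $\sum w_jT_j$ in general. For a fixed pattern the single-machine cost is convex, but its natural expression carries products of a block's start time with its job count, so showing it to be \emph{separable} convex over a configuration set of dimension~$\poly(d)$ with coefficients~$\le p_{\max}$ --- which is exactly what part~\ref{repthm:implicitMIMO:hugenfold} needs --- appears to require either a finer model of the schedule (e.g.\ counting, for each job type, how many of its jobs complete inside each of the $\Oh(d)$ inter-event intervals) or a canonical-order result for $Q|r_j|\sum w_jC_j$-type single-machine problems, and doing so without inflating~$d$, $\|X\|_\infty$, or smuggling in big-$M$ coefficients is the crux. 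The remaining pieces --- the exchange argument behind the contiguous-block assumption, correctness of the budget-coordinate device, the bookkeeping that confines all rationals to right-hand sides and the objective, and matching $N$ and the refined $\tau\cdot\phi(d)$ to the parameters of items~1 and~3 --- are routine.
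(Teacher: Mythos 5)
There is a genuine gap, and you have actually put your finger on part of it yourself.

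Your model rests on the assumption that a single-machine schedule may process each job type in one contiguous block (with a pattern $\pi$ chosen from a $\phi(d)$-sized family). Once release times and due dates are present this fails outright: jobs of one type may be forced into several disjoint time windows because other job types occupy the intervening intervals, so there is no block pattern over types that captures the set of feasible configurations. This is not a technicality that only bites for $\mathfrak{C}_{\text{poly}}$ in item~2 --- it already breaks the construction of the feasibility polytope $P^i$ needed for all three items, since even for $C_{\max}$ the single-machine question ``can this multiset of jobs be placed within the windows $(r_j^i, d_j^i)$?'' is not a question about a single ordering of the $d$ types. Consequently, your claim that ``classical single-machine theory (Smith, EDD, Moore--Hodgson) pins down the relevant pattern'' does not hold in the present framework; those rules apply when each job type has one time window on the whole machine, not $\Oh(d)$ windows interleaved.

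What the paper does instead, and what you would need, is a decomposition into $\Oh(d^2)$ \emph{potential cycles}: for the $\le 2d$ critical times $t_1<\dots<t_{|T|}$ (the distinct release times and due dates), there is one internal cycle per interval $(t_k,t_{k+1})$ and one external cycle per sub-interval of critical times, with a structural lemma guaranteeing that an optimal schedule decomposes into $\Oh(d)$ realizations of these. The polytope $P^i$ has variables $y_{j,C}$ giving the number of type-$j$ jobs assigned to potential cycle $C$ (so a job type is deliberately allowed to spread over several cycles), plus indicator and bookkeeping variables, and the only large coefficients are the $p_j^i\le p_{\max}$ appearing in the per-interval volume constraints. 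You gesture at exactly this (``counting, for each job type, how many of its jobs complete inside each of the $\Oh(d)$ inter-event intervals'') but leave it as an open obstacle; in the paper it is the load-bearing part of the model, not a refinement for a corner case. Likewise, the separable-convex reformulation of $\sum w_jC_j$ via Smith ratios and squared prefix sums that you correctly sketch for the release-date-free case is carried out in the paper \emph{per interval}, using aggregation variables $\alpha_{k,j}$ that sum the sizes of jobs scheduled in $(t_k,t_{k+1})$ up to Smith position $j$, together with extra per-external-cycle variables $y_{C,L,p},y_{C,R,p}$ tracking how a critical time splits the unique job in an external cycle; these are precisely what turn the 2D-Gantt areas into a sum of linear and separable convex pieces with coefficients still bounded by $p_{\max}$. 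Your proposal neither supplies these variables nor shows how to absorb the start-time-times-multiplicity cross terms you correctly worry about, so items~2 and~3 (and arguably item~1, since the polytope itself is wrong) are not established. The binary-search reduction for the min-max objectives and the budget-coordinate device for enforcing the $\mu^i$'s are fine, but they sit on a model of the single-machine configuration set that does not exist.
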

Theorem~\ref{thm:scheduling} gives the first \FPT algorithms for the objectives $\sum w_j F_j$, $\sum w_j T_j$, $\ell_p$-norm, $C_{\min}$, $F_{\max}$ and $L_{\max}$.
Moreover, it gives the first \FPT algorithms for the objective $\sum w_j C_j$ in the case of release times and due dates.
Compared to prior work, Theorem~\ref{thm:scheduling} extends all results of Goemans and Rothvo{\ss}~\cite{GoemansRothvoss2014} to the setting of high multiplicities of jobs, and all results of Knop and Kouteck{\'y}~\cite{KnopKoutecky2017} to the setting of jobs with release times and due dates and of high multiplicities of machines.
It can be seen (with some modeling effort) that the parameterization of Part~\ref{thm:scheduling:huge} of Theorem~\ref{thm:scheduling} is less restrictive than the joint parameter $p_{\max}+r$ with~$r$ being the rank of the processing times matrix considered by Chen et al.~\cite{ChenEtAl2017}, thus strengthening their result and also extending it to the setting of high multiplicities of machines.
\dkcom{do we actually give an explicit statement with proof for this -- maybe it would be nice (all in all it comes ``with some modeling effort'')}

\medskip
Furthermore, we show that several high multiplicity \emph{Bin Packing}-type problems, such as cardinality constrained or vector bin packing, can be modeled as MIMO (Theorems~\ref{thm:vector_bin_packing} and~\ref{thm:bin_packing_gcs}).
Lastly, we also study a problem we call \textsc{Surfing}: large numbers of \emph{surfers} (e.g., internet users) of $\tau$ types make demands on $d'$ \emph{commodities} (e.g., types of content, such as video streaming, email, internet telephony, etc.) and are to be served by $d''$ \emph{servers} (e.g., service providers).
The set of solutions is constrained by capacities and costs.
In Theorem~\ref{thm:surfing}, we show that Part~\ref{thm:implicitMIMO:huge} of Theorem~\ref{thm:implicitMIMO} can solve \textsc{Surfing} in \FPT time parameterized by $d=d' \cdot d''$, with only polynomial dependence on $\tau$ and with all other points of the input (number of users, capacities, costs, etc.) encoded in binary.

\subsection{Related Work}
\paragraph*{Configuration Integer Programs.}
The notion of a configuration IP has been introduced in the seminal paper of Gilmore and Gomory~\cite{GilmoreGomory1961}.
Solving the \emph{relaxation} of a configuration IP (the \emph{configuration LP}) via column generation and typically in combination with a separation oracle and the ellipsoid method has been a major tool in the design of approximation algorithms~\cite{KorteVygen2018}.
The (exact) configuration IP is also a common tool, however, it is typically implicit and not referred to as such.
It is used either to directly obtain an algorithm, or as a subprocedure performed on a simplified (preprocessed, e.g., rounded) input, and its solution then serves as the basis for an approximation algorithm; cf.~\cite{HochbaumShmoys1987,AlonAWY97,VegaLeuker81,KarmarkarKarp82}.

An interesting combination of integer programming and column generation was used by Jansen and Solis-Oba~\cite{JansenSolisOba2011} to find an $OPT+1$ solution of {\sc Bin Packing} with $d$ item types in \FPT time.
Goemans and Rothvo{\ss} implicitely deal with the \textsc{Monoid Decomposition} problem~\cite{GoemansRothvoss2014}.
Onn~\cite{Onn2017} considered the \textsc{Monoid Decomposition} problem in the case when the monoid is defined by a totally unimodular matrix.
Recently, Jansen et al.~\cite{JansenEtAl2018} considered a different extension of the configuration IP notion to multiple ``levels'' of configurations, that is, where placements of items are called \emph{modules} and placements of modules are called configurations.

\paragraph*{Tools from Integer Programming.}
To prove part~\ref{thm:implicitMIMO:micp} of Theorem~\ref{thm:implicitMIMO} we use the fact that convex integer minimization is \FPT parameterized by the dimension.
This was first shown by Gr{\"o}tschel et al.~\cite[Theorem 6.7.10]{GrotschelEtAl1993} and the current fastest algorithm is due to Dadush and Vempala~\cite{DadushVempala2013}.
For part~\ref{thm:implicitMIMO:ch}, we use Cook et al.'s algorithm~\cite{CookEtAl1992} to enumerate the vertices of the integer hull of a polytope.
Part~\ref{thm:implicitMIMO:huge} is a continuation of a long line of work on block structured IPs, in particular $N$-fold IPs (cf. Section~\ref{sec:nfold}).
A breakthrough result is an \FPT algorithm of Hemmecke et al.~\cite{HemmeckeEtAl2013}; the most comprehensive result in terms of the parameterizations is currently due to Koutecký et al.~\cite{KouteckyEtAl2018}, and the fastest current algorithms for $N$-fold IPs are due to Eisenbrand et al.~\cite{EisenbrandEtAl2019} and Jansen, Lassota, and Rohwedder~\cite{JansenLR:2018}.
We use the Steinitz Lemma~\cite{Steinitz1913,SevastjanovBanasczyk1997} which has seen renewed interest (cf.~\cite{EisenbrandEtAl2018,EisenbrandEtAl2019,ChenEtAl2018b,JansenRohwedder2018}) after Eisenbrand and Weismantel used it to obtain improved proximity theorems for ILPs with few rows~\cite{EisenbrandWeismantel2018}.
Finally, part~\ref{thm:implicitMIMO:gr} builds on the Structure Theorem of Goemans and Rothvo{\ss}~\cite{GoemansRothvoss2014}, which shows that the integer points of a polytope can be covered with parallelepipeds, whose vertices then serve as ``important'' configurations to which most weight of a solution can be assigned.
Jansen and Klein~\cite{JansenKlein2017} have shown an alternative theorem (with weaker bounds) stating that already the vertices of the integer hull may be used as ``important'' configurations.

\paragraph*{Parameterized and High Multiplicity Scheduling.}
Fixed-parameter algorithms for scheduling problems have been of interest for over 20 years~
\cite{MnichWiese2015,MnichvanBevern2017,HermelinEtAl2015,ChenMarx2018,FellowsMcCartin2003,BodlaenderFellows1995},
their importance has been highlighted 10 years ago by Demaine et al.~\cite{DemaineEtAl2009}, and a research program of ``15 open problems'' has been recently proposed by Mnich and van Bevern~\cite{MnichvanBevern2017}.
We have already mentioned all relevant specific work.
The high multiplicity encoding has been first proposed by Hochbaum and Shamir in 1991~\cite{HochbaumShamir1991} and has seen much attention since then, cf. e.g.~\cite{GranotEtAl1997,CliffordPosner2001,FilippiRomaninJacur2009,BraunerEtAl2005}.
In particular, it was also used by Goemans and Rothvo{\ss}~\cite{GoemansRothvoss2014} and Onn~\cite{Onn2017} who are primary inspiration for our work.

\medskip

\paragraph*{Paper Organization.}
In Section~\ref{sec:preliminaries} we define the implicit representation of MIMO, give a formal statement of Theorem~\ref{thm:implicitMIMO}, connect it to $N$-fold Integer Programming, and prove parts~\ref{thm:implicitMIMO:micp} and~\ref{thm:implicitMIMO:ch} of Theorem~\ref{thm:implicitMIMO}.
In Section~\ref{sec:huge} we present the main technical contribution of the proof of Theorem~\ref{thm:implicitMIMO}, i.e., we prove Part~\ref{thm:implicitMIMO:huge} of Theorem~\ref{thm:implicitMIMO}.
In Section~\ref{sec:part4} we outline the proof of Part~\ref{thm:implicitMIMO:gr} of Theorem~\ref{thm:implicitMIMO} but postpone the full proof to Section~\ref{sec:technical}, as well as our hardness result (Proposition~\ref{prop:hardness})  and some other proofs; postponed proofs are marked by $(\star)$.
In Section~\ref{sec:applications} we deal with applications and model fundamental scheduling problems as MIMO.
Finally, in Section~\ref{sec:researchdirections} we outline further research directions.

\section{Preliminaries}\label{sec:preliminaries}
For positive integers $m,n$ we set $[m,n] = \{m, m+1, \ldots, n\}$ and $[n] = [1,n]$.
We write vectors in boldface (e.g., $\vex, \vey$) and their entries in normal font (e.g., the $i$-th entry of~$\vex$ is~$x_i$ or $x(i)$).
For $\alpha \in \R$, $\floor{\alpha}$ is the floor of $\alpha$, $\ceil{\alpha}$ is the ceiling of $\alpha$, and we define $\{\alpha\} = \alpha - \floor{\alpha}$, similarly for vectors where these operators are defined component-wise.
\dkcom{actually, I have introduced new operator for this, since I am using it together with set and I find it clearer -- any suggestions?}
\subsection{MIMO: Implicit Sets}
The input of MIMO can be given explicitly only in a few cases.
Thus we are naturally interested in the scenario when each (possibly very large) set $X^i$ is defined succinctly.
The following definition captures the case when each $X^i$ is defined as a projection of integer points of a rational polytope.

\begin{definition}[$P$-representation]
  For $i = 1,\dots,\tau$ let $P^i \subseteq \R^{d+d^i}$ be a polytope and let \mbox{$\pi^i((\vex, \vex')) = \vex\in\R^d$} be a projection discarding the last $d^i$ coordinates.
  We call the collection $P^1, \dots, P^\tau$ a \emph{$P$-representation} of $X^1, \dots, X^\tau$ if $X^i = \pi^i(P^i) \cap \Z^d$, for each $i \in [\tau]$.

  Let each $P^i$ by defined as $P^i = \left\{(\vex, \vex') \mid A^i (\vex, \vex') \leq \veb^i\right\}$ for some $A^i \in \Z^{m^i \times (d+d^i)}$ and $\veb^i \in \Z^{m^i}$.
  The \emph{parameters of a $P$-representation} are the following quantities:
  \begin{tasks}[style=itemize](4)
    \task $M = \max_{i \in [\tau]} m^i$,
    \task $D = \max_{i \in [\tau]} d^i$,
    \task $\Delta = \max_{i \in [\tau]} \|A^i\|_\infty$,
    \task $L = \la \Delta, \veb^1, \dots, \veb^\tau\ra$.
  \end{tasks}
\end{definition}

We consistently use superscripts to refer to objects and quantities related to the types (e.g., $X^i, d^i, f^i, \dots$).
To avoid confusion we always use parentheses when intending to express exponentiation (e.g., $(d^i)^2$).
With each $X^i$ given implicitly, the objective functions $f^i$ also must have implicit representations or given by oracles.
We consider the following objective functions:
\begin{compactitem}
  \item linear: given vectors $\vew^1, \dots, \vew^\tau \in \Z^d$, let $f^i(\vex) = \vew^i \vex$.
  \item convex: each $f^i(\vex)$ is a convex function.
  \item extension-separable convex: each $f^i(\vex) = \min_{\vex': (\vex, \vex') \in P^i \cap \Z^{d+d'}} g^i(\vex, \vex')$ for $g^i$ a separable convex function. (In some of our applications the objective is only expressible as a separable convex function in terms of the auxiliary variables $\vex'$.)
  \item concave: each $f^i(\vex)$ is a concave function.
  \item fixed-charge: each $f^i(\vex) = c^i \in \N$ if $\vex \neq \vezero$ and $f^i(\vex)=0$ otherwise; we call $c^i$ a \emph{penalty}.
\end{compactitem}
For an instance of MIMO given in its $P$-representation, set ${\displaystyle f_{\max} = \max_{i \in [\tau]} \max_{\substack{(\vex, \vex') \in \Z^{d+d^i}:\\ A^i(\vex, \vex') \leq \veb^i}} \left| f^i(\vex) \right|}$.

\begin{reptheorem}{thm:implicitMIMO}[Implicit sets]
  Let $\SSS$ be a MIMO system given in its $P$-representation (if the objective $f$ is convex or concave, we assume it is presented by an evaluation oracle), with the parameters as defined, and let $N = \|\vemu\|_1 = \sum_{i=1}^\tau \mu^i$ and $\hat{L} = L + \la \ven, f_{\max}, N \ra$.
  \begin{enumerate}
    \item \label{thm:implicitMIMO:micp} MIMO with a linear, convex, or fixed-charge objective can be solved in time $(N(d+D))^{\Oh(N(d+D))} \hat{L}^{\Oh(1)}$, thus \FPT parameterized by $N$ and $d$.
    \item \label{thm:implicitMIMO:ch} MIMO with a concave objective can be solved in time $( M N (d+D) \cdot \log \Delta)^{\Oh(N (d+D))} \hat{L}^{\Oh(1)}$, thus \FPT parameterized by $N$, $M$, $d+D$, and with $\Delta$ given in unary.
    \item \label{thm:implicitMIMO:huge} MIMO with a linear or an extension-separable convex objective can be solved in time $(M d \Delta)^{\Oh(M^2 d + d^2 M)}  \hat{L}^{\Oh(1)}$, and thus \FPT parameterized by $M$, $d$, and $\Delta$.
    \item \label{thm:implicitMIMO:gr} MIMO with a linear or fixed-charge objective can be solved in time $(\tau d D M \log \Delta)^{\tau^{(d+D)^{\Oh(1)}}}\hat{L}^{\Oh(1)}$ and thus \FPT parameterized by $\tau$, $M$, $d$, and $D$ if $\Delta$ is given in unary.
  \end{enumerate}
\end{reptheorem}
Parts~\ref{thm:implicitMIMO:huge} and~\ref{thm:implicitMIMO:gr} thus mean that we can solve the MIMO problem either in doubly-exponential \FPT time parameterized by $d$, $\tau$, $m$, and $M$ and all numbers have to be given in unary, or single-exponential \FPT parameterized by $d$, $m$, $M$, and the largest coefficient $\Delta$ (but for polynomial $\tau$).
In parts~\ref{thm:implicitMIMO:ch} and~\ref{thm:implicitMIMO:gr} we are using the fact that $(\log \alpha)^\beta \leq 2^{\beta^2 /2} + \alpha^{o(1)}$~\cite[Hint 3.18]{CyganEtAl2015} to say that having $\log \Delta$ in the base amounts to \FPT algorithms when $\Delta$ is given in unary.

\subsection{Modeling MIMO as \texorpdfstring{$N$-fold}{N-fold} IP}
\label{sec:nfold}
In this subsection we will closely connect MIMO with a special class of integer programs, which is an important building block for proving Parts~\ref{thm:implicitMIMO:micp}--\ref{thm:implicitMIMO:huge}.
The \textsc{Integer Programming} problem is to solve:
\begin{equation}
  \min f(\vex):\, A\vex = \veb, \, \vel \leq \vex \leq \veu, \, \vex \in \Z^n,\label{IP} \tag{IP}
\end{equation}
where $f: \R^n \to \R$, $A \in \Z^{m \times n}$, $\veb \in \Z^m$, and $\vel, \veu \in (\Z \cup \{\pm \infty\})^n$.
We denote ${\displaystyle f_{\max} = \max_{\substack{\vex \in \Z^n:\\ \vel \leq \vex \leq \veu}} |f(\vex)|}$.

A generalized $N$-fold IP matrix is defined as
\begin{align}
  E^{(N)} = \left(
              \begin{array}{cccc}
                E^1_1    & E^2_1    & \cdots & E^N_1    \\
                E^1_2    & 0      & \cdots & 0      \\
                  0    & E^2_2    & \cdots & 0      \\
                \vdots & \vdots & \ddots & \vdots \\
                  0    & 0      & \cdots & E^N_2    \\
              \end{array}
            \right) \enspace .\label{nfold}
\end{align}
Here, $r,s,t,N \in \N$, $E^{(N)}$ is an $(r+Ns)\times Nt$-matrix, $E^i_1 \in \Z^{r \times t}$ and $E^i_2 \in \Z^{s \times t}$, $i \in [N]$, are integer matrices, and $E$ is a $2 \times \tau$ block matrix
\[
E = \left(\begin{matrix}
E_1^1& E^2_1 & \cdots & E^\tau_1 \\
E_2^1& 2^2_2 & \cdots & E^\tau_2
\end{matrix}\right) \enspace .
\]
Problem~\eqref{IP} with $A=E^{(N)}$ is known as \emph{generalized $N$\hy fold integer programming} (generalized $N$-fold IP).
``Regular'' $N$-fold IP is the problem where $E_1^i = E_1^j$ and $E_2^i = E_2^j$ for all $i, j \in [N]$.
Recent work indicates that the majority of techniques applicable to ``regular'' $N$-fold IP also applies to generalized $N$-fold IP~\cite{EisenbrandEtAl2019}.
Because generalized $N$-fold IP allows for easier modeling, we will focus on generalized $N$-fold IP and write ``$N$-fold IP'' for short.

We emphasize that while in previous work $N$\hy fold IP was always considered in the regime with small coefficients, variable $N$, and a separable convex objective, here we broaden our focus and consider also other regimes.
Thus, by \emph{$N$\hy fold IP} we refer to \emph{any} IP with a matrix of the form~\eqref{nfold}.

The structure of $E^{(N)}$ allows us to divide any $Nt$-dimensional object, such as the variables of $\vex$, bounds $\vel, \veu$, or the objective $f$, into $N$ \textit{bricks} of size $t$, e.g. $\vex=(\vex^1, \dots, \vex^N)$.
We use subscripts to index within a brick and superscripts to denote the index of the brick, i.e.,~$x_j^i$ is the $j$-th variable of the $i$-th brick with $j \in [t]$ and $i \in [N]$.
We call a brick \emph{integral} if all of its coordinates are integral, and \emph{fractional} otherwise.

\subparagraph{\texorpdfstring{Huge $N$-fold}{Huge N-fold} IP.}
The \emph{huge $N$-fold IP} problem is an extension of $N$-fold IP to the high-multiplicity scenario, where there are potentially \emph{exponentially} many bricks.
This requires a succinct representation of the input and output.
The input to a huge $N$-fold IP problem with $\tau$ \emph{types of bricks} is defined by matrices $E^i_1 \in \Z^{r \times t}$ and $E^i_2 \in \Z^{s \times t}$, $i \in [\tau]$, vectors $\vel^1, \dots, \vel^\tau$, $\veu^1, \dots, \veu^\tau \in \Z^t$, $\veb^0 \in \Z^r$, $\veb^1, \dots, \veb^\tau \in \Z^s$, functions $f^1, \dots, f^\tau \colon \R^{t} \to \R$ satisfying $\forall i \in [\tau], \, \forall \vex \in \Z^t:\, f^i(\vex) \in \Z$ and given by evaluation oracles, and integers $\mu^1, \dots, \mu^\tau \in \N$ such that $\sum_{i=1}^\tau \mu^i = N$.
We say that a brick is of type $i$ if its lower and upper bounds are $\vel^i$ and $\veu^i$, its right hand side is $\veb^i$, its objective is $f^i$, and the matrices appearing at the corresponding coordinates are $E^i_1$ and $E^i_2$.
The task is to solve~\eqref{IP} with a matrix $E^{(N)}$ which has $\mu^i$ bricks of type $i$ for each $i$.
Onn~\cite{Onn2014} shows that for any solution, there exists a solution which is at least as good and has only few (at most $\tau \cdot 2^t$) distinct bricks.
In Section~\ref{sec:huge} we show new bounds which do not depend exponentially on $t$.
\begin{lemma}[$\star$]
    \label{lem:MIMO_as_nfold}
    Let a MIMO system $\SSS$ be given in its $P$-representation.
    Then in time $(\tau + D + M + L + \la \vemu, \ven \ra)$, one can construct a huge $N$-fold IP which models $\SSS$ and has parameters $r= d$, $s = 2M$, $t = d+D+M$, $\|E\|_\infty = \Delta$, $N = \|\vemu\|_1$, and with $f^i$ being the objective for bricks of type $i$.
\end{lemma}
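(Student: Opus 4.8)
The plan is to take a MIMO system in its $P$-representation and directly translate each type $i$ into a brick type of a huge $N$-fold IP, using the auxiliary coordinates $\vex'$ of the $P$-representation as extra variables in each brick, and encoding the two families of MIMO constraints (the global equation $\sum_i \sum_{\vex} \vex \lambda_\vex^i = \ven$ and the per-type count $\sum_{\vex} \lambda_\vex^i = \mu^i$) as the linking block and the diagonal blocks respectively. Concretely, I would let a single brick of type $i$ represent one configuration of type $i$: its variables are $(\vex, \vex', \ves)$ where $\vex \in \Z^d$ is the configuration, $\vex' \in \Z^{d^i}$ are the projection's auxiliary variables, and $\ves \in \Z^{m^i}$ are slack variables turning the inequalities $A^i(\vex,\vex') \le \veb^i$ into equalities $A^i(\vex,\vex') + \ves = \veb^i$ with $\ves \ge \vezero$. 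This already fixes $t = d + D + M$ after padding each brick up to the maximum dimensions (unused coordinates are clamped to $0$ via $\vel = \veu = \vezero$), and gives $s = 2M$ since each of the $m^i \le M$ rows of $A^i$ becomes a single equality, but to make all diagonal blocks have the same number of rows $s$ one splits/pads — the cleanest way is to realize each scalar equality $a x = b$ as a pair of rows, or simply to pad with zero rows; I would state $s = 2M$ as in the lemma and note the padding. The number of bricks of type $i$ is $\mu^i$, so $N = \|\vemu\|_1$ as required.

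The diagonal block $E_2^i$ enforces, per brick, exactly the feasibility of one configuration: $E_2^i (\vex,\vex',\ves) = \veb^i$ encodes $A^i(\vex,\vex') + \ves = \veb^i$, and the bounds $\vel^i \le (\vex,\vex',\ves) \le \veu^i$ encode $\ves \ge \vezero$ together with any bounds on $\vex,\vex'$ (for a polytope $P^i$ these are implied, so one can take $\vel^i = -\infty$, $\veu^i = +\infty$ on those coordinates, or derive finite bounds from $\Delta$ and $\veb^i$); note that a brick being feasible is then equivalent to $\pi^i(\vex,\vex') = \vex \in X^i$ since $X^i = \pi^i(P^i)\cap\Z^d$ and the brick variables are integral. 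The linking block: the top part of $E^{(N)}$ has $r = d$ rows and must read off, from each brick, its $\vex$-coordinates and sum them; so $E_1^i$ is the $d\times t$ matrix picking out the $\vex$ part (identity on the $\vex$ coordinates, zero elsewhere), and $\veb^0 = \ven$. Then $\sum_{\text{bricks}} E_1^i \cdot (\text{brick}) = \ven$ is exactly $\sum_i \sum_{\vex \in X^i} \vex \cdot \lambda_\vex^i = \ven$ once we observe that having $\mu^i$ bricks of type $i$ and counting how many equal a given $\vex$ recovers the multiplicities $\lambda_\vex^i$. The per-type count $\sum_\vex \lambda_\vex^i = \mu^i$ is automatic: it is just the number of type-$i$ bricks, which is $\mu^i$ by construction, so no constraint is needed for it. Finally the objective: set $f^i$ on a type-$i$ brick to be the MIMO objective evaluated on that brick's $\vex$ (or, in the extension-separable convex case, $g^i(\vex,\vex')$), and the total $N$-fold objective $\sum_{\text{bricks}} f^i(\text{brick})$ equals $\sum_i\sum_{\vex}f^i(\vex)\lambda_\vex^i$.

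The bookkeeping to verify is: (i) the claimed parameters $r=d$, $s=2M$, $t=d+D+M$, $\|E\|_\infty = \Delta$ — the last because $E_1^i$ has entries in $\{0,1\}$ and $E_2^i$ has entries from $A^i$ (entries in $\{0,\pm1\}$ for the slacks), so the max is $\max_i\|A^i\|_\infty = \Delta$ (assuming $\Delta \ge 1$); (ii) the construction runs in time polynomial in $\tau + D + M + L + \la\vemu,\ven\ra$, which is immediate since we only write down $\tau$ matrices of size $O(Mt)$ each plus the right-hand sides and bounds, all of encoding length $O(L + \la\vemu,\ven\ra)$; and (iii) the equivalence of optimal values in both directions — a MIMO solution $\velambda$ yields a huge $N$-fold solution by listing, for each type $i$ and each $\vex\in X^i$, exactly $\lambda_\vex^i$ copies of the corresponding feasible brick (padded with slacks), and conversely any feasible huge $N$-fold solution, read back, gives multiplicities $\lambda_\vex^i \in \N$ satisfying the MIMO constraints with the same objective value. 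The main obstacle — really the only subtle point — is the uniformization of block dimensions across types (different $m^i$, $d^i$) and the resulting factor-$2$ in $s = 2M$: one must be careful that padding a brick with dummy coordinates fixed to $0$ (via equal lower and upper bounds, and zero columns/rows in $E_1^i, E_2^i$) does not change feasibility or objective, and that the slack encoding of inequalities as equalities keeps $\|E\|_\infty$ at $\Delta$ rather than inflating it; everything else is routine translation. I would organize the write-up as: define the brick for type $i$; define $E_1^i, E_2^i, \veb^0, \veb^i, \vel^i, \veu^i, f^i$; prove the forward and backward correspondence of solutions with equal objective; tally the parameters; bound the construction time.
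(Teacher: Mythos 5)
Your construction matches the paper's essentially verbatim: pad $A^i$ with zero rows and columns to uniformize dimensions, append an $M \times M$ identity block for slack variables (made nonnegative via the box lower bounds), take $E_1^i = (I~\vezero)$ and $\veb^0 = \ven$, and extend $f^i$ to ignore the auxiliary coordinates. On your one point of uncertainty: the paper's own proof pads $E_2^i$ to exactly $M$ rows, so $s = M$ is what the construction actually produces; the ``$s = 2M$'' in the lemma statement is a safe over-estimate (or typo), not something you need to contort the construction --- e.g.\ by splitting equalities into inequality pairs --- to achieve.
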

\begin{proof}[Proof idea] The blocks $A^1, \dots, A^\tau$ become blocks $E_2^1, \dots, E_2^\tau$ by padding with zero columns and rows to ensure that each is an $M \times (d+D+M)$ matrix, with the last $M$ coordinates corresponding to slack variables.
The blocks $E^1_1, \dots, E_1^\tau$ are all defined to be $(I~\vezero)$ where $I$ is the $d \times d$ identity matrix and $\vezero$ is an $d \times (D+M)$ all-zero matrix.
\end{proof}
\begin{proof}[Proof of Theorem~\ref{thm:implicitMIMO}, parts~\ref{thm:implicitMIMO:micp} and~\ref{thm:implicitMIMO:ch}]
  Use Lemma~\ref{lem:MIMO_as_nfold} to obtain an $N$-fold IP instance.
  Part~\ref{thm:implicitMIMO:micp} for convex functions follows by applying an algorithm of Dadush and Vempala~\cite{DadushVempala2013}, which runs in time $p^{\Oh(p)} \hat{L}^{\Oh(1)}$, where $p = N (d+D)$ is the dimension.
  For a fixed-charge objective, we may guess, for each $i \in [\tau]$ where $\vezero \in X^i$, a number $\bar{\mu}^i \leq \mu^i$ such that an optimal solution $\velambda$ has $\lambda^i_{\vezero} = \mu^i - \bar{\mu}^i$.
  With this guess at hand, the objective is fully determined to be $\sum_{i=1}^\tau \bar{\mu}^i c^i$ and it remains to verify whether there exists a corresponding decomposition of $\ven$ by solving MIMO with the vector $\bar{\vemu}$ instead of $\vemu$ and without any objective.
  Finally, pick the best among all guesses whose corresponding MIMO is feasible.
  There are at most $N^\tau \leq N^N$ guesses.

  To prove part~\ref{thm:implicitMIMO:ch}, we use an algorithm by Cook et al.~\cite{CookEtAl1992} to enumerate all vertices of the corresponding polyhedron in time $(\log \Delta \cdot M N D)^{\Oh(ND)} \hat{L}^{\Oh(1)}$.
  It is easy to see that a minimum of a concave function is always attained at a vertex.
  Thus, we evaluate the objective on each vertex and return the best as the solution.
\end{proof}

\section{\texorpdfstring{Huge $N$-fold}{Huge N-fold} IP: Part~\ref{thm:implicitMIMO:huge} of Theorem~\ref{thm:implicitMIMO}} \label{sec:huge}
In this section we will prove the following:
\begin{theorem} \label{thm:hugenfold}
  Huge $N$-fold IP can be solved in time $(\|E\|_\infty rs)^{\Oh(r^2s + rs^2)} (t\tau \la f_{\max}, \vel, \veu, \veb, \vemu \ra)^{\Oh(1)}$ with any separable convex objective.
\end{theorem}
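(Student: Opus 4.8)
The goal is an FPT algorithm for Huge $N$-fold IP with parameters $r$, $s$, and $\|E\|_\infty$, and polynomial dependence on $\tau$, $N=\|\vemu\|_1$, $t$, and the encoding length of the rest of the input (which is given in binary). The heart of the matter is that $N$ may be exponential in the input size, so we cannot afford to materialize all $N$ bricks; moreover, even the number of distinct brick values in a naive solution (Onn's $\tau 2^t$ bound) depends exponentially on $t$, which we cannot afford. The plan is to use a \emph{configuration LP / configuration IP} point of view on the bricks: for each type $i$, a ``configuration'' is simply an integer point $\vex \in \Z^t$ with $\vel^i \le \vex \le \veu^i$ and $E^i_2 \vex = \veb^i$, and the huge $N$-fold IP asks for multiplicities $\lambda^i_\vex \in \N$ with $\sum_\vex \lambda^i_\vex = \mu^i$ for each $i$ and $\sum_{i,\vex} E^i_1 \vex\, \lambda^i_\vex = \veb^0$, minimizing $\sum_{i,\vex} f^i(\vex)\lambda^i_\vex$. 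The number of configurations is huge, so we solve the relaxation (the configuration LP), obtain a fractional optimum $\lambda^*$ with small support (at most $r+2\tau$ nonzero variables by a rank argument on the constraint matrix, since there are $r$ linking constraints and $\tau$ convexity constraints, but we may need to be slightly careful about which fractional solution we pick), and then invoke the new proximity theorem announced in the introduction to conclude that there is an integer optimum ``close'' to $\lambda^*$.

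**Reducing to a small $N$-fold IP.** The proximity theorem should say: there is an integer optimal $\lambda^{\mathrm{IP}}$ such that, for each type $i$, $\lambda^{\mathrm{IP}}$ and $\lambda^*$ differ only on a bounded number of configurations, and moreover those configurations lie within bounded $\ell_1$-distance (bounded by a function $g(r,s,\|E\|_\infty)$) of the fractional configurations in $\suppo(\lambda^*)$. This is where the Steinitz Lemma enters: rewrite the difference $\lambda^{\mathrm{IP}}-\lambda^*$ as an exchange between bricks, and argue via Steinitz-type reordering and Graver-basis decomposition (on the matrices $E^i_1$, $E^i_2$, which have $r$ and $s$ rows respectively and entries bounded by $\|E\|_\infty$) that the ``new'' bricks introduced are Graver elements added to or subtracted from the rounded fractional bricks; Graver elements of $E^i_2$ have $\ell_1$-norm at most $(2s\|E\|_\infty+1)^s$ (Eisenbrand–Weismantel-type bound), and only $O(r)$ of them are needed by an integer Carathéodory / proximity argument on the $r$ linking rows. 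Consequently the integer optimum is supported on a set of configurations of size bounded by a function of $r,s,\|E\|_\infty$, and each such configuration is within bounded $\ell_\infty$-distance of a fractional one. This lets us \emph{guess} the relevant configurations (by enumerating the box around each fractional brick, of size $g(r,s,\|E\|_\infty)^t$ per brick — but since the support of $\lambda^*$ has size $O(r+\tau)$, we'd get $(\text{poly})\cdot g^{O((r+\tau)t)}$, which is too much if $\tau$ is large; so instead we keep the problem in $N$-fold form rather than fully enumerating). The cleaner route: after computing $\lambda^*$ and fixing $\lfloor\lambda^*\rfloor$ as a ``base'' assignment, the residual problem — deciding how to adjust within the proximity bound — is itself an $N$-fold IP with $r$ linking rows, $s$ rows per block, $t' = O(t \cdot g(r,s,\|E\|_\infty))$ columns per block after restricting each block's variables to its proximity box, and only $O(r)$ nontrivial blocks (the rest of the $N$ blocks are fixed to their rounded values); this residual $N$-fold IP is then solved by the standard FPT algorithm for $N$-fold IP with separable convex objective (Hemmecke–Onn–Romanchuk / Eisenbrand et al.), in time $(\|E\|_\infty rs)^{O(r^2s+rs^2)}$ times a polynomial, which matches the claimed bound.

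**Handling the separable convex objective and the oracle.** Since $f = \sum_i f^i$ with each $f^i$ separable convex and given by an evaluation oracle, the objective transfers to the residual $N$-fold IP as a separable convex objective on the adjustment variables (convexity is preserved under affine shift by $\lfloor\lambda^*\rfloor$), so the $N$-fold IP machinery applies verbatim; $f_{\max}$ enters only polynomially through the Graver-best-step / dynamic-programming subroutine of the $N$-fold algorithm and through the proximity bound's dependence on the objective, which we want to keep polynomial — this is the one place to be careful, and the way to do it is to argue the proximity bound for the \emph{configuration LP optimum} specifically (not an arbitrary feasible fractional solution), exactly as the introduction flags (``this proximity bound does not have to hold for an arbitrary fractional optimum''). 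The main obstacle, therefore, is proving the proximity theorem with the right dependencies: that the \emph{configuration-LP optimal} $\lambda^*$ is within a bound depending only on $r$, $s$, $\|E\|_\infty$ (and \emph{not} on $t$, $\tau$, $N$, or the magnitudes in $\ven,\vel,\veu$) of some integer optimum, measured as a bounded number of Graver-type exchanges. Everything else — setting up the configuration IP, computing $\lambda^*$ with small support in polynomial time via an LP with a separation oracle over the $E^i_2$-polytopes, building the residual $N$-fold IP, and invoking the known $N$-fold algorithm — is routine given that proximity result and the earlier lemmas. I would structure the section as: (1) configuration IP formulation; (2) the proximity lemma (the technical core, proved via Steinitz + Graver decomposition on the linking rows); (3) the algorithm: solve configuration LP, round, form residual $N$-fold IP, solve it; (4) running-time accounting yielding the stated bound.
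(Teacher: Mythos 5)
Your high-level plan --- set up the configuration LP, extract an optimum with small support, prove proximity of the fractional LP optimum to an integer optimum via Steinitz-type arguments, and finish by solving a polynomially sized residual $N$-fold IP --- is exactly the skeleton of the paper's proof, and your observation that proximity must be argued for the \emph{configuration-LP optimum specifically} is correct and is the key conceptual point. But the proximity step, which you describe as ``routine given that proximity result,'' is in fact the technical heart of the section, and the argument you sketch would not go through as written.

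The obstruction is this. In the Eisenbrand--Weismantel approach one compares a fractional optimum $\vex^*$ of the relaxation to a nearest integer optimum $\vez^*$ and argues that if $\|\vex^*-\vez^*\|_1$ is large, a sign-compatible ``cycle'' $\veh$ (a nonzero kernel element $\veh\sqsubseteq\vex^*-\vez^*$) must exist, which contradicts minimality. But here $\vex^*=\varphi(\vey^*)$ is \emph{not} an optimum of the Huge~CP with respect to $f$ --- it is only conf-optimal, and an ordinary cycle $\veh$ of $\vex^*-\vez^*$ does \emph{not} yield a contradiction, because $\vex^*-\veh$ may fail to be configurable (its bricks need not lie in $\conv(\CC^i)$), or may be configurable but not conf-optimal. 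So one cannot ``argue via Steinitz-type reordering and Graver decomposition'' directly. The paper needs three nontrivial ingredients that your sketch omits: (i) an auxiliary objective $\hat f$ defined via $\hat f$-optimal decompositions of bricks into configurations, so that the value of a conf-optimal $\vex^*$ can be compared to that of $\vez^*$; (ii) the notion of a \emph{configurable cycle}, a cycle that respects some $\hat f$-optimal decomposition brick-by-brick, for which Lemma~\ref{thr:4} does rule out existence when $\|\vex^*-\vez^*\|_1$ is minimal; (iii) a lifting (``rise/fall'') of $\vex^*$ and $\vez^*$ to a higher-dimensional instance, where ordinary cycles correspond to configurable cycles below. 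That lifting in turn requires bounding the number of fractional bricks in the rise, which is where Lemma~\ref{lem:balanced_decomposition} (bounded denominators via Cramer and a Graver-bounded-diameter argument) and the Egyptian-fraction Lemma~\ref{lem:egypt} enter, to obtain \emph{scalable} decompositions of length $\Oh(t^3\log(t\|E_2\|_\infty))$. Only after this lifting does the Steinitz Lemma apply, and to the lifted vectors $\uparrow\vex^*-\uparrow\vez^*$, not to $\vex^*-\vez^*$ itself.

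Two smaller inaccuracies: the support bound from the configuration LP is $r+\tau$ (the LP has $r$ linking rows plus $\tau$ convexity rows, so a vertex has at most $r+\tau$ nonzeros), not $r+2\tau$; and the residual $N$-fold instance does not have only $\Oh(r)$ nontrivial bricks --- it has $\bar{N}\le(r+\tau)P$ bricks, all potentially nontrivial, obtained by subtracting $P$ from every coordinate of the rounded LP solution and solving for the remainder.
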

With this theorem at hand, part~\ref{thm:implicitMIMO:huge} of Theorem~\ref{thm:implicitMIMO} is easily proven:
\begin{proof}[Proof of Theorem~\ref{thm:implicitMIMO}, part~\ref{thm:implicitMIMO:huge}]
  Construct a huge $N$-fold IP by Lemma~\ref{lem:MIMO_as_nfold}.
  Now apply Theorem~\ref{thm:hugenfold}.
\end{proof}

\subsection{Graver Bases and the Steinitz Lemma}
Let $\vex, \vey$ be $n$-dimensional vectors.
We call $\vex, \vey$ \emph{sign\hy{}compatible} if they lie in the same orthant, that is, for each $i \in [n]$, $x_i \cdot y_i \geq 0$.
We call $\sum_i \veg^i$ a \emph{sign\hy{}compatible sum} if all $\veg^i$ are pair-wise sign\hy{}compatible.
Moreover, we write $\vey \sqsubseteq \vex$ if $\vex$ and $\vey$ are sign\hy{}compatible and $|y_i| \leq |x_i|$ for each $i \in [n]$.
Clearly, $\sqsubseteq$ imposes a partial order, called ``conformal order'', on $n$-dimensional vectors.
For an integer matrix $A \in \Z^{m \times n}$, its \emph{Graver basis} $\G(A)$ is the set of $\sqsubseteq$-minimal non-zero elements of the \emph{lattice} of $A$, $\ker_{\Z}(A) = \{\vez \in \Z^n \mid A \vez = \mathbf{0}\}$.
A \emph{circuit} of $A$ is an element $\veg \in \ker_{\Z}(A)$ whose support $\suppo(\veg)$ (i.e., the set of its non-zero entries) is minimal under inclusion and whose entries are coprime.
We denote the set of \emph{circuits of $A$} by $\CC(A)$.
It is known that $\CC(A) \subseteq \G(A)$~\cite[Definition 3.1 and remarks]{Onn2010}.
We make use of the following two propositions:
\begin{proposition}[{Positive Sum Property~\cite[Lemma 3.4]{Onn2010}}] \label{prop:possum}
Let $A \in \Z^{m \times n}$ be an integer matrix.
For any integer vector $\vex \in \ker_{\Z}(A)$, there exists an $n' \leq 2n-2$ and a decomposition $\vex = \sum_{j=1}^{n'} \alpha_j \veg_j$ with $\alpha_j \in \N$ for each $j \in [n']$, into $\veg_j \in \G(A)$.
For any fractional vector $\vex \in \ker(A)$ (that is, $A\vex=\vezero$), there exists a decomposition $\vex = \sum_{j=1}^{n} \alpha_j \veg_j$ into $\veg_j \in \CC(A)$, where $\alpha_j \geq 0$ for each $j \in [n]$.
\end{proposition}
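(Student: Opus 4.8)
The plan is to treat the integer statement and the fractional statement separately; each is a Carath\'eodory-type result for the kernel of $A$ and, in the integer case, for its lattice.

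\emph{Integer statement.} Fix $\vex\in\ker_{\Z}(A)$; if $\vex=\vezero$ the claim is trivial, so assume $\vex\neq\vezero$. As a warm-up I would first show, by induction on $\|\vex\|_1$, that $\vex$ is a sign-compatible $\N$-combination of Graver basis elements: the set of nonzero $\veh\in\ker_{\Z}(A)$ with $\veh\sqsubseteq\vex$ is nonempty (it contains $\vex$), so it has a $\sqsubseteq$-minimal element $\veg$, which lies in $\G(A)$ by definition; then $\vex-\veg\in\ker_{\Z}(A)$ is sign-compatible with $\vex$, hence $\vex-\veg\sqsubseteq\vex$, and $\|\vex-\veg\|_1<\|\vex\|_1$ since $\veg\neq\vezero$ is conformal to $\vex$, so applying the induction hypothesis to $\vex-\veg$ and collecting equal summands gives $\vex=\sum_j\alpha_j\veg_j$ with distinct $\veg_j\in\G(A)$, $\alpha_j\in\N$, and every $\veg_j\sqsubseteq\vex$.

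The remaining, and main, difficulty is to bound the number of \emph{distinct} summands by $2n-2$: the induction above gives no such control, and this is exactly an integer Carath\'eodory statement. Let $O$ be the closed orthant determined by the sign pattern of $\vex$; all $\veg_j$ above lie in $O$. One checks the chain: a nonzero vector is $\sqsubseteq$-minimal in $\ker_{\Z}(A)\cap O$ iff it is irreducible in the monoid $\ker_{\Z}(A)\cap O$ (in a fixed orthant sums have no cancellation), iff it is $\sqsubseteq$-minimal in $\ker_{\Z}(A)$ and lies in $O$, iff it lies in $\G(A)\cap O$; thus $\G(A)\cap O$ is precisely the Hilbert basis of $\ker_{\Z}(A)\cap O$, which is the set of lattice points of the pointed rational cone $\ker(A)\cap O$, of dimension at most $|\suppo(\vex)|\le n$. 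Now I would invoke the classical bound (due to Seb\H{o}) that the Carath\'eodory rank of the Hilbert basis of a pointed rational cone of dimension $k$ is at most $\max(k,2k-2)$. Since $\vex\in\ker_{\Z}(A)\cap O$, this yields $\vex=\sum_{j=1}^{n'}\alpha_j\veg_j$ with $n'\le2n-2$ (for $n=1$ the trivial bound $n'\le1$ applies), $\alpha_j\in\N$, $\veg_j\in\G(A)$, and all $\veg_j$ pairwise sign-compatible, as desired.

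\emph{Fractional statement.} Fix $\vex\in\ker(A)\subseteq\R^n$ with $\vex\neq\vezero$, let $O$ be the orthant containing $\vex$, and set $C=\ker(A)\cap O$, a pointed polyhedral cone with $\vex\in C$ and $\dim C\le\dim\ker(A)\le n$. The structural fact I would use is that every extreme ray of $C$ is generated by a circuit of $A$ lying in $O$: if the primitive integer generator $\vez$ of an extreme ray had support not inclusion-minimal among the nonzero elements of $\ker(A)$, there would be $\vew\in\ker(A)\setminus\{\vezero\}$ with $\suppo(\vew)\subsetneq\suppo(\vez)$, and then $\vez\pm t\vew\in C$ for all sufficiently small $t>0$ (the coordinates where $\vez$ vanishes are unaffected since $\suppo(\vew)\subseteq\suppo(\vez)$), so $\vez=\tfrac12(\vez+t\vew)+\tfrac12(\vez-t\vew)$ is a decomposition within $C$ into vectors not proportional to $\vez$, contradicting extremality; hence $\vez$ is a circuit of $A$. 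By Carath\'eodory's theorem for conic combinations, $\vex$ is a nonnegative combination of at most $\dim C\le n$ extreme rays of $C$, i.e. $\vex=\sum_{j=1}^{n}\alpha_j\veg_j$ with $\alpha_j\ge0$ and $\veg_j\in\CC(A)$ (padding with zero coefficients if fewer rays suffice); since all $\veg_j$ lie in $O$ they are pairwise sign-compatible, completing the proof.
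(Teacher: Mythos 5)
Your proof is correct, and it takes the route one would expect: for the integer half, the observation that the Graver elements sign-compatible with $\vex$ are exactly the Hilbert basis of the pointed cone $\ker(A)\cap O$, followed by Seb\H{o}'s integer Carath\'eodory bound $2k-2$; for the fractional half, the identification of the extreme rays of $\ker(A)\cap O$ with circuits, followed by the ordinary (fractional) Carath\'eodory bound. The paper does not prove this statement itself but cites it as Lemma~3.4 of Onn's book, and your argument is essentially the proof given there, so there is nothing methodologically different to report; you are also right to flag the degenerate bound at $n=1$, which is a boundary convention of the cited statement rather than a flaw in the argument.
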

\begin{proposition}
	[{Separable convex superadditivity~\cite[Lemma 3.3.1]{DeLoeraEtAl2013}}]
	\label{prop:superadditivity}
	Let $f(\vex) = \sum_{i=1}^n f_i(x_i)$ be separable convex, let $\vex \in \R^n$,
	and let $\veg_1,\dots,\veg_k \in \R^n$ be vectors conformal to $\vex$.
	Then
	\begin{align}
	\label{eq:conv_ineq}
	f \left( \vex + \sum_{j=1}^k \alpha_j \veg_j \right) - f(\vex)
	&\geq \sum_{j=1}^k \alpha_j \left( f(\vex + \veg_j) - f(\vex) \right)
	\end{align}
	for arbitrary integers $\alpha_1,\dots,\alpha_k \in \N$.
\end{proposition}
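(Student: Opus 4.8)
The plan is to use separability to reduce inequality~\eqref{eq:conv_ineq} to a statement about a single convex function of one real variable, and then to prove that by induction on $\sum_{j=1}^k\alpha_j$ via the ``increasing slopes'' property of convex functions. For the first step, writing $f(\vey)=\sum_{i=1}^n f_i(y_i)$, both sides of~\eqref{eq:conv_ineq} decompose coordinatewise, so it suffices to prove, for each fixed coordinate $i$, that $f_i\big(x_i+\sum_j\alpha_j g_{j,i}\big)-f_i(x_i)\ge\sum_j\alpha_j\big(f_i(x_i+g_{j,i})-f_i(x_i)\big)$. The conformality hypothesis is exactly what makes this sound: with $i$ fixed it forces the scalars $g_{1,i},\dots,g_{k,i}$ to share a common sign (zeros being harmless). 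So the whole proposition reduces to the scalar claim: for convex $h\colon\R\to\R$, $x\in\R$, reals $g_1,\dots,g_k$ of a common sign, and $\alpha_1,\dots,\alpha_k\in\N$, one has $h\big(x+\sum_j\alpha_j g_j\big)-h(x)\ge\sum_j\alpha_j\big(h(x+g_j)-h(x)\big)$.

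To prove this scalar claim I would first replace $h(y)$ by $h(-y)$ if necessary so that all $g_j\ge0$, and then induct on $A=\sum_j\alpha_j$; for $A\le1$ it is an equality. For the inductive step I would pick some $j_0$ with $\alpha_{j_0}\ge1$, set $S=\sum_j\alpha_j g_j$ and $S'=S-g_{j_0}\ge0$, and use the elementary fact that for convex $h$ and $b\ge0$ the map $u\mapsto h(u+b)-h(u)$ is nondecreasing (because for $u\le u'$ the pairs $\{u,u'+b\}$ and $\{u',u+b\}$ have the same mean but the latter is less spread out). With $u=x$, $u'=x+S'$, $b=g_{j_0}$ this yields $h(x+S)-h(x+S')\ge h(x+g_{j_0})-h(x)$, and the induction hypothesis applied to the multiplicities obtained from $(\alpha_j)$ by decreasing $\alpha_{j_0}$ by one yields $h(x+S')-h(x)\ge\sum_j\alpha_j\big(h(x+g_j)-h(x)\big)-\big(h(x+g_{j_0})-h(x)\big)$; adding these two inequalities completes the induction, and summing the resulting coordinatewise inequalities over $i$ gives~\eqref{eq:conv_ineq}.

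The step I expect to need the most care is the sign bookkeeping buried in ``conformal to $\vex$'': one has to verify that within each coordinate the increments $g_{j,i}$ really do share a sign, in particular at coordinates with $x_i=0$, where sign-compatibility with $\vex$ alone does not constrain the sign of $g_{j,i}$ --- in the intended use the $\veg_j$ also arise from a conformal decomposition and are therefore pairwise sign-compatible, which is precisely what is needed and which I would state explicitly. The remaining ingredients --- the coordinatewise splitting and the one-variable induction --- are routine, provided the direction of the increasing-slopes inequality is fixed correctly (the gain contributed by $g_{j_0}$ is larger when taken further to the right, i.e.\ after adding $S'$).
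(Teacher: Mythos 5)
The paper states this proposition with a citation to De Loera, Hemmecke, and K\"oppe and gives no proof of its own, so there is no in-paper argument to compare against; but your proof is correct and self-contained. The reduction to one variable via separability, the normalization to $g_j \geq 0$, and the induction on $\sum_j \alpha_j$ using the monotonicity of $u \mapsto h(u+b) - h(u)$ for convex $h$ and $b \geq 0$ are all sound, and the base case $\sum_j \alpha_j \leq 1$ is indeed an equality.

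One remark on the sign bookkeeping you flag: under the paper's definition of the conformal order in Section~\ref{sec:huge} ($\vey \sqsubseteq \vex$ requires both sign-compatibility \emph{and} $|y_i| \leq |x_i|$), a coordinate with $x_i = 0$ forces $g_{j,i} = 0$ for every $j$, so that coordinate is trivial; and more generally $\veg_j \sqsubseteq \vex$ for all $j$ already implies the pairwise sign-compatibility of the $\veg_j$ that your scalar lemma needs, so no extra hypothesis has to be imported from the intended application. It is also worth noting that your induction never uses the magnitude bound $|g_{j,i}| \leq |x_i|$ at all, only that the $g_{j,i}$ within each coordinate share a sign; so your argument in fact establishes the inequality under the (slightly weaker) hypothesis that $\veg_1,\dots,\veg_k$ are pairwise sign-compatible, which subsumes the stated hypothesis.
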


Our proximity theorem relies on the Steinitz Lemma, which has recently received renewed attention recently~\cite{EisenbrandWeismantel2018,EisenbrandEtAl2018,JansenRohwedder2018}.

\begin{lemma}[Steinitz~\cite{Steinitz1913}, Sevastjanov, Banaszczyk~\cite{SevastjanovBanasczyk1997}]
\label{lem:steinitz}
  Let $\|\cdot\|$ denote any norm, and let $\vex_1, \dots, \vex_n \in \R^d$ be such that $\|\vex_i\| \leq 1$ for $i \in [n]$ and $\sum_{i=1}^n \vex_i = 0$.
  Then there exists a permutation $\pi \in S_n$ such that for all $k = 1,\dots,n$, the prefix sums satisfy $\left\|\sum_{i=1}^k \vex_{\pi(i)}\right\| \leq d$.
\end{lemma}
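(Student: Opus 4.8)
The plan is to construct the permutation greedily \emph{backwards}, from position $n$ down to position $1$, maintaining at each stage a fractional certificate that the part of the sequence still to be placed has a short sum. Concretely, I would build a chain of index sets $[n] = R_n \supseteq R_{n-1} \supseteq \dots \supseteq R_0 = \emptyset$ with $|R_k| = k$ and recover $\pi$ by declaring $\pi(k)$ to be the unique element of $R_k \setminus R_{k-1}$; then the $k$-th prefix sum is exactly $\sum_{i \in R_k}\vex_i$. The invariant carried along is: for every $k$ there is a vector $\velambda^{(k)}$ indexed by $R_k$ with all entries in $[0,1]$, $\sum_{i \in R_k}\lambda^{(k)}_i \vex_i = \vezero$, and $\sum_{i \in R_k}\lambda^{(k)}_i \ge k-d$ (the condition being vacuous, with $\velambda^{(k)} = \vezero$, once $k \le d$). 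This invariant immediately yields the bound by a single use of the triangle inequality: since $\sum_{i\in R_k}\vex_i$ and $\sum_{i\in R_k}(1-\lambda^{(k)}_i)\vex_i$ differ by $\sum_{i\in R_k}\lambda^{(k)}_i\vex_i = \vezero$, we get $\bigl\|\sum_{i\in R_k}\vex_i\bigr\| = \bigl\|\sum_{i\in R_k}(1-\lambda^{(k)}_i)\vex_i\bigr\| \le \sum_{i\in R_k}(1-\lambda^{(k)}_i) = k - \sum_{i\in R_k}\lambda^{(k)}_i \le d$, and for $k \le d$ the crude estimate $\bigl\|\sum_{i\in R_k}\vex_i\bigr\| \le k \le d$ finishes. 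Note this is the only place the norm is used, which is exactly why the statement holds for an arbitrary norm.

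The base case $k=n$ (assuming $n>d$, as $n \le d$ is trivial) is handled by $\velambda^{(n)} := \tfrac{n-d}{n}\veone$: it lies in $[0,1]^n$, and $\sum_i \lambda^{(n)}_i\vex_i = \tfrac{n-d}{n}\sum_i\vex_i = \vezero$ by hypothesis, while $\sum_i\lambda^{(n)}_i = n-d$. The real work is the inductive step: given $R_k$ with its certificate and $k \ge d+1$, produce $j \in R_k$ (which becomes $\pi(k)$, with $R_{k-1} := R_k \setminus \{j\}$) and a certificate $\velambda^{(k-1)}$ on $R_{k-1}$ with $\sum_i \lambda^{(k-1)}_i\vex_i = \vezero$ and $\sum_i \lambda^{(k-1)}_i \ge (k-1)-d$. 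I would pass to the polytope $\Pi = \{\vemu \in [0,1]^{R_k} : \sum_{i\in R_k}\mu_i\vex_i = \vezero\}$, which is nonempty (it contains $\velambda^{(k)}$), and let $\vemu^*$ be a \emph{vertex} of $\Pi$ maximizing $\sum_i\mu_i$, so that $\sum_i \mu^*_i \ge \sum_i\lambda^{(k)}_i \ge k-d$. Because $\Pi$ is defined by only $d$ linear equations on top of the box, every vertex has at most $d$ coordinates strictly between $0$ and $1$; hence $\vemu^*$ has at least $k-d \ge 1$ coordinates in $\{0,1\}$. If some such coordinate is $0$, say $\mu^*_j = 0$, then the restriction of $\vemu^*$ to $R_k\setminus\{j\}$ still satisfies the kernel equation (the discarded term vanishes) and still has coordinate-sum $\ge k-d > (k-1)-d$; scaling this restriction towards $\vezero$ gives a certificate of sum exactly $(k-1)-d$. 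The delicate point is the complementary possibility, that every $\{0,1\}$-valued coordinate of the extremal vertex equals $1$: then $\sum_i\mu^*_i$ is pinned down enough to constrain how big $k$ can be relative to $d$, and this branch must be closed off separately --- e.g.\ by instead extremizing over $\Pi$ intersected with the extra halfspace $\sum_i\mu_i \ge (k-1)-d$ and redoing the vertex-counting (which then forces a zero coordinate), or by observing that in this branch the set of coordinates equal to $1$ already carries almost all the mass. Getting this casework exactly right is where I expect the bookkeeping to live.

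So the principal obstacle is not the architecture --- a downward induction powered by a fractional witness and one triangle inequality --- but the guarantee that at each step some extreme point of $\{0 \le \vemu \le \veone,\ \sum_i\mu_i\vex_i = \vezero\}$ has a zero coordinate that can be peeled off while costing at most one unit of witness mass. This is a norm-free, purely linear-algebraic sparsity fact about vertices of a box intersected with $d$ hyperplanes, and it is what makes the whole scheme go through. As a warm-up I would first check $d=1$ by hand (reals in $[-1,1]$ summing to zero: always append a still-unplaced number whose sign opposes the current prefix), which already displays the certificate-driven greedy step in miniature, and then lift the idea to general $d$ via the polytope argument above.
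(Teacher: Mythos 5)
The paper does not actually prove this lemma---it is imported from Steinitz and Sevastjanov--Banaszczyk---so your proposal is measured against the classical Grinberg--Sevast'yanov argument (as presented, e.g., by B\'ar\'any), whose architecture you have indeed reproduced: a downward induction on sets $R_k$ carrying a fractional kernel witness, the recovery of $\pi$ from the chain, and a single triangle inequality giving $\bigl\|\sum_{i\in R_k}\vex_i\bigr\|\le k-\sum_{i\in R_k}\lambda^{(k)}_i\le d$. All of that, including the base case and the trivial regime $k\le d$, is fine.

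The gap is the one you flag yourself, and it is genuine. With your polytope $\Pi=\{\vemu\in[0,1]^{R_k}:\sum_{i\in R_k}\mu_i\vex_i=\vezero\}$ and a maximum-mass vertex, the bad branch really occurs: for $d=1$ and $x_1=1,\,x_2=-1,\,x_3=1,\,x_4=-1$ the unique maximizer is $\veone$, which is a vertex with no zero coordinate, and deleting any single index destroys the kernel identity of the restricted witness, so nothing can be peeled. Your remedy (b) (``the ones already carry almost all the mass'') does not by itself produce a legal witness on $R_k\setminus\{j\}$, precisely because removing any $j$ with $\mu^*_j>0$ breaks $\sum_i\mu_i\vex_i=\vezero$. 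Remedy (a) is the right one, but only after the corrections that constitute exactly the missing bookkeeping: pin the witness mass to the \emph{next} level's target rather than maximizing, i.e.\ work with $\Pi\cap\{\sum_i\mu_i=(k-1)-d\}$ (nonempty, since $\vezero\in\Pi$ and your current witness has mass at least $k-d$, so a convex combination hits $(k-1)-d$), or equivalently \emph{minimize} $\sum_i\mu_i$ over $\Pi\cap\{\sum_i\mu_i\ge(k-1)-d\}$ so that an optimal vertex lies on that hyperplane; maximizing over the halfspace-restricted polytope, as literally written, changes nothing. At such a vertex at most $d+1$ coordinates are fractional. If none is, the number of ones equals $(k-1)-d$ and there are $d+1\ge1$ zeros; if at least one is, the fractional parts have strictly positive sum, so the number of ones is at most $(k-2)-d$ and the number of zeros is at least $k-((k-2)-d)-(d+1)=1$. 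Deleting a zero coordinate then yields the witness for $R_{k-1}$ with mass exactly $(k-1)-d$, no rescaling needed, and the induction closes. With this step supplied your argument becomes the standard proof; without it, the inductive step does not go through.
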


\subsection{Configurations of \texorpdfstring{Huge $N$-fold IP}{Huge N-fold IP}}
Let a huge $N$-fold IP instance with $\tau$ types be fixed.
Recall that $\mu^i$ denotes the number of bricks of type $i$, and $\vemu = (\mu^1, \dots, \mu^\tau)$.
We define for each $i \in [\tau]$ the set of configurations of type $i$ as
\[
\CC^i = \left\{\vecc \in \Z^t \mid E^i_2 \vecc = \veb^i, \, \vel^i \leq \vecc \leq \veu^i \right\} \enspace .
\]

\medskip
Here we are interested in four instances of convex programming (CP) and convex integer programming (IP) related to huge $N$-fold IP.
First, we have the \emph{Huge IP}
\begin{equation} \label{eq:hugenfold}
\min f(\vex): \, E^{(N)} \vex = \veb, \, \vel \leq \vex \leq \veu, \, \vex \in \Z^{Nt}, \tag{HugeIP}
\end{equation}
and the \emph{Huge CP}, which is a relaxation of~\eqref{eq:hugenfold},
\begin{equation} \label{eq:hugenfold_relax}
\min \hat{f}(\vex): \, E^{(N)} \vex = \veb, \, \vel \leq \vex \leq \veu, \, \vex \in \R^{Nt} \enspace . \tag{HugeCP}
\end{equation}
We shall define the objective function $\hat{f}$ later, for now it suffice to say that for all integral feasible $\vex \in \Z^{Nt}$ we have $f(\vex) = \hat{f}(\vex)$ so that indeed the optimum of~\eqref{eq:hugenfold_relax} lower bounds the optimum of~\eqref{eq:hugenfold}.
Then, there is the \emph{Configuration LP} of~\eqref{eq:hugenfold},
\begin{align}
\min \vev \vey & = \sum_{i=1}^\tau \sum_{\vecc \in \CC^i} f^i(\vecc) \cdot y(i, \vecc) & \label{eq:conflp_start} \\
\sum_{i=1}^{\tau} E^i_1 \sum_{\vecc \in \CC^i} \vecc y(i, \vecc) &= \veb^0 &\notag \\
\sum_{\vecc \in \CC^i} y(i, \vecc) &= \mu^i & \forall i \in [\tau]\notag \\
\vey &\geq \mathbf{0} \enspace . & \label{eq:conflp_end}
\end{align}
Letting $B$ be its constraint matrix and $\ved = (\veb^0, \vemu)$ be the right hand side, we can shorten \eqref{eq:conflp_start}-\eqref{eq:conflp_end} as
\begin{equation} \label{eq:conflp}
\min \vev \vey:\, B \vey = \ved, \, \vey \geq \mathbf{0} \enspace . \tag{ConfLP}
\end{equation}
Finally, by observing that $B\vey=\ved$ implies $y(i,\vecc) \leq \|\vemu\|_\infty$ for all $i \in [\tau], \vecc \in \CC^i$, defining $C = \sum_{i \in [\tau]} |\CC^i|$, leads to the \emph{Configuration ILP},
\begin{equation} \label{eq:confilp}
\min \vev \vey: \,B \vey = \ved, \, \mathbf{0} \leq \vey \leq (\|\vemu\|_\infty, \dots, \|\vemu\|_\infty),\, \vey \in \N^{C} \enspace . \tag{ConfILP}
\end{equation}

A solution $\vex$ of~\eqref{eq:hugenfold_relax} is \emph{configurable} if, for every $i \in [\tau]$, every brick $\vex^j$ of type $i$ is a convex combination of $\CC^i$, i.e., $\vex^j \in \conv(\CC^i)$.
We shall define a mapping from solutions of~\eqref{eq:conflp} to configurable solutions of~\eqref{eq:hugenfold_relax} as follows.
For every solution $\vey$ of~\eqref{eq:conflp} we define a solution $\vex = \varphi(\vey)$ of~\eqref{eq:hugenfold_relax} to have $\floor{y(i, \vecc)}$ bricks of type $i$ with configuration $\vecc$ and, for each $i \in [\tau]$, let $\mathfrak{f}_i = \sum_{\vecc \in \CC^i} \{y(i, \vecc)\}$ and let $\vex$ have $\mathfrak{f}_i$ bricks with value $\hat{\vecc}_i = \frac{1}{\mathfrak{f}_i} \sum_{\vecc \in \CC^i} \{y(i, \vecc)\}\vecc$.
(Because $\sum_{\vecc \in \CC^i} y(i,\vecc) = \mu^i$ and $\sum_{\vecc \in \CC^i} \floor{y(i,\vecc)}$ is clearly integral, $\mathfrak{f}_i = \mu^i - \sum_{\vecc \in \CC^i} \floor{y(i,\vecc)}$ is also integral.)
Note that $\varphi(\vey)$ has at most $|\suppo(\vey)|$ fractional bricks since $\sum_{i=1}^{\tau} \mathfrak{f}_i \leq |\suppo(\vey)|$.
Call a solution $\vex$ of~\eqref{eq:hugenfold_relax} \emph{conf-optimal} if there is an optimal solution $\vey$ of~\eqref{eq:conflp} such that $\vex = \varphi(\vey)$.

We are going to introduce an auxiliary objective function $\hat{f}$, but we first want to discuss our motivation in doing so.
The reader might already see that for any integer solution $\vey \in \Z^{C}$ of~\eqref{eq:confilp}, $\vev \vey = f(\varphi(\vey))$ holds, as we shall prove in Lemma~\ref{lem:conflp_hugecp_optima}.
Our natural hope would be that for a fractional optimum $\vey^*$ of~\eqref{eq:conflp} we would have $\vev \vey^* = f(\varphi(\vey^*))$.
However, by convexity of $f$ and the construction of $\hat{\vecc}_i$ it only follows that $\vev \vey^* \geq f(\varphi(\vey^*))$.
Even worse, there may be two conf-optimal solutions $\vex$ and $\vex'$ with $f(\vex) < f(\vex')$.
To overcome this, we define an auxiliary objective function $\hat{f}$ with the property that for any conf-optimal solution $\vex^*$ of~\eqref{eq:hugenfold_relax} and any optimal solution $\vey^*$ of~\eqref{eq:conflp}, $\vev \vey^* = \hat{f}(\vex^*)$.

Fix a brick $\vex^j$ of type $i$.
We say that a multiset $\Gamma^j \subseteq (\CC^i \times \R_{\geq 0})$ is a \emph{decomposition of $\vex^j$} and write $\vex^j = \sum \Gamma^j$ if $\vex^j = \sum_{(\vecc, \lambda_\vecc) \in \Gamma^j} \lambda_{\vecc} \vecc$ and $\sum_{(\vecc, \lambda_{\vecc}) \in \Gamma^j} \lambda_\vecc = 1$.
We define the objective $\hat{f}(\vex)$ for all configurable solutions as $\hat{f}(\vex) = \sum_{j=1}^N \hat{f}^i(\vex^j)$ where
\begin{equation} \label{eq:def_fhat}
\hat{f}^i(\vex^j) = \min_{\Gamma^j: \sum \Gamma^j = \vex^j} \sum_{(\vecc, \lambda_{\vecc}) \in \Gamma^j} \lambda_{\vecc} \cdot f^i(\vecc) \enspace .
\end{equation}
A decomposition $\Gamma^j$ of $\vex^j$ is \emph{$\hat{f}$-optimal} if it is a minimizer of~\eqref{eq:def_fhat}.
\begin{lemma} \label{lem:fhat_convexity}
Let $\vex$ be a configurable solution of~\eqref{eq:hugenfold_relax}, and $\vex^j$ be a brick of type $i$. Then $f^i(\vex^j) \leq \hat{f}^i(\vex^j)$. If $\vex^j$ is integral, then $f^i(\vex^j) = \hat{f}^i(\vex^j)$.
\end{lemma}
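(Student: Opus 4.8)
The plan is to unfold the definition of $\hat{f}^i(\vex^j)$ in \eqref{eq:def_fhat} and compare the value of each decomposition of $\vex^j$ against $f^i$ evaluated at $\vex^j$ itself. Let $\Gamma^j = \{(\vecc, \lambda_{\vecc})\}$ be any decomposition of $\vex^j$ of type $i$, so that $\vex^j = \sum_{(\vecc,\lambda_{\vecc}) \in \Gamma^j} \lambda_{\vecc}\vecc$ and $\sum_{(\vecc,\lambda_{\vecc}) \in \Gamma^j} \lambda_{\vecc} = 1$ with all $\lambda_{\vecc} \geq 0$. Then $\vex^j$ is a convex combination of the configurations $\vecc$ appearing in $\Gamma^j$, and since $f^i$ is convex, Jensen's inequality gives $f^i(\vex^j) = f^i\big(\sum \lambda_{\vecc}\vecc\big) \leq \sum \lambda_{\vecc} f^i(\vecc)$. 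Taking the minimum over all decompositions $\Gamma^j$ of $\vex^j$ yields $f^i(\vex^j) \leq \hat{f}^i(\vex^j)$, which is the first claim.

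For the second claim, suppose $\vex^j$ is integral. Then $\vex^j \in \Z^t$ and, since $\vex$ is a configurable solution, $\vex^j \in \conv(\CC^i)$; but also $E^i_2 \vex^j = \veb^i$ and $\vel^i \leq \vex^j \leq \veu^i$ hold because $\vex^j$ lies in the polytope $\conv(\CC^i)$ (these constraints are preserved under convex combinations), so $\vex^j \in \CC^i$. Hence the singleton multiset $\Gamma^j = \{(\vex^j, 1)\}$ is a valid decomposition of $\vex^j$, contributing value exactly $1 \cdot f^i(\vex^j) = f^i(\vex^j)$ to \eqref{eq:def_fhat}. Therefore $\hat{f}^i(\vex^j) \leq f^i(\vex^j)$, and combined with the inequality from the first part we get $\hat{f}^i(\vex^j) = f^i(\vex^j)$.

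I do not expect any serious obstacle here; the lemma is essentially a direct application of Jensen's inequality plus the observation that an integral point of $\conv(\CC^i)$ is a configuration. The only mild subtlety is the verification that an integral brick of a configurable solution actually lies in $\CC^i$ (as opposed to merely in $\conv(\CC^i)$): one needs that $\vex^j$ satisfies $E^i_2 \vex^j = \veb^i$ and the bounds $\vel^i \leq \vex^j \leq \veu^i$, which is immediate because every point of $\conv(\CC^i)$ satisfies these (affine-linear and bound) constraints and $\vex^j$ is additionally assumed integral, so it meets the defining conditions of $\CC^i$ exactly. If $\CC^i = \emptyset$ the statement is vacuous for type $i$, and if $\mathfrak{f}_i$-type bricks are involved that is irrelevant since we argue brick-by-brick.
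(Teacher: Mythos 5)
Your proof is correct and follows essentially the same route as the paper's: Jensen's inequality for the first inequality, and the singleton decomposition $\{(\vex^j,1)\}$ for the second. Your handling of the integral case is in fact slightly more careful than the paper's --- you explicitly verify that an integral $\vex^j \in \conv(\CC^i)$ belongs to $\CC^i$, and you derive the equality by combining the singleton's upper bound with the Jensen lower bound rather than relying on the paper's parenthetical remark that the singleton is ``the only decomposition,'' which need not hold in general (an integral point of $\conv(\CC^i)$ may admit nontrivial convex decompositions into other configurations).
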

\begin{proof}
By convexity of $f^i$ we have
\begin{align*}
f^i(\vex^j) = f^i\left(\sum_{(\vecc, \lambda_\vecc) \in \Gamma^j} \lambda_{\vecc} \vecc\right) & \leq \sum_{(\vecc, \lambda_\vecc) \in \Gamma^j} \lambda_\vecc f^i(\vecc),
\end{align*}
for any decomposition $\Gamma^j$ of $\vex^j$.
If $\vex^j$ is integral, then $\Gamma^j = \{(\vex^j, 1)\}$ is its optimal decomposition (indeed, it is the only decomposition), concluding the proof.
\end{proof}

Moreover, for each $\vex^j$ there is an $\hat{f}$-optimal decomposition $\Gamma^j$ with $|\Gamma^j|\leq t+1$ since $\hat{f}$-optimal decompositions correspond to optima of a linear program with $t+1$ equality constraints, namely
\begin{equation}
\min \sum_{\vecc \in \CC^i} \lambda_{\vecc} f^i(\vecc) \quad \text{s.t.} \quad \sum_{\vecc \in \CC^i} \lambda_{\vecc} \vecc = \vex^j, \, \|\velambda\|_1 = 1, \, \velambda \geq \vezero \enspace .\label{eq:fhat_lp}
\end{equation}
Let us describe the relationship of the objective values of the various formulations.

\begin{lemma} \label{lem:conflp_fhat}
For any feasible solution $\tilde{\vey}$ of~\eqref{eq:conflp},
\begin{equation} \label{eq:varphi_nonincreasing}
\vev \tilde{\vey} \geq \hat{f}(\varphi(\tilde{\vey})) \enspace .
\end{equation}
\end{lemma}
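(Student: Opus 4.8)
The plan is to use the definition of $\hat f$ directly. Since each $\hat f^i(\vex^j)$ is a \emph{minimum} over decompositions of $\vex^j$, to upper-bound $\hat f(\varphi(\tilde\vey))$ it suffices to exhibit, for every brick of $\varphi(\tilde\vey)$, \emph{one} feasible decomposition whose cost I can control; I do not need to understand the $\hat f$-optimal decomposition at all.

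First I would unpack $\vex := \varphi(\tilde\vey)$. For each type $i \in [\tau]$ it consists of $\lfloor y(i,\vecc)\rfloor$ bricks equal to the configuration $\vecc$ for each $\vecc \in \CC^i$, together with $\mathfrak{f}_i = \sum_{\vecc \in \CC^i}\{y(i,\vecc)\}$ bricks equal to $\hat\vecc_i = \frac{1}{\mathfrak{f}_i}\sum_{\vecc\in\CC^i}\{y(i,\vecc)\}\vecc$ (when $\mathfrak{f}_i > 0$; if $\mathfrak{f}_i = 0$ there are no such bricks and all terms below that mention $\hat\vecc_i$ simply vanish). Then I would bound the two kinds of bricks separately. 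For a brick equal to a configuration $\vecc \in \CC^i$, the trivial decomposition $\Gamma = \{(\vecc,1)\}$ is feasible, hence $\hat f^i(\vecc) \le f^i(\vecc)$ (this is exactly the second half of Lemma~\ref{lem:fhat_convexity}). For a brick equal to $\hat\vecc_i$, set $\lambda_\vecc := \{y(i,\vecc)\}/\mathfrak{f}_i$; then $\lambda_\vecc \ge 0$, $\sum_{\vecc\in\CC^i}\lambda_\vecc = 1$, and $\sum_{\vecc\in\CC^i}\lambda_\vecc\vecc = \hat\vecc_i$, so $\Gamma = \{(\vecc,\lambda_\vecc) : \{y(i,\vecc)\} > 0\}$ is a feasible decomposition of $\hat\vecc_i$, giving
\[
\hat f^i(\hat\vecc_i) \le \sum_{\vecc\in\CC^i}\frac{\{y(i,\vecc)\}}{\mathfrak{f}_i}\,f^i(\vecc)\,.
\]

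Finally I would sum up over all bricks. The $\mathfrak{f}_i$ identical fractional bricks of type $i$ contribute $\mathfrak{f}_i\cdot\hat f^i(\hat\vecc_i) \le \sum_{\vecc\in\CC^i}\{y(i,\vecc)\}\,f^i(\vecc)$, and the integral bricks of type $i$ contribute $\sum_{\vecc\in\CC^i}\lfloor y(i,\vecc)\rfloor\,\hat f^i(\vecc) \le \sum_{\vecc\in\CC^i}\lfloor y(i,\vecc)\rfloor\,f^i(\vecc)$. Adding these two bounds and then summing over $i \in [\tau]$, and using $\lfloor y(i,\vecc)\rfloor + \{y(i,\vecc)\} = y(i,\vecc)$, yields
\[
\hat f(\varphi(\tilde\vey)) \;\le\; \sum_{i=1}^\tau\sum_{\vecc\in\CC^i} y(i,\vecc)\,f^i(\vecc) \;=\; \vev\tilde\vey\,,
\]
which is~\eqref{eq:varphi_nonincreasing}. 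There is no genuinely hard step here: the only things to be careful about are the bookkeeping (reading the definition of $\varphi$ as ``$\mathfrak{f}_i$ bricks of value $\hat\vecc_i$'', each contributing $\hat f^i(\hat\vecc_i)$) and the degenerate case $\mathfrak{f}_i = 0$; the inequality direction is automatic because we only ever upper-bound $\hat f^i$ by the cost of a \emph{chosen} decomposition.
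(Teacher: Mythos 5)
Your proof is correct and takes essentially the same route as the paper: decompose $\varphi(\tilde\vey)$ into integral and fractional bricks, use the trivial decomposition $\{(\vecc,1)\}$ on the integral ones and the weighted decomposition $\{(\vecc,\{y(i,\vecc)\}/\mathfrak{f}_i)\}$ on the fractional ones, bound $\hat f^i$ by the cost of the exhibited feasible decomposition of~\eqref{eq:fhat_lp}, and sum. The only cosmetic differences are that the paper uses the equality $\hat f^i=f^i$ on integral bricks where you use the (sufficient) inequality $\le$, and that you explicitly flag the degenerate case $\mathfrak{f}_i=0$, which the paper leaves implicit.
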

\begin{proof}
Let $\tilde{\vex} = \varphi(\tilde{\vey})$.
We can decompose $\hat{f}(\varphi(\tilde{\vey})) = U_1 + U_2$, where $U_1$ is the cost of integer bricks of $\varphi(\tilde{\vey})$ and $U_2$ is the cost of its fractional bricks.
It is easy to see that $U_1 = \vev \floor{\tilde{\vey}}$ by the equality of $f^i$ and $\hat{f}^i$, for all $i \in [\tau]$, over integer vectors.
We shall further decompose the value $U_2$ into costs of fractional bricks of each type.
For each $i \in [\tau]$, the cost of each fractional brick of type $i$ is at most $\frac{1}{\mathfrak{f}_i} \sum_{\vecc \in \CC^i} \{\tilde{y}(i, \vecc)\}f^i(\vecc)$ because the decomposition $\left\{\left(\vecc, \frac{1}{\mathfrak{f}_i} y^i_{\vecc}\right) \middle| \vecc \in \CC^i\right\}$ of $\hat{\vecc}_i$ (recall that $\hat{\vecc}_i = \frac{1}{\mathfrak{f}_i} \sum_{\vecc \in \CC^i} \{y(i, \vecc)\}\vecc$) is merely a feasible (not necessarily optimal) solution of~\eqref{eq:fhat_lp}, and summing up this estimate over all $\mathfrak{f}_i$ fractional bricks of type $i$ gives $\mathfrak{f}_i \cdot \frac{1}{\mathfrak{f}_i} \sum_{\vecc \in \CC^i} \{\tilde{y}(i, \vecc)\}f^i(\vecc) = \vev^i \{\vey^i\}$, concluding the proof.
\end{proof}

\begin{lemma} \label{lem:conflp_hugecp_optima}
Let $\hat{\vey}$ be an optimum of~\eqref{eq:confilp}, $\vez^*$ an optimum of~\eqref{eq:hugenfold}, $\vey^*$ an optimum of~\eqref{eq:conflp}, $\tilde{\vex} = \varphi(\vey^*)$, and $\vex^*$ a configurable optimum of~\eqref{eq:hugenfold_relax}.
Then
\[\hat{f}(\vez^*) = f(\vez^*) = f(\varphi(\hat{\vey})) = \vev \hat{\vey} \geq \vev \vey^* = \hat{f}(\tilde{\vex}) = \hat{f}(\vex^*) \enspace .\]
\end{lemma}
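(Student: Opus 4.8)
The plan is to traverse the displayed chain from left to right, using only Lemma~\ref{lem:fhat_convexity} and Lemma~\ref{lem:conflp_fhat} together with one small new construction. Before starting one checks that $\hat f$ is defined at all the points that appear: the bricks of $\varphi(\cdot)$ are configurable by construction, and every brick of an integral feasible point of~\eqref{eq:hugenfold} of type $i$ lies in $\CC^i$ (it satisfies $E^i_2\vecc=\veb^i$ and $\vel^i\le\vecc\le\veu^i$), so $\vez^*$ is configurable as well.

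\emph{The left equalities.} Since every brick of $\vez^*$ is integral, Lemma~\ref{lem:fhat_convexity} applied brick-by-brick gives $\hat f(\vez^*)=f(\vez^*)$. For $f(\vez^*)=f(\varphi(\hat\vey))=\vev\hat\vey$ I would record the standard value-preserving correspondence between integral feasible points of~\eqref{eq:hugenfold} and integral feasible points of~\eqref{eq:confilp}: from an integral $\vez$, let $y(i,\vecc)$ be the number of type-$i$ bricks equal to $\vecc$; conversely, for integral $\vey$ the point $\varphi(\vey)$ has no fractional brick and consists of exactly $y(i,\vecc)$ copies of $\vecc$ among its type-$i$ bricks. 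Feasibility transfers both ways (the top $r$ rows of $E^{(N)}\vex=\veb$ read $\sum_i E^i_1\sum_{j\text{ of type }i}\vex^j=\veb^0$, which is the $\veb^0$-equation of~\eqref{eq:confilp} after grouping bricks by type; the per-brick rows encode $\vecc\in\CC^i$; and $y(i,\vecc)\le\mu^i\le\|\vemu\|_\infty$), and in both directions the objective equals $\sum_{i}\sum_{\vecc}f^i(\vecc)\,y(i,\vecc)$. Hence $\hat\vey$ yields a feasible point of~\eqref{eq:hugenfold} of value $\vev\hat\vey$ and $\vez^*$ yields a feasible point of~\eqref{eq:confilp} of value $f(\vez^*)$; optimality of $\vez^*$ and $\hat\vey$ forces $f(\vez^*)=\vev\hat\vey=f(\varphi(\hat\vey))$.

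\emph{The middle inequality and the right equalities.} Because $B\vey=\ved$ already implies $\vey\le(\|\vemu\|_\infty,\dots,\|\vemu\|_\infty)$, the feasible set of~\eqref{eq:confilp} is exactly the integral points of the feasible set of~\eqref{eq:conflp}, so $\hat\vey$ is feasible for~\eqref{eq:conflp} and $\vev\hat\vey\ge\vev\vey^*$. The one new ingredient is: \emph{for any configurable feasible solution $\vex$ of~\eqref{eq:hugenfold_relax} there is a feasible solution $\vey_\vex$ of~\eqref{eq:conflp} with $\vev\vey_\vex=\hat f(\vex)$.} I build it by fixing, for each brick $\vex^j$ of type $i$, an $\hat f$-optimal decomposition (which exists by~\eqref{eq:fhat_lp}) and letting $y_\vex(i,\vecc)$ be the total weight $\vecc$ receives over all type-$i$ bricks; then $\sum_{\vecc}y_\vex(i,\vecc)=\mu^i$ since each decomposition has weights summing to $1$, the $\veb^0$-equation holds because grouping $E^{(N)}\vex=\veb$ by type gives $\sum_i E^i_1\sum_\vecc\vecc\,y_\vex(i,\vecc)=\veb^0$, and $\vev\vey_\vex=\hat f(\vex)$ by $\hat f$-optimality of the chosen decompositions. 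Applying this with $\vex=\tilde\vex=\varphi(\vey^*)$ gives $\vev\vey^*\le\hat f(\tilde\vex)$, which together with Lemma~\ref{lem:conflp_fhat} ($\vev\vey^*\ge\hat f(\tilde\vex)$) yields $\vev\vey^*=\hat f(\tilde\vex)$. Applying it with $\vex=\vex^*$ gives $\vev\vey^*\le\hat f(\vex^*)$; and since $\tilde\vex$ is itself a configurable feasible solution of~\eqref{eq:hugenfold_relax} while $\vex^*$ is a configurable optimum, $\hat f(\vex^*)\le\hat f(\tilde\vex)=\vev\vey^*$. Combining, $\hat f(\tilde\vex)=\hat f(\vex^*)$, which closes the chain.

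\emph{Main obstacle.} Everything except the last construction is bookkeeping; the crux is turning a configurable fractional solution into an honest \eqref{eq:conflp} solution of exactly matching value. The two points needing care are that grouping bricks by type reproduces the $\veb^0$-row, and that using $\hat f$-\emph{optimal} decompositions (rather than arbitrary ones) is exactly what upgrades the inequality of Lemma~\ref{lem:conflp_fhat} to the equality $\vev\vey^*=\hat f(\tilde\vex)$.
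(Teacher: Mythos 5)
Your proposal is correct and follows essentially the same route as the paper's proof: the map $\vey_\vex$ you build from $\hat f$-optimal brick decompositions is precisely the paper's $\phi$ with the identity $\vev\phi(\vex)=\hat f(\vex)$, and the surrounding chain of (in)equalities is organized identically (integer-solution equality via Lemma~\ref{lem:fhat_convexity}, relaxation inequality, then $\phi$/Lemma~\ref{lem:conflp_fhat} sandwich and conf-optimality of $\vex^*$). The only difference is that you spell out the integral HugeIP--ConfILP correspondence and the $\veb^0$-row bookkeeping, which the paper leaves implicit.
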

\begin{proof}
We have $\hat{f}(\vez^*) = f(\vez^*)$ by equality of $\hat{f}$ and $f$ on integer solutions (Lemma~\ref{lem:fhat_convexity}) and $f(\vez^*) = f(\varphi(\hat{\vey})) = \vev \hat{\vey}$ by the definition of $\varphi$ and the fact that $\hat{\vey}$ is an integer optimum.
Clearly $\vev \hat{\vey} \geq \vev \vey^*$ because~\eqref{eq:conflp} is a relaxation of~\eqref{eq:confilp} and thus the former lower bounds the latter.

Let us define a mapping $\phi$ for any configurable solution $\vex$ of~\eqref{eq:hugenfold_relax}.
Start with $\phi(\vex) = \vey = \vezero$.
For each brick $\vex^j$ of type $i$ let $\Gamma^j$ be its $\hat{f}$-optimal decomposition and update $y^i_\vecc := y^i_\vecc + \lambda_\vecc$ for each $(\vecc, \lambda_\vecc) \in \Gamma^j$.
Now it is easy to see that
\begin{equation} \label{eq:phi_equal}
\vev \phi(\vex) = \hat{f}(\vex) \enspace .
\end{equation}
Our goal is to argue that $\vev \vey^* = \hat{f}(\tilde{\vex}) = \hat{f}(\vex^*)$.
We have $\hat{f}(\tilde{\vex}) = \hat{f}(\varphi(\vey^*)) \leq \vev \vey^*$ by~\eqref{eq:varphi_nonincreasing}, but by optimality of $\vey^*$ and~\eqref{eq:phi_equal} it must be that $\vev \phi(\tilde{\vex}) = \hat{f}(\tilde{\vex}) \geq \vev \vey^*$ and hence $\vev \vey^* = \hat{f}(\tilde{\vex})$.
Similarly,
$$\hat{f}(\vex^*) = \vev \phi(\vex^*) \geq \vev\vey^* \geq \hat{f}(\varphi(\vey^*))$$
with the ``$=$'' by~\eqref{eq:phi_equal}, the first ``$\geq$'' by optimality of $\vey^*$, and the second ``$\geq$'' by~\eqref{eq:varphi_nonincreasing}.
However, since $\hat{f}(\varphi(\vey^*)) \geq \hat{f}(\vex^*)$ by optimality of $\vex^*$, all inequalities are in fact equalities and thus $\vev \vey^* = \hat{f}(\vex^*)$.
\end{proof}

Our goal is to show that the proximity of any conf-optimal solution $\vex^*$ of~\eqref{eq:hugenfold_relax} from an integer optimum $\vez^*$ of~\eqref{eq:hugenfold} depends on the number of fractional bricks.
This number, by definition of~$\varphi$, corresponds to the size of the support of the corresponding solution $\vey$ of~\eqref{eq:conflp}.
The following lemma shows how to produce optima of~\eqref{eq:conflp} with small support.
We emphasize that our proximity theorem does not require that the fractional solution be optimal but rather conf-optimal.

\begin{lemma}
\label{lem:frac}
  An optimal solution $\vey^*$ of~\eqref{eq:conflp} with $|\suppo(\vey^*)| \leq r + \tau$ and a conf-optimal solution $\vex^* = \varphi(\vey^*)$ of~\eqref{eq:hugenfold_relax} with at most $r + \tau$ fractional bricks can be found in time $\|E^{(N)}\|_\infty^{\Oh(s^2)} (r t \tau \la f_{\max}, \vel, \veu, \veb, \vemu \ra)^{\Oh(1)}$.
\end{lemma}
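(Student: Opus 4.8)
The plan is to establish Lemma~\ref{lem:frac} in two stages: first produce an optimal solution of \eqref{eq:conflp} with small support, then bound the time needed to compute it. For the support bound, I would start from \emph{any} optimal solution $\vey^0$ of \eqref{eq:conflp} (its existence follows since the feasible region is a nonempty polyhedron on which $\vev\vey$ is bounded below, as $y(i,\vecc)\le\|\vemu\|_\infty$). The matrix $B$ of \eqref{eq:conflp} has exactly $r+\tau$ rows: the $r$ rows of the ``linking'' constraint $\sum_i E_1^i\sum_\vecc\vecc\,y(i,\vecc)=\veb^0$, and the $\tau$ rows $\sum_{\vecc\in\CC^i}y(i,\vecc)=\mu^i$. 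A standard basic-feasible-solution / vertex argument on the face of optima then yields an optimal $\vey^*$ supported on at most $r+\tau$ columns: move $\vey^0$ to a vertex of the optimal face by repeatedly adding circuit directions of $B$ that keep the objective constant until the support can no longer be reduced. Since each such vertex is determined by $r+\tau$ linearly independent tight constraints among $\{B\vey=\ved,\ \vey\ge\vezero\}$, and there are $r+\tau$ equalities, at most $r+\tau$ coordinates are nonzero. Feeding this $\vey^*$ into $\varphi$ gives $\vex^*=\varphi(\vey^*)$ with at most $|\suppo(\vey^*)|\le r+\tau$ fractional bricks, by the observation already recorded in the text that $\varphi(\vey)$ has at most $|\suppo(\vey)|$ fractional bricks; and $\vex^*$ is conf-optimal by definition.

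\textbf{Computing it in the stated time.} The obstacle is that \eqref{eq:conflp} has $C=\sum_i|\CC^i|$ columns, which is huge (exponential in $t$), so we cannot write it down explicitly. Instead I would solve \eqref{eq:conflp} via \emph{column generation}: maintain a restricted master LP over a polynomially-bounded set of columns, and at each iteration either certify optimality or find an improving column by a pricing oracle. The pricing problem, given dual multipliers $(\vey_0\in\R^r,\ \vepi\in\R^\tau)$, asks for each type $i$ whether there exists $\vecc\in\CC^i=\{\vecc\in\Z^t\mid E_2^i\vecc=\veb^i,\ \vel^i\le\vecc\le\veu^i\}$ with reduced cost $f^i(\vecc)-(\vey_0^\top E_1^i)\vecc-\pi_i<0$. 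For a linear $f^i$ this is an integer program with $s$ equality constraints, $t$ variables, and coefficients bounded by $\|E^{(N)}\|_\infty$, solvable in time $\|E^{(N)}\|_\infty^{\Oh(s^2)}\cdot(t\la f_{\max},\vel,\veu,\veb\ra)^{\Oh(1)}$ by known ILP algorithms for few rows (Steinitz-based proximity, cf.\ the results cited in the ``Tools'' paragraph); for the extension-separable convex case, the separable convex objective is handled by the same few-rows machinery. Each pricing call adds one column; bounding the number of iterations polynomially (e.g.\ via the standard argument that the number of distinct bases is controlled, or by running the ellipsoid/simplex with a polynomial bound once we know an optimal vertex has support $\le r+\tau$) keeps the total running time within $\|E^{(N)}\|_\infty^{\Oh(s^2)}(rt\tau\la f_{\max},\vel,\veu,\veb,\vemu\ra)^{\Oh(1)}$. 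After obtaining an optimal solution, the support-reduction step above is carried out on the (now small, explicitly represented) set of generated columns, which only involves linear algebra of size $\poly(r,\tau)$ times the encoding length.

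\textbf{Main difficulty.} The routine parts are the vertex/support-reduction argument and the reduction of pricing to a few-rows ILP. The delicate point I expect to spend the most care on is \emph{bounding the number of column-generation iterations} and thereby the overall running time: naive column generation need not terminate in polynomially many steps, so one must argue — using that the relevant optimal face is low-dimensional (support $\le r+\tau$), hence spanned by few columns — that a polynomial number of pricing rounds suffices, or alternatively invoke the ellipsoid method with the pricing oracle as separation oracle for the dual and then round to a vertex. Getting the dependence on $\|E^{(N)}\|_\infty$ to be $\|E^{(N)}\|_\infty^{\Oh(s^2)}$ rather than something larger hinges on using the sharpest available few-rows ILP solver for the pricing subproblem and on noting $E_2^i\in\Z^{s\times t}$ so the pricing IP genuinely has only $s$ rows.
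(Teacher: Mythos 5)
Your proposal, in its final form (the ellipsoid fallback), is essentially the paper's approach: solve the dual of \eqref{eq:conflp} by the ellipsoid method using the pricing oracle as a separation oracle, where pricing is a few-rows ILP solvable in time $\|E_2\|_\infty^{\Oh(s^2)}\poly(t,\la\cdot\ra)$; then recover a primal optimum supported on at most $r+\tau$ columns. You correctly identify the dual dimension as $r+\tau$ and the pricing structure. Two things you gloss over that the paper makes precise and that I would push you on. First, your "primary" plan — column generation with a claimed polynomial iteration bound — does not work as stated: neither "the number of distinct bases is controlled" nor "the optimal face is low-dimensional" by themselves give a polynomial bound on iterations when the number of columns $C$ is exponential, and a basis-counting bound would be $\binom{C}{r+\tau}$. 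The ellipsoid/Gr\"{o}tschel--Lov\'{a}sz--Schrijver route is not merely an "alternative"; it is the only one of your suggestions that delivers the polynomial bound, and the paper goes directly to it. Second, after the ellipsoid run you still need to produce a \emph{primal} optimum from a \emph{dual} optimum over exponentially many primal variables; the paper does this by selecting, among the $T$ dual inequalities the ellipsoid ever touched, a set $I$ of $r+\tau$ inequalities that determine the optimal dual vertex, restricting \eqref{eq:conflp} to the corresponding $r+\tau$ columns, and invoking LP duality to argue that an optimum of this restricted LP is an optimum of the full \eqref{eq:conflp}. Your version ("carry out the support-reduction step on the generated columns") can be made to work by taking a basic optimal solution of the restriction to all $T$ generated columns, but the duality argument that the restricted LP has the same optimum as the full \eqref{eq:conflp} is the step you leave implicit and should be spelled out.
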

\begin{proof}
  Recall that $\tau$ is the number of brick types in the huge $N$-fold instance.
  Since \eqref{eq:conflp} has exponentially many variables, we take the standard approach and solve the dual LP of~\eqref{eq:conflp} by the ellipsoid method and the equivalence of optimization and separation. Thus in $T = (r t \tau \la f_{\max}, \vel, \veu, \veb, \vemu \ra)^{\Oh(1)}$ calls to a separation oracle  we find an optimal solution while only considering at most $T$ inequalities of the dual (we make this argument specific later in the proof).
  Moreover, we can assume that the discovered optimal solution is a vertex of the dual LP.
  Observe that the dimension of the dual LP is the number of rows of the primal LP, which is $r + \tau$.
  Since each point in $(r+\tau)$-dimensional space is fully determined by $r+\tau$ linearly independent inequalities, there must exist a subset $I$ of $r+\tau$ inequalities among the $T$ inequalities considered by the ellipsoid method which fully determines the dual optimum.

  We can find them as follows.
  Taking the $T$ considered inequalities one by one, if either some inequality of $I$ or the present inequality is dominated\footnote{An inequality $\vea \vex \leq b$ is dominated by $\vecc \vex \leq d$ if for every $\vex$ such that $\vea \vex \leq b$ we also have $\vecc \vex \leq d$.} by an inequality that can be obtained as a non-negative linear combination of the others, discard it; otherwise, include it in $I$ and continue.
  Testing whether an inequality $\ved \vez \leq e'$ is dominated by a non-negative combination of a system of inequalities $D \vez \leq \vece$ can be decided by solving
  \begin{equation}
  \min \vealpha \vece \quad \text{s.t.} \quad \vealpha^{\intercal} D = \ved, \, \vealpha \geq \vezero, \label{eq:redundant_ineq_lp}
  \end{equation}
  and checking whether the optimal value is at most $e'$.
  If it is, then the solution $\vealpha$ encodes a non-negative linear combination of the inequalities $D \vez \leq \vece$ which yields an inequality dominating $\ved \vez \leq e'$, and if it is not, then such a combination does not exists.
  Thus, when a new inequality is considered, we solve~\eqref{eq:redundant_ineq_lp} for at most $r+\tau$ inequalities (the new one and all less than $r+\tau$ already selected ones), and there are $T$ inequalities considered.
  The time needed to solve~\eqref{eq:redundant_ineq_lp} is $\poly(r+\tau, \la \vel, \veu, \veb, f_{\max} \ra)$ because its dimension is at most $r+\tau$ and its encoding length is at most $\la \vel, \veu, \veb, f_{\max} \ra$.
  Altogether, we need time $T \cdot (r+\tau) \cdot \poly(r+\tau, \la \vel, \veu, \veb, f_{\max} \ra) \leq \poly(r t \tau \la f_{\max}, \vel, \veu, \veb, \vemu \ra) =: T'$.

  Let the \emph{restricted~\eqref{eq:conflp}} be the~\eqref{eq:conflp} restricted to the variables corresponding to the inequalities in $I$.
  We claim that an optimal solution to the restricted~\eqref{eq:conflp} is also an optimal solution to~\eqref{eq:conflp}.
  To see that, use LP duality: the optimal objective value of the dual LP restricted to inequalities in $I$ is the same as one of the dual optima, and thus an optimal solution of the restricted~\eqref{eq:conflp} must be an optimal solution of~\eqref{eq:conflp}.
  Finally, we solve the restricted~\eqref{eq:conflp} using any polynomial LP algorithm in time $T'' \leq ((r+\tau) \la f_{\max}, \vel, \veu, \vemu \ra)^{\Oh(1)}$.

  Clearly, the main task is thus solving the Dual LP, which we discuss in the rest of the proof.
  The Dual LP of~\eqref{eq:conflp} in variables $\bm{\alpha} \in \R^r$, $\bm{\beta} \in \R^{\tau}$ is:
  \begin{align}
            \max  & & \veb^0 \bm{\alpha} + \sum_{i=1}^\tau\mu^i \beta^i & \notag \\
    \textrm{s.t.} & & (\bm{\alpha} E^i_1) \vecc - f^i(\vecc)              & \leq -\beta^i & \forall i \in [\tau], \,\forall \vecc \in \CC^i \label{eq:dualcons}
  \end{align}
  To verify feasibility of $(\bm{\alpha}, \bm{\beta})$ for $i \in [\tau]$, we need to maximize the left-hand side of~\eqref{eq:dualcons} over all $\vecc \in \CC^i$ and check if it is at most $-\beta^i$.
  This corresponds to finding integer variables $\vecc$ which for given $(\bm{\alpha}, \bm{\beta})$ solve
  \begin{equation*}
    \min f^i(\vecc)- (\bm{\alpha} E^i_1) \vecc = -\max \,(\bm{\alpha} E^i_1) \vecc - f^i(\vecc) \,:\, E^i_2 \vecc = \veb^i, \, \vel^i \leq \vecc \leq \veu^i, \vecc \in \Z^t \enspace .
  \end{equation*}
  The program above can be solved in time $T''' \leq \|E_2\|_\infty^{s^2} t^3 \la \veb^i, \vel^i, \veu^i \ra$~\cite[Theorem 4]{KouteckyEtAl2018}.
  Grötschel et al.~\cite[Theorem 6.4.9]{GrotschelEtAl1993} show that an optimal solution of LP (even one which is a vertex~\cite[Remark 6.5.2]{GrotschelEtAl1993}) can be found in a number of calls to a separation oracle which is polynomial in the dimension and the encoding length of the inequalities returned by a separation oracle.
  Clearly the inequalities~\eqref{eq:dualcons} have encoding length bounded by $\la f_{\max}, \vel, \veu, \veb, \vemu \ra$ and thus $T = (r t \tau \la f_{\max}, \vel, \veu, \veb, \vemu \ra)^{\Oh(1)}$ calls to a separation oracle are sufficient to find an optimal vertex solution.
  The resulting total time complexity is thus $T \cdot T''' + T'$ to construct the restricted~\eqref{eq:conflp} instance and time $T''$ to solve it, $T \cdot T''' + T' + T''$ total, which is upper bounded by $\|E^{(N)}\|_\infty^{\Oh(s^2)} (r t \tau \la f_{\max}, \vel, \veu, \veb, \vemu \ra)^{\Oh(1)}$, as claimed.

  Let $\vey^*$ be an optimum of~\eqref{eq:conflp} we have thus obtained.
  Since $|I| \leq r+\tau$, the support of $\vey^*$ is of size at most $r+\tau$.
  Now setting $\vex^* = \varphi(\vey^*)$ is enough, since we have already argued that $\vex^*$ has at most $|\suppo(\vey^*)|$ fractional bricks.
\end{proof}

\subsection{Proximity theorem}
Let us give a plan for the next subsection.
We wish to prove that for every conf-optimal solution $\vex^*$ of~\eqref{eq:hugenfold_relax} there is an integer solution $\vez^*$ of~\eqref{eq:hugenfold} nearby.
In the following, let $\vex^*$ be a conf-optimal solution of~\eqref{eq:hugenfold_relax} and $\vez^*$ be an optimal solution of~\eqref{eq:hugenfold} minimizing $\|\vex^* - \vez^*\|_1$.
A technique for proving proximity theorems which was introduced by Eisenbrand and Weismantel~\cite{EisenbrandWeismantel2018} works as follows.
A vector $\veh \in\Z^{Nt}$ is called a \emph{cycle} of $\vex^* - \vez^*$ if $\veh \neq \vezero$, $E^{(N)} \veh  = \vezero$, and $\veh \sqsubseteq \vex^* - \vez^*$.
It is not too difficult to see that if $\vex'$ is an optimal (\emph{not} conf-optimal) solution of~\eqref{eq:hugenfold_relax} with the objective $f$, then there cannot exist a cycle of $\vex' - \vez^*$ (cf. proof of Lemma~\ref{thr:4}).
Based on a certain decomposition of $\vex^* - \vez^*$ into integer and fractional smaller dimensional vectors and by an application of the Steinitz Lemma, the existence of a cycle is proven unless $\|\vex^*-\vez^*\|_1$ is roughly bounded by the number of fractional bricks of $\vex^*$.
However, we cannot apply this technique directly as an optimal solution $\vex'$ of~\eqref{eq:hugenfold_relax} might have many fractional bricks.
On the other hand, an existence of a cycle $\veh$ of $\vex^* - \vez^*$ does not necessarily contradict that $\|\vex^* - \vez^*\|_1$ is minimal, because $\vex^* + \veh$ might not be a conf-optimal solution, which is an essential part of the argument.

All of this leads us to introduce a stronger notion of a cycle.
We say that $\veh \in \Z^{Nt}$ is a \emph{configurable cycle} of $\vex^* - \vez^*$ (with respect to $\vex^*$) if $\veh$ is a cycle of $\vex^* - \vez^*$, for each brick $j \in [N]$ of type $i \in [\tau]$ there exists an $\hat{f}$-optimal decomposition $\Gamma^j$ of $(\vex^*)^j$ such that we may write $\veh^j = \sum_{(\vecc, \lambda_\vecc) \in \Gamma^j} \lambda_\vecc \veh_\vecc$ and for each $(\vecc, \lambda_\vecc) \in \Gamma^j$ we have $\veh_\vecc \sqsubseteq \vecc - (\vez^*)^j$ and $\veh_\vecc \in \Ker_\Z(E^i_2)$.
Soon we will show that if $\|\vex^* - \vez^*\|_1$ is minimal, $\vex^* - \vez^*$ does not have a configurable cycle.
The next task becomes to show how large must $\|\vex^* - \vez^*\|_1$ be in order for a configurable cycle to exist.
Recall that the technique of Eisenbrand and Weismantel~\cite{EisenbrandWeismantel2018} can be used to rule out an existence of a (regular) cycle, not a configurable cycle.
To overcome this, we ``lift'' both $\vex^*$ and $\vez^*$ to a higher dimensional space and show that a cycle in this space corresponds to a configurable cycle in the original space.
Only then are we ready to prove a proximity bound using the aforementioned technique.
\begin{lemma} \label{lem:conf_cycle_conf_sol}
If $\veh$ is a configurable cycle of $\vex^* - \vez^*$, then $\vex^* - \veh$ is configurable.
\end{lemma}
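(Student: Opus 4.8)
The plan is to show that each brick $(\vex^* - \veh)^j$ of type $i$ still lies in $\conv(\CC^i)$, which is exactly the definition of being configurable. First I would fix a brick $j\in[N]$ of type $i$ and unpack the hypothesis that $\veh$ is a configurable cycle: there is an $\hat f$-optimal decomposition $\Gamma^j$ of $(\vex^*)^j$, say $(\vex^*)^j = \sum_{(\vecc,\lambda_\vecc)\in\Gamma^j}\lambda_\vecc \vecc$ with $\sum_{(\vecc,\lambda_\vecc)\in\Gamma^j}\lambda_\vecc = 1$, together with vectors $\veh_\vecc$ satisfying $\veh^j = \sum_{(\vecc,\lambda_\vecc)\in\Gamma^j}\lambda_\vecc\veh_\vecc$, $\veh_\vecc\sqsubseteq \vecc - (\vez^*)^j$, and $\veh_\vecc\in\Ker_\Z(E^i_2)$. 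The key algebraic observation is then
\[
(\vex^* - \veh)^j \;=\; \sum_{(\vecc,\lambda_\vecc)\in\Gamma^j}\lambda_\vecc\,(\vecc - \veh_\vecc),
\]
so it suffices to argue that each $\vecc - \veh_\vecc$ is itself a valid configuration of type $i$, i.e. $\vecc - \veh_\vecc \in \CC^i$; being a convex combination of members of $\CC^i$, the brick then lies in $\conv(\CC^i)$.

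To see $\vecc - \veh_\vecc\in\CC^i$ I would check the three defining conditions of $\CC^i = \{\ve c'\in\Z^t\mid E^i_2\ve c' = \veb^i,\ \vel^i\le\ve c'\le\veu^i\}$. First, $\vecc - \veh_\vecc$ is integral: $\vecc\in\CC^i\subseteq\Z^t$ and $\veh_\vecc\in\Ker_\Z(E^i_2)\subseteq\Z^t$. Second, the equality constraint holds: $E^i_2(\vecc - \veh_\vecc) = E^i_2\vecc - E^i_2\veh_\vecc = \veb^i - \vezero = \veb^i$, using $\vecc\in\CC^i$ and $\veh_\vecc\in\Ker_\Z(E^i_2)$. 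Third, the bounds: this is where $\veh_\vecc\sqsubseteq \vecc - (\vez^*)^j$ is used. Since $\veh$ is a cycle of $\vex^*-\vez^*$ we have $\veh\sqsubseteq\vex^*-\vez^*$; I would then note that $(\vez^*)^j$, being a brick of a feasible integer solution of \eqref{eq:hugenfold}, satisfies $\vel^i\le(\vez^*)^j\le\veu^i$, and that $\vecc$, being a configuration, satisfies $\vel^i\le\vecc\le\veu^i$. The conformality $\veh_\vecc\sqsubseteq\vecc - (\vez^*)^j$ means $\veh_\vecc$ and $\vecc-(\vez^*)^j$ lie in the same orthant with $|\,(\veh_\vecc)_k\,|\le|\,\vecc_k - (\vez^*)^j_k\,|$ coordinatewise, so $\vecc - \veh_\vecc$ lies on the segment between $\vecc$ and $(\vez^*)^j$ — more precisely, coordinatewise $\vecc_k - (\veh_\vecc)_k$ is between $(\vez^*)^j_k$ and $\vecc_k$. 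Since both endpoints lie in $[\vel^i,\veu^i]$, so does $\vecc - \veh_\vecc$.

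The main obstacle I anticipate is the bound step: one has to be careful that conformality is with respect to $\vecc - (\vez^*)^j$ (not $(\vex^*)^j - (\vez^*)^j$), so that the "shrinking towards $(\vez^*)^j$" argument applies per configuration $\vecc$ rather than only for the averaged brick; this is precisely why the definition of a configurable cycle imposes $\veh_\vecc\sqsubseteq\vecc-(\vez^*)^j$ at the level of each $\vecc$ in the decomposition. Granting that, all three checks are routine, and combining over all $(\vecc,\lambda_\vecc)\in\Gamma^j$ shows $(\vex^*-\veh)^j\in\conv(\CC^i)$; since this holds for every brick $j$, the vector $\vex^*-\veh$ is configurable. (We should also remark that $\vex^*-\veh$ is still feasible for \eqref{eq:hugenfold_relax}: $E^{(N)}(\vex^*-\veh) = \veb - \vezero = \veb$ since $E^{(N)}\veh=\vezero$, and $\vel\le\vex^*-\veh\le\veu$ follows from the same coordinatewise betweenness argument applied bricks-wise, as $\vel\le\vez^*\le\veu$ and $\vel\le\vex^*\le\veu$ with $\veh\sqsubseteq\vex^*-\vez^*$.)
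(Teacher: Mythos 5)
Your proof is correct and takes essentially the same approach as the paper's: write the brick $(\vex^*-\veh)^j = \sum_{(\vecc,\lambda_\vecc)\in\Gamma^j}\lambda_\vecc(\vecc-\veh_\vecc)$ and verify that each $\vecc-\veh_\vecc$ belongs to $\CC^i$. You are in fact more careful than the paper at the bound step: the paper only invokes $\veh\sqsubseteq\vex^*-\vez^*$ to get $\vel\le\vex^*-\veh\le\veu$, which is a bound on the aggregated brick and does not by itself show that each summand $\vecc-\veh_\vecc$ lies in $[\vel^i,\veu^i]$; your per-coordinate betweenness argument from $\veh_\vecc\sqsubseteq\vecc-(\vez^*)^j$ (using that both $\vecc$ and $(\vez^*)^j$ lie in $[\vel^i,\veu^i]$) supplies exactly the detail the paper leaves implicit.
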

\begin{proof}
Fix $j \in [N]$.
Let $\vep$ be the brick $(\vex^* - \veh)^j$ and let $i \in [\tau]$ be its type.
Now $\vep$ can be written as $\vep = \sum_{(\vecc, \lambda_\vecc) \in \Gamma^j} \lambda_\vecc (\vecc - \veh_\vecc)$.
Furthermore, we have $E^i_2(\vecc - \veh_{\vecc}) = E^i_2 \vecc = \veb^j$, and, by $\veh \sqsubseteq \vex^* - \vez^*$, we also have $\vel \leq \vex^* - \veh \leq \veu$.
\end{proof}
We now need a technical lemma:
\begin{lemma} \label{lem:cycle_inequality}
Let $\vex^*$ be a conf-optimal solution of~\eqref{eq:hugenfold_relax}, let $\vez^*$ be an optimum of~\eqref{eq:hugenfold}, and let $\veh^*$ be a configurable cycle of $\vex^* - \vez^*$.
  Then
  \begin{align}
\hat{f}(\vez^* + \veh^*) + \hat{f}(\vex^* - \veh^*) &\leq \hat{f}(\vez^*) + \hat{f}(\vex^*) \enspace . \label{eq:cycle_convex}
\end{align}
\end{lemma}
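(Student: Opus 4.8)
The plan is to prove~\eqref{eq:cycle_convex} brick by brick and then sum. Fix a brick $j\in[N]$ of type $i\in[\tau]$ and write $\vew=(\vez^*)^j$, $\vev=(\vex^*)^j$, and $\veh^j=(\veh^*)^j$. By the definition of a configurable cycle there is an $\hat{f}$-optimal decomposition $\Gamma^j$ of $\vev$ together with vectors $\veh_\vecc\in\ker_\Z(E^i_2)$, one for each $(\vecc,\lambda_\vecc)\in\Gamma^j$, with $\veh_\vecc\sqsubseteq\vecc-\vew$ and $\veh^j=\sum_{(\vecc,\lambda_\vecc)\in\Gamma^j}\lambda_\vecc\veh_\vecc$; since $\Gamma^j$ is $\hat{f}$-optimal, $\hat{f}^i(\vev)=\sum_{(\vecc,\lambda_\vecc)\in\Gamma^j}\lambda_\vecc f^i(\vecc)$. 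As in the proof of Lemma~\ref{lem:conf_cycle_conf_sol}, $E^i_2(\vecc-\veh_\vecc)=\veb^i$ and, since $\veh_\vecc\sqsubseteq\vecc-\vew$ puts $\vecc-\veh_\vecc$ coordinate-wise between the boxed vectors $\vecc$ and $\vew$, also $\vel^i\le\vecc-\veh_\vecc\le\veu^i$; hence $\vecc-\veh_\vecc\in\CC^i$, so $\{(\vecc-\veh_\vecc,\lambda_\vecc)\}$ is a feasible (not necessarily optimal) decomposition of $\vev-\veh^j=\sum_\vecc\lambda_\vecc(\vecc-\veh_\vecc)$ and therefore $\hat{f}^i(\vev-\veh^j)\le\sum_{(\vecc,\lambda_\vecc)\in\Gamma^j}\lambda_\vecc f^i(\vecc-\veh_\vecc)$. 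The same mechanism (now using $E^i_2\vew=\veb^i$) shows $\vew+\veh^j=\sum_\vecc\lambda_\vecc(\vew+\veh_\vecc)$ is a convex combination of configurations in $\CC^i$, so $\vez^*+\veh^*$ is configurable; it is also integral, and it is feasible for~\eqref{eq:hugenfold_relax} since $E^{(N)}\veh^*=\vezero$ and $\vez^*+\veh^*$ lies coordinate-wise between $\vez^*$ and $\vex^*$ (as $\veh^*\sqsubseteq\vex^*-\vez^*$). Thus Lemma~\ref{lem:fhat_convexity} gives $\hat{f}(\vez^*+\veh^*)=f(\vez^*+\veh^*)$, and similarly $\hat{f}(\vez^*)=f(\vez^*)$.

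The heart of the argument is the brick-level estimate
\[
f^i(\vew+\veh^j)-f^i(\vew)\ \le\ \sum_{(\vecc,\lambda_\vecc)\in\Gamma^j}\lambda_\vecc\bigl(f^i(\vecc)-f^i(\vecc-\veh_\vecc)\bigr),
\]
obtained by chaining two convexity facts. First, for each $\vecc$ I apply separable convex superadditivity (Proposition~\ref{prop:superadditivity}; equivalently, the coordinate-wise statement that $t\mapsto g(t+h)-g(t)$ is non-decreasing for convex $g$ and $h>0$) with base point $\vew$ to the splitting $\vecc-\vew=\veh_\vecc+\bigl((\vecc-\vew)-\veh_\vecc\bigr)$, whose two summands are sign-compatible because $\veh_\vecc\sqsubseteq\vecc-\vew$; this yields $f^i(\vecc)-f^i(\vecc-\veh_\vecc)\ge f^i(\vew+\veh_\vecc)-f^i(\vew)$, and summing with weights $\lambda_\vecc\ge0$ gives $\sum_\vecc\lambda_\vecc\bigl(f^i(\vecc)-f^i(\vecc-\veh_\vecc)\bigr)\ge\sum_\vecc\lambda_\vecc\bigl(f^i(\vew+\veh_\vecc)-f^i(\vew)\bigr)$. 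Second, Jensen's inequality for the convex function $f^i$ (using $\sum_\vecc\lambda_\vecc=1$ and $\veh^j=\sum_\vecc\lambda_\vecc\veh_\vecc$) gives $f^i(\vew+\veh^j)-f^i(\vew)\le\sum_\vecc\lambda_\vecc\bigl(f^i(\vew+\veh_\vecc)-f^i(\vew)\bigr)$. Combining the two chains proves the estimate.

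Plugging $\hat{f}^i(\vev)=\sum_\vecc\lambda_\vecc f^i(\vecc)$ and $\hat{f}^i(\vev-\veh^j)\le\sum_\vecc\lambda_\vecc f^i(\vecc-\veh_\vecc)$ into this estimate gives, for every brick $j$,
\[
f^i\bigl((\vez^*)^j+(\veh^*)^j\bigr)+\hat{f}^i\bigl((\vex^*)^j-(\veh^*)^j\bigr)\ \le\ f^i\bigl((\vez^*)^j\bigr)+\hat{f}^i\bigl((\vex^*)^j\bigr).
\]
Summing over $j\in[N]$ — legitimate for $\hat{f}(\vex^*-\veh^*)$ since $\vex^*-\veh^*$ is configurable by Lemma~\ref{lem:conf_cycle_conf_sol} — yields $f(\vez^*+\veh^*)+\hat{f}(\vex^*-\veh^*)\le f(\vez^*)+\hat{f}(\vex^*)$, and substituting $f(\vez^*)=\hat{f}(\vez^*)$ and $f(\vez^*+\veh^*)=\hat{f}(\vez^*+\veh^*)$ gives~\eqref{eq:cycle_convex}. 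Essentially all of this is bookkeeping — checking that the decompositions used have nonnegative weights summing to one with all vectors in $\CC^i$, and that $\hat{f}$ is defined at $\vez^*+\veh^*$; the only step needing genuine care is the sign analysis behind the ``shifted-slope'' inequality $f^i(\vecc)-f^i(\vecc-\veh_\vecc)\ge f^i(\vew+\veh_\vecc)-f^i(\vew)$, where the hypothesis $\veh_\vecc\sqsubseteq\vecc-\vew$ is essential. Note that conf-optimality of $\vex^*$ enters only implicitly: it is the definition of a configurable cycle that supplies the $\hat{f}$-optimal decompositions $\Gamma^j$, which is exactly what makes $\hat{f}^i(\vev)=\sum_\vecc\lambda_\vecc f^i(\vecc)$ an equality.
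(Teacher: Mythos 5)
Your proof is correct and uses the same core mechanism as the paper's: the univariate convexity inequality $g(z+r)+g(x-r)\le g(z)+g(x)$ for $r\sqsubseteq x-z$, applied coordinate-wise per configuration $\vecc$ in the $\hat{f}$-optimal decomposition $\Gamma^j$, then aggregated with the $\lambda_\vecc$ weights, using $\hat{f}$-optimality of $\Gamma^j$ to control the right-hand side and feasibility of the shifted decompositions to control the left. The only cosmetic difference is that you first pass $\hat{f}(\vez^*+\veh^*)$ to $f(\vez^*+\veh^*)$ via integrality (Lemma~\ref{lem:fhat_convexity}) and then invoke Jensen, whereas the paper stays at the $\hat{f}$ level throughout and bounds $\hat{f}^i(\vez+\veh)$ directly by exhibiting the feasible decomposition $\{(\vez+\veh_\vecc,\lambda_\vecc)\}$ of~\eqref{eq:fhat_lp} — the same step in different clothing.
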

\begin{proof}
We begin by a simple observation: let $g: \R \to \R$ be a convex function, $x \in \R$, $z \in \Z$, and $r \in \Z$ be such that $r \sqsubseteq x-z$. By convexity of $g$ we have that
\begin{equation} \label{eq:cycle_convex_univariate}
g(z+r) + g(x-r) \leq g(z) + g(x) \enspace .
\end{equation}
Fix $j \in [N]$ and $\vez = (\vez^*)^j$, $\vex = (\vex^*)^j$, $\veh = (\veh^*)^j$, and let $i$ be the type of brick $j$.
Since $\veh^*$ is a configurable cycle there exists an $\hat{f}$-optimal decomposition $\Gamma$ of $\vex$ such that, for each $(\vecc, \lambda_\vecc) \in \Gamma$, there exists a $\veh_\vecc \sqsubseteq \vecc - \vez$, $\veh_\vecc \in \Ker_\Z(E^i_2)$, and $\veh = \sum_{(\vecc, \lambda_\vecc) \in \Gamma} \lambda_\vecc \veh_\vecc$.
Due to separability of $f$ we may apply~\eqref{eq:cycle_convex_univariate} independently to each coordinate, obtaining for each $\vecc$
\begin{align*}
f^i(\vez + \veh_\vecc) + f^i(\vecc - \veh_\vecc) &\leq f^i(\vez) + f^i(\vecc) \enspace .
\end{align*}
Since all arguments of $f^i$ are integral, we immediately get
\begin{align*}
\hat{f}^i(\vez + \veh_\vecc) + \hat{f}^i(\vecc - \veh_\vecc) &\leq \hat{f}^i(\vez) + \hat{f}^i(\vecc) \enspace .
\end{align*}
Aggregating according to $\Gamma$, we get (recall that we have $\sum_{(\vecc, \lambda_{\vecc}) \in \Gamma} \lambda_{\vecc} = 1$)
\begin{align*}
\sum_{(\vecc, \lambda_\vecc) \in \Gamma} \lambda_\vecc \left(\hat{f}^i(\vez + \veh_\vecc) + \hat{f}^i(\vecc - \veh_\vecc) \right) & \leq \sum_{(\vecc, \lambda_\vecc) \in \Gamma} \lambda_\vecc \left(\hat{f}^i(\vez) + \hat{f}^i(\vecc) \right) = \hat{f}^i(\vez) + \sum_{(\vecc, \lambda_\vecc) \in \Gamma} \lambda_{\vecc}\hat{f}^i(\vecc),
\end{align*}
where by $\hat{f}$-optimality of $\Gamma$ the right hand side is equal to $\hat{f}^i(\vez) + \hat{f}^i(\vex)$.
As for the left hand side, observe that decompositions $\Gamma' = \{(\vez + \veh_\vecc, \lambda_\vecc) \mid (\vecc, \lambda_\vecc) \in \Gamma\}$ and $\Gamma'' = \{(\vecc - \veh_\vecc, \lambda_\vecc) \mid (\vecc, \lambda_\vecc) \in \Gamma\}$ satisfy $\sum \Gamma' = \vez + \veh$ and $\sum \Gamma'' = \vex - \veh$ but are only feasible (not necessarily optimal) solutions of~\eqref{eq:fhat_lp} and thus we have
$$\hat{f}^i(\vez + \veh) + \hat{f}^i(\vex - \veh) \leq \sum_{(\vecc, \lambda_\vecc) \in \Gamma} \lambda_\vecc \left(\hat{f}^i(\vez + \veh_\vecc) + \hat{f}^i(\vecc - \veh_\vecc) \right) \enspace .$$
Combining then yields
$$\hat{f}^i(\vez + \veh) + \hat{f}^i(\vex - \veh) \leq \hat{f}^i(\vez) + \hat{f}^i(\vex),$$
and since we have proven this claim for every brick $j$, aggregation over bricks concludes the proof of the main claim~\eqref{eq:cycle_convex}.
\end{proof}
Let us show that if $\vex^*$ and $\vez^*$ are as stated, then there is no configurable cycle of $\vex^* - \vez^*$.
\begin{lemma}
	\label{thr:4}
	Let $\vex^*$ be a conf-optimal solution of~\eqref{eq:hugenfold_relax} and let $\vez^*$ be an optimal solution of~\eqref{eq:hugenfold} such that $\|\vex^*-\vez^*\|_1$ is minimal.
	Then there is no configurable cycle of $\vex^* - \vez^*$.
\end{lemma}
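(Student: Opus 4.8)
The plan is to argue by contradiction: suppose $\veh^*$ is a configurable cycle of $\vex^* - \vez^*$. I will then exhibit a new feasible integer point for~\eqref{eq:hugenfold} that is strictly closer to $\vex^*$ in $\|\cdot\|_1$ than $\vez^*$ is, while being no worse in objective, thereby contradicting the choice of $\vez^*$ as an optimum minimizing $\|\vex^* - \vez^*\|_1$. The natural candidate is $\vez^* + \veh^*$. Since $\veh^* \sqsubseteq \vex^* - \vez^*$, moving from $\vez^*$ toward $\vex^*$ by $\veh^*$ strictly decreases the $\ell_1$-distance (every coordinate of $\veh^*$ has the same sign as the corresponding coordinate of $\vex^* - \vez^*$ and is bounded by it in absolute value), so $\|\vex^* - (\vez^* + \veh^*)\|_1 < \|\vex^* - \vez^*\|_1$ as long as $\veh^* \neq \vezero$, which holds by definition of a cycle.

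Next I must check that $\vez^* + \veh^*$ is actually feasible for~\eqref{eq:hugenfold}. Integrality is clear since $\veh^* \in \Z^{Nt}$. The equality constraints hold because $E^{(N)} \veh^* = \vezero$ (part of being a cycle), so $E^{(N)}(\vez^* + \veh^*) = E^{(N)} \vez^* = \veb$. The bound constraints $\vel \leq \vez^* + \veh^* \leq \veu$ follow from $\veh^* \sqsubseteq \vex^* - \vez^*$ together with $\vel \leq \vez^*, \vex^* \leq \veu$: coordinatewise, $z^* + h^*$ lies between $z^*$ and $x^*$, hence within $[\ell, u]$. So $\vez^* + \veh^*$ is a genuine feasible integer solution.

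It remains to show $\vez^* + \veh^*$ is no worse in objective, i.e.\ $f(\vez^* + \veh^*) \leq f(\vez^*)$. Here I invoke Lemma~\ref{lem:cycle_inequality}, which gives
\[
\hat{f}(\vez^* + \veh^*) + \hat{f}(\vex^* - \veh^*) \leq \hat{f}(\vez^*) + \hat{f}(\vex^*).
\]
By Lemma~\ref{lem:conf_cycle_conf_sol}, $\vex^* - \veh^*$ is a configurable solution of~\eqref{eq:hugenfold_relax}, and by Lemma~\ref{lem:conflp_hugecp_optima} the conf-optimal $\vex^*$ minimizes $\hat{f}$ over configurable solutions, so $\hat{f}(\vex^*) \leq \hat{f}(\vex^* - \veh^*)$. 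Cancelling these terms yields $\hat{f}(\vez^* + \veh^*) \leq \hat{f}(\vez^*)$. Finally, both $\vez^*$ and $\vez^* + \veh^*$ are integral, so $\hat{f}$ agrees with $f$ on them by Lemma~\ref{lem:fhat_convexity}, giving $f(\vez^* + \veh^*) \leq f(\vez^*) = \mathrm{OPT}$ of~\eqref{eq:hugenfold}. Thus $\vez^* + \veh^*$ is an optimal integer solution strictly closer to $\vex^*$, contradicting minimality of $\|\vex^* - \vez^*\|_1$.

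The conceptually delicate point — already handled by the two cited lemmas, so here it is only a matter of assembling them correctly — is that the "configurability" bookkeeping is exactly what makes $\hat{f}(\vex^* - \veh^*)$ a legitimate upper bound to compare against: an ordinary cycle would let us perturb $\vex^*$ but not guarantee the perturbed point stays configurable, so the objective comparison via $\hat{f}$ would break. The main obstacle in writing this cleanly is making sure the conf-optimality of $\vex^*$ is used in the right place (to get $\hat{f}(\vex^*) \leq \hat{f}(\vex^* - \veh^*)$, which needs Lemma~\ref{lem:conflp_hugecp_optima}'s characterization that $\vex^*$ minimizes $\hat f$ among configurable solutions), rather than merely invoking some optimality of the fractional point.
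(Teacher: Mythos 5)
Your proof is correct and takes essentially the same approach as the paper: both deduce from Lemma~\ref{lem:cycle_inequality}, together with configurability of $\vex^* - \veh^*$ (Lemma~\ref{lem:conf_cycle_conf_sol}) and conf-optimality of $\vex^*$ (via Lemma~\ref{lem:conflp_hugecp_optima}), that $\hat{f}(\vez^*+\veh^*) \leq \hat{f}(\vez^*)$, and then contradict the minimality of $\|\vex^*-\vez^*\|_1$. The only difference is cosmetic: the paper presents a two-case dichotomy and dismisses the second case using conf-optimality, whereas you use conf-optimality up front to force the first case directly.
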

\begin{proof}[Proof of Lemma~\ref{thr:4}]
For contradiction assume that there exists a configurable cycle $\veh^*$ of $\vex^* - \vez^*$.
By Lemma~\ref{lem:cycle_inequality}, one of two cases must occur:

\noindent\textbf{Case 1: $\hat{f}(\vez^* + \veh^*) \leq \hat{f}(\vez^*)$.}
Then $\vez^* + \veh^*$ is an optimal integer solution (by $\veh \sqsubseteq \vex^* - \vez^*$ we have $\vel \leq \vez^* + \veh \leq \veu$ and by $\veh^* \in \ker_{\Z}\left(E^{(N)}\right)$ we have $E^{(N)} (\vez^* + \veh) = \veb$) which is closer to $\vex^*$, a contradiction to minimality of $\|\vex^* - \vez^*\|_1$.

\noindent\textbf{Case 2: $\hat{f}(\vex^* - \veh^*) < \hat{f}(\vex^*)$.}
Since $\veh^*$ is a configurable cycle, Lemma~\ref{lem:conf_cycle_conf_sol} states that $\vex^* - \veh^*$ is configurable, so we have a contradiction with conf-optimality of $\vex^*$.
\end{proof}
Next, we show that for each brick, there exists an $\hat{f}$-optimal decomposition whose coefficients have small encoding length.
For any matrix $A$, define $g_\infty(A) = \max_{\veg \in \G(A)} \|\veg\|_\infty$.
\begin{lemma} \label{lem:balanced_decomposition}
Each brick of $\vex^*$ of type $i$ has an $\hat{f}$-optimal decomposition $\Gamma$ of size at most $t+1$ and $\max_{(\vecc, \lambda_{\vecc} = p_{\vecc} / q_{\vecc} \in \Gamma)}\{p_\vecc,q_\vecc\} \leq (t+1)! ((2t-2) g_\infty(E^i_2))^{t+1} \leq (t+1)^{(t+1)}(s\|E^i_2\|_\infty +1)^{(s+1)(t+2)}$.
\end{lemma}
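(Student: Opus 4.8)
The plan is to analyze the linear program~\eqref{eq:fhat_lp} whose optima are exactly the $\hat f$-optimal decompositions of a brick $\vex^j$ of type $i$, and to argue that it has a vertex solution whose coordinates (the coefficients $\lambda_\vecc$) are ratios of small determinants. Concretely, \eqref{eq:fhat_lp} has $t+1$ equality constraints ($t$ from $\sum_{\vecc} \lambda_\vecc \vecc = \vex^j$ and one from $\|\velambda\|_1 = 1$) together with nonnegativity, so a vertex has at most $t+1$ variables with nonzero value; this already gives the size bound $|\Gamma|\le t+1$, as is noted before the lemma. For the magnitude bound, I would observe that at such a vertex the nonzero coefficients $\lambda_\vecc$ solve a square system $M\velambda = (\vex^j, 1)$, where $M$ is the $(t+1)\times(t+1)$ submatrix of the constraint matrix indexed by the support; by Cramer's rule each $\lambda_\vecc = p_\vecc/q_\vecc$ with $q_\vecc = |\det M|$ and $p_\vecc$ the determinant of $M$ with one column replaced by $(\vex^j,1)$.

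The subtlety is that the entries of $M$ are the coordinates of configurations $\vecc\in\CC^i$, which can be as large as $\|\veu^i\|_\infty$ — far too big for the claimed bound. The key trick is therefore not to use arbitrary configurations in the support, but to rewrite a given $\hat f$-optimal decomposition in terms of configurations that differ from one fixed reference configuration $\vecc_0$ in the support only by Graver elements of $E^i_2$. Using the Positive Sum Property (Proposition~\ref{prop:possum}) one decomposes each $\vecc - \vecc_0 \in \ker_\Z(E^i_2)$ as a sign-compatible sum of at most $2t-2$ Graver basis elements of $E^i_2$, each of $\ell_\infty$-norm at most $g_\infty(E^i_2)$. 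Substituting these into the decomposition of $\vex^j$ yields a new feasible point of a reformulated version of~\eqref{eq:fhat_lp} in which the "configuration" variables are replaced by Graver-element variables; crucially, $\hat f$-optimality is preserved because $f^i$ is separable convex and, by Proposition~\ref{prop:superadditivity}, splitting a configuration along a sign-compatible Graver sum does not increase the objective. Now the constraint matrix of the reformulated LP has entries bounded by $(2t-2)g_\infty(E^i_2)$ (a column is a sum of at most $2t-2$ Graver elements, shifted by $\vecc_0$ which can be moved to the right-hand side), and Cramer's rule on a $(t+1)\times(t+1)$ submatrix, together with the Hadamard/permutation-expansion bound $|\det|\le (t+1)!\,B^{t+1}$ for a matrix with entries bounded by $B$, gives $\max\{p_\vecc,q_\vecc\}\le (t+1)!\,((2t-2)g_\infty(E^i_2))^{t+1}$, which is the first claimed inequality.

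For the second inequality I would simply plug in a bound on $g_\infty(E^i_2)$: since $E^i_2$ is an $s\times t$ integer matrix with entries bounded by $\|E^i_2\|_\infty$, the standard Graver-norm estimate gives $g_\infty(E^i_2) \le (s\|E^i_2\|_\infty + 1)^s$ (or a similar polynomial-in-the-determinant bound), and then $(t+1)!\le (t+1)^{t+1}$ and $(2t-2)^{t+1}\le (s\|E^i_2\|_\infty+1)^{t+1}$ absorb into the exponent, yielding $(t+1)^{t+1}(s\|E^i_2\|_\infty+1)^{(s+1)(t+2)}$ after collecting terms. The main obstacle is the first paragraph's reformulation step: one has to check carefully that replacing the original decomposition by a Graver-refined one keeps it a \emph{feasible and $\hat f$-optimal} decomposition of the \emph{same} brick $\vex^j$ — feasibility is immediate from $E^i_2$-membership of the Graver pieces and from sign-compatibility (which keeps all new "configurations" inside $[\vel^i,\veu^i]$), while optimality requires the superadditivity inequality of Proposition~\ref{prop:superadditivity} applied in the right direction. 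Once that is in place, the determinant estimates are routine.
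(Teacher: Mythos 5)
Your high-level toolbox is right---LP \eqref{eq:fhat_lp}, vertex solutions, Cramer's rule, the Positive Sum Property, separable-convex superadditivity, and a Graver-norm bound for $E^i_2$---and the intuition that configurations in an optimal decomposition should be ``close together modulo Graver moves'' is exactly what drives the paper's argument. However, the bridging step as you describe it has a genuine gap.

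You propose to fix a reference configuration $\vecc_0$, write each $\vecc$ in the support as $\vecc = \vecc_0 + \sum_j \alpha_j \veg_j$ using the Positive Sum Property, and then claim that ``a column is a sum of at most $2t-2$ Graver elements, shifted by $\vecc_0$,'' so the reformulated constraint matrix has entries at most $(2t-2)\,g_\infty(E^i_2)$. This is where it breaks: the Positive Sum Property bounds the \emph{number} of Graver summands, not the multipliers $\alpha_j\in\N$, which can be arbitrarily large when $\vecc$ is far from $\vecc_0$. The column entries are $\vecc-\vecc_0$, and their magnitude is $\sum_j \alpha_j \|\veg_j\|_\infty$, not $(2t-2)\,g_\infty(E^i_2)$. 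Switching to variables indexed by Graver elements does not help either, because then the objective $\sum_\vecc \lambda_\vecc f^i(\vecc)$ is no longer a linear function of the new variables, and the superadditivity inequality $f(\vecc_0 + \sum_j \alpha_j\veg_j)-f(\vecc_0) \ge \sum_j\alpha_j\bigl(f(\vecc_0+\veg_j)-f(\vecc_0)\bigr)$ does not by itself produce a \emph{convex-combination} decomposition of $\vex^j$ with weights summing to $1$ (the $\alpha_j$ sum need not be $\le 1$).

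The paper bridges this by a different and more careful argument: among all $\hat f$-optimal LP solutions, take one that lexicographically minimizes first the $\ell_\infty$-diameter $R'$ of the bounding box of the support, then the weighted number of tight coordinates $S$. If $R' > (2t-2)\,g_\infty(E^i_2)$, pick two support columns $\vecc,\vecc'$, use the Positive Sum Property on $\vecc-\vecc'$, form $\veg = \sum_j\lfloor\alpha_j/2\rfloor\veg_j \neq \vezero$, and replace part of the mass on $\vecc,\vecc'$ by $\bar\vecc=\vecc-\veg$, $\bar\vecc'=\vecc'+\veg$. Conformality keeps $\bar\vecc,\bar\vecc'$ feasible, superadditivity (Proposition~\ref{prop:superadditivity}) ensures $f^i(\vecc)+f^i(\vecc')\ge f^i(\bar\vecc)+f^i(\bar\vecc')$, and a tight coordinate becomes slack, contradicting minimality of $(R',S)$. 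This \emph{proves existence} of an optimal solution whose support lies in a box of radius $(t-1)\,g_\infty(E^i_2)$; only then does one delete the distant columns, shift by the box centre $\vezeta$, and apply Cramer's rule to the resulting $(t+1)\times(t+1)$ system with entries bounded by $R/2$. This existence argument is the essential content of the lemma that your proposal does not supply. (The final plug-in of a Graver-norm bound is routine, as you note; the paper uses $g_\infty(E^i_2)\le\|E^i_2\|_\infty(2s\|E^i_2\|_\infty+1)^s$.)
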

\begin{proof}
An $\hat{f}$-optimal decomposition corresponds to a solution of the LP~\eqref{eq:fhat_lp}.
We will argue that there is a solution whose support is composed of columns which do not differ by much, which corresponds to a solution of an LP with small coefficients, and the claimed bound can then be obtained by Cramer's rule.

Specifically, we claim that there exists an $\hat{f}$-optimal decomposition $\Gamma$ which corresponds to an optimal solution $\velambda$ of~\eqref{eq:fhat_lp} such that there exists a point $\vezeta \in \Z^t$ and $\vecc \in \suppo(\velambda) \implies \|\vecc - \vezeta\|_\infty \leq (t-1) g_\infty(E^i_2)$.
For a solution $\velambda$ of~\eqref{eq:fhat_lp}, define $R':= \max_{\vecc, \vecc' \in \suppo(\velambda)} \|\vecc - \vecc'\|_\infty$ to be the diameter of the bounding box of all $\vecc \in \suppo(\velambda)$, define $\vezeta \in \Z^t$ to be an integer center of the bounding box (i.e., $\|\vecc - \vezeta\|_\infty \leq \ceil{\frac{R'}{2}}$), for $\vecc \in \suppo(\lambda)$ define a coordinate $j \in [t]$ to be \emph{tight} if $c_j = \zeta_j - \ceil{\frac{R'}{2}}$ or $c_j = \zeta_j + \ceil{\frac{R'}{2}}$, and define
$S = \sum_{\vecc \in \suppo(\velambda)} \lambda_{\vecc} \sum_{j=1}^t [\text{$j$ is tight in $\vecc$}]$ (where ``$[X]$'' is an indicator of the statement $X$) to be the weighted number of tight coordinates.
For contradiction assume that $\velambda$ is an optimal solution of~\eqref{eq:fhat_lp} which minimizes $R'$ and $S$ and $R' > (2t-2)g_\infty(E^i_2)$.
Assuming $\Gamma$ is a decomposition of a brick of type $i$, we have $\vecc, \vecc' \in \CC^i = \{\tilde{\vecc} \in \Z^t \mid E^i_2 \tilde{\vecc} = \veb^i, \, \vel^i \leq \tilde{\vecc} \leq \veu^i\}$ and thus $\vecc - \vecc' \in \Ker_{\Z}(E^i_2)$.
By Proposition~\ref{prop:possum} we may write $\vecc - \vecc' = \sum_{j=1}^{2t-2} \lambda_j \veg_j$ with $\veg_j \in \G(E^i_2)$ and $\veg_j \sqsubseteq \vecc - \vecc'$ for all $j \in [2t-2]$.
Note that because $\|\vecc - \vecc'\|_\infty > R := (2t-2)g_\infty(E^i_2)$, we have that there exists $j \in [2t-2]$ such that $\lambda_j > 1$.
Hence $\veg := \sum_{j=1}^{2t-2} \lfloor\frac{\lambda_j}{2}\rfloor \veg_j$ satisfies $\veg \neq \vezero$.
Let $\bar{\vecc} := \vecc - \veg$, and  $\bar{\vecc}' := \vecc' + \veg$.

First, because $\bar{\vecc} - \bar{\vecc}' = (\vecc - \vecc') + 2\veg = \sum_{j=1}^{2t-2} (\lambda_j - 2\lfloor\frac{\lambda_j}{2}\rfloor) \veg_i$, we may bound $\|\bar{\vecc} - \bar{\vecc}'\|_\infty \leq (2t-2) g_\infty(E^i_2) = R$.
Second, by the conformality of the decomposition, $\bar{\vecc}, \bar{\vecc}' \in \CC^i$.
Third, by separable convex superadditivity (Proposition~\ref{prop:superadditivity}), we have that $f(\vecc) + f(\vecc') \geq f(\bar{\vecc}) + f(\bar{\vecc}')$.
Fourth, there exist a coordinate $j \in [t]$ such that $|c_j - c'_j|=R'$ but, since $\|\bar{\vecc} - \bar{\vecc}'\|_\infty \leq R$, $|\bar{c}_j - \bar{c}'_j| \leq R < R'$ and thus $j$ is no longer a tight coordinate for either $\bar{\vecc}$ or $\bar{\vecc}'$ (or both).
Without loss of generality, let $\lambda_{\vecc} \leq \lambda_{\vecc'}$.
Now let $\velambda' := \velambda$ and set $\lambda'_{\bar{\vecc}}, \lambda'_{\bar{\vecc}'} := \lambda_{\vecc}$, $\lambda'_{\vecc} := 0$, $\lambda'_{\vecc'} := \lambda_{\vecc'} - \lambda_{\vecc}$.
By our arguments above, $\velambda'$ is another optimal solution of~\eqref{eq:fhat_lp} but the weighted number of tight coordinates has decreased by the fourth point, a contradiction.

Thus, there exists a point $\vezeta \in \Z^t$ and an optimal solution $\velambda$ of~\eqref{eq:fhat_lp} such that $\forall \vecc \in \suppo(\velambda)$, $\|\vecc - \vezeta\|_\infty \leq R/2 = (t-1) g_\infty(E^i_2)$.
Obtain a reduced LP from~\eqref{eq:fhat_lp} by deleting all columns $\vecc$ with $\|\vecc - \vezeta\|_\infty > R/2$ and denote the remaining set of columns $\bar{\CC}^i$:
\begin{equation}
\min \sum_{\vecc \in \bar{\CC}^i} \lambda_{\vecc} f^i(\vecc) \quad \text{s.t.} \quad \sum_{\vecc \in \bar{\CC}^i} \lambda_{\vecc} \vecc = \vex^j, \, \|\velambda\|_1 = 1, \, \velambda \geq \vezero \enspace .\label{eq:fhat_lp2}
\end{equation}
This LP is equivalent to one obtained by subtracting $\vezeta$ from all columns and the right hand side:
\begin{equation}
\min \sum_{\vecc \in \bar{\CC}^i} \lambda_{\vecc} f^i(\vecc) \quad \text{s.t.} \quad \sum_{\vecc \in \bar{\CC}^i} \lambda_{\vecc} (\vecc-\vezeta) = (\vex^j-\vezeta), \, \|\velambda\|_1 = 1, \, \velambda \geq \vezero \enspace .\label{eq:fhat_lp3}
\end{equation}
Now, this LP has $t+1$ rows and its columns have the largest coefficient bounded by $R/2$ in absolute value.
A basic solution $\velambda$ has $|\suppo(\velambda)| \leq t+1$ and, by Cramer's rule, the denominator of each $\lambda_\vecc$ is bounded by $(t+1)!$ times the largest coefficient to the power of $t+1$, thus bounded by $(t+1)! R^{t+1} \leq (t+1)! ((2t-2) g_\infty(E^i_2))^{t+1} \leq (t+1)^{(t+1)}(s\|E^i_2\|_\infty +1)^{s(t+2)}$, where we use $g_\infty(E^i_2) \leq \|E^i_2\|_\infty (2s\|E^i_2\|_\infty +1)^s$~\cite[Lemma 2]{EisenbrandEtAl2018}.
\end{proof}
Next, we will need the notion of an Egyptian fraction.
For a rational number $p/q$, $p,q \in \N$, its \emph{Egyptian fraction} is a finite sum of distinct unit fractions such that
\[
\frac{p}{q} = \frac{1}{q_1} + \frac{1}{q_2} + \cdots + \frac{1}{q_k},
\]
for $q_1, \dots, q_k \in \N$ distinct.
Call the number of terms $k$ the \emph{length} of the Egyptian fraction.
Vose~\cite{Vose1985} has proven that any $p/q$ has an Egyptian fraction of length $\Oh(\sqrt{\log q})$.
Since our algorithm requires an exact bound, we present the following weaker yet exact result:
\begin{lemma}[Egyptian Fractions] \label{lem:egypt}
Let $p, q \in \N$, $1 \leq p \leq q$. Then $p/q$ has an Egyptian fraction of length at most $2(\log_2 q)+1$.
\end{lemma}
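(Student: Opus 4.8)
The plan is to prove Lemma~\ref{lem:egypt} by a greedy ``splitting'' argument that is a variant of the Fibonacci--Sylvester greedy algorithm, but controlled so that the number of terms is provably at most $2\log_2 q + 1$. The key observation is that the classical greedy algorithm, which repeatedly subtracts the largest unit fraction not exceeding the current remainder, decreases the \emph{numerator} of the remainder by at least a factor of $2$ at each step once we track it carefully: if the current remainder is $p'/q'$ in lowest terms and we subtract $1/\ceil{q'/p'}$, the new numerator is $p' \ceil{q'/p'} - q' < p'$, and in fact one shows $p'\ceil{q'/p'} - q' < p'/2 \cdot (\text{something})$ — more precisely the numerator at least halves every \emph{two} steps, or halves every step in the cleanest formulation. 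Since the starting numerator is $p \leq q$, after at most about $2\log_2 q$ steps the numerator reaches $1$, and a unit fraction needs one more term; distinctness of the denominators follows because the greedy denominators are strictly increasing (the remainder strictly decreases and each chosen unit fraction is the largest one that fits, forcing the next denominator to be strictly larger).

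Concretely, first I would set up the greedy recursion: given $p/q$ with $1 \le p \le q$ and $\gcd(p,q)=1$ (we may reduce), if $p=1$ we are done with one term; otherwise let $d = \ceil{q/p}$, so $d \ge 2$, and write $p/q = 1/d + (pd - q)/(qd)$. Then I would bound the new numerator $p' = pd - q$: since $d - 1 < q/p$ we get $pd - p < q$, i.e.\ $pd - q < p$, so the numerator strictly decreases; a slightly sharper bound — using $d \le q/p + 1$ so $pd \le q + p$ hence $pd - q \le p$, combined with the case analysis on whether $p \mid q$ — yields $p' \le (p+1)/2$ when $p \ge 2$ wait, that needs care; the robust statement I would actually prove is that the numerator \emph{halves every two iterations}, which still gives the bound $2\log_2 q$ up to the additive constant. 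Second, I would argue the denominators produced are strictly increasing: the remainder $r_i$ strictly decreases, and at step $i$ we pick the largest unit fraction $\le r_i$, call its denominator $d_i$; since $r_{i+1} < r_i \le 1/(d_i - 1)$ hmm, rather $r_{i+1} = r_i - 1/d_i < 1/d_i$, so $d_{i+1} > d_i$, giving distinctness. Third, I would assemble the length bound: the numerator sequence is a strictly decreasing (or two-step-halving) sequence of positive integers starting at $p \le q$, so it reaches $1$ within $\le 2\log_2 q$ steps, after which one final unit fraction term is emitted, for a total of at most $2\log_2 q + 1$ terms.

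The main obstacle I anticipate is pinning down the exact numerator-decrease rate so that the constant ``$2$'' in $2\log_2 q + 1$ is honestly achieved rather than some larger constant. The naive greedy bound only gives ``numerator strictly decreases,'' which would yield a linear-in-$q$ length, not logarithmic; and the sharp Vose bound of $\Oh(\sqrt{\log q})$ is stated in the excerpt as something we are \emph{not} reproving. So the delicate point is to show numerator at least halves every step (or every two steps) — this requires the clean identity that when $\gcd(p,q)=1$ and $p \ge 2$, writing $p = q \bmod (q \bmod p)$-style relations, one gets $pd - q \le \floor{p/2}$ in the relevant case, or else handling the parity/divisibility cases separately. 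An alternative, possibly cleaner route that avoids this subtlety entirely: use the binary-expansion approach — write $p/q$ by repeatedly extracting, at stage $k$, a term $1/(2^k \cdot m_k)$ according to the binary digits, exploiting $p/q = \sum_{k} b_k 2^{-k-1}\cdot(\dots)$; since $p \le q$ the binary representation of $p$ (scaled) has at most $\log_2 q + 1$ bits, and a careful bookkeeping of carries doubles this at worst, landing exactly at $2\log_2 q + 1$. I would try the greedy argument first and fall back to the binary-digit construction if the constant does not come out cleanly, since the binary construction makes the ``$2\log_2 q$'' completely transparent: each bit contributes at most two distinct unit fractions.
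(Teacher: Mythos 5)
Your primary route (the greedy/Fibonacci--Sylvester algorithm) does not work, and the uncertainty you flag about the constant is more than a technicality: the claimed halving property is simply false. Take $p/q = 1000/1001$. The greedy numerator sequence begins $1000 \to 999 \to 995 \to 959 \to \ldots$; after two steps the numerator is $995$, nowhere near $500$. In general, with $q = mp + s$ and $0 < s < p$, the greedy step sends the numerator $p$ to $p - s$, and $s$ can be as small as $1$. Thus the only bound the greedy argument actually gives you is that the numerator strictly decreases, hence at most $p$ terms, and since $p$ can be as large as $q$, this is linear in $q$, not logarithmic. There is no known clean two-step halving lemma for greedy, and in fact bounding the worst-case term count of the greedy algorithm by $O(\log q)$ is not a theorem one can cite here. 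So the route you ``would try first'' cannot be made to work without an idea you do not have.

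Your fallback route is the right one, and it is (in outline) what the paper does, but as written it is too vague to count as a proof. The missing concrete step is the scaling trick. Concretely: let $a = 2^k$ be the largest power of two with $a < q$, so $k < \log_2 q$; divide $ap$ by $q$ to get $ap = bq + r$ with $0 \le r < q$; observe $b < a$ (since $p < q$) and $r < 2a$ (since $q \le 2a$). Then
\[
\frac{p}{q} = \frac{b}{a} + \frac{1}{q}\cdot\frac{r}{a}
           = \sum_{i=0}^{k-1}\frac{b_i}{2^{\,k-i}} + \sum_{i=0}^{k}\frac{r_i}{q\cdot 2^{\,k-i}},
\]
where $b_i, r_i \in \{0,1\}$ are the binary digits of $b$ and $r$ respectively. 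This is a sum of at most $k + (k+1) = 2k+1 \le 2\log_2 q + 1$ unit fractions, and the denominators in the first block are all $\le 2^k < q$ while those in the second block are all $\ge q$, so they are pairwise distinct. Without this explicit decomposition (in particular the choice of $a$ and the identity $ap = bq + r$), your statement that ``a careful bookkeeping of carries doubles this at worst'' is not a proof, and the phrase ``extracting a term $1/(2^k \cdot m_k)$ according to the binary digits'' does not specify a construction. You should drop the greedy approach entirely and make the scaling-plus-binary-expansion argument precise; once you do, it is exactly the paper's proof.
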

\begin{proof}
Let $a=2^k$ be largest such that $a < q$, so $k=\ceil*{(\log_2 q)-1} < \log_2 q$.
Write $ap = bq+r$, $0 \leq r < q$.
Note that $p < q \implies b < a$ and $q \leq 2a \implies r < 2a$.
Now let $[b] = (b_{k-1}, \dots, b_1, b_0)$ be the binary representation of $b < a$ so $b=\sum_{i=0}^{k-1} 2^i b_i$ and $[r] = (r_{k-1}, \dots, r_1, r_0)$ be that of $r < 2a$ so $r=\sum_{i=0}^{k} r_i 2^i$.
Then we have
\[
\frac{p}{q} = \frac{ap}{aq} = \frac{bq + r}{aq} = \frac{b}{a} + \frac{1}{q}\frac{r}{a}
= \sum_{i=0}^{k-1} \frac{b_i}{2^{k-i}} + \sum_{i=0}^{k} \frac{r_i}{q \cdot 2^{k-i}},
\]
where $b_i, r_i \in \{0,1\}$, so a sum of at most $2k+1 \leq 2 (\log_2 q)+1$ terms with all denominators $d_i \leq q 2^k = qa \leq q^2$.
Moreover, all denominators in the first sum are distinct and at most $2^k$, and all in the second sum are distinct and at least $q > 2^k$, hence all distinct, so this is an Egyptian fraction of $p/q$ of length $2(\log_2 q)+1$ and denominators of $\Oh(q^2)$.
\end{proof}

Recall that our goal is to obtain a configurable cycle.
However, for that we also need a special form of a decomposition.
Say that $\Gamma$ is \emph{scalable decomposition} of a brick $(\vex^*)^j$ of type $i$ if it is its $\hat{f}$-optimal decomposition, and for each $(\vecc_\gamma, \lambda_\gamma) \in \Gamma$, $\lambda_\gamma$ is of the form $1/q_{\gamma}$ for some $q_{\gamma} \in \N$.
We say that $|\Gamma|$ is the \emph{size} of the decomposition.
We note that in what follows we do not need an algorithm computing a scalable decomposition, only the following existence statement.
\begin{lemma} \label{lem:scalable_decomposition}
Each brick of $\vex^*$ has a scalable decomposition of size at most $26t^3 \log(t\|E_2\|_\infty)$.
\end{lemma}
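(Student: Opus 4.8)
The plan is to start from an $\hat{f}$-optimal decomposition $\Gamma$ of the brick $(\vex^*)^j$ of type $i$ as provided by Lemma~\ref{lem:balanced_decomposition}: it has $|\Gamma| \leq t+1$ and every coefficient, written in lowest terms as $\lambda_\vecc = p_\vecc/q_\vecc$, satisfies $q_\vecc \leq B := (t+1)!\,((2t-2)\,g_\infty(E^i_2))^{t+1}$. Since $\|\velambda\|_1 = 1$ and $\velambda \geq \vezero$ we have $1 \leq p_\vecc \leq q_\vecc$, so Lemma~\ref{lem:egypt} writes each $\lambda_\vecc$ as an Egyptian fraction $\lambda_\vecc = 1/q_{\vecc,1} + \cdots + 1/q_{\vecc,k_\vecc}$ with $k_\vecc \leq 2\log_2 q_\vecc + 1 \leq 2\log_2 B + 1$. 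I would then replace in $\Gamma$ each pair $(\vecc, \lambda_\vecc)$ by the $k_\vecc$ pairs $(\vecc, 1/q_{\vecc,1}), \dots, (\vecc, 1/q_{\vecc,k_\vecc})$, obtaining a multiset $\Gamma'$ (these pairs are pairwise distinct, since within one expansion the denominators are distinct and distinct terms of $\Gamma$ carry distinct configurations, but even as a multiset this is harmless). One checks directly that $\sum_{(\vecc,\lambda) \in \Gamma'} \lambda \vecc = \sum_{(\vecc,\lambda_\vecc) \in \Gamma} \lambda_\vecc \vecc = (\vex^*)^j$, that $\sum_{(\vecc,\lambda) \in \Gamma'} \lambda = 1$, and that $\sum_{(\vecc,\lambda) \in \Gamma'} \lambda f^i(\vecc) = \sum_{(\vecc,\lambda_\vecc) \in \Gamma} \lambda_\vecc f^i(\vecc) = \hat{f}^i((\vex^*)^j)$; hence $\Gamma'$ is again an $\hat{f}$-optimal decomposition, and since all its coefficients are unit fractions it is scalable.

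For the size bound I would first reduce to the case of at most $t$ rows: deleting linearly dependent rows of $E^i_2$ (together with the corresponding entries of $\veb^i$) changes neither $\CC^i$ nor increases $\|E^i_2\|_\infty$, and if the reduced system becomes infeasible then $\CC^i = \emptyset$, contradicting that $(\vex^*)^j \in \conv(\CC^i)$ (which holds because $\vex^*$ is conf-optimal, hence configurable). After this reduction $E^i_2$ has at most $t$ rows, so $g_\infty(E^i_2) \leq \|E^i_2\|_\infty (2t\|E^i_2\|_\infty + 1)^t$ by~\cite[Lemma 2]{EisenbrandEtAl2018}, whence $\log_2 B = \Oh(t \log t) + \Oh(t \cdot t\log(t\|E^i_2\|_\infty)) = \Oh(t^2 \log(t\|E^i_2\|_\infty))$. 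Therefore $|\Gamma'| = \sum_\vecc k_\vecc \leq (t+1)(2\log_2 B + 1) = \Oh(t^3\log(t\|E_2\|_\infty))$, and a routine estimate of the hidden constants (using $t \geq 1$, $\|E_2\|_\infty \geq 1$, and handling the degenerate cases $\G(E^i_2) = \emptyset$ or tiny $t, \|E_2\|_\infty$ separately, where $\CC^i$ is a single point and $\Gamma' = \{((\vex^*)^j, 1)\}$ works) brings this below the claimed $26\, t^3 \log(t\|E_2\|_\infty)$.

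The conceptual content is entirely in Lemmas~\ref{lem:balanced_decomposition} and~\ref{lem:egypt}; what remains is bookkeeping. The one point that must not be overlooked is the passage from $s$ to $\min(s,t) \leq t$ rows of $E^i_2$: the bound of Lemma~\ref{lem:balanced_decomposition} as stated carries a factor $(s+1)(t+2)$ in the exponent, and without this reduction the final size estimate would inherit a dependence on $s$ instead of being cubic in $t$. The constant-chasing at the very end is the only mildly tedious step, and it is straightforward.
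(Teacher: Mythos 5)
Your proof follows the same route as the paper's: start from the $\hat f$-optimal decomposition of size $t+1$ with bounded denominators from Lemma~\ref{lem:balanced_decomposition}, expand each coefficient via Egyptian fractions (Lemma~\ref{lem:egypt}), and count. You are slightly more careful than the paper on two minor points that the paper leaves implicit: you explicitly verify that the Egyptian-fraction refinement $\Gamma'$ remains $\hat f$-optimal (hence scalable), and you spell out the standard reduction to $s\le t$ by deleting redundant rows of $E^i_2$ (the paper simply invokes ``by $s\le t$''). One small phrasing quibble: deleting linearly dependent rows of a \emph{feasible} system $E^i_2\vecc=\veb^i$ cannot create infeasibility, so the ``if the reduced system becomes infeasible'' case never arises once $\CC^i\neq\emptyset$ is granted; the point you want is simply that $\CC^i\neq\emptyset$ guarantees the dependent rows are implied and harmless to drop. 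This is cosmetic and does not affect correctness.
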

\begin{proof}
Fix $j \in [N]$.
Let $\vex = (\vex^*)^j$ be a brick of $\vex^*$ of type $i$.
By Lemma~\ref{lem:balanced_decomposition}, there exists an $\hat{f}$-optimal decomposition of $\vex$ of size $t+1$ where each coefficient $\lambda_\vecc=p_\vecc/q_\vecc$ satisfies $p_\vecc,q_\vecc \leq (t+1)^{(t+1)}(s\|E^i_2\|_\infty +1)^{(s+1)(t+2)}$.
For each $\vecc$ in the decomposition now express $\lambda_\vecc$ as an Egyptian fraction:
\[\lambda_\vecc = \frac{p_\vecc}{q_\vecc} = \frac{1}{a_1} + \frac{1}{a_2} + \cdots + \frac{1}{a_{\mathfrak{e}}} \enspace .\]
By Lemma~\ref{lem:egypt}, $p/q$ has an Egyptian fraction of length
\[\mathfrak{e} \leq 2(\log_2 q_{\vecc})+1 = 2\left(\log \left((t+1)^{(t+1)}(s\|E^i_2\|_\infty +1)^{(s+1)(t+2)}\right)\right)+1 \leq 25st \log (st\|E^i_2\|_\infty) \enspace .\]
Thus the resulting decomposition is of size at most $(t+1) 25st \log (st\|E^i_2\|_\infty) \leq 26t^3 \log(t\|E^i_2\|_\infty)$ (by $s \leq t$ this justifies the deletion of $s$ in the $\log()$ so the last bound holds) and is scalable, since each coefficient is of the form $1/q_{\gamma}$ for some $q_{\gamma} \in \N$.
\end{proof}

We will now show that we are guaranteed a configurable cycle of $\vex^* - \vez^*$ if there exists an analogue of a regular cycle of a certain ``lifting'' of $\vex^*$ and $\vez^*$.

Fix for each brick of $\vex^*$ a scalable decomposition $\Gamma^j$.
Let $\uparrow\vex^*$ be the \emph{rise of $\vex^*$} defined as a vector obtained from $\vex^*$ by keeping every integer brick $(\vex^*)^j$, and replacing every fractional brick $(\vex^*)^j$ with $|\Gamma^j|$ terms $\lambda_\gamma \vecc_\gamma$, one for each $(\vecc_\gamma, \lambda_\gamma) \in \Gamma^j$.
Observe that each brick of $\uparrow \vex^*$ is of the form $\lambda_{\vecc} \vecc$ for some configuration $\vecc$ and some coefficient $0 \leq \lambda_{\vecc} \leq 1$.
Thus for a brick $\lambda_{\vecc} \vecc$ we say that $\vecc$ is its configuration, $\lambda_{\vecc}$ is its coefficient, and its type is identical to the type of brick it originated from; in particular, bricks which originated from an integer brick $\vep = (\vex^*)^j$ are of the form $\lambda_{\vep} \vep$ with $\lambda_{\vep} = 1$.
Let $N'$ be the number of bricks of $\uparrow \vex^*$ and define a mapping $\nu: [N'] \to [N]$ such that if a brick $j \in [N']$ of $\uparrow \vex^*$ was defined from brick $\ell \in [N]$ of $\vex^*$, then $\nu(j) = \ell$.
The natural inverse $\nu^{-1}$ is defined such that, for $\ell \in [N]$, $\nu^{-1}(\ell)$ is the set of bricks of $\uparrow \vex^*$ which originated from $(\vex^*)^\ell$.

\begin{lemma} \label{lem:size_rise}
The vector $\uparrow \vex^*$ has at most $(r+\tau) \cdot 26t^3 \log(t\|E^1_2,\dots,E^\tau_2\|_\infty)$ fractional bricks.
\end{lemma}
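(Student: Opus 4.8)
The plan is to combine two facts already in hand: the bound on the number of fractional bricks of the conf-optimal solution $\vex^*$, and the bound on the size of the scalable decomposition that replaces each such brick when forming $\uparrow\vex^*$.

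First I would recall that the relevant conf-optimal solution is $\vex^* = \varphi(\vey^*)$ for an optimal solution $\vey^*$ of~\eqref{eq:conflp} produced by Lemma~\ref{lem:frac}, and that by that lemma $\vex^*$ has at most $r+\tau$ fractional bricks; equivalently $|\suppo(\vey^*)| \leq r+\tau$, and by the remark following the definition of $\varphi$ the map creates at most $|\suppo(\vey^*)|$ fractional bricks.

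Next I would examine how $\uparrow\vex^*$ is built from $\vex^*$. Every integral brick of $\vex^*$ is copied verbatim into $\uparrow\vex^*$ and hence remains integral, so it contributes nothing to the count of fractional bricks. Every fractional brick $(\vex^*)^j$, of some type $i$, is replaced by exactly $|\Gamma^j|$ bricks of the form $\lambda_\gamma \vecc_\gamma$ with $(\vecc_\gamma,\lambda_\gamma) \in \Gamma^j$, where $\Gamma^j$ is the fixed scalable decomposition. Regardless of how many of these terms are actually fractional, their number is at most $|\Gamma^j|$, and by Lemma~\ref{lem:scalable_decomposition} we have $|\Gamma^j| \leq 26 t^3 \log(t\|E^i_2\|_\infty) \leq 26 t^3 \log(t\|E^1_2,\dots,E^\tau_2\|_\infty)$, using $\|E^i_2\|_\infty \leq \|E^1_2,\dots,E^\tau_2\|_\infty$ for every type $i$.

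Finally I would sum this per-brick estimate over the at most $r+\tau$ fractional bricks of $\vex^*$ (integral bricks adding nothing), obtaining that $\uparrow\vex^*$ has at most $(r+\tau)\cdot 26 t^3 \log(t\|E^1_2,\dots,E^\tau_2\|_\infty)$ fractional bricks, as claimed. I do not expect a genuine obstacle here; the only points requiring a little care are transferring the type-$i$ bound of Lemma~\ref{lem:scalable_decomposition} to the uniform bound in terms of $\|E^1_2,\dots,E^\tau_2\|_\infty$, and explicitly noting that the integral bricks of $\vex^*$ remain integral in $\uparrow\vex^*$ and therefore do not inflate the count.
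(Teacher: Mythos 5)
Your proof is correct and follows essentially the same path as the paper: apply Lemma~\ref{lem:frac} to get at most $r+\tau$ fractional bricks in $\vex^*$, apply Lemma~\ref{lem:scalable_decomposition} to bound each scalable decomposition by $26t^3\log(t\|E^1_2,\dots,E^\tau_2\|_\infty)$, and multiply. The minor points you flag (integral bricks stay integral, passing from the type-$i$ bound to the uniform one) are exactly the implicit steps the paper takes as well.
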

\begin{proof}
By Lemma~\ref{lem:frac} there is a conf-optimal $\vex^*$ with at most $r+\tau$ fractional bricks.
By Lemma~\ref{lem:scalable_decomposition} for each fractional brick of $\vex^*$ of type $i$ there is a scalable decomposition of size at most $26t^3 \log(t\|E^i_2\|_\infty) \leq 26t^3 \log(t\|E^1_2,\dots,E^\tau_2\|_\infty)$.
Thus $\uparrow \vex^*$ has at most $26t^3 \log(t\|E^1_2,\dots,E^\tau_2\|_\infty)$ fractional bricks for each fractional brick of $\vex^*$, of which there are at most $r+\tau$, totaling $(r+\tau)\cdot 26t^3 \log(t\|E^1_2,\dots,E^\tau_2\|_\infty)$ fractional bricks.
\end{proof}

Then, denote by $\uparrow\vez^* \in \R^{N't}$ the \emph{rise of $\vez^*$} (with respect to $\vex^*$) defined as follows.
Let $j \in [N']$, $\ell = \nu(j)$, and $\lambda$ be the coefficient of the $j$-th brick of $\uparrow \vex^*$.
Then the $j$-th brick of $\uparrow \vez^*$ is $(\uparrow \vez^*)^j := \lambda (\vez^*)^{\ell}$.
Observe that $\|\uparrow \vex^* - \uparrow \vez^*\|_1 \geq \|\vex^* - \vez^*\|_1$ by triangle inequality\mkcom{elaborate: decompose for one brick and say the rest is by aggregation}.

For any vector $\vex \in \R^{N't}$, define the \emph{fall of $\vex$} as a vector $\downarrow \vex \in \R^{Nt}$ such that for $\ell \in [N]$, $(\downarrow \vex)^\ell = \sum_{j \in \nu^{-1}(\ell)} \vex^j$.
We see that $\downarrow (\uparrow \vex^*) = \vex^*$ and $\downarrow (\uparrow \vez^*) = \vez^*$.
Say that $\ver$ is a \emph{cycle of $\uparrow \vex^* - \uparrow \vez^*$} if $\ver \sqsubseteq \uparrow \vex^* - \uparrow \vez^*$ and $\ver \in \Ker_{\Z}(E^{(N')})$.

\begin{lemma}
If $\ver$ is a cycle of $\uparrow \vex^* - \uparrow \vez^*$, then $\downarrow \ver$ is a configurable cycle of $\vex^* - \vez^*$.
\end{lemma}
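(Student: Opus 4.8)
The plan is to set $\veh := \downarrow\ver$ and verify the three requirements in the definition of a configurable cycle of $\vex^* - \vez^*$ with respect to $\vex^*$: that $\veh \in \Ker_\Z(E^{(N)})$ (and $\veh\neq\vezero$, assuming as we may that $\ver\neq\vezero$), that $\veh \sqsubseteq \vex^* - \vez^*$, and that for every brick $\ell \in [N]$ of type $i$ there is an $\hat f$-optimal decomposition $\Gamma^\ell$ of $(\vex^*)^\ell$ together with vectors $\veh_\vecc \in \Ker_\Z(E^i_2)$, $\veh_\vecc \sqsubseteq \vecc - (\vez^*)^\ell$, such that $\veh^\ell = \sum_{(\vecc,\lambda_\vecc)\in\Gamma^\ell}\lambda_\vecc\veh_\vecc$. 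Throughout I use the \emph{fixed} scalable decomposition $\Gamma^\ell = \{(\vecc_a,\lambda_a)\}_a$ (with $\lambda_a = 1/q_a$, $q_a\in\N$) that was used to define $\uparrow\vex^*$, and write $\nu^{-1}(\ell) = \{j_1,\dots,j_k\}$ so that $(\uparrow\vex^*)^{j_a} = \lambda_a\vecc_a$ and $(\uparrow\vez^*)^{j_a} = \lambda_a(\vez^*)^\ell$; for an integer brick $\ell$ we have $\nu^{-1}(\ell)=\{j\}$, $(\uparrow\vex^*)^{j}=(\vex^*)^\ell$, $(\uparrow\vez^*)^{j}=(\vez^*)^\ell$, and $\Gamma^\ell=\{((\vex^*)^\ell,1)\}$.

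The kernel condition follows directly from the block structure, because $\downarrow$ only sums bricks sharing the same image under $\nu$, and every brick $j\in\nu^{-1}(\ell)$ carries the pair $(E^i_1,E^i_2)$ with $i$ the type of $\ell$. Indeed $E^i_2(\downarrow\ver)^\ell = \sum_{j\in\nu^{-1}(\ell)}E^i_2\ver^j = \vezero$ since each summand vanishes ($\ver\in\Ker_\Z(E^{(N')})$), and the top block gives $\sum_\ell E^i_1(\downarrow\ver)^\ell = \sum_{j\in[N']}E^{i(j)}_1\ver^j = \vezero$; hence $E^{(N)}\veh=\vezero$, and $\veh$ is integral.

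For the decomposition, fix a brick $\ell$ of type $i$ and for each $a$ put $\veh_{\vecc_a} := q_a\,\ver^{j_a}$ (and $\veh_{(\vex^*)^\ell}:=\ver^j$ in the integer case). This is the step where scalability is crucial: $q_a = 1/\lambda_a\in\N$, so $\veh_{\vecc_a}$ is integral. From $\ver \sqsubseteq \uparrow\vex^* - \uparrow\vez^*$ we get $\ver^{j_a} \sqsubseteq \lambda_a(\vecc_a - (\vez^*)^\ell)$, and multiplying by the positive scalar $q_a$ preserves conformality, so $\veh_{\vecc_a} \sqsubseteq \vecc_a - (\vez^*)^\ell$; moreover $E^i_2\veh_{\vecc_a} = q_aE^i_2\ver^{j_a}=\vezero$, so $\veh_{\vecc_a}\in\Ker_\Z(E^i_2)$. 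Finally $\veh^\ell = (\downarrow\ver)^\ell = \sum_a\ver^{j_a} = \sum_a\lambda_a\veh_{\vecc_a}$, which is exactly the required form since the fixed $\Gamma^\ell$ is $\hat f$-optimal by construction.

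The remaining point, which I expect to be the main obstacle, is $\veh\sqsubseteq\vex^*-\vez^*$ (and, with it, $\veh\neq\vezero$): the operation $\downarrow$ aggregates several bricks, and conformality is not preserved by arbitrary sums. Inside a brick $\ell$ one has $(\vex^*-\vez^*)^\ell = \sum_a\lambda_a(\vecc_a-(\vez^*)^\ell)$ and $\veh^\ell=\sum_a\lambda_a\veh_{\vecc_a}$ with $\veh_{\vecc_a}\sqsubseteq\vecc_a-(\vez^*)^\ell$, so the task reduces to showing that the terms $\lambda_a\veh_{\vecc_a}$ aggregate conformally to $\sum_a\lambda_a(\vecc_a-(\vez^*)^\ell)$; I would derive this from the structure of the fixed scalable (hence balanced, via Lemma~\ref{lem:balanced_decomposition}) $\hat f$-optimal decomposition together with the minimality of $\|\vex^*-\vez^*\|_1$, arguing that these force the relevant coordinates of the supporting configurations to lie on one side of $(\vez^*)^\ell$ per brick. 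Once conformality within each brick is established, nonzeroness is immediate: $\ver\neq\vezero$ yields some $\ver^{j_a}\neq\vezero$, hence $\veh_{\vecc_a}\neq\vezero$, and conformality rules out cancellation in $\veh^\ell=\sum_a\lambda_a\veh_{\vecc_a}$.
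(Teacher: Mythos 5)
Your handling of the kernel condition and the decomposition condition is correct and matches the paper's own proof essentially step for step: pushing the block structure through $\downarrow$ to get $E^{(N)}\veh = \vezero$, then taking $\veh_{\vecc_a} := q_a \ver^{j_a}$ (integrality from $q_a \in \N$, conformality $\veh_{\vecc_a}\sqsubseteq\vecc_a - (\vez^*)^\ell$ by scaling, membership in $\Ker_\Z(E^i_2)$ from the bottom block) and reassembling $\veh^\ell = \sum_a \lambda_a \veh_{\vecc_a}$.

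What you flag as ``the main obstacle'' --- showing $\downarrow\ver \sqsubseteq \vex^* - \vez^*$ (and, downstream of it, $\downarrow\ver\neq\vezero$) --- is a real issue, and you are right that it is not automatic: conformality is not preserved by the aggregation $\downarrow$, because $\veh^\ell = \sum_a \lambda_a \veh_{\vecc_a}$ with $\veh_{\vecc_a}\sqsubseteq\vecc_a-(\vez^*)^\ell$ does \emph{not} in general imply $\veh^\ell \sqsubseteq \sum_a\lambda_a(\vecc_a-(\vez^*)^\ell)$, since the summands $\vecc_a-(\vez^*)^\ell$ for different $a$ need not lie in a common orthant (e.g.\ in one coordinate take $\vecc_1 - z = 4$, $\veh_{\vecc_1}=0$, $\vecc_2 - z = -2$, $\veh_{\vecc_2}=-2$, $\lambda_1=\lambda_2=\tfrac12$: then $\veh^\ell = -1$ while $(\vex^*-\vez^*)^\ell = 1$). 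The sketch you give for closing this gap --- invoking the balancedness of the $\hat f$-optimal decomposition and minimality of $\|\vex^*-\vez^*\|_1$ --- is not carried out, and it is not clear it can be; in particular, minimality of $\|\vex^*-\vez^*\|_1$ is a property that Lemma~\ref{thr:4} uses \emph{after} a configurable cycle has been exhibited, so invoking it here risks circularity. So your proof is incomplete on this point.

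It is worth noting, though, that the paper's own proof of this lemma has the same gap: it opens with ``we need to show that $\downarrow \ver \in \Ker_{\Z}(E^{(N)})$ and, for each brick $\dots$, there is an $\hat{f}$-optimal decomposition such that $\veh$ decomposes accordingly,'' and then verifies only those two points. The condition $\downarrow\ver \sqsubseteq \vex^*-\vez^*$ (required by the definition of a cycle, and genuinely used in Lemma~\ref{thr:4}, Case~1, to argue $\vez^*+\veh^*$ is bound-feasible and \emph{strictly closer} to $\vex^*$) is never established. So you have not merely failed to finish the proof --- you have found a soft spot in the argument that the authors appear to have overlooked. A clean fix would either prove conformality directly (perhaps using the specific Steinitz-based structure of the cycle $\ver$ in Theorem~\ref{thm:proximity} rather than an arbitrary cycle of $\uparrow\vex^*-\uparrow\vez^*$), or else weaken the definition of configurable cycle to drop $\veh\sqsubseteq\vex^*-\vez^*$ and re-derive the bound-feasibility and strict-decrease claims in Lemma~\ref{thr:4} from the per-brick conformality $\veh_{\vecc_a}\sqsubseteq\vecc_a-(\vez^*)^\ell$ alone (bound-feasibility does follow this way, since $\vez^*+\veh^\ell = \sum_a\lambda_a((\vez^*)^\ell+\veh_{\vecc_a})$ is a convex combination of points in $[\vel^i,\veu^i]$; the strict $\ell_1$-decrease is the part that still needs work).
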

\begin{proof}
To show that $\downarrow \ver$ is a configurable cycle we need to show that $\downarrow \ver \in \Ker_{\Z}(E^{(N)})$ and, for each brick $\vex$ of $\vex^*$, there is an $\hat{f}$-optimal decomposition of $\vex$ such that $\veh = (\downarrow \ver)^j$ decomposes accordingly.
For the first part, $\downarrow \ver$ is integral because it is obtained by summing bricks of $\ver$, which is integral.
Denote by $i(j)$ the type of a brick $j$.
By the fact that $\ver \in \Ker_{\Z}(E^{(N')})$ and the definition of $\downarrow \ver$, we have $\vezero =  \sum_{j=1}^{N'} E^{i(j)}_1 \ver^j = \sum_{j=1}^{N} E^{i(j)}_1 (\downarrow \ver)^j$, and, for each $\ell \in [N]$, $\vezero = \sum_{j \in \nu^{-1}(\ell)} E^{i(j)}_2 \ver^j = E^{i(\ell)}_2(\downarrow \ver)^\ell$, thus $\downarrow \ver \in \Ker_{\Z}(E^{(N)})$.

To see the second part, fix a brick $j \in [N]$ of type $i$ and let $\vex = (\vex^*)^j$, $\vez = (\vez^*)^j$ and $\veh = (\downarrow \ver)^j$.
We need to show that $\veh = \sum_{\gamma \in \nu^{-1}(j)} \veh_\gamma$ can be written as $\sum_{\vecc \in \CC^i} \lambda_\vecc \veh_\vecc$ with $\veh_\vecc \sqsubseteq \vecc - \vez$ and $\veh_\vecc \in \Ker_\Z(E^i_2)$.
By definition of $\uparrow \vex$ and $\ver$, there is a scalable decomposition $\Gamma$ of $\vex$ such that for each $\gamma \in \nu^{-1}(j)$, $\veh_\gamma \sqsubseteq \lambda_\gamma (\vecc_\gamma - \vez)$ and $\veh_\gamma \in \Ker_\Z(E^i_2)$.
Thus we may write $\veh = \sum_{\gamma \in \nu^{-1}(j)} \lambda_\gamma \cdot (\lambda^{-1}_\gamma \veh_\gamma)$ with $\lambda^{-1}_\gamma \veh_\gamma \sqsubseteq \vecc_\gamma - \vez$ and $\lambda^{-1}_\gamma \veh_\gamma$ integral by the fact that $\lambda_\gamma = 1/q_{\gamma}$ with $q_{\gamma} \in \N$, concluding the proof.
\end{proof}
We are finally ready to use the Steinitz Lemma to derive a bound on $\|\vex^* - \vez^*\|_1$.
\begin{theorem}
\label{thm:proximity}
  Let $\vex^*$ be a conf-optimal solution of~\eqref{eq:hugenfold_relax} with at most $r+\tau$ fractional bricks.
  Then there exists an optimal solution $\vez^*$ of~\eqref{eq:hugenfold} such that
  \begin{equation*}
    \|\vez^*-\vex^*\|_1 \leq \left((r+\tau) 26t^4 \log(t\|E^1_2,\dots,E^\tau_2\|_\infty)\right)  (2r)^{r+1} (\|E\|_\infty s)^{3rs} \enspace .
  \end{equation*}
\end{theorem}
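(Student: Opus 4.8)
The plan is to transfer the whole problem to the lifted vectors $\uparrow\vex^*$ and $\uparrow\vez^*$ and run there a Steinitz-type argument in the spirit of Eisenbrand and Weismantel~\cite{EisenbrandWeismantel2018}, deriving a contradiction with Lemma~\ref{thr:4} whenever $\|\vex^* - \vez^*\|_1$ is too large. Concretely, I would first use Lemma~\ref{lem:frac} to fix a conf-optimal solution $\vex^*$ of~\eqref{eq:hugenfold_relax} with at most $r+\tau$ fractional bricks, and let $\vez^*$ be an optimum of~\eqref{eq:hugenfold} minimizing $\|\vex^* - \vez^*\|_1$; by Lemma~\ref{thr:4} the difference $\vex^* - \vez^*$ then has no configurable cycle. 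Next, fixing a scalable decomposition of every brick of $\vex^*$ (which exists by Lemma~\ref{lem:scalable_decomposition}), I form $\uparrow\vex^*$ and $\uparrow\vez^*$; by Lemma~\ref{lem:size_rise} the vector $\uparrow\vex^*$ has at most $F := (r+\tau)\cdot 26t^3\log(t\|E_2^1,\dots,E_2^\tau\|_\infty)$ fractional bricks, and by the lemma immediately preceding this theorem the fall of any cycle of $\uparrow\vex^* - \uparrow\vez^*$ is a configurable cycle of $\vex^* - \vez^*$. Hence $\uparrow\vex^* - \uparrow\vez^*$ admits no cycle with nonzero fall, and since $\|\uparrow\vex^* - \uparrow\vez^*\|_1 \geq \|\vex^* - \vez^*\|_1$ it suffices to bound the distance of the lifted pair.

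To do so I would decompose $\vew := \uparrow\vex^* - \uparrow\vez^*$ brick by brick. Each brick $\vew^j$ of type $i(j)$ equals $\lambda_j\big(\vecc_j - (\vez^*)^{\nu(j)}\big)$ with $\lambda_j = 1/q_j$, $q_j \in \N$ and $q_j = 1$ precisely on integral bricks, so $q_j \vew^j = \vecc_j - (\vez^*)^{\nu(j)} \in \Ker_\Z(E_2^{i(j)})$ is an \emph{integer} vector (this is the point of lifting into configurations). Decomposing $q_j\vew^j$ conformally via Proposition~\ref{prop:possum} into elements of $\G(E_2^{i(j)})$ --- or, for the part of the argument that needs it, into circuits, of support at most $s+1$ --- yields for every brick a conformal family of small integer vectors, each of $\ell_1$-norm at most $t\,g_\infty(E_2)$ with $g_\infty(E_2) := \max_i g_\infty(E_2^i)$, annihilated by the corresponding diagonal block, and whose images under the top blocks $E_1^{i(j)}$ lie in $\R^r$, are bounded in $\infty$-norm by a quantity $\Lambda$ that can be kept free of $t$ (using circuits, $\Lambda \leq (s+1)\|E\|_\infty g_\infty(E_2)$), and whose total over all bricks is $\sum_j E_1^{i(j)}\vew^j = \vezero$. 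Scaling by $\Lambda$ and applying the Steinitz Lemma~\ref{lem:steinitz} reorders these pieces so that every prefix sum of their $E_1$-images has $\infty$-norm at most $r\Lambda$, hence --- once restricted to the pieces living on integral bricks, which are honest integer vectors --- lies in a box with at most $(2r\Lambda+1)^r$ integer points.

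Now suppose, towards a contradiction, that $\|\vew\|_1$ exceeds the claimed bound. Since each piece has $\ell_1$-norm at most $t\,g_\infty(E_2)$, and only $\Oh(tF)$ of the "special" positions in the Steinitz order (one per Graver direction per fractional brick of the rise) can come from fractional bricks, somewhere between two consecutive such special positions there is a run of more than $(2r\Lambda+1)^r$ integral-brick pieces; two initial segments of this run then have equal (integer) prefix sums, which yields a nonempty contiguous block $\ver$ supported on integral bricks only with $\sum_j E_1^{i(j)}\ver^j = \vezero$, $E_2^{i(j)}\ver^j = \vezero$ for every $j$, and $\ver \sqsubseteq \vew$. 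Thus $\ver$ is an integer cycle of $\uparrow\vex^* - \uparrow\vez^*$ with $\downarrow\ver \neq \vezero$ (the fall does not vanish because integral bricks of $\vex^*$ pass through the rise unchanged), and the preceding lemma turns it into a configurable cycle of $\vex^* - \vez^*$ --- contradicting the first paragraph. Therefore $\|\vex^*-\vez^*\|_1 \leq \|\vew\|_1$ is at most the number of such separating positions, times the maximal admissible run length $(2r\Lambda+1)^r$, times the per-piece $\ell_1$-bound $t\,g_\infty(E_2)$; substituting $g_\infty(E_2) \leq \|E_2\|_\infty(2s\|E_2\|_\infty + 1)^s$~\cite{EisenbrandEtAl2018} and collecting terms yields exactly $\|\vez^* - \vex^*\|_1 \leq \big((r+\tau)26t^4\log(t\|E_2^1,\dots,E_2^\tau\|_\infty)\big)(2r)^{r+1}(\|E\|_\infty s)^{3rs}$.

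I expect the delicate point to be precisely the third paragraph: a contiguous Steinitz block may cut through the several copies of a single rescaled kernel direction inside a fractional brick (leaving a non-integral partial sum) or lie entirely inside the lift of one fractional brick (and then have zero fall), so one cannot simply pigeonhole over all pieces. The way out --- exhibiting a sufficiently long run of purely integral-brick pieces, so that the extracted block is automatically integral with nonzero fall --- rests on the two quantitative inputs prepared beforehand: the bound $F$ on the number of fractional bricks of the rise (Lemma~\ref{lem:size_rise}) and the unit-fraction form of the scalable decompositions (Lemma~\ref{lem:scalable_decomposition}). Once this is in place, keeping the $t$-dependence confined to the stated leading factor (via circuits of support $\leq s+1$, which makes the Steinitz box $t$-free) and matching the remaining constants is routine bookkeeping.
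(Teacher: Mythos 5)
Your high-level plan (lift via scalable decompositions, run Steinitz on the lifted difference, use the pigeonhole argument to extract a block whose fall is a configurable cycle, contradict Lemma~\ref{thr:4}) is exactly the paper's plan, and you correctly identify the delicate point — ensuring the extracted block has nonzero fall. However, your second paragraph contains a concrete flaw that breaks the argument.

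You decompose $q_j\vew^j = \vecc_j - (\vez^*)^{\nu(j)}$ (an integer vector) into Graver elements or circuits and declare the resulting integer vectors to be your Steinitz sequence, claiming their $E_1$-images total $\sum_j E_1^{i(j)}\vew^j = \vezero$. But the pieces from brick $j$ sum to $q_j\vew^j$, not $\vew^j$, so their $E_1$-images total $\sum_j q_j\,E_1^{i(j)}\vew^j$, and this is not zero in general (the $q_j$'s differ across bricks, and $\sum_j q_j\,E_1^{i(j)}\vew^j = \sum_j E_1^{i(j)}\vecc_j - \sum_\ell |\nu^{-1}(\ell)|\,E_1^{i(\ell)}(\vez^*)^\ell$ is not the $E_1$-image of anything that must vanish). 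So the sequence you feed to Lemma~\ref{lem:steinitz} does not sum to $\vezero$ and the lemma is inapplicable. Rescaling each piece by $1/q_j$ restores the zero sum but makes every piece from a fractional brick fractional, and then there is no bound of the form $\Oh(tF)$ on the number of fractional pieces — which is precisely the bound your pigeonhole in paragraph~3 needs. The paper avoids this by decomposing $\veq^i = (\uparrow\vex^*)^i - (\uparrow\vez^*)^i$ directly, \emph{without} scaling up: for integral rise bricks it uses the integer Positive Sum Property (Proposition~\ref{prop:possum}, giving integer Graver pieces with $\alpha_j\in\N$), and for fractional rise bricks it uses the real circuit decomposition $\veq^i = \sum_{j=1}^t\alpha_j\veg_j^i$ with $\alpha_j\geq 0$, appending $\lfloor\alpha_j\rfloor$ integer copies of $E_1^\ell\veg_j^i$ and a single fractional remainder $E_1^\ell\{\alpha_j\}\veg_j^i$; it is this splitting step — absent from your writeup — that yields the bound $\mathfrak{f}\leq t\cdot F$ on the number of fractional Steinitz pieces.

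There is also a discrepancy in the pigeonhole. You count "$\Oh(tF)$ special positions" and then ask for a run of "purely integral-brick pieces," i.e.\ pieces originating from integral bricks of $\uparrow\vex^*$. But the $\lfloor\alpha_j\rfloor$ integer copies contributed by fractional rise bricks are also non-special in your counting and yet do not originate from integral bricks, so a run of length exceeding $(2r\Lambda+1)^r$ between consecutive special positions is \emph{not} guaranteed to consist only of integral-brick pieces. Conversely, the paper only pigeonholes on the $\mathfrak{f}$ genuinely \emph{fractional} pieces and allows the extracted run to contain integer pieces from fractional rise bricks; this makes the pigeonhole counts work out but leaves open exactly the possibility you were worried about, that the fall of the resulting cycle could cancel across the several rise bricks coming from one fractional brick of $\vex^*$. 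Your instinct to demand purely integral-brick runs is a reasonable way to close that door, but it requires a new counting argument (bounding the total number of integer pieces coming from fractional rise bricks, not just the fractional ones), which neither your proposal nor the paper provides.
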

\begin{proof}
  Denote by $\bar{E}_1$ the first $r$ rows of the matrix $E^{(N)}$.
  Let $\vez^*$ be an optimal integer solution such that $\|\vez^* - \vex^*\|_1$ is minimal, let $\uparrow \vex^*$ be the rise of $\vex^*$ with at most $(r+\tau) \cdot 26t^3 \log(t\|E^1_2,\dots,E^\tau_2\|_\infty)$ fractional bricks (cf. Lemma~\ref{lem:size_rise}), let $\uparrow \vez^*$ be a rise of $\vez^*$, and let $\veq = \uparrow \vex^* - \uparrow \vez^*$.

  We want to get into the setting of the Steinitz Lemma, that is, to obtain a sequence of vectors with small $\ell_1$-norm and summing up to zero.
  To this end, we shall decompose $\bar{E}_1\veq$ in the following way; we stress that we have $\bar{E}_1\veq = \vezero$.
  For every integral brick $\veq^i$ of type $\ell \in [\tau]$ we have its decomposition $\veq^i = \sum_j \veg^i_j$ into elements of $\G(E^\ell_2)$ by the Positive Sum Property (Proposition~\ref{prop:possum}); for each $\veg^i_j$ append $E^\ell_1\veg^i_j$ into the sequence.
  For every fractional brick $\veq^i$ of type $\ell \in [\tau]$ we have its decomposition $\veq^i = \sum_{j=1}^{t} \alpha_j \veg^i_j$, $\alpha_j \geq 0$ for each~$j$, into elements of $\CC(E^\ell_2)$; for each $\veg^i_j$ append $\floor{\alpha_j}$ copies of $E^\ell_1 \veg^i_j$ into the sequence, and finally append $E^\ell_1 \{\alpha_j\} \veg^i_j$.
  Observe that since $\uparrow \vex^*$ has at most $(r+\tau) \cdot 26t^3 \log(t\|E^1_2,\dots,E^\tau_2\|_\infty)$ fractional bricks, so does $\veq$, and thus we have appended  $\mathfrak{f} \leq t \cdot (r+\tau)26t^3 \log(t\|E^1_2,\dots,E^\tau_2\|_\infty) \leq (r+\tau) 26t^4 \log(t\|E^1_2,\dots,E^\tau_2\|_\infty)$ fractional vectors into the sequence.
  Now we have a sequence
  \begin{equation}
  \label{eq:13}
    \veo_1,\dots,\veo_m,\vep_{m+1},\dots,\vep_{m+\mathfrak{f}}
  \end{equation}
  with $m$ integer vectors $\veo_1, \dots, \veo_m$ and $\mathfrak{f}$ fractional vectors $\vep_{m+1}, \dots, \vep_{m+\mathfrak{f}}$.
  Moreover, since, for each $i \in [\tau]$, $\CC(E^i_2) \subseteq \G(E^i_2)$ and $\max_{\veg \in \G(E^i_2)} \|\veg\|_\infty \leq (2s\|E^i_2\|_\infty +1)^{s}$~\cite[Lemma 2]{EisenbrandEtAl2018}, each vector has $\ell_\infty$-norm of $\|E^1_1,\dots,E^\tau_1\|_\infty \cdot (2s\|E^1_2,\dots,E^\tau_2\|_\infty +1)^{s} \leq (2s\|E\|_\infty +1)^{s+1} $ and they sum up to $\vezero$. Observe that  $(m+\mathfrak{f})\cdot \max_{\ell \in [\tau]}g_\infty(E^\ell_2) \geq \|\veq\|_1 = \|\uparrow \vex^* - \uparrow \vez^*\|_1 \geq \|\vex^* - \vez^*\|_1$.
   We now focus on bounding $m+\mathfrak{f}$.
  The Steinitz Lemma (Lemma~\ref{lem:steinitz}) implies that there exists a permutation $\pi$ such that the sequence~\eqref{eq:13} can be re-arranged as
  \begin{equation}
  \label{eq:16}
    \vev_{1},\dots,\vev_{m+\mathfrak{f}},
  \end{equation}
  where $\vev_i$ is $\veo_{\pi^{-1}(i)}$ if $i \in [1,m]$ and $\vep_{\pi^{-1}(i)}$ if $i \in [m+1, m+\mathfrak{f}]$, respectively, and for each $1 \le k \le m+\mathfrak{f}$ the prefix sum $\vet_k = \sum_{i=1}^k \vev_{i}$ satisfies
  \begin{equation*}
    \|\vet_k\|_\infty \le r (2s\|E\|_\infty +1)^{s+1} \enspace .
  \end{equation*}
  We will now argue that there cannot be indices $1 \le k_1 < \cdots < k_{\mathfrak{f}+2} \le \mathfrak{f}+m$ with
  \begin{equation}
  \label{eq:19}
    \vet_{k_1} = \cdots = \vet_{k_{\mathfrak{f}+2}},
  \end{equation}
  which implies that $\mathfrak{f}+m$ is bounded by $\mathfrak{f}+1$ times $r$ times the number of integer points of norm at most $r (2s\|E\|_\infty +1)^{s+1}$ and therefore, denoting $g_\infty(E_2) = \max_{i \in [\tau]} g_\infty(E^i_2)$,
  \begin{align*}
    \|\vex^* - \vez^*\|_1 &\leq \|\uparrow \vex^* - \uparrow \vez^*\|_1 \leq (\mathfrak{f}+1) \cdot r \left(2r (2s\|E\|_\infty +1)^{s+1} +1\right)^r \cdot g_\infty(E_2) \\
    &\leq \left((r+\tau) 26t^4 \log(t\|E^1_2,\dots,E^\tau_2\|_\infty)\right)  \cdot r \left(2r (2s\|E\|_\infty +1)^{s+1} +1\right)^r\cdot (2s\|E\|_\infty + 1)^{s+1} \\
    &\leq \left((r+\tau) 26t^4 \log(t\|E^1_2,\dots,E^\tau_2\|_\infty)\right)  (2r)^{r+1} (\|E\|_\infty s)^{3rs} \enspace .
  \end{align*}
  Assume for contradiction that there exist $\mathfrak{f}+2$ indices $1 \le k_1 < \cdots < k_{\mathfrak{f}+2} \le \mathfrak{f}+m$ satisfying~\eqref{eq:19}.
  By the pigeonhole principle there must exists an index $k_\ell$ such that all the vectors
  \begin{displaymath}
    \vev_{k_{\ell}+1},\dots,\vev_{k_{\ell+1}}
  \end{displaymath}
  from the rearrangement~\eqref{eq:16} correspond to integer vectors $\veo_{\pi^{-1}(p)}$ for $p \in [k_{\ell}+1, k_{\ell+1}]$.
  We will show that this corresponds to a cycle $\veh$ of $\uparrow \vex^* - \uparrow \vez^*$ which by the minimality of $\|\vex^* - \vez^*\|_1$ and Lemma~\ref{thr:4} is impossible.
  To obtain the cycle, for each $p \in [k_{\ell}+1, k_{\ell+1}]$, let $i(p)$, $j(p)$, and $\ell(p)$ be such that $\veo_{\pi^{-1}(p)} = E^{\ell(p)}_1 \veg_{j(p)}^{i(p)}$.
  Initialize $\veh := \vezero \in \Z^{N't}$ and, for each $p \in [k_{\ell}+1, k_{\ell+1}]$, let $\veh^{i(p)} := \veh^{i(p)} + g_{j(p)}^{i(p)}$.
  Now we check that $\veh$ is, in fact, a cycle.
  First, to see that $E^{(N')} \veh = \vezero$, we have $E^\ell_2 \veh^i = \vezero$ for every brick $i \in [N']$ of type $\ell$ by the fact that $\veh^i$ is a sum of $\veg_j^i \in \G(E^\ell_2) \subseteq \Ker_{\Z}(E^\ell_2)$, and we have $\bar{E}_1 \veh = \vezero$ by the fact that $\vet_{k_\ell} = \vet_{k_{\ell+1}}$ and thus $\sum_{p \in [m+\mathfrak{f}]} E^{\ell(p)}_1 \veg_{j(p)}^{i(p)} =\vezero$.
  Second, $\veh \sqsubseteq \veq$ because, for every brick $i \in [N']$, $\veh^i$ is a sign-compatible sum of elements $\veg^i_j \sqsubseteq \veq^i$.
\end{proof}

\subsection{Algorithm}
\begin{proof}[{Proof of Theorem~\ref{thm:hugenfold}}]
We first give a description of the algorithm which solves huge $N$-fold IP, then show its correctness, and finally give a time complexity analysis.
\paragraph{Description of the Algorithm.}
  First, obtain an optimal solution $\vey$ of~\eqref{eq:conflp} and from it a conf-optimal solution $\vex^* = \varphi(\vey)$ with at most $r+\tau$ fractional bricks by Lemma~\ref{lem:frac}.
  Applying Theorem~\ref{thm:proximity} to $\vex^*$ guarantees the existence of an integer optimum $\vez^*$ satisfying
  \begin{equation} \label{eq:algo_prox}
  \|\vex^* - \vez^* \|_1 \leq P := \left((r+\tau) 26t^4 \log(t\|E^1_2,\dots,E^\tau_2\|_\infty)\right)  (2r)^{r+1} (\|E\|_\infty s)^{3rs}  \enspace .
  \end{equation}
  This implies that $\vez^*$ differs from $\vex^*$ in at most $P$ bricks.
  The idea of the algorithm is to ``fix'' the value of the solution on ``almost all'' bricks and compute the rest using an auxiliary $\bar{N}$-fold IP problem with a polynomial $\bar{N}$.

  Formally, our goal is to compute an optimal solution $\vez$ of~\eqref{eq:hugenfold} represented succinctly by multiplicities of configurations, or in other words, as a solution $\vezeta$ of~\eqref{eq:confilp}.
  Denote by $\vey_{-P}$ the vector whose coordinates are defined by setting, for every type $i \in [\tau]$ and every configuration $\vecc \in \CC^i$, $\vey_{-P}(i,{\vecc}) = \max\{0, \floor{y(i, \vecc)} - P\}$
  This leaves us with $\|\vey\|_1 - \|\vey_{-P}\|_1 \leq |\suppo(\vey)| P \leq (r+\tau) P =: \bar{P}$ bricks to determine.
  Let $\bar{\vezeta} = \vey - \vey_{-P}$, $\bar{\vex} = \varphi(\bar{\vezeta})$, and $\bar{N} = \|\bar{\vezeta}\|_1$.
  Construct an auxiliary $\bar{N}$-fold IP instance with the same blocks $E^i_1, E^i_2$, $i \in [\tau]$, by, for each brick $\bar{\vex}^j$ of type $i$, setting
  \begin{tasks}[style=itemize](4)
    \task $\bar{f}^j = f^i$,
    \task $\bar{\veb}^j = \veb^i$,
    \task $\bar{\vel}^j = \vel^i$,
    \task $\bar{\veu}^j = \veu^i$.
  \end{tasks}
  We say that such a brick was \emph{derived from type $i$}.
  Let $\bar{\veb}^0 = \veb^0 - \sum_{i=1}^{\tau} \sum_{\vecc \in \CC^i} \zeta(i, \vecc) E^i_1 \vecc$.

  After obtaining an optimal solution $\bar{\vez}$ of this instance we update $\vezeta$ as follows.
  For each brick $\bar{\vez}^j$ derived from type $i$, increment $\zeta(i,\bar{\vez}^j)$ by one.

\paragraph{Correctness.}
By~\eqref{eq:algo_prox} it is correct to assume that there exists a solution $\vezeta$ of~\eqref{eq:confilp} which has $\zeta(i,\vecc) \geq \max\{0, \floor{y(i, \vecc)} - P\}$ for each $i \in [\tau]$ and $\vecc \in \CC^i$.
Thus we may do a variable transformation of~\eqref{eq:confilp} $\vezeta = \bar{\vezeta} + \vey_{-P}$, obtaining an auxiliary~\eqref{eq:confilp} instance
\[
\min \vev (\bar{\vezeta} + \vey_{-P}) \,:\, B(\bar{\vezeta} + \vey_{-P}) = \ved,\, \vezero \leq \bar{\vezeta} \enspace .
\]
The auxiliary huge $\bar{N}$-fold instance is simply the instance corresponding to the above, and the final construction of $\vezeta$ corresponds to the described variable transformation.

\paragraph{Complexity.}
Since $\|\bar{\vezeta}\|_1 \leq \bar{P}$, we can obtain an optimal solution $\bar{\vez}$ of the auxiliary instance in time $(\|E\|_\infty r s)^{\Oh(r^2s + rs^2)} (t \bar{P}) \log (t \bar{P}) \la f_{\max}, \bar{\veb}, \bar{\vel}, \bar{\veu} \ra$~\cite[Corollary 91]{EisenbrandEtAl2019}.
Let us now compute the time needed altogether.
To solve~\eqref{eq:conflp}, we need time $$\|E\|_\infty^{\Oh(s^2)}(rt\tau \la f_{\max}, \vel, \veu, \veb, \vemu \ra)^{\Oh(1)} \enspace .$$
To solve the auxiliary instance above, we need time
\begin{equation*}
(\|E\|_\infty r s)^{\Oh(r^2s + rs^2)} (t \bar{P}) \log (t \bar{P}) \la f_{\max}, \bar{\veb}, \bar{\vel}, \bar{\veu} \ra, \quad \text{where,}
\end{equation*}
\begin{equation*}
\bar{P} = (r+\tau)P = (r+\tau)\left((r+\tau) 26t^4 \log(t\|E^1_2,\dots,E^\tau_2\|_\infty)\right)  (2r)^{r+1} (\|E\|_\infty s)^{3rs}  \enspace .
\end{equation*}
Hence we can solve huge $N$-fold IP in time at most
\[ (\|E\|_\infty rs)^{\Oh(r^2s + rs^2)} (t\tau \la f_{\max}, \vel, \veu, \veb, \vemu \ra)^{\Oh(1)} \enspace .\qedhere\]
\end{proof}

\section{Part~\ref{thm:implicitMIMO:gr} of Theorem~\ref{thm:implicitMIMO}} \label{sec:part4}
\begin{proof}[Proof idea for Part~\ref{thm:implicitMIMO:gr} of Theorem~\ref{thm:implicitMIMO}]
  Our proof builds on a Structure Theorem of Goemans and Rothvo{\ss}~(Proposition~\ref{prop:structurethm}) and the idea of the proof of their main theorem~\cite[Theorem 2.2]{GoemansRothvoss2014}.
  The Structure Theorem applies to the single-type setting and says (translated into the setting of MIMO) that for any solution $\velambda$ corresponding to a decomposition of $\ven$, there exists a solution $\hat{\velambda}$ whose support mostly lies within a precomputable and not-too-large set $Y$ of ``important'' configurations.

  We first extend the Structure Theorem into the multitype setting (Lemma~\ref{lem:structure}), and then use it as follows.
  For each type $i$, we compute the set of ``important'' configurations $Y^i$, and then guess from it a small subset of configurations which will appear in the solution.
  Using this, we construct an ILP in small dimension, solve it using Kannan's algorithm, and derive from it an optimal solution~$\velambda$.
  We take special care to enforce the multiplicity constraint (i.e., $\|\velambda^i\|_1 = \mu^i$, for each $i \in [\tau]$) and argue how to encode a linear and a fixed-charge objective.
\end{proof}

\noindent\textbf{Remark:} Goemans and Rothvo{\ss} prove a similar statement~\cite[Corollary 5.1]{GoemansRothvoss2014} to Part~\ref{thm:implicitMIMO:gr} of Theorem~\ref{thm:implicitMIMO}, where the input $\ven$ and the coefficients $\vew$ have to be given in unary if one desires an \FPT algorithm, whereas in our case they can be given in binary.
The difference is that they invoke the Structure Theorem on a polytope $P$ which is a disjunctive formulation of the union of polyhedra $P^1 \cup \cdots \cup P^\tau$.
This disjunctive construction however introduces a large coefficient, increasing $\Delta$.
Similarly, a linear objective could be handled in their setting by introducing an extra variable $x_{d+1}$ and setting $x_{d+1} = \vew \vex$, but this constraint would again increase $\Delta$.
We circumvent both of these limitations by using the Structure Theorem directly.

\clearpage
\section{Applications: Scheduling, Bin Packing, and Surfing} \label{sec:applications}
In this section we present an extended exposition of using MIMO as a modeling tool.
The majority of our focus is on the setting of high multiplicity non-preemptive scheduling (Sections~\ref{sec:nonpreempt} and~\ref{sec:polyobj}), where we begin with structural observations and gradually extend them to increasingly complex scenarios and objective functions.
Each subsection culminates with a ``modeling lemma'' which links the parameters of a MIMO instance we have constructed with the parameters of the problem instance it encodes.
A straightforward application of Theorem~\ref{thm:implicitMIMO} then gives an ``effective theorem'', stating the thus obtained \FPT algorithms.

\paragraph{Used Techniques.}
In order to provide MIMO models for scheduling problems with release times and due dates we study the structural properties of such schedules.
A notion of a schedule cycle has been introduced by Goemans and Rothvo\ss{}~\cite{GoemansRothvoss2014}.
Since parameterized scheduling algorithms have not been their focus, their structural observation about scheduling cycles is relatively basic.
We prove stronger structural results which allow us for example to reduce the largest coefficient in our models.
Moreover, the fact that part~\ref{thm:implicitMIMO:huge} of Theorem~\ref{thm:implicitMIMO} applies to MIMO models with certain convex objectives allows us to express more complicated scheduling objectives such as $\sum w_j C_j$.

\subsection{Makespan Minimization and Related Objectives} \label{sec:nonpreempt}
We begin by observing the structure of a schedule on a single machine (Lemma~\ref{lem:cyclesScheduleCMax}).
This allows us to restrict our attention to so-called regular schedules which can be decomposed into a small number of schedule cycles.
We then study the set $\CC$ of all potential cycles of a schedule and provide some basic observations about them.
We guess the correct value $\bar{C}_{\max}$ of the objective and ``trim'' the instance accordingly (i.e., no due date is after $\bar{C}_{\max}$).
Using the above we provide a MIMO model describing an assignment of jobs to cycles.
To connect the feasible solutions of this MIMO model to feasible schedules, we provide an algorithm which, given a solution to our model, yields an admissible schedule with the same makespan.
Finally, we discuss a different representation of the scheduling instance in which machines have speeds and we alter the previous model to capture this feature.
This is done via ``time scaling'' which only affects the right hand sides of our model, meaning the coefficients are not increased even though the model now potentially encodes jobs with large job  size (corresponding to jobs on slow machines).
This, in turn, allows us to show tractability of a wider range of instances.

We first describe the idea for the problem $R|r^i_j,d^i_j|C_{\max}$ without speeds.
Let us define the problem:

\prob{\textsc{Makespan Minimization on Unrelated Machines} ($R | r^i_j,d^i_j | C_{\max}$)}
{
There are $\kappa$ kinds of machines and $d$ types of jobs.
The number of machines of kind $i \in [\kappa]$ is $\mu^i$ and the number of jobs of type $j \in [d]$ is $n_j$, with $\vemu = (\mu^1, \dots, \mu^\kappa)$ and $\ven = (n_1, \dots, n_d)$, hence there are $m=\|\vemu\|_1$ machines and $n = \|\ven\|_1$ jobs.
Each job type is specified by three vectors giving its size, release time, and due date on each machine kind, i.e., for each $j \in [d]$ given are vectors $\vep_j = \left( p^1_j, \ldots, p^\kappa_j \right) \in \left(\N \cup \{\infty\}\right)^\kappa$, $\ver_j = \left( r^1_j, \ldots, r^\kappa_j \right) \in \N^\kappa$, and $\ved_j = \left( d^1_j, \ldots, d^\kappa_j \right) \in \N^\kappa$.
}
{A non-preemptive schedule of all of the jobs on the specified $m$ machines (if one exists) minimizing the time when the last job finishes (i.e., the makespan) such that processing a job of type $j$ on a machine of kind $i$ does not start prior to $r^i_j$ and finishes no later than $d^i_j$.}

Recall that $0 \in \N$, so, for example, a release time $0$ is allowed.
We use $J$ to denote an individual job and use $r^i(J)$ to denote the release time of $J$ on a machine of kind $i$, that is, $r^i(J) = r^i_j$ if $J$ is of type $j$, and define $d^i(J)$ similarly.
The set of all jobs is denoted $\mathcal{J}$.
A \emph{schedule of $\mathcal{J}$} is a mapping $\sigma$ that to each job $J \in \mathcal{J}$ assigns a machine and a time interval of size $p^i(J)$ if the machine is of kind $i \in [\kappa]$, which satisfies the following conditions.
Let $\lambda(J)$ be the left end point (i.e., the start) of an interval assigned by $\sigma$ to $J$ and let $\rho(J)$ be its right end point.
Then $\sigma$ must satisfy
\begin{itemize}
  \item $r^i(J) \le \lambda(J) < \rho(J) \le d^i(J)$ and $\rho(J) = \lambda(J) + p^i(J)$ for each job $J \in \mathcal{J}$ if $J$ is scheduled by $\sigma$ to a machine of kind $i$, and
  \item for every machine, the interiors of intervals corresponding to jobs assigned to it do not overlap.
\end{itemize}

The number $d^i_j$ is called a due date or a deadline in the literature.
The distinction usually is that in the former case, a job can be scheduled after the due date, but this incurs a penalty, while in the latter case a job can never be scheduled after a deadline.
Because we deal with both scenarios but the distinction is clear from which scheduling objective is optimized, we choose to always use the term due date.

We assume $p^i_j < \infty$ for all $i \in [\kappa]$ and all $j \in [d]$ by the following argument.
Suppose there is a job type $j$ with $p^i_j = \infty$ for some machine kind $i$.
This means no job of type $j$ can be scheduled to run on a machine of kind $i$.
We alter the given instance by setting $p^i_j = 1$ and $r^i_j = d^i_j = r^i_{j'}$, where $j' \in [d]$ is a job type with $p^i_{j'} < \infty$ (note that such a job type exists, since otherwise we may omit the machine kind $i$ completely as no job can be scheduled on any of these machines).
Thus, from now on we suppose $p^i_j \in \N$, in particular, $p_{\max} = \max_{i \in [\kappa]} \max_{j \in [d]} p^i_j$ is finite and well defined.

A \emph{cycle $C$} is a sequence of jobs for which there exists a permutation of job types $\pi\colon [d] \to [d]$ such that in $C$ there are first jobs of type $\pi(1)$, then jobs of type~$\pi(2)$, and so forth, up to type~$\pi(d)$ (with some of these subsequences of jobs of type $\pi(j)$ possibly empty).
As long as we discuss the $C_{\max}$ objective, the permutation $\pi$ plays no role and we may assume that $\pi(j)=j$ for each $j \in [d]$.
We will later see what role the permutation $\pi$ plays in other objectives.
Furthermore, we require the jobs assigned to one cycle to be executed one after another, in particular, there is no idle time on the machine during the execution of (the job set of) a cycle.
We stress that the number of jobs of any type in a cycle can be~0, in fact, a cycle does not have to contain any jobs at all.
The significance of a cycle is that on a machine of kind $i$ only a cycle entirely contained in $(r^i_j, d^i_j)$ may contain a job of type $j$.
A \emph{cycle decomposition $\DD$ of a schedule $\sigma$} is a partition of the jobs of $\sigma$ into cycles.
This decomposition is typically not unique and our main structural result guarantees the existence of a cycle decomposition with several useful properties.
Goemans and Rothvo{\ss} showed that any schedule admits a cycle decomposition with at most $4d$ cycles (observe that a cycle decomposition with at most $n$ cycles is trivial by having a cycle for each job).

From now on we fix a machine kind $i \in [\kappa]$.
We will now reason about a schedule on a single machine of kind $i$ with the goal of describing the set of configurations of jobs on this machine using linear constraints, hence giving a description of a polytope $P^i$, which will then be used to construct a MIMO instance.
We define the set $T = \left\{ r^i_j, d^i_j \mid j \in [d] \right\}$ of \emph{critical times}\footnote{Because $T$ depends on the machine kind $i$, it would be more precise to call it $T^i$, and similarly for other objects we shall define. We omit the superscript for brevity.}; note that $|T| \le 2d$.
Let $C$ be a cycle in any cycle decomposition of any schedule of jobs on this machine.
By $\lambda(C)$ we denote the left end of $C$, that is, the time when the first job of $C$ starts being processed and by $\rho(C)$ we denote the completion time of the last job in $C$; we say $\rho(C)$ is the completion time of $C$.
Note that $\rho(C) - \lambda(C) = \sum_{J \in C} p^i(J)$.
A cycle $C$ is \emph{internal} if there is no critical time in the interval $\left( \lambda(C), \rho(C) \right)$ (i.e., $T \cap \left(\lambda(C), \rho(C)\right) = \emptyset$) and is \emph{external} otherwise.
For a single machine of kind $i \in [\kappa]$, a vector $\vex \in \N^d$ defines a scheduling instance with $x_j$ jobs of type $j$, for $j \in [d]$.
For short, we say that a schedule $\sigma$ of this instance is a schedule of $\vex$.
We index cycles according to their starting times, e.g., the ``first'' cycle is the one containing the earliest scheduled job, and saying a cycle is odd or even refers to it having an odd or even index, respectively.
A \emph{gap} in a schedule $\sigma$ is a maximal time interval $I$ in $\sigma$ such that no job starts nor ends in $I$.

We index the critical times non-decreasingly as $T = \left\{ t_1, \ldots, t_{|T|} \right\}$ with $t_k < t_{k+1}$ for all $1 \le k \le |T|-1$.

\begin{definition}[$S$-regular decomposition, $S$-regular schedule]
Let $\sigma$ be a schedule, $\DD$ be its decomposition, and $S \subseteq \R_{\geq 0}$.
We say that $\DD$ is a \emph{regular decomposition} if it contains at most $4d-3$ cycles, the interval $[t_\ell, t_{\ell+1}]$ contains at most one internal cycle for each $\ell \in [|T|-1]$, and every external cycle contains at most one job.
We say $\DD$ is \emph{$S$-regular} if it is regular and $\forall C \in \DD$, $\lambda(C) \in S$.
We say that $\sigma$ is \emph{regular} or \emph{$S$-regular schedule} if it has a regular or $S$-regular decomposition, respectively.
\end{definition}

\begin{lemma}\label{lem:cyclesScheduleCMax}
	Let $\vex \in \N^d$ and $\sigma'$ be a schedule of $\vex$.
	Then there exists an $\N$-regular schedule $\sigma$ of $\vex$ with not larger completion time as $\sigma'$.
\end{lemma}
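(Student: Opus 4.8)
The plan is to bring $\sigma'$ into shape in two stages: first normalize it by left-shifting so that all start and end times become integral and, apart from the jobs that straddle a critical time, every job lies in a single slab between consecutive critical times; then, in each slab, replace all the jobs lying in it by one cycle.

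\textbf{Stage 1: left-shifting.} List the jobs $J_1,\dots,J_n$ of $\sigma'$ in non-decreasing order of start times and build $\sigma_1$ greedily: having placed $J_1,\dots,J_{k-1}$, schedule $J_k$ at the earliest time $\ge r^i(J_k)$ at which a slot of length $p^i(J_k)$ is free of the interiors of $J_1,\dots,J_{k-1}$. Every $J_j$ with $j<k$ ends in $\sigma'$ no later than $\lambda(J_k)$ (otherwise its interior and that of $J_k$ overlap, since $\lambda(J_j)\le\lambda(J_k)$), and moving left only decreases end times, so $J_j$ ends in $\sigma_1$ no later than $\lambda(J_k)$ as well; hence the interval of $J_k$ in $\sigma'$ is an admissible slot in $\sigma_1$, giving $\lambda_1(J_k)\le\lambda(J_k)$. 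Thus $\sigma_1$ is a schedule of $\vex$ whose completion time is at most that of $\sigma'$, and since release times and processing times are integers (we may assume $p^i_j\ge 1$, zero-length jobs being placed trivially), every start and end time of $\sigma_1$ is an integer.

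\textbf{Stage 2: one cycle per slab.} Write $T=\{t_1<\dots<t_{|T|}\}$; as $t_1=\min T\le r^i(J)\le\lambda_1(J)$ and $\rho_1(J)\le d^i(J)\le\max T=t_{|T|}$ for every job, $\sigma_1$ lives inside $[t_1,t_{|T|}]$. Call a job of $\sigma_1$ \emph{external} if its open execution interval contains a critical time and \emph{internal} otherwise; an internal job's open interval lies in a unique $(t_\ell,t_{\ell+1})$ with $1\le\ell\le|T|-1$, and then its closed interval lies in $[t_\ell,t_{\ell+1}]$. Leave every external job fixed and declare it its own cycle; in each slab $[t_\ell,t_{\ell+1}]$ replace the internal jobs lying in it by one cycle obtained by packing them back-to-back, sorted by job type, starting at the right end of the initial external piece of that slab (or at $t_\ell$ if there is none). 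This is well defined and feasible by three observations. First, if an internal job $J$ lies in $[t_\ell,t_{\ell+1}]$ then $r^i(J)\le t_\ell$ and $d^i(J)\ge t_{\ell+1}$: since $r^i(J)\in T$ and $r^i(J)\le\lambda_1(J)<t_{\ell+1}$ we get $r^i(J)\le t_\ell$, and symmetrically for $d^i(J)$; hence internal jobs of a slab may be placed anywhere inside that slab without violating release times or due dates. Second, a job whose interior meets a critical time can have a portion inside $[t_\ell,t_{\ell+1}]$ only abutting $t_\ell$ or $t_{\ell+1}$ (otherwise it would sit strictly inside the slab and be internal), and at most one job runs at each of $t_\ell,t_{\ell+1}$, so the free space of a slab containing an internal job is a single interval with integer endpoints. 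Third, the internal jobs of a slab fit into that interval, as they already did in $\sigma_1$. Therefore the resulting $\sigma$ is a valid schedule of $\vex$; every job stays in the slab it occupied in $\sigma_1$, so the completion time does not grow; all cycles start at integer points; each slab carries at most one internal cycle and each external cycle is a single job; and the number of cycles is at most $(|T|-1)+(|T|-2)\le(2d-1)+(2d-2)=4d-3$, where the bound on external cycles uses that distinct external jobs have disjoint interiors and that no interior can contain $t_1$ or $t_{|T|}$. Hence $\sigma$ is $\N$-regular.

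\textbf{Main obstacle.} The crux is the second observation of Stage~2: showing that, after left-shifting, the slack inside each slab forms a single contiguous interval with integer endpoints, so that all internal jobs of a slab can be gathered into one cycle rather than several. This hinges on the fact that a job crossing a critical time cannot sit strictly inside a slab, combined with the integrality gained in Stage~1. Stage~1 itself is a routine list-scheduling exchange argument, and the release-time/due-date bound (first observation) is a short case check on where an integer critical value can lie relative to consecutive critical times.
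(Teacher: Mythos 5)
Your proof is correct, and it takes a genuinely different route from the paper's. The paper starts from an \emph{arbitrary} cycle decomposition of $\sigma'$, then regularizes it in place: it inserts empty internal cycles between consecutive critical times when needed, splits external cycles to single jobs, merges consecutive internal cycles within a slab by permuting their jobs, and only at the very end does a fixed-point left-shift of whole cycles to obtain integer cycle starts. You invert the order of operations and work at the job level rather than the cycle level: a list-scheduling pass first compresses $\sigma'$ so that all start and end times are integral (an inductive exchange argument showing the greedy placement never moves a job right), and only then do you build the decomposition directly, declaring each boundary-crossing job an external cycle and repacking the remaining jobs of each slab into a single internal cycle. The load-bearing observation that replaces the paper's merge/insert manipulations is your second observation in Stage~2, that after the left-shift the unoccupied portion of each slab is a single interval with integer endpoints; the argument that an external job's restriction to a slab must abut $t_\ell$ or $t_{\ell+1}$ (else it would contain no critical time in its interior, contradicting externality) is the right one, and the ``$L+P\le\rho_{\max}$'' fitting argument for the repacking is sound. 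What each approach buys: yours front-loads the integrality and makes the cycle construction essentially declarative, whereas the paper's formulation keeps the left-aligning step isolated and cycle-based, which it then reuses verbatim inside Lemma~\ref{lem:cyclesScheduleOrderedObjective} when jobs must additionally be reordered by Smith's rule within each slab for the polynomial objectives; your variant would need an analogous tweak there (choosing the within-slab order by $\preceq^i_{\RR,\ell}$ rather than ``sorted by job type''), but nothing in the argument breaks.
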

\begin{proof}
  Let $\DD'$ be any cycle decomposition of $\sigma'$ (e.g., each job in a separate cycle).
  We will transform $\sigma'$ and $\DD'$ into $\sigma$ and its $\N$-regular decomposition $\DD$ in several steps.
  First, we describe how to ensure that
  \begin{itemize}
    \item for every two consecutive critical times $t_k, t_{k+1}$ there is at least one (possibly empty) cycle in $(t_k, t_{k+1})$, or
    \item there exists an external cycle containing both $t_k$ and $t_{k+1}$.
  \end{itemize}
  Clearly, if $t_k$ is contained in a different cycle than $t_{k+1}$, then we may insert an empty internal cycle at the completion time of the cycle containing $t_k$.
  Furthermore, by splitting cycles we may assume that external cycles are either empty or contain exactly one job.

  Observe that now there are at most $|T| - 2$ external cycles, because an external cycle must contain a critical time other than $t_1$ or $t_{|T|}$ in its interior, and there are $|T|-2$ of these.

  Regarding internal cycles, note that any two consecutive cycles $C^1, C^2$ in a time interval $(t_k, t_{k+1})$ for some $k \in [|T|-1]$ can be merged by permuting the jobs in $C^1 \cup C^2$.
  This merging is possible as all jobs in $C^1 \cup C^2$ were released at time at most $t_k$ and have due dates of at least $t_{k+1}$, and the total size of jobs does not increase as a result of this permuting.
  There are $|T|-1$ intervals $(t_k, t_{k+1})$ with $k \in [|T|-1]$, hence at most $|T|-1$ internal cycles, hence $(|T|-2) + (|T|-1)$ cycles in total, and by $|T| \leq 2d$ the bound of $4d-3$ follows.

  Finally, we ensure that for each cycle $C \in \DD'$, $\lambda(C) \in \N$, i.e., $\N$-regularity.
  Say that a $t \in T$ is \emph{permissible} for $C$ if no job in $C$ has a release time larger than $t$.
  Now simply repeatedly pick $C \in \DD'$ with smallest $\lambda(C)$ such that $\lambda(C) \not\in \N$ and shift it to the left such that $C$ starts at the larger of either the closest smaller $\rho(C')$ for $C' \in \DD'$, or the smallest permissible $t \in T$.
  This shifting must be possible because no job runs in the time between $\lambda(C)$ and its new starting time, and since both $\rho(C')$ and all $t \in T$ are integral, the new $\lambda(C)$ must now also be integral.
  Since the smallest $\lambda(C) \not\in \N$ increases in each iteration, we must terminate in at most $|\DD'|$ steps, the new schedule satisfies the required property, and the completion time has not increased.
\end{proof}

\paragraph{Cycle Structure.}
Let us now make a few more observations about regular cycle decompositions, and introduce some helpful notation.
If $|t_{k+1} - t_k| \ge p_{\max}$ holds for two consecutive critical times, then there is an internal cycle between them, because an external cycle can contain at most one job.
Furthermore, each critical time is either contained in the interior of an external cycle or it is the left end point of one cycle and the right end point of another cycle (either internal or external with both options possible).
Of course, the machine can in general have some idle time in the schedule before/after a critical time.
We always think of the schedule in a left-to-right manner in such a way that $t_1$ is the leftmost point in the schedule and $t_{|T|}$ is the rightmost point in the schedule.
For an overview of the structure of a regular cycle decomposition of a schedule cf. Figure~\ref{fig:criticalTimesAndCycles}.

Lemma~\ref{lem:cyclesScheduleCMax} allows us to restrict our attention to regular schedules.
Let us define a set $\CC$ of \emph{potential cycles}, which capture all possible ways how cycles may contain or intersect the critical times $T$.
We caution that cycles and potential cycles are quite different objects: a cycle is a schedule of a job set into a time interval, whereas a potential cycle is merely a time interval.
Hence $\CC$ is defined independently of any particular schedule $\sigma$.
Crucially, in any schedule $\sigma$ each cycle corresponds to some potential cycle (we say that it is a \emph{realization} of this potential cycle), and in regular schedules (but not in general) each potential cycle has at most one realization in $\sigma$.
In order not to introduce extra notation, we denote the potential cycles and their realizations in a particular schedule identically.
The set $\CC$ is defined as $\CC = \CC^{\text{int}} \cup \CC^{\text{ext}}$, with $\CC^{\text{int}}$ potential internal cycles, and $\CC^{\text{ext}}$ potential external cycles, which are themselves defined as follows.
We set $\mathcal{C}^{\text{int}} = \left\{ C^{\text{int}}_1, \ldots, C^{\text{int}}_{|T| - 1} \right\}$ with one $C \in \CC^{\text{int}}$ for every potential internal cycle, i.e., for every interval $(t_k, t_{k+1})$, $k \in [|T|-1]$.
We set $\CC^{\text{ext}} = \left\{ C^{\text{ext}}_{k,\ell} \mid k,\ell \in \{2,3,\ldots, |T| - 1\},\, k \le \ell \right\}$, where $C^{\text{ext}}_{k,\ell}$ is a potential external cycle containing all the critical times $t_k, \ldots, t_{\ell}$ in the interior of its interval.

To elucidate the meaning of potential cycles, consider the two potential external cycles $C_{2,3}^{\text{ext}}$ and $C_{3,4}^{\text{ext}}$.
Clearly, in any schedule $\sigma$ both cycles cannot be realized, since both would have to contain $t_3$.
However, the idea of the set $\CC$ is that in any regular schedule, any external cycle will be a realization of one from $\CC^{\text{ext}}$, and similarly for internal cycles.

For each $C \in \CC$ we define $\leftCritical(C)$ to be the index of the largest critical time that is smaller or equal to the start of execution of the first job in $C$ and we define $\rightCritical(C)$ to be the index of the smallest critical time that is larger or equal to the completion time of the last job in $C$.
For example we have $t_k = t_{\leftCritical(C^{\text{int}}_{k})}$, $t_k = t_{\rightCritical(C^{\text{int}}_{k-1})}$, and $k = \rightCritical(C^{\text{ext}}_{k+1,\ell})$ (for any $\ell \ge k + 1$).
It is worth noting that since in regular cycle decompositions external cycles are allowed to contain at most one job, we have $t_{\rightCritical(C)+1} - t_{\leftCritical(C)-1} < p_{\max}$ for each $C \in \mathcal{C}^{\text{ext}}$.
The discussion above thus shows:
\begin{lemma}\label{lem:PCmaxCanonicalCycleDecomposition}
	Let $\vex \in \N^d$ and $\sigma'$ be a schedule of $\vex$.
	Then there is an $\N$-regular schedule $\sigma$ with the same makespan and an $\N$-regular cycle decomposition $\DD$ of $\sigma$ such that each of its cycles is a realization of some $C \in \mathcal{C}$, and each $C \in \CC$ has at most one realization in $\sigma$.
\qed
\end{lemma}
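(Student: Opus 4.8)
The plan is to assemble the facts established above. First I would apply Lemma~\ref{lem:cyclesScheduleCMax} to $\sigma'$ to obtain an $\N$-regular schedule $\sigma$ of $\vex$ whose makespan is not larger than that of $\sigma'$ (if $\sigma'$ is taken optimal these coincide, giving ``the same makespan''). By definition, $\sigma$ admits an $\N$-regular cycle decomposition $\DD$: it has at most $4d-3$ cycles, every interval $[t_\ell,t_{\ell+1}]$ contains at most one internal cycle of $\DD$, every external cycle of $\DD$ contains at most one job, and $\lambda(C)\in\N$ for every $C\in\DD$. It then remains to check two things: that each cycle of $\DD$ realizes some potential cycle in $\CC=\CC^{\text{int}}\cup\CC^{\text{ext}}$, and that each potential cycle is realized at most once in $\sigma$.

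Next I would classify each $C\in\DD$ as internal or external. Since the machine is idle before the smallest critical time $t_1$ and after the largest one $t_{|T|}$, the span $[\lambda(C),\rho(C)]$ of every cycle lies inside $[t_1,t_{|T|}]$; in particular $t_1$ and $t_{|T|}$ never lie in the open interior $(\lambda(C),\rho(C))$. If $C$ is internal, i.e.\ $T\cap(\lambda(C),\rho(C))=\emptyset$, let $t_k$ be the largest critical time with $t_k\le\lambda(C)$; then $t_k\le\lambda(C)<\rho(C)\le t_{k+1}$, so $C$ is a realization of the potential internal cycle $C^{\text{int}}_k$ associated with the interval $(t_k,t_{k+1})$. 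If $C$ is external, the set of critical times contained in its interior is a nonempty consecutive block $t_k,\dots,t_\ell$ with $k\le\ell$, and by the previous sentence $k,\ell\in\{2,\dots,|T|-1\}$; hence $C$ is a realization of $C^{\text{ext}}_{k,\ell}\in\CC^{\text{ext}}$. Thus every cycle of $\DD$ realizes a member of $\CC$.

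Finally, for uniqueness of realizations: a realization of $C^{\text{int}}_k$ is an internal cycle whose span lies in $[t_k,t_{k+1}]$, and $\N$-regularity of $\DD$ permits at most one such cycle, so $C^{\text{int}}_k$ has at most one realization; a realization of $C^{\text{ext}}_{k,\ell}$ is a cycle having $t_k$ in the open interior of its span, and since the jobs of a cycle run back-to-back on the machine with no other job inserted, the spans of distinct cycles of $\DD$ have disjoint interiors, so at most one cycle of $\DD$ can contain $t_k$ in its interior. This yields the lemma. I expect no genuinely new difficulty here beyond Lemma~\ref{lem:cyclesScheduleCMax} and the preceding cycle-structure discussion; the only points needing mild care are arguing that external cycles cannot straddle $t_1$ or $t_{|T|}$ (immediate from the machine being idle outside $[t_1,t_{|T|}]$), and the uniqueness argument, which relies on the one-internal-cycle-per-gap clause of $\N$-regularity for internal cycles and on disjointness of cycle spans for external ones.
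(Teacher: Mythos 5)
Your proof is correct and takes essentially the same approach as the paper: the paper treats this lemma as an immediate consequence of Lemma~\ref{lem:cyclesScheduleCMax} and the preceding ``Cycle Structure'' discussion (the proof is literally the phrase ``The discussion above thus shows:'' followed by \qed), and your write-up simply spells out that reasoning --- classifying each cycle of the regular decomposition as internal or external, matching it to the corresponding potential cycle, and deducing uniqueness from the one-internal-cycle-per-gap clause of $\N$-regularity together with disjointness of cycle interiors. Your parenthetical about ``same makespan'' versus ``not larger makespan'' correctly flags a small looseness in the paper's own phrasing, since Lemma~\ref{lem:cyclesScheduleCMax} only guarantees non-increase.
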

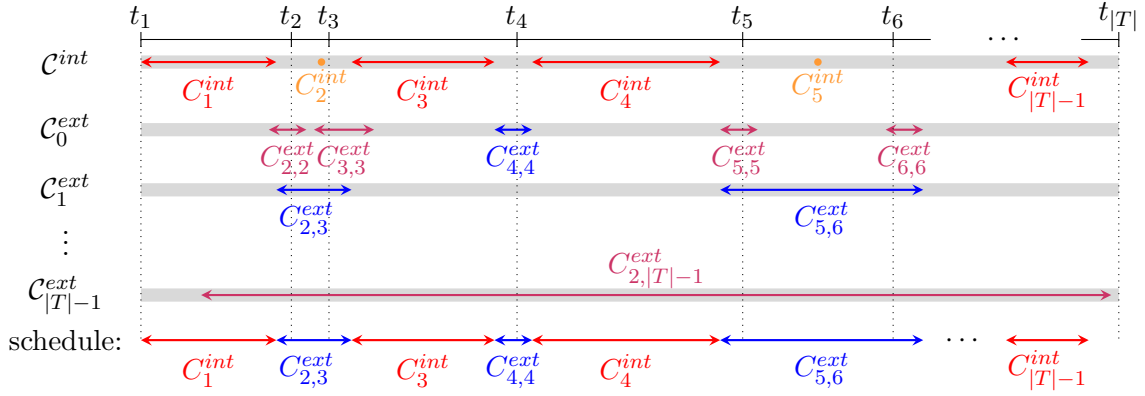
\begin{figure}[bt]
  \begin{center}
  \usetikzlibrary{calc}

\begin{tikzpicture}
\tikzstyle{scheduleCycle}=[thick,inner sep=1pt,>=stealth,fill,circle]
\tikzstyle{internalCycle}=[scheduleCycle, red]
\tikzstyle{internalNonCycle}=[scheduleCycle, orange!80]
\tikzstyle{externalCycle}=[scheduleCycle,blue]
\tikzstyle{externalNonCycle}=[scheduleCycle,purple!80]
\tikzstyle{cLabel}=[midway,yshift=-.4cm]
\tikzstyle{scheduleLine}=[gray!30,line width=5pt]

\coordinate (t1) at (0,2);
\coordinate (t2) at (2,2);
\coordinate (t3) at (2.5,2);
\coordinate (t4) at (5,2);
\coordinate (t5) at (8,2);
\coordinate (t6) at (10,2);
\coordinate (t7) at (13,2);

\foreach \i in {1,...,7} {
  \draw ($(t\i) - (0,.4)$) -- ($(t\i) - (0,.2)$);
}
\foreach \i in {1,...,6} {
  \node (T\i) at (t\i) {$t_{\i}$};
}
\node (T7) at (t7) {$t_{|T|}$};

\draw ($(t1) - (0,.3)$) -- ($(t6) - (0,.3) + (.5,0)$);
\draw ($(t7) - (0,.3)$) -- ($(t7) - (0,.3) - (.5,0)$);
\node at ($(t6)!.5!(t7) - (0,.3)$) {$\cdots$};

\begin{scope}[yshift=-2.3cm]
  \node at (-1,0) {schedule:};
  \draw[internalCycle,<->] (0,0) -- (1.8,0) node [cLabel] {$C^{int}_1$};
  \draw[internalCycle,<->] (2.8,0) -- (4.7,0) node [cLabel] {$C^{int}_3$};
  \draw[internalCycle,<->] (5.2,0) -- (7.7,0) node [cLabel] {$C^{int}_4$};
  \draw[internalCycle,<->] (11.5,0) -- (12.6,0) node [cLabel] {$C^{int}_{|T|-1}$};
  \draw[externalCycle,<->] (4.7,0) -- (5.2,0) node [cLabel] {$C^{ext}_{4,4}$};
  \draw[externalCycle,<->] (1.8,0) -- (2.8,0) node [cLabel,xshift=-.1cm] {$C^{ext}_{2,3}$};
  \draw[externalCycle,<->] (7.7,0) -- (10.4,0) node [cLabel] {$C^{ext}_{5,6}$};
  \node at ($(10.4,0)!.5!(11.5,0)$) {$\cdots$};
\end{scope}

\begin{scope}[yshift=1.4cm]
  \node at (-1,0) {$\mathcal{C}^{int}$};
  \draw[scheduleLine] (0,0) to (13,0);

  \draw[internalCycle,<->] (0,0) -- (1.8,0) node [cLabel] {$C^{int}_1$};
  \node[internalNonCycle,label={[yshift=.1cm,orange!80]270:{$C^{int}_2$}}] at (2.4,0) {};
  \draw[internalCycle,<->] (2.8,0) -- (4.7,0) node [cLabel] {$C^{int}_3$};
  \draw[internalCycle,<->] (5.2,0) -- (7.7,0) node [cLabel] {$C^{int}_4$};
  \node[internalNonCycle,label={[yshift=.1cm,orange!80]270:{$C^{int}_5$}}] at (9,0) {};
  \draw[internalCycle,<->] (11.5,0) -- (12.6,0) node [cLabel] {$C^{int}_{|T|-1}$};
\end{scope}

\begin{scope}[yshift=.5cm]
  \node at (-1,0) {$\mathcal{C}^{ext}_0$};
  \draw[scheduleLine] (0,0) to (13,0);

  \draw[externalNonCycle,<->] (1.7,0) -- (2.2,0) node [cLabel] {$C^{ext}_{2,2}$};
  \draw[externalNonCycle,<->] (2.3,0) -- (3.1,0) node [cLabel] {$C^{ext}_{3,3}$};
  \draw[externalCycle,<->] (4.7,0) -- (5.2,0) node [cLabel] {$C^{ext}_{4,4}$};
  \draw[externalNonCycle,<->] (7.7,0) -- (8.2,0) node [cLabel] {$C^{ext}_{5,5}$};
  \draw[externalNonCycle,<->] (9.9,0) -- (10.4,0) node [cLabel] {$C^{ext}_{6,6}$};
\end{scope}

\begin{scope}[yshift=-.3cm]
  \node (CextOne) at (-1,0) {$\mathcal{C}^{ext}_1$};
  \draw[scheduleLine] (0,0) to (13,0);

  \draw[externalCycle,<->] (1.8,0) -- (2.8,0) node [cLabel,xshift=-.1cm] {$C^{ext}_{2,3}$};
  \draw[externalCycle,<->] (7.7,0) -- (10.4,0) node [cLabel] {$C^{ext}_{5,6}$};
\end{scope}

\begin{scope}[yshift=-1.7cm]
  \node (CextT) at (-1,0) {$\mathcal{C}^{ext}_{|T|-1}$};
  \draw[scheduleLine] (0,0) to (13,0);

  \draw[externalNonCycle,<->] (.8,0) -- (12.9,0) node [cLabel,above] {$C^{ext}_{2,|T|-1}$};
\end{scope}

\foreach \x in {1,...,7} {
  \draw[dotted] ($(t\x) - (0,.3)$) -- ($(t\x) - (0,4.3)$);
}

\node at ($(CextOne)!.5!(CextT) + (0,.1)$) {$\vdots$};

\end{tikzpicture}
  \end{center}
  \caption{\label{fig:criticalTimesAndCycles}%
  Visualization of interleaving internal and external potential cycles in a regular cycle decomposition.
  Red and orange potential cycles are internal ($\mathcal{C}^{\text{int}}$), blue and purple potential cycles are external ($\mathcal{C}^{\text{ext}}$).
  Red and blue cycles have realizations in the example schedule while orange and purple have no realization in this schedule (see the schedule on the bottom).
  For clarity we partition the set $\mathcal{C}^{\text{ext}}$ into layers $\mathcal{C}^{\text{ext}}_0, \mathcal{C}^{\text{ext}}_1, \ldots$, according to the difference between their indices, i.e., $C^{\text{ext}}_{k,\ell} \in \mathcal{C}^{\text{ext}}_{\ell-k}$.
  }
\end{figure}

\paragraph{Modeling Idea.}
Lemma~\ref{lem:PCmaxCanonicalCycleDecomposition} says that every regular schedule has a decomposition into cycles which are realizations of potential cycles $\CC$.
Observe that $|\CC| = O(d^2)$, that is, its size is bounded solely in terms of $d$, the number of job types.
Hence, we want to find an assignment of jobs to potential cycles in $\CC$ in such a way that these cycles can be realized to form a schedule.
There are two constraints that need to be enforced:
\begin{enumerate}
  \item a job $J$ can only be assigned to a cycle $C$ which is contained in $(r^i(J), d^i(J))$,
  \item the combined size of jobs assigned to each cycle is sufficiently small so that the cycles can be arranged to form a schedule.
  \dkcom{well, strictly speaking for external cycles we also need that a job is sufficiently long}
\end{enumerate}
As for the first constraint, we can directly use the $\leftCritical(C)$ and $\rightCritical(C)$ operators to enforce it.
The second task is more involved, since it demands that the cycles obey their limitations (e.g., the total size of jobs assigned to the cycle $C^{\text{int}}_k$ is at most $t_{k+1} - t_{k}$) or for example if a job is assigned to the cycle $C^{\text{ext}}_{2,4}$, then all of the cycles $C^{\text{int}}_2,C^{\text{int}}_3,C^{\text{ext}}_{2,2},C^{\text{ext}}_{3,3},C^{\text{ext}}_{4,4},C^{\text{ext}}_{2,3}$, and $C^{\text{ext}}_{3,4}$ must be empty (i.e., no job can be assigned to any of these cycles).

\paragraph{When can a job be in a cycle?}
We introduce a binary indicator constant $\chi^i_{j,C} \in \{ 0,1 \}$ expressing whether jobs of type $j$ can be assigned to a potential cycle $C \in \mathcal{C}$:
\begin{equation}\label{eq:chiDefinition}
  \chi^i_{j,C} =
  \begin{cases}
    1 & \mbox{if } C = C^{\text{int}}_k \mbox{, and } r^i_j \le t_k, t_{k+1} \le d^i_j,    \\
    1 & \mbox{if } C = C^{\text{ext}}_{\ell,k} \mbox{, and } r^i_j \le t_{\ell-1}, t_{k+1} \le d^i_j, t_{k+1} - t_{\ell-1} \ge p^i_j, t_{k}-t_\ell \le p^i_j - 2  \\
    0 & \mbox{otherwise.}
  \end{cases}
\end{equation}
Intuitively, a job of type $j$ is allowed to be assigned to a cycle $C$ if $C$ is contained in $(r^i_j, d^i_j)$, and if it is an external cycle, then additionally we require that  $|t_{\rightCritical(C)+1} - t_{\leftCritical(C)-1}| \geq p^i_j$ so that the (single) job fits into $C$ but $|t_{\rightCritical(C)} - t_{\leftCritical(C)}| \le p^i_j - 2$ so $t_{\rightCritical(C)}, t_{\leftCritical(C)}$ could be interior points of the interval of processing a job of type $j$ on this machine.

We say that two potential cycles $C,C' \in \mathcal{C}$ are \emph{incompatible} and write $C \not\approx C'$ if there is no schedule realizing both $C$ and $C'$.
This is for example the case of $C^{\text{int}}_3$ and $C^{\text{ext}}_{2,4}$ in the example above.
Next, for a cycle $C$ and a critical time $t_k$ we say $C$ \emph{ends before} $t_k$ (write $C \lhd t_k$) if $t_{\rightCritical(C)} \le t_k$.
Furthermore, we write $t_k \lhd C$ if $C$ \emph{starts after} $t_k$, that is, if $t_k \le t_{\leftCritical(C)}$.
Thus, we have for example $C^{\text{int}}_1 \lhd t_3$ but not $C^{\text{ext}}_{2,2} \lhd t_2$ and $t_2 \lhd C^{\text{ext}}_{6,6}$ but not $t_3 \lhd C^{\text{ext}}_{3,3}$.

\subsubsection{The MIMO model}
We guess the value $\bar{C}_{\max}$ of the optimal makespan with $\min_{i \in [\kappa]} \max_{j \in [d]} r^i_j < \bar{C}_{\max} \le \max_{i \in [\kappa]} \max_{j \in [d]} d^i_j$, and we construct a MIMO instance which is feasible if a schedule with makespan at most $\bar{C}_{\max}$ exists.
The optimal makespan $C_{\max}$ is then found by binary search at an additional polynomial factor in the time complexity.
Alter the input instance as follows.
For each machine kind $i \in [\kappa]$ and job type $j \in [d]$, set $d^i_j = \min(d^i_j, \bar{C}_{\max})$.
Note that if there is a schedule after this change, then its makespan is at most $\bar{C}_{\max}$.
Recall that a MIMO instance is defined via $\tau$ polytopes, objective functions, and multiplicities, and here the number of polytopes $\tau$ is the number of machine kinds $\kappa$.
The machine multiplicities $\mu^i$ translate directly to the MIMO multiplicities $\mu^i$.
For the makespan scheduling objective, the MIMO instance will not have any objective (i.e., it will be a feasibility instance).
Thus, it remains to give a description of the polytope $P^i$ for each $i \in [\kappa]$.
Recall that $P^i$ has $d+d^i$ dimensions, and in our description the first $d$ variables will always correspond to a configuration vector, i.e., the vector of multiplicities of jobs of each type scheduled on a single machine.
The remaining variables are auxiliary and encode an $\N$-regular cycle decomposition.
Goemans and Rothvo{\ss}~\cite{GoemansRothvoss2014} note that these auxiliary variables are in fact necessary, because the set of configurations is not convex.
Next, we list the (integer) variables we use to describe the set of configurations of jobs on a single machine of kind $i$:
\begin{itemize}
  \item $x^i_j$ for each job type $j \in [d]$ denotes the number of jobs of type $j$ in a configuration,
  \item $y^i_{j,C}$ for each potential cycle $C \in \CC$ denotes the number of jobs of type $j$ in a realization of $C$ on this machine,
  \item $z^i_C$ for each potential external cycle $C \in \CC^{\text{ext}}$ is a binary indicator of whether $C$ has a non-empty realization, i.e., $z^i_C = 1$ if and only if exactly one job is assigned to $C$.
\end{itemize}

Clearly the $x^i_j$ variables are obtained by aggregation of the $y^i_{j,C}$ variables over all $C \in \CC$.
The following constraints use the constants $\chi^i_{j,C}$ to enforce the intended meaning of definition~\eqref{eq:chiDefinition}:
\begin{align}
      x^i_j & = \sum_{C \in \mathcal{C}} y^i_{j,C}           & \forall j \in [d] \label{eq:Cmax:sumOfExes}\\
      z^i_C &= \sum_{j \in [d]} y^i_{j,C}              & \forall C \in \mathcal{C}^{\text{ext}} \label{eq:Cmax:cycle_bounds} \\
      0 \leq y^i_{j,C} &\leq \chi^i_{j,C} \cdot n_j          & \forall j \in [d] \,, \forall C \in \mathcal{C} \label{eq:Cmax:chiMIMObounds} \\
      0 \leq z^i_C &\leq 1                                 & \forall C \in \mathcal{C}^{\text{ext}} \label{eq:Cmax:ziC}\\
      y^i_{j,C},z^i_C,x^i_j & \in \mathbb{N}               & \forall j \in [d] \,, \forall C \in \mathcal{C}
\end{align}
Note that constraints \eqref{eq:Cmax:cycle_bounds} enforce that a (binary) variable $z^i_C$ is set to $1$ if and only if at least one job is assigned to the cycle $C$ and, moreover, at most one job can be assigned to $C$ by~\eqref{eq:Cmax:ziC}.
We now use another set of constraints to forbid assigning jobs to cycles contained in an external potential cycle with a non-empty realization.
We note that the largest coefficient of the following constraint is $p_{\max}$, which is one of the important improvements of our approach over the one of Goemans and Rothvo{\ss}:
\begin{equation}\label{eq:Cmax:incompatibleCyclesDisable}
  \sum_{j=1}^d \sum_{C' \not\approx C} y^i_{j,C'} \leq p_{\max} \cdot (1-z^i_C)       \qquad\qquad\qquad\qquad\qquad \forall C \in \mathcal{C}^{\text{ext}}
\end{equation}
It is worth noting that, since each external cycle $C$ can contain at most one job, no cycle $C'$ which is incompatible with $C$ can contain jobs whose total size amounts to more than $p_{\max}$; and thus, the constraints \eqref{eq:Cmax:incompatibleCyclesDisable} do not restrict such a cycle $C'$ if $z^i_C = 0$.
Finally, we have to ensure that it is possible to arrange the cycles (which are now all compatible) into a schedule.
To this end we add the following constraints.
\begin{equation}\label{eq:Cmax:cycleVolumeBounds}
  \sum_{j \in [d]} \sum_{t_\ell \lhd C \lhd t_k} p^i_j \cdot y^i_{j,C} \leq t_k - t_\ell \qquad\qquad\qquad\qquad\qquad \forall k,\ell \in [|T|] \,, \ell < k
\end{equation}
We stress here that in this condition in the sum in the left hand-side we sum over all cycles with \(t_\ell \lhd C \lhd t_k\), i.e., both internal and external cycles.

\paragraph{From Vectors to Schedules.}
Now we are going to describe how to interpret a vector $\left( \vex, \vey, \vez \right)$ satisfying constraints \eqref{eq:Cmax:sumOfExes}--\eqref{eq:Cmax:cycleVolumeBounds}) as a schedule.
Fix a vector $(\vex, \vey, \vez)$ satisfying \eqref{eq:Cmax:sumOfExes}--\eqref{eq:Cmax:cycleVolumeBounds}.
Recall that a schedule for a particular fixed machine is a mapping of jobs to non-overlapping time intervals (except for their endpoints).
We begin by defining a total (linear) order $\prec$ on $\mathcal{C}$ as follows:
\[
  C^{\text{int}}_1 \prec C^{\text{ext}}_{2,2} \prec C^{\text{ext}}_{2,3} \prec \cdots \prec C^{\text{ext}}_{2,|T|-1} \prec C^{\text{int}}_2 \prec \cdots \prec C^{\text{int}}_{|T| - 1},
\]
that is, $C^{\text{int}}_k \prec C^{\text{int}}_\ell$ if $k < \ell$, $C^{\text{int}}_\ell \prec C^{\text{ext}}_{\hat{\ell},k}$ if $\ell < \hat{\ell}$, and $C^{\text{ext}}_{\ell,k} \prec C^{\text{ext}}_{\ell,\hat{k}} \prec C^{\text{int}}_{\bar{k}}$ if $k < \hat{k}$ and $\bar{k} \geq \ell$.
Now, given a vector $\vey$ we define a mapping~$\sigma(\vey)$ by incrementally processing the potential cycles $\CC$ in the order $\prec$ as shown in Algorithm~\ref{alg:sigmaFromY}.
(Note that although Algorithm~\ref{alg:sigmaFromY} runs in time polynomial in $n$ and not $\log n$, this is not an issue since its purpose is to define the mapping~$\sigma$, which is only used to prove the correctness of the constructed model.
Moreover, it is not difficult to see how to modify Algorithm~\ref{alg:sigmaFromY} to run in time polynomial in $\log n$ and return a compact encoding of the mapping~$\sigma$.)
\begin{algorithm}[tb]
  \SetKwProg{Def}{def}{:}{}
  \SetKwFunction{sigmaFunction}{$\sigma$}
  \SetKwFunction{cycleHandler}{handleCycle}
  \SetKwFunction{scheduleEnd}{endOf}
  \DontPrintSemicolon

  \Def{\sigmaFunction{$i, \vey$}}{
    $\sigma \leftarrow \emptyset$ \;
    \ForEach{$C \in \mathcal{C}$ in order $\prec$ \textnormal{\textbf{if}} $\vey_C \neq \mathbf{0}$}{
      \cycleHandler{$i, \sigma, C, \vey_C$} \;
    }
    \Return $\sigma$ \;
  }

  \BlankLine

  \Def{\cycleHandler{$i, \sigma, C, \vey_C$}}{
    \For{$j = 1$ \KwTo $d$}{
      \For{$\ell = 1$ \KwTo $y_{j,C}$}{
        \nlset{Start}\label{alg:sigmaFromY:Start} $t \leftarrow \max($\scheduleEnd{$\sigma$}$, \leftCritical(C))$ \; 
        $\sigma \leftarrow \sigma \cup \left\{ \left( j, \interval{t}{t + p^i_j} \right) \right\}$ \;
      }
    }
  }

  \caption{\label{alg:sigmaFromY}
  Computing a schedule from a vector satisfying \eqref{eq:Cmax:sumOfExes}--\eqref{eq:Cmax:cycleVolumeBounds}.
  The function \texttt{endOf} returns the completion time of the schedule it is given.
  }
\end{algorithm}
We are going to prove later that if a schedule for $\vey$ exists, then $\sigma(\vey)$ is a schedule, however, it is worth noting that in such case $\sigma(\vey)$ is not necessarily a unique schedule corresponding to the cycle decomposition encoded by $\vey$.
Notice that (due to the line \ref{alg:sigmaFromY:Start} of Algorithm~\ref{alg:sigmaFromY}) the produced schedule is ``left aligned''.
Left aligned schedules are very natural and, even though this is not the case for the objective $C_{\max}$, may have better objective values (e.g., for $\sum C_j$).
This is formalized in the following lemma.
We assume each machine is idle in the time interval $(-\infty,0)$.
Note that Lemma~\ref{lem:leftAlignedScheduleCMax} does not yet prove that $\sigma(\vey)$ is a schedule, but will be used for that purpose later.

\begin{lemma}\label{lem:leftAlignedScheduleCMax}
  Fix a machine of kind $i$ and let $\sigma(\vey)$ be as defined by Algorithm~\ref{alg:sigmaFromY}.
  Then, for each job~$J$ assigned by $\sigma(\vey)$ there exists a critical time $t(J) \in T$ such that the machine is
  \begin{itemize}
    \item busy from $t(J)$ to the time when $J$ begins to be processed, and
    \item idle for at least one time unit right before $t(J)$, that is, no job is processed during time $(t(J) - 1, t(J))$.
  \end{itemize}
\end{lemma}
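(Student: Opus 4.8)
The plan is to trace, for each job $J$ placed by Algorithm~\ref{alg:sigmaFromY}, a maximal ``busy run'' of jobs backwards from $J$ until we reach a job whose start is immediately preceded by idle time, and then to show that this start is exactly one of the critical times $t_k$. Two preliminary observations do most of the work. First, since every critical time lies in $\N$ and (after the harmless reductions, and assuming $p^i_j \ge 1$, size-zero jobs being trivial) every $p^i_j$ lies in $\N$, an easy induction shows that at every moment during the run of Algorithm~\ref{alg:sigmaFromY} the quantity $\mathrm{endOf}(\sigma)$ is a non-negative integer, that it is non-decreasing over the course of the algorithm, and that every job receives an interval with integer endpoints. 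Second, every job is placed at time $\max(\mathrm{endOf}(\sigma), t_{\leftCritical(C)}) \ge \mathrm{endOf}(\sigma)$, so the algorithm only ever appends jobs to the right of everything placed so far; hence the assigned intervals never overlap in their interiors, at most one job ends at any given time point, and once a busy run of jobs is separated from the current right end it is never touched again. (It should also be noted at the outset that in Algorithm~\ref{alg:sigmaFromY} the symbol $\leftCritical(C)$ stands for the critical time $t_{\leftCritical(C)}$, which for a potential cycle is a fixed index, independent of the schedule.)

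Next, fix a job $J$ with start time $s(J)$ and define $t(J)$ by a backward walk: if no job occupies any interval $(s(J)-\varepsilon, s(J))$, set $t(J) = s(J)$; otherwise there is a unique job $J'$ with $\mathrm{end}(J') = s(J)$, and we repeat from $J'$. Since finitely many jobs are placed, this terminates at a job $J^\star$ with $t(J) = s(J^\star)$ and with the machine idle immediately before $s(J^\star)$; the chain of jobs encountered tiles $[t(J), s(J)]$ contiguously, which gives the first bullet. To see $t(J) \in T$: the job $J^\star$, belonging to some cycle $C^\star$, was placed at $\max(\mathrm{endOf}(\sigma), t_{\leftCritical(C^\star)})$ at the moment it was processed. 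If this maximum were attained by $\mathrm{endOf}(\sigma)$ and some job had already been placed, that job would end exactly at $s(J^\star)$, contradicting that the machine is idle just before $s(J^\star)$. Thus either $J^\star$ is the very first job placed, in which case $\mathrm{endOf}(\sigma) = 0 \le t_1 \le t_{\leftCritical(C^\star)}$, so $s(J^\star) = t_{\leftCritical(C^\star)}$, or the maximum is attained by $t_{\leftCritical(C^\star)}$; in both cases $t(J) = t_{\leftCritical(C^\star)} \in T$.

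It then remains to check that no job is processed during $(t(J)-1, t(J))$. At the moment $J^\star$ is placed we have $\mathrm{endOf}(\sigma) < t(J)$ (in the first-job case $\mathrm{endOf}(\sigma) = 0 \le t(J)$; otherwise $\mathrm{endOf}(\sigma) < t_{\leftCritical(C^\star)} = t(J)$), and integrality upgrades this to $\mathrm{endOf}(\sigma) \le t(J)-1$. Hence every job placed before $J^\star$ finishes by $t(J)-1$, so its interior avoids $(t(J)-1, t(J))$; together with the assumption that the machine is idle on $(-\infty, 0)$, nothing processed before $J^\star$ meets $(t(J)-1, t(J))$. After $J^\star$ is placed, $\mathrm{endOf}(\sigma) = t(J) + p^i(J^\star) \ge t(J)+1$, and since $\mathrm{endOf}$ is non-decreasing, every subsequent job starts at time $\ge t(J)+1 > t(J)$; hence $(t(J)-1, t(J))$ remains idle. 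If $t(J) = 0$, the interval $(-1, 0)$ is idle directly by the assumption on the machine.

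I expect the lemma itself to be routine; the only places that need care are the boundary cases — the very first job placed and the possibility $t(J) = 0$ — together with the step, essential for the ``one time unit'' statement, where integrality of all critical times and all job sizes turns the strict inequality $\mathrm{endOf}(\sigma) < t(J)$ into the genuine gap $\mathrm{endOf}(\sigma) \le t(J) - 1$. Beyond that, the monotonicity of $\mathrm{endOf}(\sigma)$ and the ``append only to the right'' behaviour of Algorithm~\ref{alg:sigmaFromY} are exactly what guarantee that the busy run to the left of $J$ is frozen once formed, so no later assignment can spoil either bullet.
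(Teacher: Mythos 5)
Your proof is correct and, up to presentation, is essentially the same argument as in the paper: the paper runs a forward induction on the index $k$ of the job, setting $t(J) = t(J')$ when $J$ is contiguous with its predecessor and otherwise reading off $t(J) = t_{\leftCritical(C)}$ from line~\ref{alg:sigmaFromY:Start}; your backward walk along the busy run simply unwinds that induction. The one place you are more explicit is the ``one time unit'' claim — you spell out that integrality of all critical times and job sizes turns $\mathrm{endOf}(\sigma) < t(J)$ into $\mathrm{endOf}(\sigma) \le t(J)-1$, which the paper leaves implicit — and you also handle the $t(J)=0$ boundary case directly; both are harmless elaborations rather than a different route.
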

\begin{proof}
  We prove this by induction on the number of jobs in a prefix of $\sigma(\vey)$, and denote this number~$k$.
  Clearly, if $k = 1$, then both our assumptions hold, since the first job in $\sigma(\vey)$ is scheduled to start at some critical time $t_\ell$ due to the line \ref{alg:sigmaFromY:Start} and indeed the machine is idle before $t_\ell$.

  Assuming both conditions hold for $(k-1) \ge 1$ we want to show that both conditions hold for~$k$.
  Let $J$ be the $k$-th job in $\sigma(\vey)$.
  If $J$ is scheduled to be processed on the machine right after its predecessor $J'$, which is the $(k-1)$-st job, we are done, since we can set $t(J) = t(J')$ and the rest follows from the induction hypothesis.
  Otherwise, we claim $J$ is scheduled to start being processed at some critical time $t_{\ell} \in T$.
  This again follows from the line \ref{alg:sigmaFromY:Start}, since when assigning $J$ to $\sigma(\vey)$ we clearly have \scheduleEnd{$\sigma$}$ < t_{\leftCritical(C)}$, where $C$ is the cycle to which $J$ belongs (as its first job).
  Now we are done since the machine must be idle right before $t_\ell$ (for at least one time unit).
\end{proof}

We stress here that in $\sigma(\vey)$ we have $t(J) \in T$ for every job $J$.
Consequently, any gap in $\sigma(\vey)$ can only be of the form $(t,t')$ with $t' \in T \cup \{\infty\}$, that is, $\sigma(\vey)$ is $\N$-regular.

We now use Lemma~\ref{lem:leftAlignedScheduleCMax} to prove an equivalence between feasibility of the constraints \eqref{eq:Cmax:sumOfExes}--\eqref{eq:Cmax:cycleVolumeBounds} and the existence of an ($\N$-regular) schedule for $\vex$.
Note a certain ambiguity: it is possible that for two distinct vectors $\vey_1, \vey_2$ satisfying constraints \eqref{eq:Cmax:sumOfExes}--\eqref{eq:Cmax:cycleVolumeBounds} we have $\sigma(\vey_1) = \sigma(\vey_2)$.
For example, take a scheduling instance with critical times $0,2,4$ and with a single job $J$ of size $2$ which may be assigned into both time slots, i.e., $r(J) = 0$ and $d(J) = 4$.
Then a vector $\vey_1$ encoding that $J$ is assigned to $C_1^{\text{int}}$ and a vector $\vey_2$ encoding that $J$ is assigned to $C^{\text{ext}}_{2,2}$ are such that $\sigma = \sigma(\vey_1) = \sigma(\vey_2)$ with $\sigma$ scheduling $J$ to the interval $[0,2]$.
This causes no problems but is helpful to keep in mind as we approach the next proof.

\begin{lemma}\label{lem:modelPlacesCyclesCorrectly}
Let $i \in [\kappa]$ be a machine kind and let $\vex \in \N^d$.
There exists a schedule of $\vex$ with makespan at most $\bar{C}_{\max}$ if and only if there exists $\vey,\vez$ such that $(\vex, \vey, \vez)$ satisfies \eqref{eq:Cmax:sumOfExes}--\eqref{eq:Cmax:cycleVolumeBounds}, and, moreover, $\sigma(\vey)$ is an $\N$-regular schedule with makespan at most $\bar{C}_{\max}$.
\end{lemma}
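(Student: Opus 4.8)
The plan is to prove the two directions together with the ``moreover'' clause, observing that the ``if'' direction and the ``moreover'' clause both follow from a single implication: \emph{if $(\vex,\vey,\vez)$ satisfies \eqref{eq:Cmax:sumOfExes}--\eqref{eq:Cmax:cycleVolumeBounds}, then $\sigma(\vey)$ is an $\N$-regular schedule of $\vex$ with makespan at most $\bar C_{\max}$.} So there are two things to do: (i) the ``only if'' direction, producing a feasible $(\vex,\vey,\vez)$ from a schedule; and (ii) the above implication.

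For (i), I would start from a schedule $\sigma'$ of $\vex$ with makespan at most $\bar C_{\max}$ and invoke Lemma~\ref{lem:PCmaxCanonicalCycleDecomposition} to get an $\N$-regular schedule $\sigma$ of $\vex$ with the same makespan and an $\N$-regular decomposition $\DD$ in which every cycle is a realization of some $C\in\CC$ and every $C\in\CC$ has at most one realization. Set $y^i_{j,C}$ to be the number of type-$j$ jobs in the realization of $C$ (and $0$ if $C$ is unrealized), and $z^i_C:=\sum_{j}y^i_{j,C}$ for $C\in\CC^{\text{ext}}$; this lies in $\{0,1\}$ since external cycles carry at most one job in a regular decomposition. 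Constraints \eqref{eq:Cmax:sumOfExes}, \eqref{eq:Cmax:cycle_bounds}, \eqref{eq:Cmax:ziC} are then immediate, and $y^i_{j,C}\le x^i_j\le n_j$ gives the upper bound in \eqref{eq:Cmax:chiMIMObounds} whenever $\chi^i_{j,C}=1$; it remains to see $y^i_{j,C}>0\Rightarrow\chi^i_{j,C}=1$. If the realization of $C$ contains a type-$j$ job $J$, then $r^i_j\le\lambda(J)<\rho(J)\le d^i_j$ while $[\lambda(C),\rho(C)]\subseteq[t_{\leftCritical(C)},t_{\rightCritical(C)}]$; since every release time and due date is a critical time, this forces $r^i_j\le t_{\leftCritical(C)}$ and $t_{\rightCritical(C)}\le d^i_j$, and for an external $C$ the integrality of all quantities together with ``$C$ has one job'' yields the two extra size conditions in \eqref{eq:chiDefinition}. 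For \eqref{eq:Cmax:incompatibleCyclesDisable}: if $z^i_C=1$ then no $C'\not\approx C$ is realized, so the left side is $0$; if $z^i_C=0$ the left side counts the jobs lying in cycles incompatible with $C$, all of which are confined to a time window of length less than $p_{\max}$ (the bound $t_{\rightCritical(C)+1}-t_{\leftCritical(C)-1}<p_{\max}$ noted before the lemma), so — all jobs having size at least one — their number is at most $p_{\max}$. Finally \eqref{eq:Cmax:cycleVolumeBounds} holds because the realizations of cycles $C$ with $t_\ell\lhd C\lhd t_k$ occupy pairwise interior-disjoint subintervals of $[t_\ell,t_k]$, each of length equal to the total size of the jobs of $C$.

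For (ii), I would first note that $\sigma(\vey)$ is well defined, schedules exactly $x^i_j$ jobs of type $j$ (by \eqref{eq:Cmax:sumOfExes}), has no overlapping job interiors (each job is placed at $\max(\mathrm{endOf}(\sigma),\leftCritical(C))\ge\mathrm{endOf}(\sigma)$ by Algorithm~\ref{alg:sigmaFromY}), and is $\N$-regular by Lemma~\ref{lem:leftAlignedScheduleCMax} and the remark following it; the makespan bound drops out of the due-date check. Release times are easy: every job of a cycle $C$ starts at time at least $t_{\leftCritical(C)}$, and $y^i_{j,C}>0$ forces $\chi^i_{j,C}=1$ by \eqref{eq:Cmax:chiMIMObounds}, hence $t_{\leftCritical(C)}\ge r^i_j$. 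For due dates, fix a type-$j$ job $J$ in a cycle $C$, let $J^*$ be the last job of $C$, and observe it suffices to prove $\rho(C)=\rho(J^*)\le t_{\rightCritical(C)}$, since $\chi^i_{j,C}=1$ gives $t_{\rightCritical(C)}\le d^i_j\le\bar C_{\max}$ (the latter after trimming), which simultaneously bounds the makespan. Apply Lemma~\ref{lem:leftAlignedScheduleCMax} to $J^*$: there is a critical time $t_\ell$ with the machine busy throughout $[t_\ell,\rho(C)]$ and idle just before $t_\ell$. Because Algorithm~\ref{alg:sigmaFromY} processes cycles in the order $\prec$, which is nondecreasing in $\leftCritical(\cdot)$, and processes the jobs of each cycle consecutively, this maximal busy block is exactly the union of the jobs of a $\prec$-consecutive run of nonempty cycles $C_1\prec\dots\prec C_m=C$, with $\leftCritical(C_1)=\ell$ and hence $t_\ell\lhd C_i$ for all $i$. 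The crucial point is $C_i\lhd t_{\rightCritical(C)}$, i.e.\ $\rightCritical(C_i)\le\rightCritical(C)$, for every $i$: an internal $C_i\prec C$ has this automatically from $\leftCritical(C_i)\le\leftCritical(C)$ and $\leftCritical(\cdot)\le\rightCritical(\cdot)-1$; an external $C_i\prec C$ that were incompatible with $C$ would, being realized, force $z^i_{C_i}=1$ and thereby (by \eqref{eq:Cmax:incompatibleCyclesDisable} applied to $C_i$) make $C$ empty — contradicting $J\in C$ — so $C_i$ is compatible with $C$, and a short case analysis of the possible index ranges of compatible external cycles $\prec C$ gives $\rightCritical(C_i)\le\rightCritical(C)$. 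Thus every $C_i$ satisfies $t_\ell\lhd C_i\lhd t_{\rightCritical(C)}$ and $\ell<\rightCritical(C)$, so since $\rho(C)-t_\ell=\sum_{i=1}^m\sum_{j'}p^i_{j'}y^i_{j',C_i}$ is a partial sum of the left side of \eqref{eq:Cmax:cycleVolumeBounds} taken with the pair $(\ell,\rightCritical(C))$, that constraint gives $\rho(C)-t_\ell\le t_{\rightCritical(C)}-t_\ell$, i.e.\ $\rho(C)\le t_{\rightCritical(C)}$.

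The main obstacle is the due-date argument in part (ii): showing rigorously that a maximal busy block of $\sigma(\vey)$ is a $\prec$-interval of nonempty cycles all ``sandwiched'' between $t_\ell$ and $t_{\rightCritical(C)}$, and in particular that the incompatibility constraints \eqref{eq:Cmax:incompatibleCyclesDisable} eliminate exactly the offending cycles — the $\prec$-smaller external cycles with too-large right index — that could otherwise push the busy block past $t_{\rightCritical(C)}$. Pinning down the interaction of the order $\prec$ (internal before external within an $\leftCritical$-class), the notion of incompatibility, and the roles of $\leftCritical(C)-1$ and $\rightCritical(C)+1$ for external cycles is where the care lies; the rest is bookkeeping with critical times.
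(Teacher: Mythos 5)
Your proof is correct and follows the paper's own approach: forward via Lemma~\ref{lem:PCmaxCanonicalCycleDecomposition}, backward via Lemma~\ref{lem:leftAlignedScheduleCMax} and a violation of~\eqref{eq:Cmax:cycleVolumeBounds}. You go further than the paper in one important place: the paper asserts, essentially without justification, that a late job $J$ in a cycle $C$ would violate constraint~\eqref{eq:Cmax:cycleVolumeBounds} for the pair $(t(J),t_{\rightCritical(C)})$. It never argues that every job of the busy block $[t(J),\rho(C)]$ is actually counted in that constraint's sum, i.e.\ that every contributing cycle $C'$ satisfies $t(J)\lhd C'\lhd t_{\rightCritical(C)}$. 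Your busy-block analysis — a busy block is a $\prec$-consecutive run of nonempty cycles, $\leftCritical$ is nondecreasing along $\prec$, and \eqref{eq:Cmax:incompatibleCyclesDisable} applied to an earlier external cycle $C_i$ rules out $C_i\not\approx C$ and so gives $\rightCritical(C_i)\le\rightCritical(C)$ — is exactly the missing bookkeeping, and it is the only genuinely nontrivial step in the backward direction. The one small imprecision is in your forward-direction check of \eqref{eq:Cmax:incompatibleCyclesDisable}: the cycles incompatible with $C=C^{\text{ext}}_{\ell,k}$ are not all ``confined to a time window of length less than $p_{\max}$'' (an external $C^{\text{ext}}_{a,b}\not\approx C$ may extend far outside $C$'s neighborhood), so the job count on the left-hand side should be bounded by combining the volume bound $t_k-t_\ell<p_{\max}$ (which holds when some job type fits $C$) with the pairwise disjointness of realized incompatible cycles; part of this imprecision is inherited from the paper's own remark, whose $\leftCritical\pm1$/$\rightCritical\pm1$ indexing looks off. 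Worth tightening, since this is exactly the constraint your backward direction then leans on.
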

\begin{proof}
  For the forward direction suppose there exists a schedule of $\vex$ with makespan $\bar{C}_{\max}$.
  Then by Lemma~\ref{lem:cyclesScheduleCMax} it is possible to turn a schedule for $\vex$ into an $\N$-regular one together with an $\N$-regular cycle decomposition with the same or smaller makespan, and an $\N$-regular cycle decomposition has an encoding by variables $y^i_{j,C}$ and $z_C^i$ satisfying the constraints \eqref{eq:Cmax:sumOfExes}--\eqref{eq:Cmax:cycleVolumeBounds}.
  \dkcom{we can maybe give precise assignment to variables and verify the model---this lemma is being heavily reused}
  \mkcom{we could, but this seems clear enough ==> low priority IMHO.}

  For the backward direction we assume that $\vex,\vey,\vez$ satisfy all constraints \eqref{eq:Cmax:sumOfExes}--\eqref{eq:Cmax:cycleVolumeBounds}.
  Let $\sigma = \sigma(\vey)$.
  It follows from the definition of $\chi_{j,C}^i$, the feasibility of $(\vex, \vey, \vez)$, and Algorithm~\ref{alg:sigmaFromY} that a job $J$ cannot be scheduled before $r^i(J)$ in $\sigma$.
  We have to show that no job is scheduled after its due date; the fact that no job is scheduled after $\bar{C}_{\max}$ follows since we have altered the due dates to be at most $\bar{C}_{\max}$.
  Assume for a contradiction there is a job $J$ which finishes in $\sigma$ after its due date $d^i(J)$.
  Let $C(J)$ be the cycle which was processed by the outer loop of Algorithm~\ref{alg:sigmaFromY} when $J$ was added to $\sigma$, and let $t(J)$ be defined as in Lemma~\ref{lem:leftAlignedScheduleCMax}.
  Then by Lemma~\ref{lem:leftAlignedScheduleCMax} some constraint \eqref{eq:Cmax:cycleVolumeBounds} was violated, namely the constraint saying that the total size of jobs between $t(J)$ and $t_{\rightCritical(C(J))}$ must be at most $t_{\rightCritical(C(J))} - t(J)$, i.e.,
  \[
    \sum_{j \in [d]} \sum_{t(J) \lhd C \lhd t_{\rightCritical(C(J))}} p^i_j \cdot y^i_{j,C} \leq t_{\rightCritical(C(J))} - t(J) \enspace .
  \]
  Recall that $t(J) \in T$ for any job $J$ scheduled in $\sigma$.
  Thus $\sigma$ is an $\N$-regular schedule of $\vex$ with makespan $\bar{C}_{\max}$.
\end{proof}

\paragraph{Finishing the MIMO Model.}
We have described, for each $i \in [\kappa]$, the constraints defining a polytope $P^i$.
The target vector of the MIMO model is the vector of multiplicities of jobs, i.e., $\ven = \left(n_1, \ldots, n_d\right)$.
The last task in the design of a MIMO model is to define the projections $\pi^i$.
In our case $\pi^i$ projects out all variables except $\vex$.
Let us now determine the parameters of the model; recall $p_{\max}$ is the maximum finite job size, that is, $p_{\max} = \max_{j=1}^d \max_{i = 1}^{\kappa} p^i_j$:
\begin{lemma}[MIMO model for $R|r^i_j,d^i_j|C_{\max}$]
\label{lem:MIMOPCmax}
  Let $\mathcal{I}$ be an instance of $R|r^i_j,d^i_j|C_{\max}$ with $m$ machines of $\kappa$ kinds and $d$ job types with maximum job size $p_{\max}$.
  There is a MIMO model $\mathcal{S}$ for $\mathcal{I}$ with the following values of MIMO parameters:
  \begin{tasks}[style=itemize](3)
    \task $\mathcal{S}(\Delta) = p_{\max}$,
    \task $\mathcal{S}(M) = \Oh((d)^2)$,
    \task $\mathcal{S}(d) = d$,
    \task $\mathcal{S}(d^i) = \Oh((d)^2)$,
    \task $\mathcal{S}(N) = m$, and
    \task $\mathcal{S}(\tau) = \kappa$.
  \end{tasks}
\end{lemma}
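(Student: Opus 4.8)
The plan is to read the six parameters off directly from the constructed polytopes $P^1,\dots,P^\kappa$ (those are exactly the systems \eqref{eq:Cmax:sumOfExes}--\eqref{eq:Cmax:cycleVolumeBounds}) together with the surrounding MIMO data, and to get correctness for free from Lemma~\ref{lem:modelPlacesCyclesCorrectly}. Three of the parameters are immediate from the definition of the model: the system $\SSS$ has one type per machine kind, so $\SSS(\tau)=\kappa$; its multiplicities are the machine multiplicities $\mu^i$, whose sum is the number of machines, so $\SSS(N)=\|\vemu\|_1=m$; and the target vector is $\ven=(n_1,\dots,n_d)$ while the first $d$ coordinates of each $P^i$ are the configuration vector $\vex$, so $\SSS(d)=d$ and the projections $\pi^i$ discard exactly the auxiliary coordinates.

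For $\SSS(d^i)$ and $\SSS(M)$ I would simply count the auxiliary variables and the non-bound constraints of a single $P^i$. Since $T=\{r^i_j,d^i_j\mid j\in[d]\}$ has $|T|\le 2d$, we get $|\CC^{\text{int}}|=|T|-1=\Oh(d)$ and $|\CC^{\text{ext}}|=\Oh(|T|^2)=\Oh(d^2)$, hence $|\CC|=\Oh(d^2)$. The auxiliary variables of $P^i$ are the $y^i_{j,C}$ ($j\in[d],\,C\in\CC$) and the $z^i_C$ ($C\in\CC^{\text{ext}}$), so inspecting these gives $\SSS(d^i)=\Oh(d^2)$, and as this holds for every kind $D=\max_i d^i=\Oh(d^2)$. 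For the number of rows, \eqref{eq:Cmax:sumOfExes} contributes $\Oh(d)$, \eqref{eq:Cmax:cycle_bounds} contributes $|\CC^{\text{ext}}|=\Oh(d^2)$, \eqref{eq:Cmax:incompatibleCyclesDisable} contributes $|\CC^{\text{ext}}|=\Oh(d^2)$, and \eqref{eq:Cmax:cycleVolumeBounds} contributes $\Oh(|T|^2)=\Oh(d^2)$, while \eqref{eq:Cmax:chiMIMObounds} and \eqref{eq:Cmax:ziC} are box constraints; so $\SSS(M)=\Oh(d^2)$.

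The only point that is not pure bookkeeping is $\SSS(\Delta)=p_{\max}$, and it is worth stating explicitly: although the input may contain huge numbers (the multiplicities $n_j$, and critical times that can be as large as $\max_{i,j}d^i_j$), these never appear in a coefficient matrix $A^i$ — in \eqref{eq:Cmax:chiMIMObounds} the quantity $\chi^i_{j,C}\cdot n_j$ is the \emph{bound} on $y^i_{j,C}$, and in \eqref{eq:Cmax:cycleVolumeBounds} the value $t_k-t_\ell$ is the \emph{right-hand side}. The only variable coefficients exceeding $1$ are the sizes $p^i_j\le p_{\max}$ in \eqref{eq:Cmax:cycleVolumeBounds} and the $p_{\max}$ multiplying $1-z^i_C$ in \eqref{eq:Cmax:incompatibleCyclesDisable}; since the latter forces $\|A^i\|_\infty\ge p_{\max}$ and nothing exceeds $p_{\max}$, we get $\|A^i\|_\infty=p_{\max}$ for every $i$, hence $\SSS(\Delta)=p_{\max}$. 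The large numbers are not lost — they are absorbed into the right-hand sides and thus into $L=\la\Delta,\veb^1,\dots,\veb^\tau\ra$, which is the quantity the run-time bounds of Theorem~\ref{thm:implicitMIMO} track.

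Finally, correctness of $\SSS$ as a model of the instance (after the due-date trimming by $\bar C_{\max}$): a feasible MIMO solution assigns, to each of the $\mu^i$ machines of kind $i$, some $\vex\in X^i=\pi^i(P^i)\cap\Z^d$ with the chosen configurations summing to $\ven$; by Lemma~\ref{lem:modelPlacesCyclesCorrectly} each such $\vex$ admits an $\N$-regular single-machine schedule of makespan at most $\bar C_{\max}$ on a kind-$i$ machine, and concatenating these per-machine schedules yields a schedule of all of $\mathcal J$ of makespan at most $\bar C_{\max}$; conversely, restricting any such schedule to each machine produces, by the same lemma, configurations lying in the respective $X^i$ whose sum is $\ven$. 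Hence the MIMO instance is feasible iff a schedule of makespan at most $\bar C_{\max}$ exists, and solving it with Theorem~\ref{thm:implicitMIMO} inside a binary search over $\bar C_{\max}$ in the range $(\min_{i}\max_{j}r^i_j,\ \max_{i}\max_{j}d^i_j]$ solves $R | r^i_j,d^i_j | C_{\max}$. There is no genuinely hard step here — the conceptual content was done in Lemma~\ref{lem:modelPlacesCyclesCorrectly}; the only thing to be careful about is the placement of the large numbers, which is what keeps $\Delta=p_{\max}$.
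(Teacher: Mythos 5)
Your proof matches the paper's almost line by line: same reading-off of $\tau$, $N$, and $d$ from the construction, same counting of auxiliary variables and non-bound constraints to bound $d^i$ and $M$, same observation about where the large input numbers land (right-hand sides, not coefficients) to get $\Delta = p_{\max}$, and same appeal to Lemma~\ref{lem:modelPlacesCyclesCorrectly} for correctness. Your elaboration on why $\Delta=p_{\max}$ despite the huge multiplicities and critical times is a useful expansion of the paper's terse remark.

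Two small points worth flagging, both of which you share with the paper's own proof. First, you correctly note that $y^i_{j,C}$ is indexed by $j\in[d]$ \emph{and} $C\in\CC$, yet conclude $\SSS(d^i)=\Oh(d^2)$; since $|\CC|=\Oh(d^2)$, the actual variable count is $d\cdot|\CC|=\Oh(d^3)$, so both $\SSS(d^i)$ and consequently $D$ should read $\Oh(d^3)$. Second, you set aside \eqref{eq:Cmax:chiMIMObounds} and \eqref{eq:Cmax:ziC} as ``box constraints,'' but the $P$-representation in this paper has no separate box: $P^i$ is defined solely by $A^i(\vex,\vex')\le\veb^i$, and Lemma~\ref{lem:MIMO_as_nfold}'s construction sets all variable bounds $\vel^i,\veu^i$ to $\pm\infty$ or $0$, so the box inequalities must be rows of $A^i$ and they contribute to $m^i$; this pushes $\SSS(M)$ to $\Oh(d^3)$ as well, dominated by the $d\cdot|\CC|$ constraints in \eqref{eq:Cmax:chiMIMObounds}. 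Neither change affects the qualitative conclusions of Theorem~\ref{thm:FPT:MIMOPCmax} (everything is still a polynomial in $d$), but the stated $\Oh(d^2)$ bounds are off by a factor of $d$.
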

\begin{proof}
  For each $i \in [\kappa]$, the constraints \eqref{eq:Cmax:sumOfExes}--\eqref{eq:Cmax:cycleVolumeBounds} define a polytope $P^i$.
  The projections $\pi^i$ drop all variables except for $\vex$, and the target vector is the vector of job multiplicities $\ven$.
  The correctness of the model follows directly from Lemma~\ref{lem:modelPlacesCyclesCorrectly}.

  The coefficients in constraints \eqref{eq:Cmax:sumOfExes}--\eqref{eq:Cmax:cycleVolumeBounds} are either 1, $p^i_j$, or $p_{\max}$, and thus the largest coefficient in the thus obtained $P$-representation is $\mathcal{S}(\Delta) = p_{\max}$.
  The number of variables $y^i_{j,C}$ is $|\mathcal{C}|$ and $z^i_{C}$ is $|\mathcal{C}^{\text{ext}}|$; we have $|\mathcal{C}| = \mathcal{O}((d)^2)$.
  Thus, we can estimate the number of projected out variables by $\mathcal{S}(d^i) = \mathcal O(d^2)$ and the number of remaining variables $\mathcal{S}(d) = d$.
  Since there are $\Oh((d)^2)$ constraints and $\Oh((d)^2)$ box constraints, we have $\mathcal{S}(M) = \mathcal O((d)^2)$.
  Clearly, $\mathcal{S}(N)$ is the number of machines $m=\|\vemu\|_1$.
  Finally, we have $\mathcal{S}(\tau) = \kappa$, since we have $\kappa$ different types of polytopes corresponding to $\kappa$ kinds of machines.
\end{proof}

By combining Lemma~\ref{lem:MIMOPCmax} with Theorem~\ref{thm:implicitMIMO} we arrive to the following theorem.

\begin{theorem}
\label{thm:FPT:MIMOPCmax}
  Problem $R|r^i_j,d^i_j|C_{\max}$ with $m$ machines of $\kappa$ kinds and $d$ job types with maximum job size $p_{\max}$ have fixed-parameter tractable algorithms
  \begin{itemize}
    \item single-exponential in $m + d$,
    \item single-exponential in $p_{\max} + d$, or
    \item double-exponential in $d + \kappa$ if $p_{\max}$ is given in unary.
  \end{itemize}
\end{theorem}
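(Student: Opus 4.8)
The plan is to combine the MIMO model constructed in Lemma~\ref{lem:MIMOPCmax} with the three relevant regimes of Theorem~\ref{thm:implicitMIMO}, wrapped in a binary search over the makespan. Since $C_{\max}$ is not a linear function of the configurations, I would first guess its optimal value $\bar C_{\max}$: after the standard preprocessing this value lies in the integer range $\left(\min_{i\in[\kappa]}\max_{j\in[d]}r^i_j,\ \max_{i\in[\kappa]}\max_{j\in[d]}d^i_j\right]$, so it can be located by binary search at the cost of an $\Oh(\log \max_{i,j}d^i_j)$ factor, i.e.\ a factor polynomial in the input length. For a fixed trial value we trim $d^i_j:=\min(d^i_j,\bar C_{\max})$ and apply Lemma~\ref{lem:MIMOPCmax} to obtain a MIMO \emph{feasibility} instance $\SSS$ (equivalently, a linear, identically-zero objective) with $\SSS(\Delta)=p_{\max}$, $\SSS(M)=\Oh(d^2)$, $\SSS(d)=d$, $\SSS(D)=\Oh(d^2)$, $\SSS(N)=m$, and $\SSS(\tau)=\kappa$; moreover its encoding length $\hat L$ is polynomial in the input, since all right-hand sides of constraints~\eqref{eq:Cmax:sumOfExes}--\eqref{eq:Cmax:cycleVolumeBounds} are bounded by $\max(n,p_{\max},\max_{i,j}d^i_j)$ and $f_{\max}=0$.

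With this reduction in hand, each bullet follows by substituting the parameters into the appropriate part of Theorem~\ref{thm:implicitMIMO}. For the first bullet, Part~\ref{thm:implicitMIMO:micp} (linear objective) runs in time $(N(d+D))^{\Oh(N(d+D))}\hat L^{\Oh(1)}=(m\,d^2)^{\Oh(m\,d^2)}\hat L^{\Oh(1)}$, single-exponential in $m+d$. For the second bullet, Part~\ref{thm:implicitMIMO:huge} (linear objective) runs in time $(Md\Delta)^{\Oh(M^2 d+d^2M)}\hat L^{\Oh(1)}=(d^3 p_{\max})^{\Oh(d^5)}\hat L^{\Oh(1)}$, single-exponential in $p_{\max}+d$ with only polynomial dependence on $m$ and $\kappa$. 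For the third bullet, Part~\ref{thm:implicitMIMO:gr} (linear objective) runs in time $(\tau d D M\log\Delta)^{\tau^{(d+D)^{\Oh(1)}}}\hat L^{\Oh(1)}=(\kappa\, d^5\log p_{\max})^{\kappa^{d^{\Oh(1)}}}\hat L^{\Oh(1)}$, which is double-exponential in $d+\kappa$ and is an \FPT algorithm once $p_{\max}$ is given in unary (using $(\log\alpha)^\beta\le 2^{\beta^2/2}+\alpha^{o(1)}$ to absorb the $\log\Delta$ term). Multiplying each of these bounds by the polynomial binary-search factor preserves the claimed shapes.

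The only mathematical content beyond this bookkeeping is already packaged: correctness of the MIMO model in Lemma~\ref{lem:MIMOPCmax} (which rests on Lemmas~\ref{lem:cyclesScheduleCMax}, \ref{lem:PCmaxCanonicalCycleDecomposition}, and~\ref{lem:modelPlacesCyclesCorrectly}), and the algorithmic guarantees of Theorem~\ref{thm:implicitMIMO}. Hence the point requiring care is the soundness of the binary-search wrapper: one checks that, once $\bar C_{\max}$ is guessed correctly, trimming the due dates discards no feasible schedule (a schedule of makespan $\le\bar C_{\max}$ never occupies time beyond $\bar C_{\max}$), and that infeasibility of the MIMO for every trial value up to $\max_{i,j}d^i_j$ correctly certifies that no schedule exists at all. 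I do not anticipate any further obstacle.
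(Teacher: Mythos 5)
Your proof is correct and follows essentially the same approach as the paper: binary search on $\bar C_{\max}$ with trimmed due dates, invoke Lemma~\ref{lem:MIMOPCmax} for the model parameters, then plug those into Parts~\ref{thm:implicitMIMO:micp}, \ref{thm:implicitMIMO:huge}, and~\ref{thm:implicitMIMO:gr} of Theorem~\ref{thm:implicitMIMO}. You additionally spell out the substituted running-time bounds, which the paper leaves implicit, but the argument is otherwise identical.
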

\begin{proof}
  Note that in total we construct only polynomially many MIMO models, since there are only polynomially many possibilities (that need to be checked via the binary search procedure) for the guess of the value $\bar{C}_{\max}$.
  We apply part 1, 3, and 4 of Theorem~\ref{thm:implicitMIMO} to obtain the claimed result:
  \begin{itemize}
  \item For application of part 1 we must be able to construct a MIMO model $\mathcal{S}$ and bound $\mathcal{S}(N)$ and $\mathcal{S}(D) = \max_{i \in [d]} \mathcal{S}(d^i)$.
  By Lemma~\ref{lem:MIMOPCmax} we have $\mathcal{S}(N) = m$ and $\mathcal{S}(d^i) = \mathcal O(d^2)$.
  \item For application of part 3 we must be able to construct a MIMO model $\mathcal{S}$ and bound $\mathcal{S}(M), \mathcal{S}(d)$, and $\mathcal{S}(\Delta)$.
  By Lemma~\ref{lem:MIMOPCmax} we have $\mathcal{S}(M) = O(d^2), \mathcal{S}(d) = d$, and $\mathcal{S}(\Delta) = p_{\max}$.
  \item For application of part 4 we must be able to construct a MIMO model $\mathcal{S}$, bound $\mathcal{S}(\tau)$, $\mathcal{S}(M)$ and $\mathcal{S}(D)$ if $\mathcal{S}(\Delta)$ is given in unary.
  By Lemma~\ref{lem:MIMOPCmax} we have $\mathcal{S}(D) = \SSS(M) = \mathcal O(d^2), \mathcal{S}(\tau) = \kappa$, and $\mathcal{S}(\Delta) = p_{\max}$.
  \qedhere
  \end{itemize}
\end{proof}

\subsubsection{Introducing Speeds into the Model}
We begin by extending the definition of the compact encoding of the $R | r^i_j,d^i_j | C_{\max}$ problem studied so far by introducing speeds.
The motivation is a scenario when the job size vectors of several machine kinds are obtained by scaling (up or down) a single ``unit speed'' job size vector, such as (but not only) in the case of uniformly related machines.
Clearly a more compact way of encoding such an instance is by giving the ``unit speed'' job size vector and for each machine kind related to this vector the corresponding scalar.
Note that this does not generalize the problem, but since introducing speeds into the MIMO model will not increase the coefficients, it will allow us to provide an efficient algorithm for a wider range of instances such as problems with non-constant speed ratios.

\prob{\textsc{Makespan Minimization on Unrelated Machines with Speeds} ($R | r^i_j,d^i_j | C_{\max}$)}{
There are $\kappa$ kinds of machines, each kind with $\bar{\tau}$ different speeds encoded by a speed vector $\ves^i=(s^i_1, \dots, s^i_{\bar{\tau}})$, hence $\tau = \bar{\tau} \cdot \kappa$ different machine types altogether, and $d$ types of jobs.
The number of machines of kind $i \in [\kappa]$ and speed $s^i_q$, $q \in [\bar{\tau}]$, is $\mu^i_q$ and the number of jobs of type $j \in [d]$ is $n_j$, and $\vemu = (\mu^1_1, \dots, \mu^1_{\bar{\tau}}, \dots, \mu_{\bar{\tau}}^\kappa)$ and $\ven = (n_1, \dots, n_d)$, hence there are $m=\|\vemu\|_1$ machines and $n = \|\ven\|_1$ jobs.
Each job type is specified by three vectors giving its size, release time, and due date on each machine kind, i.e., for each $j \in [d]$ given are vectors $\vep_j = \left( p^1_j, \ldots, p^\kappa_j \right) \in \left(\N \cup \{\infty\}\right)^\kappa$, $\ver_j = \left( r^1_j, \ldots, r^\kappa_j \right) \in \N^\kappa$, and $\ved_j = \left( d^1_j, \ldots, d^\kappa_j \right) \in \N^\kappa$.
}
{A schedule of all of the jobs on the specified $m$ machines (if one exists) minimizing the time when the last job finishes (i.e., the makespan).}

We call this representation of an instance its \emph{$\ves$-representation}.
The definition of a schedule is nearly the same as before except that a job $J$ needs to be assigned a processing time interval of size $p^i_j / s^i_q$ if it is assigned to a machine of kind $i$ with speed $s^i_q$.

Let us first observe why the most obvious place for introducing machine speeds to the model above does not lead to a valid MIMO formulation.
Undoubtedly, one is tempted to divide the job sizes with the speed $s$ and keep the rest of the model unchanged, however, when doing so we may introduce coefficients which are non-integer and quite large.
To circumvent this we intuitively leave the size intact and ``make the clock tick $s$ times slower''.
Again we are facing the integrality issue but this time in the right-hand sides (not in the constraints) and, as we are going to see, this makes the issue solvable by rounding the resulting right hand sides.

Again, we assume to have guessed a makespan $\bar{C}_{\max}$, modified the due dates accordingly, and our goal is to state linear constraints defining a polytope $P^{i,q}$ of configurations of jobs on a machine of kind $i \in [\kappa]$ with a speed index $q \in [\tilde{\tau}]$.
Hence, from now on we fix a machine type, i.e., fix a machine kind $i \in [\kappa]$ and a speed index $q \in [\tilde{\tau}]$.
Now we replace constraints \eqref{eq:Cmax:cycleVolumeBounds} with
\begin{equation}\label{eq:CmaxSpeed:cycleVolumeBounds}
  \sum_{j \in [d]} \sum_{t_\ell \lhd C \lhd t_k} p^i_j \cdot y^i_{j,C} \leq  \left\lfloor s^i_q \cdot (t_k - t_\ell) \right\rfloor \qquad\qquad\qquad\qquad\qquad \forall k,\ell \in [|T|] \,, \ell < k
\end{equation}
We stress that the floor of the right hand-side does not change the set of encoded configurations, since the sum on the left hand side is integral.
Furthermore, in order to reuse the model given in the previous section, we have to alter the definition of $\chi^i_{j,C}$ (i.e., to change~\ref{eq:chiDefinition}) in a way which captures the machine speed $s^i_q$.
Note that the speed only affects the processing time of a job type $j$ and thus it is enough to redefine $\chi^i_{j,C}$ for external cycles only
\[
  \chi^i_{j,C} = 1 \text{ if }
  r^i_j \le t_{\ell - 1},
  t_{k+1} \le d^i_j,
  t_{k+1} - t_{\ell-1} \ge p^i_j / s^i_q,
  \text{ and }
  t_{k} - t_{\ell} < p^i_j / s^i_q \,,
\]
where $C = C^{\operatorname{ext}}_{\ell,k}$.
Lemma~\ref{lem:cyclesScheduleCMax} stated that, in the case without speeds, it was sufficient to restrict our attention to $\N$-regular schedules.
It is straightforward to see here that because on a machine with speed $s$ all processing times are multiples of $\frac{1}{s}$, it is sufficient to restrict our attention to $\frac{\N}{s}$-regular schedules.
The proof of the following lemma goes along the same lines as the proof of Lemma~\ref{lem:modelPlacesCyclesCorrectly}.
Here, we only add the argument showing that~\eqref{eq:Cmax:incompatibleCyclesDisable} remains valid.
\begin{lemma}\label{lem:QCmaxIntroducingSpeeds}
  Let $i \in [\kappa]$, $q \in [\tilde{\tau}]$, and let $\vex \in \N^d$.
  Then there exists an $\frac{\N}{s^i_q}$-regular schedule of $\vex$ with makespan $\bar{C}_{\max}$ if and only if there exists $\vey,\vez$ such that $(\vex, \vey, \vez)$ satisfy constraints \eqref{eq:Cmax:sumOfExes}--\eqref{eq:Cmax:incompatibleCyclesDisable} and \eqref{eq:CmaxSpeed:cycleVolumeBounds}, and, moreover, $\sigma(\vey)$ given by Algorithm~\ref{alg:C_maxSpeed:sigmaFromY} is an $\frac{\N}{s^i_q}$-regular schedule with makespan $\bar{C}_{\max}$.
\end{lemma}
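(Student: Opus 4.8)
The plan is to mirror the proof of Lemma~\ref{lem:modelPlacesCyclesCorrectly}, the only genuinely new ingredient being that a machine of speed $s:=s^i_q$ processes jobs of total size at most $s\cdot L$ in any real-time window of length $L$ -- which is precisely what the rescaled right-hand side of~\eqref{eq:CmaxSpeed:cycleVolumeBounds} records. As noted just before the lemma, since all processing times on such a machine are multiples of $1/s$, the structural Lemmas~\ref{lem:cyclesScheduleCMax} and~\ref{lem:PCmaxCanonicalCycleDecomposition} hold with $\N$ replaced by $\tfrac{\N}{s}$, so throughout we may restrict attention to $\tfrac{\N}{s}$-regular schedules whose cycles realize potential cycles from $\CC$, each $C\in\CC$ being realized at most once.

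For the forward direction, given an $\tfrac{\N}{s}$-regular schedule $\sigma'$ of $\vex$ of makespan at most $\bar{C}_{\max}$, take such a cycle decomposition $\DD$ and read off $y^i_{j,C}$ as the number of type-$j$ jobs placed in the realization of $C$, and $z^i_C$ as the indicator that the external cycle $C$ is realized nonempty. Constraints \eqref{eq:Cmax:sumOfExes}--\eqref{eq:Cmax:ziC} are immediate bookkeeping; \eqref{eq:Cmax:chiMIMObounds} holds because the modified $\chi^i_{j,C}$ uses $p^i_j/s$ and hence exactly records which job types fit into which potential cycle at speed $s$; and \eqref{eq:CmaxSpeed:cycleVolumeBounds} holds because the cycles of $\DD$ contained in the real-time window $(t_\ell,t_k)$ process jobs of total size at most $s(t_k-t_\ell)$, and as the left-hand side is an integer this bound may be tightened to $\lfloor s(t_k-t_\ell)\rfloor$.

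The one constraint that needs a separate argument is \eqref{eq:Cmax:incompatibleCyclesDisable}, where one must check that the coefficient $p_{\max}$ still suffices. If $z^i_C=1$, the modified $\chi^i_{j,C}$ forces $C$ to carry exactly one job, so no cycle incompatible with $C$ is realized and the left-hand side vanishes. If $z^i_C=0$, observe that every potential cycle $C'\not\approx C$ has its realization confined to the real-time interval spanned by a single-job slot of $C$, whose length is $<p^i_j/s\le p_{\max}/s$ by the defining inequality $t_k-t_\ell<p^i_j/s$ of the modified $\chi^i_{j,C}$ (for the type $j$ governing $C$); since cycle realizations are pairwise non-overlapping, the jobs in all cycles incompatible with $C$ have total size $<s\cdot(p^i_j/s)=p^i_j\le p_{\max}$. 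I expect this to be the main obstacle, as it requires pinning down exactly which potential cycles are incompatible with a given external one and relating their interval footprints to the new $\chi$ condition.

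For the backward direction, assume $(\vex,\vey,\vez)$ satisfies \eqref{eq:Cmax:sumOfExes}--\eqref{eq:Cmax:incompatibleCyclesDisable} and~\eqref{eq:CmaxSpeed:cycleVolumeBounds}, and let $\sigma=\sigma(\vey)$ be produced by Algorithm~\ref{alg:C_maxSpeed:sigmaFromY} (that is, Algorithm~\ref{alg:sigmaFromY} with every type-$j$ job given processing length $p^i_j/s$). As in Lemma~\ref{lem:modelPlacesCyclesCorrectly}, the definition of $\chi^i_{j,C}$, feasibility of $(\vex,\vey,\vez)$, and the algorithm together forbid starting a job before its release time, and no job runs past $\bar{C}_{\max}$ since the due dates were truncated. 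To rule out a job $J$ finishing after $d^i(J)$, I would invoke the speed-analogue of Lemma~\ref{lem:leftAlignedScheduleCMax} -- which holds unchanged, since $\sigma(\vey)$ is still left-aligned and each cycle's first job still starts at a critical time -- obtaining $t(J)\in T$ with the machine busy from $t(J)$ until $J$ begins and idle just before $t(J)$; then the jobs in the cycles $C$ with $t(J)\lhd C\lhd t_{\rightCritical(C(J))}$ have total size exceeding $s\bigl(t_{\rightCritical(C(J))}-t(J)\bigr)$, contradicting the corresponding instance of~\eqref{eq:CmaxSpeed:cycleVolumeBounds}. Hence every job meets its due date, and by the same reasoning as for Lemma~\ref{lem:modelPlacesCyclesCorrectly} the schedule $\sigma$ is $\tfrac{\N}{s}$-regular; so $\sigma$ is an $\tfrac{\N}{s}$-regular schedule of $\vex$ of makespan at most $\bar{C}_{\max}$, completing the proof.
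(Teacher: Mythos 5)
Your proposal takes essentially the same route as the paper: it defers the bulk of the argument to the speed-free proof of Lemma~\ref{lem:modelPlacesCyclesCorrectly} and isolates constraint \eqref{eq:Cmax:incompatibleCyclesDisable} as the one new thing to verify, then derives the bound by combining the modified $\chi$ condition $t_k - t_\ell < p^i_j/s$ with the fact that jobs have integer size $\geq 1$. One phrasing slip worth fixing: when $z^i_C = 0$ the potential cycle $C$ has \emph{no} realization, so ``the real-time interval spanned by a single-job slot of $C$'' and ``the type $j$ governing $C$'' are undefined; the paper instead bounds the incompatible cycles by the fixed interval $\interval{t_\ell}{t_k}$ of width $t_k - t_\ell$, which is $< p_{\max}/s^i_q$ because \emph{some} $j$ has $\chi^i_{j,C}=1$ (otherwise $C$ is not a potential cycle at all). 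The arithmetic you do afterwards is the same, so this is cosmetic rather than a logical gap.
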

\begin{proof}
  First observe that $\sigma(\vey)$ given by Algorithm~\ref{alg:C_maxSpeed:sigmaFromY} is $\frac{\N}{s^i_q}$-regular, since again we have $t(J) \in T$ for every job $J$ assigned in $\sigma(\vey)$.
  Fix a potential external cycle $\hat{C} = C^{\operatorname{ext}}_{\ell,k}$.
  Recall that we have $t_k - t_{\ell} \le p_{\max} / s^i_q$, since otherwise $\hat{C}$ is not a potential cycle as no job can be assigned to it.
  Since the case $z_{\hat{C}} = 1$ is clearly valid, we have to argue about the other case, that is, the case when some jobs are assigned to cycles fully contained in $\hat{C}$.
  Since all cycles incompatible with $\hat{C}$, $\mathcal{C}_{\not\approx \hat{C}}$, are contained in the interval $\interval{t_\ell}{t_k}$, we conclude that
  \[
  \sum_{j \in [d]} \sum_{C \in \mathcal{C}_{\not\approx \hat{C}}} \frac{p^i_j}{s^i_q} \cdot y_{j,C} \le t_k - t_\ell \,.
  \]
  Clearly, we get $\sum_{j \in [d]} \sum_{C \in \mathcal{C}_{\not\approx \hat{C}}} \frac{p^i_j}{s^i_q} \cdot y_{j,C} \le p_{\max} / s^i_q$ which is equivalent to $\sum_{j \in [d]} \sum_{C \in \mathcal{C}_{\not\approx \hat{C}}} p^i_j \cdot y_{j,C} \le p_{\max}$.
  The lemma follows, since we have $1 \le p^i_j$ for all $j \in [d]$, that is, $\sum_{j \in [d]} \sum_{C \in \mathcal{C}_{\not\approx \hat{C}}} y_{j,C} \le \sum_{j \in [d]} \sum_{C \in \mathcal{C}_{\not\approx \hat{C}}} p^i_j \cdot y_{j,C}$
\end{proof}

\begin{algorithm}[tb]
  \SetKwProg{Def}{def}{:}{}
  \SetKwFunction{sigmaFunction}{$\sigma$}
  \SetKwFunction{cycleHandler}{handleCycle}
  \SetKwFunction{scheduleEnd}{endOf}
  \DontPrintSemicolon

  \Def{\cycleHandler{$i, \sigma, C, \vey_C$}}{
    \For{$j = 1$ \KwTo $d$}{
      \For{$\ell = 1$ \KwTo $y_{j,C}$}{
        \nlset{Start}\label{alg:C_maxSpeed:sigmaFromY:Start} $t \leftarrow \max($\scheduleEnd{$\sigma$}$, \leftCritical(C))$ \;
        $\sigma \leftarrow \sigma \cup \left\{ \left( j, \interval{t}{t + \frac{p^i_j}{s}} \right) \right\}$ \;
      }
    }
  }

  \caption{\label{alg:C_maxSpeed:sigmaFromY}
  Computing a schedule from a vector satisfying \eqref{eq:Cmax:sumOfExes}--\eqref{eq:Cmax:incompatibleCyclesDisable} and \eqref{eq:CmaxSpeed:cycleVolumeBounds}.
  Here, $s$ is the speed of the machine under consideration.
  }
\end{algorithm}

Let us now determine the parameters of the resulting MIMO model.
Recall that $p_{\max}$ is the maximum finite job size in a machine \emph{kind}, and the processing time of a job on a given machine \emph{type} is obtained by scaling the size with respect to the machine kind (i.e., speed); cf. Section~\ref{sec:intro:apps}.
For this reason we stress that $p_{\max} = \max_{i = 1}^\kappa\max_{j = 1}^d p^i_j$, that is, $p_{\max}$ is defined with respect to job sizes $p^i_j$ not taking speeds into account.
\begin{lemma}[MIMO model for $R|r^i_j,d^i_j|C_{\max}$]
\label{lem:MIMORCmax}
  Let $\mathcal{I}$ be an instance of $R|r^i_j,d^i_j|C_{\max}$ in its $\ves$-representation with $m$ machines of $\tau$ machine types, and $d$ job types with maximum job size $p_{\max}$.
  There is a MIMO model $\mathcal{S}$ for $\mathcal{I}$ with the following values of MIMO parameters:
  \begin{tasks}[style=itemize](3)
    \task $\mathcal{S}(\Delta) = p_{\max}$,
    \task $\mathcal{S}(M) = \mathcal O((d)^2)$,
    \task $\mathcal{S}(d) = d$,
    \task $\mathcal{S}(d^i) = \mathcal O((d)^2)$,
    \task $\mathcal{S}(N) = m$, and
    \task $\mathcal{S}(\tau) = \tau = \kappa \cdot \bar{\tau}$.
  \end{tasks}
\end{lemma}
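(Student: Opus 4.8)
The plan is to transcribe the proof of Lemma~\ref{lem:MIMOPCmax} essentially verbatim, the only genuine difference being that the collection of machine types is now indexed by pairs $(i,q)$ with $i \in [\kappa]$ a machine kind and $q \in [\bar{\tau}]$ a speed index, so that $\tau = \kappa \cdot \bar{\tau}$. First I would fix such a pair $(i,q)$ and declare the polytope $P^{i,q}$ to be the set of $(\vex,\vey,\vez)$ satisfying constraints \eqref{eq:Cmax:sumOfExes}--\eqref{eq:Cmax:incompatibleCyclesDisable} together with \eqref{eq:CmaxSpeed:cycleVolumeBounds}, where the indicators $\chi^i_{j,C}$ are taken in their speed-aware form (the version with the $p^i_j / s^i_q$ thresholds for external cycles introduced just before Lemma~\ref{lem:QCmaxIntroducingSpeeds}); the projection $\pi^{i,q}$ discards every variable except $\vex$. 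The MIMO multiplicity of type $(i,q)$ is set to $\mu^i_q$, and the target vector is the job-multiplicity vector $\ven = (n_1,\dots,n_d)$. Since the objective here is $C_{\max}$, the MIMO instance carries no objective and is a pure feasibility instance; as in the speedless case, the optimal makespan is found by binary search over the guessed value $\bar{C}_{\max}$ at only a polynomial overhead.

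Next I would establish correctness of the model---that $\ven$ decomposes into the prescribed numbers of configurations of each type if and only if a schedule of makespan at most $\bar{C}_{\max}$ exists---by invoking Lemma~\ref{lem:QCmaxIntroducingSpeeds} once for each machine type $(i,q)$, exactly as Lemma~\ref{lem:modelPlacesCyclesCorrectly} was used in the proof of Lemma~\ref{lem:MIMOPCmax}. Then I would read off the parameters. The one point that needs care, and the only place the speeds could in principle hurt, is the largest coefficient $\mathcal{S}(\Delta)$: every coefficient occurring in \eqref{eq:Cmax:sumOfExes}--\eqref{eq:Cmax:incompatibleCyclesDisable} is $1$, some $p^i_j$, or $p_{\max}$, and the left-hand side of the rescaled volume constraint \eqref{eq:CmaxSpeed:cycleVolumeBounds} still has only the sizes $p^i_j$ as coefficients---the speed $s^i_q$ enters \emph{solely} through the right-hand side $\lfloor s^i_q (t_k - t_\ell)\rfloor$, which, like the upper bounds $\chi^i_{j,C} n_j$, contributes to the encoding length but \emph{not} to $\Delta$. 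Hence $\mathcal{S}(\Delta) = p_{\max}$, with $p_{\max} = \max_{i\in[\kappa]}\max_{j\in[d]} p^i_j$ understood as the unscaled maximum job size.

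The remaining bounds are identical to the speedless case and require no real work: the auxiliary variables are the $y^i_{j,C}$ (one per $j\in[d]$ and $C\in\CC$) and the $z^i_C$ (one per $C\in\CC^{\text{ext}}$), and since $|\CC| = \Oh(d^2)$ this gives $\mathcal{S}(d^i) = \Oh(d^2)$ while $\mathcal{S}(d) = d$; counting the $\Oh(d^2)$ structural constraints together with the $\Oh(d^2)$ box constraints gives $\mathcal{S}(M) = \Oh(d^2)$; clearly $\mathcal{S}(N) = \|\vemu\|_1 = m$; and since we produce one polytope per (kind, speed) pair, $\mathcal{S}(\tau) = \kappa\cdot\bar{\tau} = \tau$. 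Because the whole argument is a transcription of the already-established speedless model, there is no substantive obstacle; the only thing to watch is precisely the coefficient bookkeeping described above together with using the speed-aware $\chi^i_{j,C}$ so that Lemma~\ref{lem:QCmaxIntroducingSpeeds} applies as stated.
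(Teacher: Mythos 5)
Your proposal is correct and follows the same approach as the paper's proof: both observe that passing from the speedless model to the $\ves$-representation changes only the right-hand sides (via \eqref{eq:CmaxSpeed:cycleVolumeBounds}) and the indicator constants $\chi^i_{j,C}$, so the largest coefficient stays at $p_{\max}$, and that the only new parameter change is $\mathcal{S}(\tau) = \kappa\cdot\bar{\tau}$. The paper simply defers the remaining bounds to Lemma~\ref{lem:MIMOPCmax} rather than re-deriving them, but the content is the same.
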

\begin{proof}
  There are only two changes in the model when compared with the model discussed previously, namely, we have introduced machine types and speeds.
  First, by Lemma~\ref{lem:QCmaxIntroducingSpeeds}, we only changed right-hand sides in the inequality description of $P^i$'s and thus this does not affect the MIMO parameters at all.
  We stress here that the left hand sides of the constraints in our model use the given job sizes for the unit speed machine of each kind; consequently the largest coefficient in the model is still governed by $p_{\max}$.
  Second, the number of polytope types, $\mathcal{S}(\tau)$, is $\tau = \kappa \cdot \bar{\tau}$.
  All other bounds on parameters of $\mathcal{S}$ follow from Lemma~\ref{lem:MIMOPCmax}.
\end{proof}

\begin{theorem}
  Problem $R|r^i_j,d^i_j|C_{\max}$ in its $\ves$-representation with $m$ machines of $\tau$ types and $d$ job types of maximum job size $p_{\max}$ admits fixed-parameter algorithms for
  \begin{itemize}
    \item parameter $m + d$,
    \item parameter $p_{\max} + d$, and
    \item parameter $d + \tau$ when $p_{\max}$ is given in unary.
  \end{itemize}
\end{theorem}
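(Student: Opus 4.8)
The plan is to combine the modeling result Lemma~\ref{lem:MIMORCmax} with the algorithmic Theorem~\ref{thm:implicitMIMO}, in exactly the same way as in the proof of Theorem~\ref{thm:FPT:MIMOPCmax}. First I would observe that we build only polynomially many MIMO instances: the optimal makespan is located by binary search over its candidate values $\bar{C}_{\max}$, and for each guess we trim the due dates (set $d^i_j := \min(d^i_j, \bar{C}_{\max})$) and construct a single MIMO model as in Lemma~\ref{lem:MIMORCmax}. With speeds the optimal makespan is a multiple of $1/s^i_q$ for some machine type; since every speed denominator has encoding length polynomial in $\la \ves, \ven, \vep \ra$, binary search over the corresponding scaled integral values still costs only a polynomial factor. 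A guess $\bar{C}_{\max}$ is achievable if and only if the associated MIMO instance is feasible (by Lemma~\ref{lem:QCmaxIntroducingSpeeds} and the translation from cycle decompositions to schedules via Algorithm~\ref{alg:C_maxSpeed:sigmaFromY}), so an exact MIMO algorithm yields an exact algorithm for $R|r^i_j,d^i_j|C_{\max}$ in its $\ves$-representation.

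Next I would substitute the parameter bounds of Lemma~\ref{lem:MIMORCmax}, namely $\SSS(\Delta) = p_{\max}$, $\SSS(M) = \SSS(d^i) = \Oh(d^2)$, $\SSS(d) = d$, $\SSS(N) = m$, and $\SSS(\tau) = \tau$, into the three relevant parts of Theorem~\ref{thm:implicitMIMO}. Part~\ref{thm:implicitMIMO:micp} requires $N$ and $d+D$ to be parameters, which here amounts to $m$ and $d$ (since $D = \Oh(d^2)$), giving the parameter $m+d$ case; part~\ref{thm:implicitMIMO:huge} requires $M$, $d$, and $\Delta$ to be parameters, which here amounts to $d$ and $p_{\max}$ (since $M = \Oh(d^2)$), giving the parameter $p_{\max}+d$ case; and part~\ref{thm:implicitMIMO:gr} requires $\tau$, $M$, $d$, $D$ to be parameters with $\Delta$ given in unary, which here amounts to $d$ and $\tau$ with $p_{\max}$ in unary, giving the last case. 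In each case the running time claimed is precisely the one delivered by the corresponding part of Theorem~\ref{thm:implicitMIMO}, multiplied by the polynomial overhead of the binary search and the model construction.

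The only point that needs real care, and hence the main obstacle, is checking that turning the $R||C_{\max}$ representation into the $\ves$-representation does not blow up the largest coefficient $\SSS(\Delta)$ of the $P$-representation. This is exactly what Lemma~\ref{lem:QCmaxIntroducingSpeeds} secures: the speed $s^i_q$ is absorbed into the floored right-hand sides of the volume constraints~\eqref{eq:CmaxSpeed:cycleVolumeBounds} (and into the redefinition of the indicator constants $\chi^i_{j,C}$), rather than into the constraint matrix, so the coefficients on the left-hand sides remain the original unit-speed job sizes and $\SSS(\Delta) = p_{\max}$ still holds. Everything else is a routine substitution into Theorem~\ref{thm:implicitMIMO}.
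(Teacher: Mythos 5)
Your proposal is correct and follows essentially the same route as the paper, which just remarks that the parameters carry over unchanged from Theorem~\ref{thm:FPT:MIMOPCmax}; you spell out that one-liner by invoking Lemma~\ref{lem:MIMORCmax}, observing that $\SSS(\Delta)=p_{\max}$ survives the introduction of speeds because they only affect right-hand sides, and substituting into parts~\ref{thm:implicitMIMO:micp}, \ref{thm:implicitMIMO:huge}, and~\ref{thm:implicitMIMO:gr} of Theorem~\ref{thm:implicitMIMO} with the binary-search overhead, which is exactly what the paper intends.
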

\begin{proof}
	The parameters stay the same as in Theorem~\ref{thm:FPT:MIMOPCmax}.
\end{proof}

\subsubsection{Objectives Similar to \texorpdfstring{$C_{\max}$}{Cmax}}
\paragraph{Other Maximal Objectives.}
We now use the simple observation that both the $L_{\max}$ and $F_{\max}$ objectives are equivalent to $C_{\max}$ (with strict due dates).
To see this, we utilize binary search for the objective value $\varphi$ and alter the due dates as follows.
For every $j \in [d]$ and every $i \in [\kappa]$, for objective $L_{\max}$ set the new due date $(d')_j^i = d^i_j + \varphi$, and for objective $F_{\max}$ set it to $(d')^i_j = r^i_j + \varphi$.

\paragraph{Weighted Throughput.}
Another objective with a very similar model to $C_{\max}$ is the maximum throughput or minimum weighted penalty, which is to minimize $\sum w_jU_j$.
To see this let us add an auxiliary penalty machine that is intended to collect all late jobs.
This is the only machine for which we introduce an objective and which is of a separate type.
Note that we may assume all late jobs are scheduled after the last critical time on any (normal) machine, and we model this by scheduling them on the newly introduced penalty  machine.
On this machine all jobs have unit size, are released at time $0$, and their due date is $\sum_{j \in [d]} n_j$.
The objective of this machine is to minimize the sum of weights of jobs scheduled on this auxiliary machine, i.e., if $w_j$ is the weight of a job of type $j \in [d]$, the objective is $\min \sum_{j=1}^d w_j x_j$.

\paragraph{Maximizing Minimum Load ($C_{\min}$)}
Let us discuss how to handle the $C_{\min}$ objective which asks to maximize the minimum load of a machine.
The idea is quite similar to $C_{\max}$: guess the optimal value $\bar{C}_{\min}$ and adds a constraint
\begin{equation}
\sum_{j \in [d]} \sum_{C \in \CC} p_j \cdot y_{j,C} \geq \left\lceil \bar{C}_{min} \cdot s^{i,q} \right\rceil \,,
\end{equation}
which clearly ensures that the load of every machine is at least $\bar{C}_{\min}$.
\mkinline{Do we need more stuff about the correctness of the model?}

\medskip

Note that all the alterations above of the previously given model for $R|r^i_j,d^i_j|C_{\max}$ do not affect any of the parameters of the MIMO model.
Thus, the following theorem directly follows from previous argumentation.
\begin{theorem}
\label{thm:FPT:MIMOROtherObjectives}
  Problems $R|r^i_j,d^i_j|L_{\max}$, $R|r^i_j,d^i_j|F_{\max}$, $R|r^i_j,d^i_j|\sum w_jU_j$, and $R|r^i_j,d^i_j|C_{\min}$ in their $\ves$-representation with $m$ machines of $\tau$ types and $d$ job types admit fixed-parameter algorithms for
  \begin{itemize}
    \item parameter $m + d$,
    \item parameter $p_{\max} + d$, and
    \item parameter $d + \tau$ if $p_{\max}$ is given in unary. \qed
  \end{itemize}
\end{theorem}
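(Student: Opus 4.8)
The plan is to observe that each of the four objectives reduces to the makespan model of Lemma~\ref{lem:MIMORCmax} by a local modification that leaves the MIMO parameters essentially untouched, so that the three parameterizations are inherited from Theorem~\ref{thm:FPT:MIMOPCmax}. First I would note, as in the discussion preceding the statement, that $L_{\max}$ and $F_{\max}$ become $C_{\max}$ with strict due dates after binary searching for the optimal objective value $\varphi$ (which ranges over a set of size polynomial in the input) and replacing, for all $i\in[\kappa]$, $j\in[d]$, the due date $d^i_j$ by $d^i_j+\varphi$ in the case of $L_{\max}$ and by $\min(d^i_j, r^i_j+\varphi)$ in the case of $F_{\max}$: a schedule satisfies $L_{\max}\le\varphi$, resp.\ $F_{\max}\le\varphi$, if and only if it meets all altered due dates. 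Since this changes only the right-hand sides entering the constants $\chi^i_{j,C}$ and the cycle-volume inequalities, Lemma~\ref{lem:QCmaxIntroducingSpeeds} still gives a correct MIMO model with exactly the same parameters as in Lemma~\ref{lem:MIMORCmax}.

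Next I would handle $\sum w_jU_j$ by adjoining one fresh machine kind --- hence one extra MIMO type, giving at most $\kappa\bar\tau+1$ types overall --- on which every job has unit size, release time $0$, due date $\sum_{j}n_j$, and the linear objective $\min\sum_{j=1}^d w_jx_j$; this adds only $\Oh(1)$ rows and $\Oh(d)$ variables, so $M$, $D$, $d$, $N$, $\Delta$ are unchanged up to constants. The correctness claim to verify is that some optimum places on the penalty machine exactly the set of late jobs: removing jobs never destroys feasibility, so the jobs left on the regular machines form a feasible sub-instance, and any job missing its due date may be moved to the penalty machine (which fits all $n$ jobs) without affecting the objective. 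For $C_{\min}$ I would guess the optimal value $\bar C_{\min}$ --- the relevant candidates are the numbers $k/s^i_q$ with $k$ integral and polynomially bounded --- and add to each polytope $P^{i,q}$ the single linear constraint $\sum_{j\in[d]}\sum_{C\in\CC}p^i_j y^i_{j,C}\ge\lceil\bar C_{\min}\cdot s^i_q\rceil$; since the real-time load of that machine is $\frac1{s^i_q}\sum_{j,C}p^i_j y^i_{j,C}$ and the left-hand side is integral, this is equivalent to load $\ge\bar C_{\min}$, and it adds one row per type, leaving all parameters unchanged.

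With these models in hand, the conclusion is immediate: in all four cases the objective is linear and the MIMO parameters are $\Delta=p_{\max}$, $M=\Oh(d^2)$, $d$, $D=\Oh(d^2)$, $N=m$, $\tau=\Oh(\kappa\bar\tau)$, exactly as in Lemma~\ref{lem:MIMORCmax} (up to the harmless $+1$ in $\tau$), so parts~\ref{thm:implicitMIMO:micp},~\ref{thm:implicitMIMO:huge}, and~\ref{thm:implicitMIMO:gr} of Theorem~\ref{thm:implicitMIMO} apply verbatim as in Theorem~\ref{thm:FPT:MIMOPCmax}; the binary search and guessing steps only multiply the running time by a polynomial factor, since only polynomially many MIMO instances are built. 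The main obstacle I anticipate is not the parameter bookkeeping but the verification that the modified models are correct --- for $L_{\max}$ and $F_{\max}$ this is an immediate consequence of the due-date reformulation together with Lemma~\ref{lem:QCmaxIntroducingSpeeds}, whereas for $\sum w_jU_j$ and $C_{\min}$ it rests on the two short structural observations above, which should be written out carefully (in particular the exchange argument moving all late jobs onto the penalty machine, and the claim that the added load inequality is consistent with the cycle-volume constraints that bound the makespan).
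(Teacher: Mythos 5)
Your proposal is correct and follows exactly the route the paper takes: binary search on the objective value, the due-date reformulations for $L_{\max}$ and $F_{\max}$, an auxiliary penalty machine for $\sum w_jU_j$, and the added load lower-bound for $C_{\min}$, all of which leave the MIMO parameters of Lemma~\ref{lem:MIMORCmax} unchanged (up to the harmless extra type), so the three parameterizations of Theorem~\ref{thm:FPT:MIMOPCmax} transfer via parts~\ref{thm:implicitMIMO:micp}, \ref{thm:implicitMIMO:huge}, and~\ref{thm:implicitMIMO:gr} of Theorem~\ref{thm:implicitMIMO}. The one small divergence is that for $F_{\max}$ you set the altered due date to $\min(d^i_j, r^i_j+\varphi)$ while the paper writes simply $r^i_j+\varphi$; your version is the more careful one when the original due dates are hard deadlines, and it changes neither the argument nor the parameter bounds.
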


\subsubsection{$\ell_p$-norm minimization}
Another objective which can be handled by an almost identical model is the $\ell_p$ norm of a load vector of a schedule for integer values of~$p$, which is defined as follows.
In a given schedule $\sigma$, the \emph{load $L$} of a machine is the total time it spends processing jobs, hence if $\vex \in \N^d$ is the compact encoding of the set of jobs scheduled to run on a machine of kind $i$ and with speed $s$, its load is $\vep^i \vex / s$.
The objective $\ell_p$ is to minimize the $\ell_p$ norm of the vector $\veL = (\LL_1, \dots, \LL_m)$ of loads which has one coordinate for each machine equal to the load of this machine.
Note that, for $p \in \N_{\geq 1}$, the norm function itself is a convex but \emph{not} separable convex, since it is defined as $\sqrt[p]{\sum_{i=1}^m (\LL_i)^p}$.
However, it is easy to see that the optima of this function are the same as the optima of the function $\sum_{i=1}^m (\LL_i)^p$ (for $\LL_i \ge 0$), which \emph{is} separable convex.
The function might be fractional (for fractional speeds $s$) and since MIMO requires an objective function which is integral on integral points, this issue must be handled.
Since a simple scaling argument suffices, we defer this discussion to Section~\ref{sec:fractionality}.

Hence, taking the constraints \eqref{eq:Cmax:sumOfExes}--\eqref{eq:Cmax:incompatibleCyclesDisable} and \eqref{eq:CmaxSpeed:cycleVolumeBounds} and defining, for a machine of kind~$i \in [\kappa]$ and speed~$s$, the objective function to be
\[
f^{i,s}(\vex) = \vep^i \vex / s,
\]
we have obtained a MIMO model with a separable convex objective, implying the following modeling lemma and effective theorem:
\begin{lemma}[MIMO model for $R|r^i_j,d^i_j|\ell_p$]
	\label{lem:MIMORlp}
	Let $\mathcal{I}$ be an instance of $R|r^i_j,d^i_j|\ell_p$ in its $\ves$-representation with $m$ machines of $\tau$ machine types, and $d$ job types with maximum job size $p_{\max}$.
	There is a MIMO model $\mathcal{S}$ for $\mathcal{I}$ with extension-separable convex objective functions and with the following values of MIMO parameters:
	\begin{tasks}[style=itemize](3)
		\task $\mathcal{S}(\Delta) = p_{\max}$,
		\task $\mathcal{S}(M) = \mathcal O((d)^2)$,
		\task $\mathcal{S}(d) = d$,
		\task $\mathcal{S}(d^i) = \mathcal O((d)^2)$,
		\task $\mathcal{S}(N) = m$, and
		\task $\mathcal{S}(\tau) = \tau = \kappa \cdot \bar{\tau}$. \qed
	\end{tasks}
\end{lemma}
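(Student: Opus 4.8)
The plan is to observe that the model has already been built on top of the $C_{\max}$-with-speeds construction, so that the scheduling correctness is inherited and only the objective and the parameter count need to be checked. First I would recall that, for every machine type (a kind $i\in[\kappa]$ together with a speed index $q\in[\bar\tau]$, speed $s=s^i_q$), the polytope $P^{i,s}$ is defined by exactly the constraints \eqref{eq:Cmax:sumOfExes}--\eqref{eq:Cmax:incompatibleCyclesDisable} and \eqref{eq:CmaxSpeed:cycleVolumeBounds}, with the only difference that here no makespan value $\bar C_{\max}$ is guessed and the due dates are left untouched, since for $\ell_p$ we optimize directly rather than reduce to feasibility. By Lemma~\ref{lem:QCmaxIntroducingSpeeds} (applied with the trivial bound $\bar C_{\max}=\max_{i,j}d^i_j$), integer points $(\vex,\vey,\vez)$ of $P^{i,s}$ correspond, via Algorithm~\ref{alg:C_maxSpeed:sigmaFromY}, to feasible $\tfrac{\N}{s}$-regular schedules of $\vex$ on a machine of kind $i$ and speed $s$; this is the entire scheduling content and it is used verbatim.

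Next I would make the objective precise. As indicated before the lemma, the quantity $\vep^i\vex/s$ is the load of a machine of kind $i$ and speed $s$ carrying configuration $\vex$. To obtain an $\ell_p$-objective I would add to each $P^{i,s}$ a single auxiliary variable $L$ together with the defining equation $L=\sum_{j\in[d]}p^i_j x_j$, and declare the per-type objective in extension-separable convex form by setting $g^{i,s}(\vex,\vey,\vez,L)=(L/s)^p$. This $g^{i,s}$ is separable convex because it depends on the single coordinate $L$ and $t\mapsto(t/s)^p$ is convex on $\R_{\ge0}$; projecting out the auxiliary coordinates yields $f^{i,s}(\vex)=(\vep^i\vex/s)^p$. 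With this choice the MIMO objective $\sum_{i,s}\sum_{\vex}f^{i,s}(\vex)\,\lambda^{i,s}_{\vex}$ equals $\sum_{\text{machines}}(\text{load})^p=\|\veL\|_p^p$, and since $t\mapsto t^{1/p}$ is increasing on $\R_{\ge0}$ a minimizer of this MIMO instance is exactly an $\ell_p$-optimal schedule. The one delicate point is that $(L/s)^p$ need not be integral when $s$ is fractional, whereas MIMO requires an objective integral on integer points; this I would resolve by the uniform scaling argument deferred to Section~\ref{sec:fractionality}, multiplying every $g^{i,s}$ by a common multiple of the $p$-th powers of the speed denominators, which rescales all objective values by the same positive integer and hence preserves the set of optimizers.

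Finally I would read off the parameters. Relative to the model of Lemma~\ref{lem:MIMORCmax} we have added to each polytope exactly one variable $L$ and one equality constraint, and that constraint has coefficients in $\{1,p^i_1,\dots,p^i_d\}$, all of absolute value at most $p_{\max}$; hence $\mathcal{S}(\Delta)=p_{\max}$ still holds, and $\mathcal{S}(M)=\Oh(d^2)$, $\mathcal{S}(d)=d$, $\mathcal{S}(d^i)=\Oh(d^2)$, $\mathcal{S}(N)=m$ are unchanged, while $\mathcal{S}(\tau)=\tau=\kappa\cdot\bar\tau$ since there are $\kappa\cdot\bar\tau$ distinct machine types and hence polytope types.

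I expect the only real obstacle to be the bookkeeping around the objective rather than anything in the scheduling argument: one must verify that $f^{i,s}$ genuinely fits the extension-separable convex template required by part~\ref{thm:implicitMIMO:huge} of Theorem~\ref{thm:implicitMIMO} — that $g^{i,s}$ is separable convex, that it becomes integer-valued after the Section~\ref{sec:fractionality} rescaling, and that minimizing $\|\veL\|_p^p$ is equivalent to minimizing $\|\veL\|_p$ — and also confirm that dropping the makespan guess does not endanger finiteness of the model, which it does not, because the critical times, and hence the set $\CC$ of potential cycles, are determined solely by the input release times and due dates.
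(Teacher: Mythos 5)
Your proposal is correct and follows the same construction the paper uses (reusing the speed-adapted $C_{\max}$ polytope and only swapping the objective); since the paper leaves the lemma essentially proof-free, the useful thing to compare is the objective bookkeeping. There the paper's exposition is actually imprecise: it writes $f^{i,s}(\vex)=\vep^i\vex/s$ and calls the result separable convex, but the true per-machine contribution is $(\vep^i\vex/s)^p$, which for $p\ge 2$ and two or more nonzero job sizes is convex \emph{but not} separable convex in $\vex$ alone (it has cross terms, e.g.\ $(x_1+x_2)^2$). Your introduction of the auxiliary coordinate $L=\sum_j p^i_j x_j$ together with the objective $g^{i,s}(\cdot,L)=(L/s)^p$ is precisely the step needed to make the objective genuinely \emph{extension}-separable convex as the lemma claims, and your parameter accounting ($+1$ variable, $+1$ constraint, coefficients $\le p_{\max}$, no change in the $\Oh(d^2)$ bounds) checks out. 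The only cosmetic remark: no guess of $\bar C_{\max}$ is needed here, as you note, because the critical-time set $T$ and hence $|\CC|=\Oh(d^2)$ is determined by the finite release times and due dates of the input, so the model stays finite without it.
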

\begin{theorem}
	The problem $R|r^i_j,d^i_j|\ell_p$ in its $\ves$-representation with $m$ machines of $\tau$ types and $d$ job types admits a fixed-parameter algorithms for
	\begin{itemize}
		\item parameter $m + d$, and
		\item parameter $p_{\max} + d$. \qed
	\end{itemize}
\end{theorem}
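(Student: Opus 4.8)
The plan is to combine the modeling Lemma~\ref{lem:MIMORlp} with Theorem~\ref{thm:implicitMIMO}, exactly as in the proof of Theorem~\ref{thm:FPT:MIMOPCmax}. By Lemma~\ref{lem:MIMORlp}, an arbitrary instance of $R|r^i_j,d^i_j|\ell_p$ in its $\ves$-representation is modeled by a MIMO system $\SSS$ with extension-separable convex objectives and with $\SSS(\Delta) = p_{\max}$, $\SSS(M) = \SSS(d^i) = \Oh(d^2)$ (hence $\SSS(D) = \Oh(d^2)$), $\SSS(d) = d$, $\SSS(N) = m$, and $\SSS(\tau) = \tau$. It then suffices to run the two applicable algorithms of Theorem~\ref{thm:implicitMIMO} on $\SSS$.

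For the parameter $m+d$ I would invoke part~\ref{thm:implicitMIMO:micp}. Working in the lifted $N$-fold formulation that retains the $\SSS(d^i)$ auxiliary variables of the $P$-representation (cf. Lemma~\ref{lem:MIMO_as_nfold}), the separable convex function $g^i$ underlying the extension-separable objective is in particular convex on the enlarged variable space, so convex integer minimization in dimension $\SSS(N)(\SSS(d)+\SSS(D)) = \Oh(md^2)$ applies. Since $md^2 \le (m+d)^3$, the running time $\bigl(\SSS(N)(\SSS(d)+\SSS(D))\bigr)^{\Oh(\SSS(N)(\SSS(d)+\SSS(D)))}\hat L^{\Oh(1)}$ is \FPT in $m+d$. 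For the parameter $p_{\max}+d$ I would invoke part~\ref{thm:implicitMIMO:huge}, which handles extension-separable convex objectives directly: with $\SSS(M)\SSS(d)\SSS(\Delta) = \Oh(d^3 p_{\max})$ and $\SSS(M)^2\SSS(d)+\SSS(d)^2\SSS(M) = \Oh(d^5)$ the running time becomes $(d^3 p_{\max})^{\Oh(d^5)}\hat L^{\Oh(1)}$, which is \FPT in $p_{\max}+d$. No third (many-types) parameterization is claimed because part~\ref{thm:implicitMIMO:gr}, the only one tailored to polynomially many machine types, supports only linear and fixed-charge objectives, not separable convex ones.

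Before the reduction is clean, two points from the $\ell_p$-norm discussion need to be discharged. First, $\|\veL\|_p$ is convex but not separable convex; I would instead minimize $\sum_i (\LL_i)^p$, which has the same minimizers over the nonnegative load vectors and which fits the extension-separable template by introducing an auxiliary variable equal to the per-machine load $\vep^i\vex/s$ and applying $t\mapsto t^p$ to it. Second, for fractional speeds this objective is not integral on integral points, violating the requirement on MIMO objectives; this is handled by the scaling argument of Section~\ref{sec:fractionality}, which multiplies the objective by a fixed integer depending only on the speed denominators, affecting neither the constraint matrix nor any MIMO parameter but only the magnitude of $f_{\max}$, which enters the running time polynomially through $\hat L$.

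The only mild obstacle, since the statement is essentially a plug-in, is verifying that the extension-separable convex objective meets the hypotheses of the invoked algorithms: for part~\ref{thm:implicitMIMO:huge} this is literally the stated case, while for part~\ref{thm:implicitMIMO:micp} one must observe that passing to the lifted $N$-fold instance turns the objective into an honest separable convex --- hence convex --- function on the enlarged variable space, so that the convex-minimization algorithm of Dadush and Vempala applies. The remaining bookkeeping of parameters is identical to that in the proof of Theorem~\ref{thm:FPT:MIMOPCmax}.
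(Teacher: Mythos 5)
Your proposal is correct and is essentially the same plug-in argument the paper leaves implicit (the theorem carries a $\qed$ because it follows directly from Lemma~\ref{lem:MIMORlp} and Theorem~\ref{thm:implicitMIMO}). The two additional observations you supply --- that an extension-separable convex objective is covered by part~\ref{thm:implicitMIMO:micp} once one works in the lifted $N$-fold variable space where $g^i$ is honestly separable convex, and that part~\ref{thm:implicitMIMO:gr} is unavailable here because it only handles linear or fixed-charge objectives --- are exactly what the paper's surrounding discussion relies on, and the treatment of fractionality via Section~\ref{sec:fractionality} is also the paper's intended route.
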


\begin{table}[bt]
  \begin{tabular}{| c | p{.85\textwidth} |}
    \hline
    Notation & Meaning \\
    \hline\hline
    $\mathcal{C}$                                               & set of all potential cycles \\
    $\mathcal{C}^{\operatorname{int/ext}}$                      & set of all potential internal / external cycles \\
    $C^{\operatorname{int}}_k$                                  & a potential internal cycle in the interval $\interval{t_k}{t_{k+1}}$ \\
    $C^{\operatorname{ext}}_{k,\ell}$                           & a potential external cycle containing in its interior critical times $t_k, \ldots, t_{\ell}$ \\
    $\mathcal{C}^{\operatorname{ext}}_{\ge k,\ell}$             & set of all potential external cycles containing in their interior critical times $t_{k'}, \ldots, t_{\ell}$ for some $k'$ with $k \le k' \le \ell$ \\
    $\mathcal{C}^{\operatorname{ext}}_{\le k,\ell}$             & set of all potential external cycles containing in their interior critical times $t_{k}, \ldots, t_{\ell}$ (if the inequality is strict, then also $t_{k-1}$) \\
    $\mathcal{C}^{\operatorname{ext}}_{k,\le \ell}$             & set of all potential external cycles containing in their interior critical times $t_{k}, \ldots, t_{\ell'}$ for some $\ell'$ with $k \le \ell' \le \ell$ \\
    $\mathcal{C}^{\operatorname{ext}}_{k,*}$                   & $\mathcal{C}^{\operatorname{ext}}_{k,*}  =  \mathcal{C}^{\operatorname{ext}}_{k,\le |T|-1}$ \\
    $\mathcal{C}^{\operatorname{ext}}_{*,k}$                   & $\mathcal{C}^{\operatorname{ext}}_{*,k}  =  \mathcal{C}^{\operatorname{ext}}_{\le k,k}$ \\
    \hline
  \end{tabular}
  \caption{\label{tab:potentialCycleExtendedNotation}
    Potential cycle notation review.
  }
\end{table}

\subsection{Polynomial Objectives: Total Weighted Completion Time, Flow Time, and Tardiness} \label{sec:polyobj}
When it comes to objectives such as $\sum w_j C_j$, we revisit properties of schedule cycles.
In particular, we give a refined proof of the cycle decomposition lemma (Lemma~\ref{lem:cyclesScheduleCMax}) suitable for polynomial objectives (recall that we call the objectives $\sum w_j C_j$, $\sum w_j F_j$, and $\sum w_j T_j$ polynomial objectives).
We extend the ideas of~\cite{KnopKoutecky2017} who showed that, in a setting without release times and due dates, the $\sum w_j C_j$ objective is expressible as a separable convex function in certain auxiliary variables.
This construction gains intuition by visualizing the objective using 2D Gantt charts.
In order to keep track of how exactly the critical times split the computation (size) of a job scheduled to an external cycle we introduce new auxiliary variables.
Finally, we re-introduce the speeds back to the model, however, this time it is not as simple as in the case of makespan minimization.
This follows from the fact that we need to know \emph{precisely} how a critical time splits the size of a job in an external cycle.
On the other hand, we prove that there is only a limited number (a function of $d$) of options for the ``speed introduced shift''.
Consequently, it is possible to adapt the model (by adding more auxiliary variables) so as to express the value of an objective such as $\sum w_j C_j$ as a separable convex objective function.
We treat the remaining polynomial objectives similarly.

We begin with minimization of sum of weighted completion times.
Formally, we focus on the following problem:
\prob{\textsc{Minimizing Sum of Weighted Completion Times on Unrelated Machines} ($R | r^i_j,d^i_j | \sum w_j C_j$)}
{
	There are $\kappa$ kinds of machines and $d$ types of jobs.
	The number of machines of kind $i \in [\kappa]$ is $\mu^i$ and the number of jobs of type $j \in [d]$ is $n_j$, with $\vemu = (\mu^1, \dots, \mu^\kappa)$ and $\ven = (n_1, \dots, n_d)$, hence there are $m=\|\vemu\|_1$ machines and $n = \|\ven\|_1$ jobs.
	Each job type is specified by its weight $w_j$ and three vectors giving its size, release time, and due date on each machine kind, i.e., for each $j \in [d]$ given are vectors $\vep_j = \left( p^1_j, \ldots, p^\kappa_j \right) \in \left(\N \cup \{\infty\}\right)^\kappa$, $\ver_j = \left( r^1_j, \ldots, r^\kappa_j \right) \in \N^\kappa$, and $\ved_j = \left( d^1_j, \ldots, d^\kappa_j \right) \in \N^\kappa$.
}
{A non-preemptive schedule of all of the jobs on the specified $m$ machines (if one exists) minimizing $\sum_{J\in \mathcal{J}} w(J) C_J$, where $C_J$ is the time when job $J$ finishes and $w(J)$ is its weight.}

The main structural insight behind the previous model has been Lemma~\ref{lem:cyclesScheduleCMax} guaranteeing the existence of an optimal $S$-regular schedule (for appropriate $S$).
We will prove a different version of the aforementioned lemma which takes into account the specifics of the more complicated objectives which are now our focus.
Let us point out one feature of cycles in a schedule.
Recall that the proof of Lemma~\ref{lem:cyclesScheduleCMax} is completely independent of the permutation $\pi$ of job types in a cycle.
However, when we consider polynomial objectives, the ordering of jobs in a cycle plays a role and may depend on (the ``position'' of) the corresponding potential cycle.

The polynomial objectives (other than $\ell_p$-norm, which we have dealt with already) are well known to be in a close relation to the Smith's rule~\cite{Smith1956}, as we shall discuss now.
Take for example $\mathcal{R} = \sum w_jC_j$: there is a weak order $\preceq_{\mathcal{R}}$, which is given by the ratio $w_j / p_j$, such that if jobs of an (internal) cycle are not ordered according to this ratio, then rearranging them yields a schedule of smaller value.
This brings us to define, for each objective $\RR\in \mathfrak{C}_{\text{poly}} \setminus \{\ell_p\}$, each machine kind $i \in [\kappa]$, and each integer $\ell \in [|T^i| - 1]$, an ordering $\preceq_{\RR, \ell}^i$ of the job types, which will satisfy the property that ordering all jobs strictly contained in $[t_\ell, t_{\ell+1}]$ by $\preceq_{\RR, \ell}^i$ results in a schedule which is at least as good as the original one.
Specifically, for $\RR = \sum w_j C_j$ and $\RR = \sum w_j F_j$, we define $\preceq_{\RR, \ell}^i$ to be the ordering of $[d]$ by the ratios $w_j/p^i_j$ non-increasingly.
For $\RR = \sum w_j T_j$, define $w^i_{\ell,j}$ to be $w_j$ if $t_\ell \geq d_j^i$ and $0$ otherwise, and define $\preceq_{\RR, \ell}^i$ to be the ordering of $[d]$ by the ratios $w^i_{\ell,j}/p^i_j$ non-increasingly.
Note that for the objective $\sum w_jC_j$ we have $\preceq^i_{\RR,1} = \cdots = \preceq^i_{\RR,|T^i| - 1}$, since the Smith ratio does not depend on $r^i_j$ or $d^i_j$ (some of the jobs types cannot be scheduled due to $\chi^i_{j,C}$ restrictions~\eqref{eq:chiDefinition}, though).
On the other hand, when it comes to $\sum w_jT_j$, the orders may differ, since the integer $\ell \in [|T|]$ determines whether a job is late (in the interval $\interval{t_{\ell}}{t_{\ell + 1}}$), which determines whether it contributes to the objective.
We omit the superscript $i$ if the machine kind is clear from context.
Next, we show that these definitions satisfy the property claimed before, which is formally phrased as follows:
\begin{lemma} \label{lem:ordered}
Fix a machine kind $i \in [\kappa]$.
Let $\RR \in \mathfrak{C}_{\text{poly}} \setminus \{\ell_p\}$, let $\sigma'$ be a schedule of jobs on a single machine of kind $i$, let $\ell \in [|T^i|-1]$, and let $\sigma$ be the schedule obtained from $\sigma'$ by rearranging all jobs which are strictly contained in $[t_\ell, t_{\ell+1}]$ by the ordering $\preceq_{\RR, \ell}^i$.
Then the objective function value of $\sigma$ is at most that of $\sigma'$.
\end{lemma}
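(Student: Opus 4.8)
The plan is to argue via a standard exchange argument, reducing to the case of two consecutive jobs and invoking the Smith-type ordering. First I would observe that the jobs strictly contained in $[t_\ell, t_{\ell+1}]$ form a contiguous block in $\sigma'$: since no critical time lies in the open interval $(t_\ell, t_{\ell+1})$, and a job strictly contained in this interval neither starts before $t_\ell$ nor ends after $t_{\ell+1}$, these jobs occupy some set of time slots all inside $[t_\ell, t_{\ell+1}]$; moreover reordering them among themselves changes neither the set of occupied time slots (the total processing time is unchanged) nor the feasibility with respect to other jobs' release times and due dates (any job whose processing interval is strictly inside $(t_\ell, t_{\ell+1})$ has $r^i_j \le t_\ell$ and $d^i_j \ge t_{\ell+1}$, so it may be placed anywhere within the block). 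Hence it suffices to show that, among all orderings of this fixed block of jobs occupying a fixed union of time intervals, the ordering by $\preceq^i_{\RR,\ell}$ minimizes the contribution of these jobs to the objective, and that the contribution of all other jobs is untouched.

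Next I would set up the exchange argument proper. Suppose in $\sigma'$ two jobs $J$ (of type $j$) and $J'$ (of type $j'$) are processed consecutively within the block, with $J$ immediately before $J'$, but $j' \prec^i_{\RR,\ell} j$ (i.e.\ they are "out of order"). Swapping $J$ and $J'$ keeps the start time of $J$ equal to the old start time of $J$, makes $J'$ finish earlier, makes $J$ finish later, and leaves the completion times of all jobs scheduled after this pair unchanged (the block still ends at the same time). So the only change in the objective comes from the two terms for $J$ and $J'$. For $\RR = \sum w_j C_j$: if $s$ is the common start time of the pair, the old contribution is $w_{j}(s + p^i_j) + w_{j'}(s + p^i_j + p^i_{j'})$ and the new is $w_{j'}(s + p^i_{j'}) + w_j(s + p^i_{j'} + p^i_j)$; the difference (old minus new) is $w_{j} p^i_{j'} - w_{j'} p^i_j \cdot(-1)$, which one checks equals $w_j p^i_{j'} - w_{j'} p^i_j$... more carefully: old $-$ new $= w_{j'} p^i_j - w_j p^i_{j'}$, wait — I would just compute it cleanly and conclude that old $\ge$ new exactly when $w_j/p^i_j \ge w_{j'}/p^i_{j'}$, i.e.\ when $J$ comes before $J'$ in $\preceq^i_{\RR,\ell}$. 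For $\RR = \sum w_j F_j$, since $F_J = C_J - r^i(J)$ and the $r^i$ terms are constant, the same computation with the same ratios $w_j/p^i_j$ applies. For $\RR = \sum w_j T_j$ restricted to jobs strictly inside $[t_\ell, t_{\ell+1}]$: every such job $J$ satisfies $d^i(J) \le t_\ell$ or $d^i(J) \ge t_{\ell+1}$ — but here I must be careful, as a job $J$ strictly inside the interval has completion time in $(t_\ell, t_{\ell+1})$ and whether it is tardy depends only on whether $d^i(J) \le t_\ell$ (since if $d^i(J) \ge t_{\ell+1}$ it is not tardy; due dates between $t_\ell$ and $t_{\ell+1}$ cannot occur as these are critical times). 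So $T_J = C_J - d^i(J)$ if $d^i(J) \le t_\ell$, i.e.\ a linear function of $C_J$ with coefficient $w^i_{\ell,j}$ as defined, and $T_J = 0$ otherwise; the exchange computation then uses the ratios $w^i_{\ell,j}/p^i_j$, matching the definition of $\preceq^i_{\RR,\ell}$.

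Finally I would close the argument by the usual bubble-sort / finite-descent reasoning: repeatedly swapping adjacent out-of-order pairs strictly decreases the number of inversions with respect to a fixed linear extension of $\preceq^i_{\RR,\ell}$ (ties may be broken arbitrarily and consistently) while never increasing the objective, so after finitely many swaps the block is sorted by $\preceq^i_{\RR,\ell}$ and the objective has not increased; applying this inside $[t_\ell,t_{\ell+1}]$ gives exactly the schedule $\sigma$ in the statement. The main obstacle I anticipate is purely bookkeeping rather than conceptual: one must verify precisely that reordering the block does not violate any release time or due date of a job in the block and does not disturb the completion times — hence the objective contributions — of jobs outside the block, and for $\sum w_j T_j$ one must handle the case analysis on whether a job's due date lies at or before $t_\ell$ correctly. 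I would also note explicitly that $\ell_p$-norm is excluded because its optimal orderings within a cycle are not governed by a Smith-type ratio (indeed the load of a machine, hence the $\ell_p$ objective, is invariant under reordering jobs on a single machine), so the lemma is both unnecessary and phrased only for $\RR \in \mathfrak{C}_{\text{poly}} \setminus \{\ell_p\}$.
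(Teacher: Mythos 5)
Your proposal is correct and takes essentially the same two-step route as the paper's proof: the objective decomposes into contributions from the jobs strictly inside $[t_\ell, t_{\ell+1}]$ and from the rest, and then sorting the block by $\preceq^i_{\RR,\ell}$ is optimal by Smith's rule. The paper cites Smith's rule as a black box, whereas you re-derive it via the standard adjacent-pair exchange argument and spell out the tardiness case ($w^i_{\ell,j}$) and the feasibility bookkeeping — the same idea, just more self-contained.
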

\begin{proof}
Let $\JJ$ be the set of all jobs, let $\JJ_{\ell}$ be the set of jobs strictly contained in $[t_\ell, t_{\ell+1}]$, and let $\bar{\JJ}_\ell = \JJ\setminus \JJ_\ell$.
First note that the definition of each $\RR \in \mathfrak{C}_{\text{poly}} \setminus \{\ell_p\}$ is such that the objective function value of a schedule can be decomposed into the contributions of $\JJ_\ell$ and $\bar{\JJ}_\ell$.
Hence, if we only change the order of $\JJ_\ell$ but not the order of $\bar{\JJ}_\ell$, and if the contribution of the jobs in $\JJ_\ell$ does not increase after this change, we are done.
Second, the fact that reordering $\JJ_\ell$ according to $\preceq_{\RR, \ell}^i$ is optimal is exactly the statement of Smith's rule~\cite{Smith1956}, concluding the proof.
\end{proof}

In the following we shall focus on $\sum w_j C_j$, but note that $\sum w_jC_j$ and $\sum w_jF_j$ are equivalent if the release times of the jobs are independent of the assignment.
In our case, these two objective have a difference that is a linear function of the assignment variables $\vex$, and thus the model we present for $\sum w_j C_j$ exhibits a model for $\sum w_j F_j$ as well.
For more details cf.~\cite[Chapter~2]{pinedo2012}.

\paragraph{2D Gantt Charts.}
It is convenient to visualize ordered objectives in a so called 2D Gantt chart; see Figure~\ref{fig:twoCyclesMerging}.
Take for example $\sum w_jC_j$ and let us create a chart in two dimensions $x,y$ for a set of jobs $\mathcal{J}$ and an admissible schedule $\sigma$ of $\mathcal{J}$ (i.e., we focus on one machine).
Each job $J$ is represented by a rectangle of height $w(J)$ and width $p(J)$.
The rectangle for a particular job $\hat{J}$ is drawn at a height equal to the sum of weights of all the jobs scheduled after $\hat{J}$ in $\sigma$, that is, its bottom corners have $y = \sum_{J \in \mathcal{J} : C_J > C_{\hat{J}}} w(J)$, where $C_J$ is a completion time of a job $J$.
The left corners of the rectangle for $\hat{J}$ are at $x$-position equal to the sum of sizes of all jobs scheduled before $\hat{J}$ in $\sigma$, that is, $x = \sum_{J \in \mathcal{J} : C_J < C_{\hat{J}}} p(J)$.
Finally, the value $\sum w_j C_j$ is exactly the area of the rectangles and between these rectangles and the two axes $x,y$.
The value of all of the ordered objectives is expressible in a similar way.
For more discussion on 2D Gantt charts and $\sum w_jC_j$ see e.g.~\cite{KnopKoutecky2017,GoemansWilliamson2000}.
For more discussion on 2D Gantt charts and $\sum w_jT_j$ see Section~\ref{sec:tardiness}.

\begin{lemma}
\label{lem:cyclesScheduleOrderedObjective}
  Fix a machine of kind $i$ and $\RR \in \mathfrak{C}_{\text{poly}} \setminus \{\ell_p\}$, and let\/ $\vex \in \N^d$.
  If there is a schedule $\sigma'$ of\/ $\vex$ with value $\varphi$ under $\RR$, then there is an $\mathbb{N}$-regular schedule $\sigma$ of $\vex$ with value at most~$\varphi$ under $\mathcal{R}$ such that each cycle of $\sigma$ is a realization of some potential cycle $C \in \CC$, and the jobs in each internal cycle $C = C^{\operatorname{int}}_k$ of $\sigma$ are ordered by $\preceq^i_{\mathcal{R},k}$.
\end{lemma}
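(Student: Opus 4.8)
The plan is to mimic the proof of Lemma~\ref{lem:cyclesScheduleCMax}, adding the ordering step from Lemma~\ref{lem:ordered} and checking that none of the transformations increases the objective value under $\RR$. Concretely, I would start from an arbitrary schedule $\sigma'$ of $\vex$ with value $\varphi$ under $\RR$, fix an arbitrary cycle decomposition $\DD'$ (e.g.\ each job in a separate cycle), and run exactly the four transformation steps of the proof of Lemma~\ref{lem:cyclesScheduleCMax}: (i) insert empty internal cycles so that every pair of consecutive critical times $t_k, t_{k+1}$ either has a cycle strictly between them or lies inside one external cycle; (ii) split external cycles so each contains at most one job; (iii) merge consecutive internal cycles lying in the same interval $(t_k,t_{k+1})$ by permuting their job set; (iv) left-shift cycles so every $\lambda(C)$ is integral, giving $\N$-regularity with at most $4d-3$ cycles. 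The one new step is: (v) for each internal cycle $C = C^{\operatorname{int}}_k$, reorder the jobs inside $C$ according to $\preceq^i_{\RR,k}$.

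The key point is that each of these steps is non-increasing for $\RR$. Steps (i) and (iv) only insert idle time or shift cycles left while keeping each job's relative order and not increasing any completion time, so by monotonicity of each $\RR \in \mathfrak{C}_{\text{poly}}\setminus\{\ell_p\}$ in the completion times (for $\sum w_jC_j$, $\sum w_jF_j$, $\sum w_jT_j$ this is immediate since weights are nonnegative), the value does not go up. Step (ii), splitting an external cycle containing one job, does not change the schedule at all (it is a purely combinatorial re-labeling of the decomposition), so the value is unchanged. For step (iii), merging two consecutive internal cycles $C^1, C^2$ inside $(t_k,t_{k+1})$ amounts to re-permuting the jobs of $C^1\cup C^2$; here I would simply choose the merged order to already be $\preceq^i_{\RR,k}$ (anticipating step (v)), and invoke Lemma~\ref{lem:ordered} applied to $\ell=k$ on the single machine under consideration to conclude the value does not increase. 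Step (v) is then literally Lemma~\ref{lem:ordered} applied once per internal cycle, i.e.\ per interval $[t_\ell,t_{\ell+1}]$; since the jobs strictly inside $[t_\ell,t_{\ell+1}]$ in the $\N$-regular schedule are exactly the jobs of the (unique) internal cycle $C^{\operatorname{int}}_\ell$, reordering them by $\preceq^i_{\RR,\ell}$ does not increase the objective, and doing this for all $\ell$ leaves the external cycles (each with a single job, whose internal order is trivial) untouched. Finally, the fact that after steps (i)–(iv) each cycle is a realization of some potential cycle $C\in\CC$, and each $C\in\CC$ has at most one realization, is exactly Lemma~\ref{lem:PCmaxCanonicalCycleDecomposition}, which I would just cite; the reordering in step (v) preserves the cycle structure (it only permutes jobs within a cycle) and hence this property persists.

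The main obstacle is a bookkeeping subtlety rather than a deep difficulty: I must be careful that the orderings $\preceq^i_{\RR,\ell}$ are \emph{consistent across the transformations}, i.e.\ that when I merge cycles in step (iii) and then finalize in step (v) I am always sorting by the same order $\preceq^i_{\RR,\ell}$ associated with the interval in which the (merged) internal cycle sits, so that Lemma~\ref{lem:ordered} genuinely applies — in particular that the jobs being reordered are precisely those ``strictly contained in $[t_\ell,t_{\ell+1}]$''. For $\RR = \sum w_jT_j$ this requires noting that whether a job is late, and hence its contribution, is determined by which interval $[t_\ell,t_{\ell+1}]$ it sits in, which is stable under the reordering within that interval, so the decomposition of the objective into contributions of $\JJ_\ell$ and $\bar\JJ_\ell$ used in the proof of Lemma~\ref{lem:ordered} is valid. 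A minor additional point to check is that reordering jobs inside an internal cycle never pushes a job past its due date or before its release time: all jobs assigned to $C^{\operatorname{int}}_\ell$ were released by $t_\ell$ and have due dates at least $t_{\ell+1}$, and the cycle occupies a subinterval of $[t_\ell,t_{\ell+1}]$, so any permutation of these jobs within the cycle's time window remains feasible. With these checks in place the proof is complete.
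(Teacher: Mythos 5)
Your proof is correct and follows essentially the same approach as the paper's: run the transformations from Lemma~\ref{lem:cyclesScheduleCMax} to get an $\N$-regular cycle decomposition, and interleave a reordering step justified by Lemma~\ref{lem:ordered}. The only cosmetic difference is the order of operations: the paper reorders all jobs strictly inside each interval $[t_\ell,t_{\ell+1}]$ \emph{before} left-aligning (its 3-step version: split externals, reorder per interval, left-align), whereas you reorder both during the merge step and again after left-alignment. Both are valid since Lemma~\ref{lem:ordered} applies to any schedule and left-alignment only decreases completion times, so neither ordering increases the objective. Your extra care about consistency of the orders $\preceq^i_{\RR,\ell}$ under the transformations, and about feasibility of reordering within a cycle with respect to release times and due dates, is correct and matches the implicit reasoning in the paper.
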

\begin{proof}
  The proof of this lemma goes along the lines of the proofs of Lemmas~\ref{lem:cyclesScheduleCMax} and~\ref{lem:PCmaxCanonicalCycleDecomposition}, with the difference that we have to pay attention to the orders (i.e., job type permutations) of individual internal cycles.
  Let $\DD'$ be some cycle decomposition of $\sigma'$.
  We perform the following changes on $\sigma'$ and $\DD'$:
  \begin{enumerate}
  	\item \label{lem:cyclesScheduleOrderedObjective:1} split all external cycles so that afterwards each external cycle contains at most one job (this only affects $\DD'$),
  	\item \label{lem:cyclesScheduleOrderedObjective:2} for each $\ell \in [|T|-1]$, reorder all jobs strictly contained in $[t_\ell, t_{\ell+1}]$ by $\preceq_{\RR, \ell}$,
  	\item \label{lem:cyclesScheduleOrderedObjective:3} left-align the schedule in exactly the same way as in the end of Lemma~\ref{lem:cyclesScheduleCMax}, which makes the schedule $\N$-regular.
  \end{enumerate}
  Call the resulting schedule $\sigma$ and its cycle decomposition $\DD$.
  It is clear that $\sigma$ and $\DD$ are $\mathbb{N}$-regular, so it remains to argue that our construction (points~\ref{lem:cyclesScheduleOrderedObjective:1}--\ref{lem:cyclesScheduleOrderedObjective:3}) does not increase the objective.
  Point~\ref{lem:cyclesScheduleOrderedObjective:1} only changes the cycle decomposition, so it does not affect the objective.
  Point~\ref{lem:cyclesScheduleOrderedObjective:2} is a safe operation by Lemma~\ref{lem:ordered}.
  For an example of this step with $\mathcal{R} = \sum w_jC_j$ and two cycles see Figure~\ref{fig:twoCyclesMerging}.
  Regarding point~\ref{lem:cyclesScheduleOrderedObjective:3} note that left-aligning only possibly decreases the completion times of all jobs, so it does not increase the objective.
\end{proof}

  \begin{figure}[bt]
    \begin{center}
        \usetikzlibrary{fit}

\begin{tikzpicture}[scale=.6]
\pgfmathsetmacro\unitProcessingTime{1cm*.6}
\pgfmathsetmacro\unitWeight{.5cm*.6}

\tikzstyle{job}=[rectangle,draw,ultra thick,fill=orange!30]
\tikzstyle{job3-1}=[job,minimum width=\unitProcessingTime,minimum height=3*\unitWeight]
\tikzstyle{job2-1}=[job,minimum width=\unitProcessingTime,minimum height=2*\unitWeight]
\tikzstyle{job1-1}=[job,minimum width=\unitProcessingTime,minimum height=\unitWeight]
\tikzstyle{job2-2}=[job,minimum width=2*\unitProcessingTime,minimum height=2*\unitWeight]
\tikzstyle{job2-3}=[job,minimum width=3*\unitProcessingTime,minimum height=2*\unitWeight]
\tikzstyle{job1-4}=[job,minimum width=4*\unitProcessingTime,minimum height=\unitWeight]

\tikzstyle{cycle}=[draw,inner sep=0,dotted]
\tikzstyle{trendLine}=[blue,thick]

\fill[black!10] (-8.5,8.75) -- (-.5,8.75) -- (-.5,3.75) -- (-8.5,3.75) -- cycle;
\begin{scope}[local bounding box=c2]
\fill[black!10] (-.5,8) -- (0,8) -- (1,6.5) --  (2,5.25) -- (2.5, 4.75) -- (2.5, 3.75) -- (-.5,3.75) -- cycle;
\node[job3-1] at (0,8) {};
\node[job3-1] at (1,6.5) {};
\node[job2-1] at (2,5.25) {};
\node[job2-3] at (4,4.25) {};
\draw[trendLine] (-.5,8.75) -- (1.5,5.75) -- (2.5, 4.75) -- (5.5, 3.75);
\end{scope}
\node[cycle,fit=(c2),label={[yshift=-.8cm,xshift=-.65cm]45:$C^2$}] {};

\begin{scope}[local bounding box=c1,xshift=-8cm,yshift=6.5cm]
\fill[black!10] (-.5,4.5) -- (0,4.5) -- (1,4) -- (2.5, 3.25) -- (3.5,2.75)  -- (3.5,2.25) -- (-.5,2.25) -- cycle;
\node[job1-1] at (0,4.5) {};
\node[job1-1] at (1,4) {};
\node[job2-2] at (2.5,3.25) {};
\node[job1-4] at (5.5,2.5) {};
\draw[trendLine] (-.5,4.75) -- (0,4.5) -- (1,4) -- (2.5, 3.25) -- (3.5,2.75)  -- (7.5,2.25);
\end{scope}
\node[cycle,fit=(c1),label={[yshift=-.8cm,xshift=1cm]45:$C^1$}] {};

\fill[black] (-13,3.8) -- (-13,11.2) -- (-8.5,11.2) -- (-8.5,3.8) -- cycle;

\draw[dashed] (-.5,11.5) -- (-.5,3.5);

\begin{scope}[yshift=-8.5cm]
\fill[black] (-13,3.8) -- (-13,11.2) -- (-8.5,11.2) -- (-8.5,3.8) -- cycle;
\end{scope}

\begin{scope}[local bounding box=c,xshift=-8cm,yshift=-6cm]
\fill[black!10] (-.5,7.25) -- (.5,7.25) -- (1.5,5.75) -- (2.5,4.75) -- (3.5,4.25) -- (4.5,3.75) -- (6.5, 2.75) -- (9.5, 1.75) -- (9.5,1.25) -- (-.5,1.25) -- cycle;
\node[job3-1] at (0,8) {};
\node[job3-1] at (1,6.5) {};
\node[job2-1] at (2,5.25) {};
\node[job1-1] at (3,4.5) {};
\node[job1-1] at (4,4) {};
\node[job2-2] at (5.5,3.25) {};
\node[job2-3] at (8,2.25) {};
\node[job1-4] at (11.5,1.5) {};
\draw[trendLine] (-.5,8.75) -- (.5,7.25) -- (1.5,5.75) -- (2.5,4.75) -- (3.5,4.25) -- (4.5,3.75) -- (6.5, 2.75) -- (9.5, 1.75) -- (13.5,1.25);
\end{scope}
\node[cycle,fit=(c),label={[yshift=-.8cm,xshift=.3cm]45:$C^1 \cup C^2$}] {};

\draw[thick] (-10.5,11.5) -- (-10.5,-5);
\draw[thick] (6.5,11.5) -- (6.5,-5);
\node at (6.5,-5.2) {$t_{\ell+1}$};
\node at (-10.5,-5.2) {$t_{\ell}$};

\node at (-2,-5.2) {processing time};

\end{tikzpicture}
    \end{center}
    \caption{\label{fig:twoCyclesMerging}%
        Two adjacent internal cycles $C^1,C^2$ between consecutive critical times (upper part) and a cycle obtained by merging them (bottom part).
        Cycle $C^1$ consists of two jobs with size 1 and weight 1, a job with size 2 and weight 2, and a job with size 3 and weight 1.
        Cycle $C^2$ consists of two jobs with size 1 and weight 3, a job with size 1 and weight 2, and a job with size 3 and weight 2.
        Jobs in all the displayed cycles are ordered according to Smith's rule (as witnessed by the blue line whose slope is non-decreasing).
        The objective $\sum w_jC_j$ corresponds to the sum of the areas of the rectangles (orange), below them (gray), and to the left of them (black).
    }
  \end{figure}
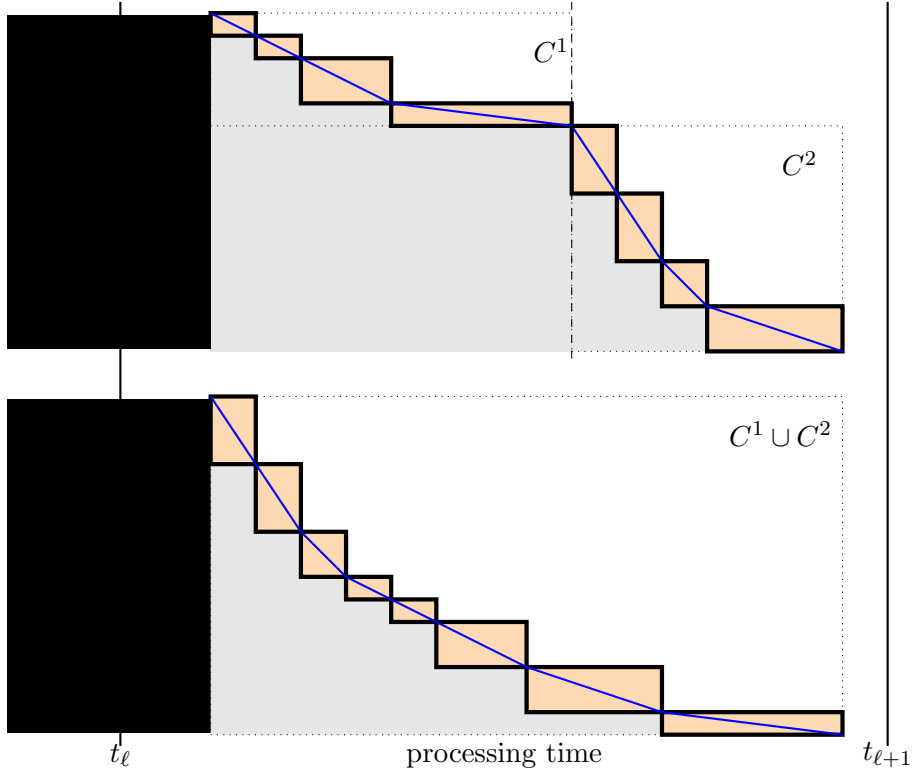

\subsubsection{Sum of Weighted Completion Times}
We use the structure provided by Lemma~\ref{lem:cyclesScheduleOrderedObjective} to express $P^i$ and $f^i$, the polyhedron of all possible configurations for a single machine of kind $i$ and the corresponding objective function, respectively, by a model similar to the one for makespan minimization.
As already shown by Knop and Koutecký~\cite[Corollary~1]{KnopKoutecky2017}, for $\mathcal{R} = \sum w_jC_j$ and without release times and due dates it is possible to express the value of a schedule $\sigma$ under $\mathcal{R}$ as a separable convex function in certain auxiliary variables.
Here we add these variables in order to express the value of the scheduling objective $\sum w_jC_j$ as an extension-separable convex objective over $P^i$.
We stress here that the earlier work of Knop and Koutecký~\cite{KnopKoutecky2017} only deals with the case when there is a single cycle, which makes the structure of the objective and thus the modeling drastically simpler, as we are about to see.
Next, we shall add new variables to our model and then we introduce some further conditions to the model in order to bind them with the variables already present in the model given in the previous part~\eqref{eq:Cmax:sumOfExes}--\eqref{eq:Cmax:cycleVolumeBounds}.

\paragraph{New Variables.}
We introduce new variables into our model in order to express, for each non-empty external cycle in a schedule, how many time units of the total size of the job contained in this cycle are processed before the rightmost critical time contained in its interior and how many after.
Let $C$ be an external cycle and consider the model given by~\eqref{eq:Cmax:sumOfExes}--\eqref{eq:Cmax:cycleVolumeBounds}.
Recall we have a variable $y^i_{j,C}$ which is set to $1$ if a job of type $j$ is scheduled in $C$ (consequently, $C$ is nonempty) and a variable $z^i_C$ which is set to $1$ if any job is scheduled in $C$.
We introduce $2p_{\max}$ binary variables for every potential external cycle $C \in \mathcal{C}^{\text{ext}}$ into our model: $y^i_{C, R,p}$ and $y^i_{C, L, p}$, for each $p \in [p_{\max}]$.
The intended meaning (see Lemma~\ref{lem:sumw_jC_j:criticalTimeSplitExternalCycles} below) is that $y^i_{C, L, p} = 1$ if and only if there is a job scheduled to a realization of an external cycle $C$ such that exactly $p$ units of its size are already processed at time $t_k$, the right critical time of $C$ (i.e., $C = C^{\operatorname{ext}}_{\ell,k}$ for some $\ell \in [k]$).
Now, we enforce the intended meaning of the new variables in the following way:
\begin{align}
  \left( \sum_{p \in [p_{\max}]} p \cdot y^i_{C,L,p} \right) + \left( \sum_{p \in [p_{\max}]} p \cdot y^i_{C,R,p} \right) &= \sum_{j \in [d]} p^i_j y^i_{j,C}    \qquad\qquad &\forall C \in \mathcal{C}^{\text{ext}} \label{eq:sumw_jC_j:first} \\
  \sum_{p \in [p_{\max}]} y^i_{C,L,p} &= z^i_C  & \forall C \in \mathcal{C}^{\text{ext}}  \label{eq:sumw_jC_j:second} \\
  \sum_{p \in [p_{\max}]} y^i_{C,R,p} &= z^i_C  & \forall C \in \mathcal{C}^{\text{ext}}  \label{eq:sumw_jC_j:third} \\
  0 \le y^i_{C,L,p}, y^i_{C,R,p} &\le 1   & \forall C \in \mathcal{C}^{\text{ext}}, \forall p \in [p_{\max}] \label{eq:sumw_jC_j:fourth} \,.
\end{align}
Observe that if now $z^i_C = 1$, then exactly one variable $y^i_{C,L,p}$ is set to $1$ for some $p \in [p_{\max}]$ and the same holds for some $y^i_{C,R,p'}$, where again $p' \in [p_{\max}]$.
Furthermore, in this case we have by~\eqref{eq:sumw_jC_j:first} that
\[
p \cdot y^i_{C,L,p} + p' \cdot y^i_{C,R,p'} = p^i_j\,, \quad\text{where } p^i_j \text{ is the size of the job scheduled to } C.
\]
Finally, we alter the set of constraints~\eqref{eq:Cmax:cycleVolumeBounds}.
\begin{equation}
  \sum_{t_\ell \lhd C \lhd t_k} p^i_j \cdot y^i_{j,C}
  +
  \sum_{C \in \mathcal{C}^{\operatorname{ext}}_{< \ell,k}} p \cdot y^i_{C,L,p}
  \le
  t_k - t_\ell
  \qquad \forall k \in [|T^i|] \,, 1 \le \ell < k \label{eq:sumw_jC_j:cycleVolumeBounds}
\end{equation}

\begin{algorithm}[tb]
	\SetKwProg{Def}{def}{:}{}
	\SetKwFunction{cycleHandler}{handleCycle}
	\SetKwFunction{scheduleEnd}{endOf}
	\DontPrintSemicolon
	\Def{\cycleHandler{$i, \sigma, C, \vey_C$}}{
		\If{$C \in \mathcal{C}^{\operatorname{int}}$}{
			\nlset{Order}\label{alg:sigmaFromYWithOrder:leftCritical}\ForEach{$j \in [d]$ in order $\preceq^i_{\mathcal{R},\leftCritical(C)}$}{
				\For{$\ell = 1$ \KwTo $y_{j,C}$}{
					\nlset{Next}\label{alg:sigmaFromYWithOrder:tofJ} $t \leftarrow \max($\scheduleEnd{$\sigma$}$, \leftCritical(C))$ \;
					$\sigma \leftarrow \sigma \cup \left\{\left( j, [t, t + p^i_j] \right)\right\}$ \;
				}
			}
		}
		\Else{
			Let $p,p'$ be such that $y_{C,L,p} = 1$ and $y_{C,R,p'} = 1$ \;
			Let $k$ be such that $C \in \mathcal{C}^{\operatorname{ext}}_{*,k}$ \;
			\nlset{ext}\label{alg:sigmaFromYWithOrder:externalCritical}$\sigma(\vey) \leftarrow \sigma(\vey) \cup \left\{\left( j, [t_k - p, t_k + p'] \right)\right\}$ \;
		}
	}

	\caption{\label{alg:sigmaFromYWithOrder}
		We only redefine the \texttt{handle\_cycle} function, the rest of the algorithm is identical to Algorithm~\ref{alg:sigmaFromY}.
		The algorithm computes a left aligned schedule from a vector $\vey$ by placing jobs in internal cycles in the orders $\preceq^i_{\mathcal{R},k}$.
	}
\end{algorithm}

The following technical lemma shows that the variables we have introduced to the model in this section have their intended meaning in $\sigma(\vey, \mathcal{R})$ as produced by Algorithm~\ref{alg:sigmaFromYWithOrder}.
When $\RR$ is fixed we omit it from ``$\sigma(\vey, \RR)$'' and continue to write $\sigma(\vey)$.
Also recall that when a machine of kind $i \in [\kappa]$ is fixed, we often omit the index or parameter $i$, such as when we write $\sigma(\vey)$ instead of $\sigma(i,\vey^i)$.

Consider Algorithm~\ref{alg:sigmaFromYWithOrder}.
Observe that, since we have special branch for external cycles, it is not clear (as it was {e.g.} for Algorithm~\ref{alg:sigmaFromY}) that jobs are not overlapping.
Neither we can be sure that the resulting schedule is regular.
On the other hand, since we deal with the jobs assigned to an internal cycle in nearly the same way as we did in Algorithm~\ref{alg:sigmaFromY}, it still makes sense to define for such a job $J$ the critical time $t(J)$, as this is a usefull notion in proofs.
Indeed we would like to have this for all jobs the vector $\vex$ assigns to a particular machine; this can be achieved by a slight modification of the definition of the critical time $t(J)$.
Let $J$ be a job assigned to a (potential) cycle $C$
\[
  t(J) =
  \begin{cases}
    \operatorname{left}(C)  & \text{if $J$ starts in a critical time in $\sigma(\vey)$ (note that it is then $\operatorname{left}(C)$ if $J$ is assigned to the cycle $C$)} \\
    t(\hat{J})              & \text{for a predecessor $\hat{J}$ of $J$ in } \sigma(\vey), otherwise
  \end{cases} \,.
\]

\begin{lemma}\label{lem:sumw_jC_j:criticalTimeSplitExternalCycles}
  Fix a machine of kind $i \in [\kappa]$ and let $\vex \in \mathbb{N}^d$.
  There exists an $\mathbb{N}$-regular schedule of $\vex$ if and only if there exists $\vey,\vez$ such that all of the constraints \eqref{eq:Cmax:sumOfExes}--\eqref{eq:Cmax:incompatibleCyclesDisable} and \eqref{eq:sumw_jC_j:first}--\eqref{eq:sumw_jC_j:cycleVolumeBounds} are satisfied.
  Moreover, $\sigma(\vey)$ is one such $\mathbb{N}$-regular schedule.
  Furthermore, $y^i_{C,L,p} = 1$ and $y^i_{C,R,p'} = 1$ for a potential external cycle $C = C^{\operatorname{ext}}_{\ell,k}$ for $\ell,k \in \left\{ 2, \ldots, T^i \right\}$ with $\ell \le k$
  if and only if
  there is a job $J$ with size $p^i_J = p+p'$ assigned to a realization of $C$ in $\sigma(\vey)$ such that at time $t_k$ exactly $p$ units of the total size of the job $J$ on a machine of kind $i$ are processed.
\end{lemma}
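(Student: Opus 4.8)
Approach. The statement is the correctness lemma for the refined model of $\sum w_j C_j$, so the natural plan is to mirror the proof of Lemma~\ref{lem:modelPlacesCyclesCorrectly}, upgrading it with (a) the stronger canonical‑schedule guarantee of Lemma~\ref{lem:cyclesScheduleOrderedObjective} and (b) the new auxiliary variables $y^i_{C,L,p},y^i_{C,R,p}$ together with constraints \eqref{eq:sumw_jC_j:first}--\eqref{eq:sumw_jC_j:cycleVolumeBounds}. Fix the machine kind $i$ and drop it from the notation. Note that the orderings $\preceq_{\RR,\ell}$ used by Algorithm~\ref{alg:sigmaFromYWithOrder} only determine the arrangement of jobs \emph{inside} internal cycles and therefore play no role in whether $\sigma(\vey)$ is a valid $\mathbb{N}$‑regular schedule (just as the permutation $\pi$ was irrelevant in Lemma~\ref{lem:cyclesScheduleCMax}); so the whole argument is objective‑free.

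Forward direction. Suppose $\vex$ admits an $\mathbb{N}$‑regular schedule. By Lemma~\ref{lem:cyclesScheduleOrderedObjective} we may assume it has an $\mathbb{N}$‑regular cycle decomposition in which every cycle realizes a distinct potential cycle $C\in\CC$. Read off $y_{j,C}$ as the number of type‑$j$ jobs in the realization of $C$, set $z_C=\sum_j y_{j,C}$ for $C\in\CC^{\operatorname{ext}}$ and $x_j=\sum_C y_{j,C}$, and for every non‑empty external cycle $C=C^{\operatorname{ext}}_{\ell,k}$ carrying a job $J$ let $p$ be the number of time units of $J$ already processed at the rightmost critical time $t_k$ in its interior, $p'=p^i_J-p$, and put $y_{C,L,p}=y_{C,R,p'}=1$ (all other such variables $0$). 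That $1\le p,p'\le p_{\max}$ is exactly what the external‑cycle clause of $\chi^i_{j,C}$ in \eqref{eq:chiDefinition} guarantees ($t_\ell,t_k$ are interior points of the job, which fits into $(r^i_j,d^i_j)$). Constraints \eqref{eq:Cmax:sumOfExes}--\eqref{eq:Cmax:incompatibleCyclesDisable} then hold verbatim as in Lemma~\ref{lem:modelPlacesCyclesCorrectly}; \eqref{eq:sumw_jC_j:first}--\eqref{eq:sumw_jC_j:fourth} hold by construction (with $p+p'=p^i_J$ in \eqref{eq:sumw_jC_j:first}); and the modified volume constraints \eqref{eq:sumw_jC_j:cycleVolumeBounds} hold because their left‑hand side equals precisely the amount of processing the schedule performs inside $[t_\ell,t_k]$ (internal cycles entirely inside, plus the \emph{left parts} $p$ of external cycles straddling $t_k$), which is at most $t_k-t_\ell$.

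Backward direction. Suppose $\vex,\vey,\vez$ satisfy all the constraints, and let $\sigma=\sigma(\vey)$ be the output of Algorithm~\ref{alg:sigmaFromYWithOrder}. Whenever $z_C=1$ for $C\in\CC^{\operatorname{ext}}$, constraints \eqref{eq:sumw_jC_j:second}--\eqref{eq:sumw_jC_j:fourth} force a unique $p$ with $y_{C,L,p}=1$ and a unique $p'$ with $y_{C,R,p'}=1$, and \eqref{eq:sumw_jC_j:first} gives $p+p'=p^i_J$ for the unique job $J$ with $y_{j,C}=1$; hence line~\ref{alg:sigmaFromYWithOrder:externalCritical} is well defined and places $J$ on an interval of the correct length. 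It then remains to check that $\sigma$ is a schedule of $\vex$ and is $\mathbb{N}$‑regular: (i) interiors of assigned intervals are pairwise disjoint, (ii) every job respects its release time and due date, (iii) the makespan is at most $\bar C_{\max}$, and (iv) every gap has the form $(t,t')$ with $t'\in T\cup\{\infty\}$. Item (ii) follows from the definition of $\chi^i_{j,C}$ exactly as in Lemma~\ref{lem:modelPlacesCyclesCorrectly} — for external cycles $\chi$ was designed so that any admissible split places $J$ inside $(r^i_j,d^i_j)$ — and (iii) from the trimming of all due dates to $\bar C_{\max}$. For (i) and (iv) one argues by induction along the order $\prec$ in which Algorithm~\ref{alg:sigmaFromYWithOrder} processes cycles, showing that when $C$ is processed the current right endpoint of $\sigma$ is at most the left end of where $C$ is placed: for an internal cycle this is automatic from the $\max(\cdot,\leftCritical(C))$ in the ``Next'' step; for an external cycle $C^{\operatorname{ext}}_{\ell,k}$ with split $(p,p')$, the cycles already processed that occupy part of $[t_{\leftCritical(C)},t_k]$ are exactly the ones counted by \eqref{eq:sumw_jC_j:cycleVolumeBounds} for that interval (the last external job contributing its left part $p$), so the endpoint is $\le t_k-p$, while the incompatibility constraints \eqref{eq:Cmax:incompatibleCyclesDisable} force every cycle whose span lies inside that of $C$ to be empty, so nothing is later placed into the interior of $C$; the right part $[t_k,t_k+p']$ is disjoint from later material for the symmetric reason using \eqref{eq:sumw_jC_j:cycleVolumeBounds} on an interval $[t_k,t_{k''}]$. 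Regularity (iv) then follows as in Lemma~\ref{lem:leftAlignedScheduleCMax} for internal‑cycle jobs (each has $t(J)\in T$) and is immediate for external‑cycle jobs, which are placed at the integral points $t_k-p,\,t_k+p'$.

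``Furthermore'' part and main obstacle. The claimed equivalence for $y^i_{C,L,p}$ is now a direct consequence: in $\sigma(\vey)$ line~\ref{alg:sigmaFromYWithOrder:externalCritical} places the job assigned to $C=C^{\operatorname{ext}}_{\ell,k}$ on $[t_k-p,t_k+p']$ with $y_{C,L,p}=y_{C,R,p'}=1$, so exactly $p$ of its $p^i_J=p+p'$ units are done by time $t_k$; conversely, given such a job in the realization of $C$ in $\sigma(\vey)$, $z_C=1$ together with \eqref{eq:sumw_jC_j:second}, \eqref{eq:sumw_jC_j:third} and \eqref{eq:sumw_jC_j:first} forces $y_{C,L,p}=1$ and $y_{C,R,p^i_J-p}=1$. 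I expect the delicate step to be item (i) of the backward direction: unlike in Lemma~\ref{lem:modelPlacesCyclesCorrectly}, external‑cycle jobs are not greedily left‑aligned but pinned to their rightmost interior critical time, so one must make precise the bookkeeping of ``what has been scheduled in $[t_{\leftCritical(C)},t_k]$ before $C$ is processed'', match it against \eqref{eq:sumw_jC_j:cycleVolumeBounds}, and simultaneously invoke \eqref{eq:Cmax:incompatibleCyclesDisable} to forbid anything from being placed into the interior of $C$ afterwards; everything else is routine bookkeeping inherited from Lemmas~\ref{lem:modelPlacesCyclesCorrectly} and~\ref{lem:cyclesScheduleOrderedObjective}.
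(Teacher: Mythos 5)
Your proof is correct and follows essentially the same two-step strategy as the paper's: in the forward direction, read off the old variables from an $\mathbb{N}$-regular cycle decomposition and set $y^i_{C,L,p},y^i_{C,R,p'}$ from the integral split $(p,p')=(p^i(J)-(C(J)-t_k),\,C(J)-t_k)$ of the external job at the cycle's rightmost interior critical time; in the backward direction, extract the unique $(p,p')$ from \eqref{eq:sumw_jC_j:second}--\eqref{eq:sumw_jC_j:first}, and then invoke the volume constraints \eqref{eq:sumw_jC_j:cycleVolumeBounds} for $[t(J),t_{\rightCritical(C)}]$ together with the incompatibility constraints \eqref{eq:Cmax:incompatibleCyclesDisable} to show that Algorithm~\ref{alg:sigmaFromYWithOrder} produces an admissible $\mathbb{N}$-regular schedule. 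The only inessential deviations are that you additionally cite Lemma~\ref{lem:cyclesScheduleOrderedObjective} (which is not needed for the feasibility-only statement proven here) and that you spell out the non-overlap check for external cycles more explicitly than the paper does; neither affects correctness.
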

\begin{proof}
  We begin with showing that for an $\N$-regular schedule of $\vex$ there exists $\vey,\vez$ satisfying the model.
  We know that an $\N$-regular schedule comes with its cycle decomposition and that this allows us to assign ``the old variables'' (i.e., those presented already in our model for $C_{\max}$).
  Thus, using the same arguments as in Lemma~\ref{lem:leftAlignedScheduleCMax} we can assume that \eqref{eq:Cmax:sumOfExes}--\eqref{eq:Cmax:incompatibleCyclesDisable} are satisfied.
  In order to assign new variables and satisfy the conditions \eqref{eq:sumw_jC_j:first}--\eqref{eq:sumw_jC_j:cycleVolumeBounds} we will use further properties of regular schedules.
  Since we have introduced new variables only for external cycles, we will now focus on those.
  Observe that if an external cycle is empty, we have to set all of the variables associated with it to $0$ and this way we satisfy \eqref{eq:sumw_jC_j:first}--\eqref{eq:sumw_jC_j:fourth}.
  Let $C$ be a nonempty external cycle of the assumed regular schedule and let $J$ be the job of type $j$ assigned to it.
  Suppose $C$ is a realization of the potential cycle $C^{\operatorname{ext}}_{\ell,k}$.
  We claim that there exist $p,p' \in \mathbb{N}$ with $p+p'=p^i(J)$ such that the completion time of $J$ in the regular schedule for $\vex$ is exactly $t_k+p'$.
  This follows directly from the regularity of the schedule and integrality of all of the job sizes.
  Now, we set
  \[
    y^i_{C,L,p} = 1 \qquad\qquad\text{and}\qquad\qquad y^i_{C,R,p'} = 1
  \]
  and we set all other newly introduced variables for $C$ to $0$.
  It is straightforward to check that such an assignment satisfies \eqref{eq:sumw_jC_j:first}--\eqref{eq:sumw_jC_j:fourth}.
  Thus, it remains to verify the conditions \eqref{eq:sumw_jC_j:cycleVolumeBounds}.
  However, this is not hard, since we have assumed a (regular) schedule for $\vex$ and \eqref{eq:sumw_jC_j:cycleVolumeBounds} only assures that it is possible to fit all cycles that are between a pair of critical times into the time window given by the two critical times leaving the correct time windows for the jobs assigned to external cycles.
  This finishes the proof of the first part.

  Now, we check the validity of the presented model.
  To that end, we check that the schedule $\sigma(\vey)$ produced by Algorithm~\ref{alg:sigmaFromYWithOrder} (note that the difference between this and Algorithm~\ref{alg:sigmaFromY} is rather subtle -- this time we have to schedule jobs assigned to internal cycle in a specific order and schedule jobs to external cycles according to the newly introduced variables) is admissible.
  Observe that, since the order does not affect external cycles (as there is only one job assigned to these), we only have to check validity of the line labeled~\ref{alg:sigmaFromYWithOrder:leftCritical} for internal cycles.
  If $C = C^{\operatorname{int}}_k$ is a cycle (with $\vey_C \neq \bm{0}$), the line labeled~\ref{alg:sigmaFromYWithOrder:leftCritical} uses the correct order, since indeed we have $k = \leftCritical(C)$.
  Note that all implications of constraints \eqref{eq:Cmax:sumOfExes}--\eqref{eq:Cmax:incompatibleCyclesDisable} are the same as in Lemma~\ref{lem:modelPlacesCyclesCorrectly}, that is,
  \begin{itemize}
    \item $z_C = 1$ for an external cycle $C$ if and only if there exists $j \in [d]$ such that $y_{j,C} = 1$,
    \item if $C, C'$ are noncompatible cycles, then at least one of them is empty, and
    \item all jobs specified in $\vex$ are assigned to some (potential) cycle.
  \end{itemize}
  Now, observe that, due to \eqref{eq:sumw_jC_j:second} and \eqref{eq:sumw_jC_j:third}, if $z_C = 1$ for an external cycle, then there exists $p,p' \in [p_{\max}]$ such that
  \begin{align*}
    y_{C,L,p}  &= 1 &\text{and}   &&y_{C,L,\hat{p}} &= 0   &&\forall \hat{p} \in [p_{\max}] \setminus \{p\} \\
    y_{C,R,p'} &= 1 &\text{and}   &&y_{C,R,\hat{p}} &= 0   &&\forall \hat{p} \in [p_{\max}] \setminus \left\{p'\right\} \,.
  \end{align*}
  Furthermore, it follows from \eqref{eq:sumw_jC_j:first} that we have $p + p' = p^i_j$ if $y_{j,C} = 1$, that is, if a job of type $j$ is meant to be scheduled to an external cycle $C = C^{\operatorname{ext}}_{\ell,k}$, then $p,p'$ are the time this job is processed prior to and after $t_k$, respectively.

  Finally, we use \eqref{eq:sumw_jC_j:cycleVolumeBounds} to prove that $\sigma(\vey)$ is admissible.
  Recall that we have that $\sigma(\vey)$ cannot schedule a job to start before its release date.
  Suppose a job $J$ is assigned by $\sigma(\vey)$ to be in a realization of an internal potential cycle $C^{\operatorname{int}}_{\hat{k}}$.
  Then, \eqref{eq:sumw_jC_j:cycleVolumeBounds} applied for $t_\ell = t(J)$ and $t_k = t_{\hat{k}+1}$ (and $t_\ell = t_{\hat{k}}$ and $t_k = t_{\hat{k} + 1}$) implies that it is possible to fit $J$ in the intended interval.
  Suppose that the job $J$ is assigned by $\sigma(\vey)$ to be in a realization of an external potential cycle $C^{\operatorname{ext}}_{\ell',k}$ and let $p,p'$ be as above.
  Then, \eqref{eq:sumw_jC_j:cycleVolumeBounds} applied for $t_\ell = t(J)$ implies that it is possible to process at least $p$ time units of $J$ in the interval prior to $t_k$.
  Clearly, $\sigma(\vey)$ is an $\mathbb{N}$-regular schedule.
  We conclude that $\sigma(\vey)$ is admissible and has the desired properties, so the lemma follows.
\end{proof}

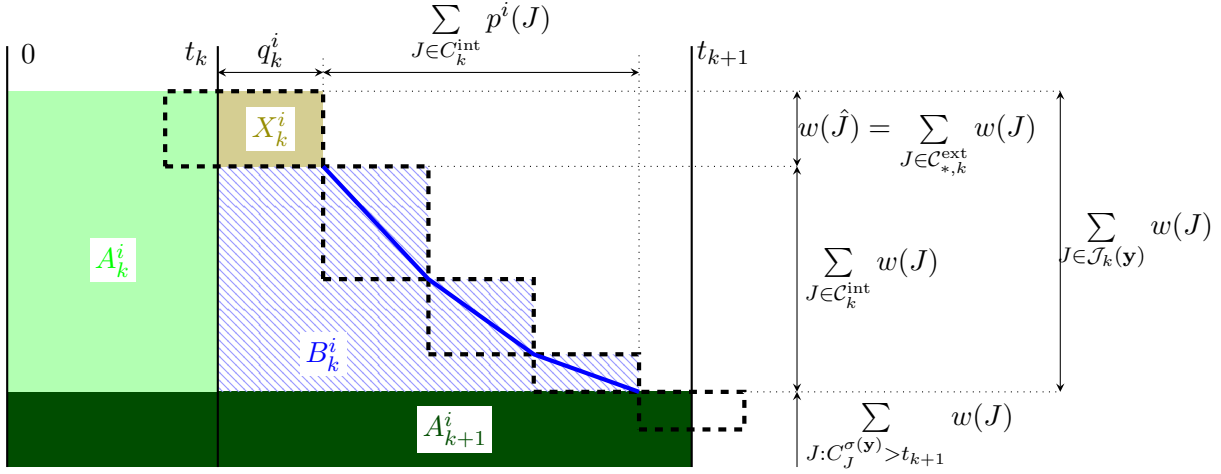
\begin{figure}[bt]
     \usetikzlibrary{patterns}

\begin{tikzpicture}[yscale=.5,xscale=.7]
\tikzstyle{job}=[ultra thick,dashed]
\tikzstyle{fakeJob}=[dotted,red,ultra thick]

\fill[green!30!black] (-4,-1) -- (-4,1) -- (9,1) -- (9,-1) -- cycle;
\fill[green!30] (0,1) -- (0,9) -- (-4,9) -- (-4,1) -- cycle;

\fill[olive!40] (0,9) -- (0,7) -- (2,7) -- (2,9) -- cycle;

\fill[pattern=north west lines,pattern color=blue!40] (0,7) -- (2,7) -- (4,7) -- (4,4) -- (6,4) -- (6,2) -- (8,2) -- (8,1) -- (0,1) -- cycle;

\draw[thick] (0,10.2) -- (0,-1);
\draw[thick] (9,10.2) -- (9,-1);
\draw[thick] (-4,10.2) -- (-4,-1);
\node at (-.4,10) {$t_k$};
\node at (9.6,10) {$t_{k+1}$};
\node at (-3.6,10) {$0$};

\draw[job] (-1,7) -- (-1,9) -- (2,9) -- (2,7) -- cycle;
\draw[job] (2,7) -- (4,7) -- (4,4) -- (2,4) -- cycle;
\draw[job] (4,4) -- (6,4) -- (6,2) -- (4,2) -- cycle;
\draw[job] (6,2) -- (8,2) -- (8,1) -- (6,1) -- cycle;
\draw[job] (8,1) -- (10,1) -- (10,0) -- (8,0) -- cycle;

\draw[blue, ultra thick] ((2,7) -- (4,4) -- (6,2) -- (8,1);

\node[fill=white!10,text=blue,inner sep=2pt] at (2,2) {$B^i_k$};
\node[fill=white!10,text=green!30!black,inner sep=2pt] at (4.5,0) {$A^i_{k+1}$};
\node[fill=white!10,text=green,inner sep=2pt] at (-2,4.5) {$A^i_k$};
\node[fill=white!10,text=olive,inner sep=2pt] at (1,8) {$X^i_k$};

\draw[dotted] (4,7) -- (11,7);
\draw[dotted] (2,9) -- (16,9);
\draw[dotted] (10,1) -- (16,1);

\draw[dotted] (2,9) -- (2,10);
\draw[dotted] (8,2) -- (8,10);

\draw[->,>=stealth] (11,-1) to node[midway,xshift=1.5cm,yshift=-.1cm] {$\sum\limits_{J : C^{\sigma(\vey)}_J > t_{k+1}} w(J)$} (11,1);
\draw[<->,>=stealth] (11,1) to node[midway,xshift=1cm] {$\sum\limits_{J \in \mathcal{C}^{\operatorname{int}}_k} w(J)$} (11,7);
\draw[<->,>=stealth] (11,7) to node[midway,xshift=1.6cm,yshift=-.2cm] {$w(\hat{J}) = \sum\limits_{J \in \mathcal{C}^{\operatorname{ext}}_{*,k}} w(J)$} (11,9);
\draw[<->,>=stealth] (0,9.5) to node[midway,yshift=.3cm] {$q^i_k$} (2,9.5);
\draw[<->,>=stealth] (2,9.5) to node[midway,yshift=.5cm] {$\sum\limits_{J \in C^{\operatorname{int}}_k} p^i(J)$} (8,9.5);
\draw[<->,>=stealth] (16,1) to node[midway,xshift=1cm] {$\sum\limits_{J \in \mathcal{J}_k(\vey)} w(J)$} (16,9);
\end{tikzpicture}
      \caption{\label{fig:sumw_jC_jSegment}%
         A 2D Gantt chart visualization of the decomposition of the objective $\sum w_j C_j$ between two consecutive critical times.
      }
\end{figure}

\paragraph{Objective Function.}
We decompose the objective function (2D Gantt chart) into several areas and show that each of them is expressible as a separable convex function in $\vey$ and auxiliary variables derived from $\vey$.
Before we do so, we formally describe all of the areas.
There are three areas for each interval $\interval{t_{k-1}}{t_k}$.
We shall claim that, for a feasible vector $\vey$, the value of its associated schedule $\sigma(\vey,\sum w_j C_j)$ is
\begin{equation}\label{eq:sumw_jC_j:objectiveToAreasEquation}\tag{objective}
  \sum_{J \in \mathcal{J}(\vex)} w(J) \cdot C^{\sigma(\vey)}_J
  =
  \sum_{k = 1}^{|T^i| - 1} \left( A^i_k(\vey) + B^i_k(\vey) + X^i_k(\vey) \right) \,,
\end{equation}
where \( \mathcal{J}(\vex) \) denotes the set of jobs assigned by $\vex$ to the fixed machine of kind $i$ and \( C^{\sigma(\vey)}_{J} \) is the completion time of job $J$ in the schedule $\sigma(\vey)$.
In what follows, please refer to Figure~\ref{fig:sumw_jC_jSegment}.
Let $\hat{C} \in \mathcal{C}^{\operatorname{ext}}_{*,k}$ be the nonempty external cycle whose last interior critical time in $\sigma(\vey)$ is $t_k$, if such a cycle exists in $\sigma(\vey)$.
Denote $\hat{J}$ the job assigned to the external cycle $\hat{C}$ if $\hat{C}$ is defined and let $\hat{J}$ be any job otherwise.
Define $q^i_k$ to be the amount of time for which the job $\hat{J}$ assigned to an external cycle $\hat{C}$ is processed after the critical time $t_k$ if $\hat{C}$ is defined and $0$ otherwise, that is, we have $q^i_k = C^{\sigma(\vey)}_{\hat{J}} - t_{k}$.
let $\mathcal{J}_k(\vey)$ be the set of jobs with completion time in the interval $\interval[open left]{t_k}{t_{k+1}}$, that is, $\mathcal{J}_k(\vey) = \left\{ J \in \mathcal{J}(\vey) \mid t_k < C^{\sigma(\vey)}_J \le t_{k+1} \right\}$.
Note that $\hat{J} \in \mathcal{J}_k(\vey)$.
Intuitively, we are going to split the contribution of each job towards the total objective exactly at the closest critical time preceeding its completion.\mkcom{I don't understand this sentence.}\dkcom{better now?}
\begin{itemize}
  \item
  The area $A^i_k$ is a rectangle of width $t_{k}$ and height equal to the sum of weights of all jobs $J$ with completion time $C^{\sigma(\vey^i)}_J$ in the interval $\interval[open left]{t_k}{t_{k+1}}$, that is,
  \[
  A^i_k
  =
  t_k \cdot \sum_{J \in \mathcal{J}_k(\vey)} w(J) \,.
  \]
  \item
  The area $B^i_k$ is equal to sum of weighted completion times in an auxiliary scheduling problem, where we have a single machine of kind $i$ and the task is to schedule all jobs contained in the cycle $C^{\operatorname{int}}_k$.
  All jobs are released at time $q^i_k$ and have their original weight.
  Formally, we have
  \[
    B^i_k
    =
    \sum_{J \in C^{\operatorname{int}}_k} w(J) \cdot \left( C^{\sigma(\vey)}_{J} - t_k \right) \,.
  \]
  \item
  Finally, the area $X^i_k$ is a rectangle of width $q^i_k$ and height $w(\hat{J})$.
  That is, \mbox{$X^i_k = \left( C^{\sigma(\vey)}_{\hat{J}} - t_k \right) \cdot w(\hat{J})$}.
  Clearly, we have $X^i_1 = 0$.
\end{itemize}
Combining all the above described areas we obtain equation~\eqref{eq:sumw_jC_j:objectiveToAreasEquation} for the value of the objective function:
\begin{lemma}\label{lem:sumw_jC_j:objectiveToAreasEquationHolds}
  Let $(\vex,\vey,\vez)$ satisfy \eqref{eq:Cmax:sumOfExes}--\eqref{eq:Cmax:incompatibleCyclesDisable} and \eqref{eq:sumw_jC_j:first}--\eqref{eq:sumw_jC_j:cycleVolumeBounds}.
  Then the equation~\eqref{eq:sumw_jC_j:objectiveToAreasEquation} holds for $\sigma(\vey)$ given by Algorithm~\ref{alg:sigmaFromYWithOrder} and gives the objective function value of $\sigma(\vey)$ under $\sum w_j C_j$.
\end{lemma}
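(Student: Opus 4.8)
The proof of Lemma~\ref{lem:sumw_jC_j:objectiveToAreasEquationHolds} amounts to verifying that the decomposition of the 2D Gantt chart into the regions $A^i_k$, $B^i_k$, and $X^i_k$ exactly accounts for the area $\sum_{J} w(J) C^{\sigma(\vey)}_J$, using the structural properties of $\sigma(\vey)$ established in Lemma~\ref{lem:sumw_jC_j:criticalTimeSplitExternalCycles}. The plan is to first invoke Lemma~\ref{lem:sumw_jC_j:criticalTimeSplitExternalCycles} to conclude that $\sigma(\vey)$ is a well-defined $\N$-regular schedule in which every internal cycle $C^{\text{int}}_k$ is realized in $[t_k,t_{k+1}]$ with jobs ordered by $\preceq^i_{\RR,k}$, each external potential cycle $C^{\text{ext}}_{\ell,k}$ carries at most one job split as $p+p'=p^i_j$ around $t_k$, and every job of $\vex$ is placed. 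In particular each job $J$ has a well-defined completion time $C^{\sigma(\vey)}_J$ lying in some half-open interval $\interval[open left]{t_k}{t_{k+1}}$, which partitions the job set into the sets $\mathcal J_k(\vey)$; the quantities $q^i_k$ and $\hat J$ are then unambiguously defined.

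Next I would argue the identity \emph{interval by interval}. Fix $k$ and look at the horizontal strip of the Gantt chart of height $\sum_{J\in\mathcal J_k(\vey)} w(J)$ that sits at vertical position $\sum_{J:C_J>t_{k+1}} w(J)$. The total contribution $\sum_{J} w(J) C^{\sigma(\vey)}_J$ equals the area under the Smith ``staircase'', and a telescoping of the horizontal cuts at the critical times $t_1<\cdots<t_{|T^i|}$ splits this total area into strips, one per interval $\interval[open left]{t_k}{t_{k+1}}$ — this is precisely the informal statement that we split each job's contribution at the closest critical time preceding its completion. Within strip $k$ the area further decomposes into three pieces: the rectangle of width $t_k$ and height $\sum_{J\in\mathcal J_k(\vey)} w(J)$ (this is $A^i_k$, the ``black'' part to the left of $t_k$), the area occupied by the staircase of the jobs of the internal cycle $C^{\text{int}}_k$ measured relative to its left edge, which starts at $x=q^i_k$ because the tail $q^i_k$ of the external job $\hat J$ must be processed first after $t_k$ (this is $B^i_k = \sum_{J\in C^{\text{int}}_k} w(J)(C^{\sigma(\vey)}_J - t_k)$), and finally the rectangle of width $q^i_k$ and height $w(\hat J)$ accounting for the portion of $\hat J$'s weight that lies above the interval $[t_k,t_k+q^i_k]$ (this is $X^i_k$). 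Summing over $k$ yields \eqref{eq:sumw_jC_j:objectiveToAreasEquation}. One should also check the boundary case $k=1$, where there is no external cycle ending at $t_1$, so $\hat J$ is arbitrary, $q^i_1=0$, and hence $X^i_1=0$, consistent with the stated formula.

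The main obstacle, and the step deserving the most care, is the bookkeeping around external cycles: an external job $\hat J$ whose interior critical times are $t_\ell,\dots,t_k$ straddles several intervals, and its weight $w(\hat J)$ contributes to $A^i_{k'}$ for every $k'$ with its completion time after $t_{k'}$, while only the ``new'' slice of its processing at $t_k$ is bundled into the strip-$k$ decomposition via $X^i_k$ and its effect on the left-offset of $B^i_k$. I would handle this by being explicit that $\hat J\in\mathcal J_k(\vey)$ (so its weight is already counted in the height of $A^i_k$), that its processing in $[t_k, t_k+q^i_k]$ forces $B^i_k$'s staircase to begin at $x$-offset $q^i_k$, and that the ``missing'' rectangular sliver above $[t_k,t_k+q^i_k]$ is exactly $X^i_k$; the constraints \eqref{eq:sumw_jC_j:first}--\eqref{eq:sumw_jC_j:cycleVolumeBounds} guarantee $p+p'=p^i_j$, $q^i_k=p'$, and that everything fits into $[t_\ell,t_{k+1}]$, so the geometry is consistent and no area is double-counted or omitted. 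A clean way to organize this is to compute, for each job $J$ separately, the horizontal strip(s) its weight-rectangle spans and attribute each unit-height horizontal slab to exactly one of $A,B,X$ of exactly one index $k$; since the attribution is a bijection between unit slabs of the chart and the three families of regions, the identity follows. With that attribution spelled out, the remaining verification is routine arithmetic, and the lemma is proved.
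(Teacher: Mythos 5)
Your proof is correct and follows essentially the same route as the paper's: partition the jobs by which interval $(t_k,t_{k+1}]$ their completion time falls into, split each term $w(J)\,C_J$ at $t_k$ into $w(J)t_k + w(J)(C_J-t_k)$, observe that $\mathcal J_k(\vey)=\{\hat J\}\cup C^{\operatorname{int}}_k$, and match the three pieces to $A^i_k$, $X^i_k$, $B^i_k$ respectively. The paper does this directly with sums, while you wrap the same identity in the 2D Gantt-chart language (horizontal strips, area bijection) and explicitly invoke Lemma~\ref{lem:sumw_jC_j:criticalTimeSplitExternalCycles} for the well-definedness of $\sigma(\vey)$, which is a harmless and arguably more careful presentation of the same argument.
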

\begin{proof}
Fix a machine of kind $i$.
We prove that, for all $k \in [|T| - 1]$, the contribution of the jobs scheduled to be completed in the interval $\interval[open left]{t_k}{t_{k+1}}$ in $\sigma(\vey)$ is equal to $A^i_k(\vey) + B^i_k(\vey) + X^i_k(\vey)$.
Observe that $\mathcal{J}(\vey)$ can be partitioned into $\bigcup_{k=1}^{|T|-1} \mathcal{J}_k(\vey)$ and thus it suffices to show that
\[
  \sum_{J \in \mathcal{J}_k(\vey)} w(J)C^{\sigma(\vey)}_J = A^i_k(\vey) + B^i_k(\vey) + X^i_k(\vey) \,.
\]
Observe that if for a job $J \in \mathcal{J}_k(\vey)$ we split its contribution $w(J) \cdot C_J$ at the critical time $t_k$, we have
\[
  w(J) \cdot C^{\sigma(\vey)}_J
  =
  w(J) \cdot \left( t_k + \left( C^{\sigma(\vey)}_J - t_k \right) \right)
  =
  w(J) \cdot t_k + w(J) \cdot \left( C^{\sigma(\vey)}_J - t_k \right)
  \,.
\]
The left summand is the contribution of the job $J$ in the area $A^i_k(\vey)$, and it remains to show that the right summand is the contribution of $J$ to areas $B^i_k(\vey)$ and $X^i_k(\vey)$.
Observe that, for every $k \in [|T^i| - 1]$, we have \( X^i_k(\vey) = w(\hat{J}) \cdot q^i_k \) if there is a job $\hat{J}$ scheduled to an external cycle in $\mathcal{C}^{\operatorname{ext}}_{*,k}$ in the schedule $\sigma(\vey)$ and otherwise \( X^i_k(\vey) = 0 \).
We compute \( X^i_k(\vey) = w(\hat{J}) \cdot q^i_k = w(\hat{J}) \cdot \left( C^{\sigma(\vey)}_{\hat{J}} - t_k \right) \).
Since we have $\mathcal{J}_k(\vey) = \left\{ \hat{J} \right\} \cup \left\{ J \in C^{\operatorname{int}}_k \right\}$, we are done using the definition of the area $B^i_k(\vey)$.
\end{proof}

Before we are able to express the areas $A$ and $X$ in terms of variables of the model, we introduce auxiliary binary variables: $y^i_{j,C,R,p} \in \{0,1\}$ for each $C \in \mathcal{C}^{\operatorname{ext}}, j \in [d], p \in [p_{\max}]$.
We do this in such a way that they express the product of two binary variables $y^i_{j,C}$ and $y^i_{C,R,p}$ already present in our model (we will also use these later when expressing the area $B$).
This can be done straightforwardly using standard ILP tricks for expressing Boolean connectives (essentially we enforce $y^i_{j,C,R,p} = y^i_{j,C} \land y^i_{C,R,p} = y^i_{j,C} \cdot y^i_{C,R,p}$, where all variables are binary).
\begin{equation}\label{eq:sumw_jC_j:binaryProduct}
\begin{rcases*}
y^i_{j,C,R,p} \ge y^i_{j,C} + y^i_{C,R,p} - 1 & \\
y^i_{j,C,R,p} \le y^i_{j,C} & \\
y^i_{j,C,R,p} \le y^i_{C,R,p} &
\end{rcases*} \forall C \in \CC^{\text{ext}},p \in [p_{\max}], j \in [d]
\end{equation}
It is straightforward to verify that the new variables have the intended meaning.

\begin{lemma}\label{lem:sumw_jC_j:computingAandX}
  Fix a machine of kind $i$ and let $(\vex, \vey, \vez)$ satisfy constraints~\eqref{eq:Cmax:sumOfExes}--\eqref{eq:Cmax:incompatibleCyclesDisable} and \eqref{eq:sumw_jC_j:first}--\eqref{eq:sumw_jC_j:binaryProduct}.
  The functions $A^i_k$ and $X^i_k$ are linear in $\vey$, for all $k \in [|T| - 1]$.
\end{lemma}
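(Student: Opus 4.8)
Throughout, $\vey$ denotes the full vector of the model's $y$-variables (the $y^i_{j,C}$, $y^i_{C,L,p}$, $y^i_{C,R,p}$ and the product variables $y^i_{j,C,R,p}$), and I fix a machine of kind $i$. The goal is simply to exhibit $A^i_k$ and $X^i_k$ explicitly as linear forms in $\vey$. By Lemma~\ref{lem:sumw_jC_j:criticalTimeSplitExternalCycles} the schedule $\sigma(\vey)$ is a valid $\N$-regular schedule, so the realized potential cycles and the quantities $\hat C,\hat J,q^i_k$ entering the definitions of $A^i_k,X^i_k$ are well defined, and for each nonempty external cycle $C\in\mathcal{C}^{\operatorname{ext}}_{*,k}$ carrying a job $J$, the unique $p$ with $y^i_{C,R,p}=1$ is the number of units of $J$ processed after $t_k$. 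The only extra structural fact I need is that at most one cycle of $\mathcal{C}^{\operatorname{ext}}_{*,k}$ is nonempty in $\sigma(\vey)$: these cycles are pairwise incompatible (they all contain $t_k$ in their interior), so constraints \eqref{eq:Cmax:incompatibleCyclesDisable} and \eqref{eq:Cmax:cycle_bounds} force at most one of them to have $z^i_C=1$.

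\emph{The form $A^i_k$.} I would use the identity $\mathcal{J}_k(\vey)=\{J\in C^{\operatorname{int}}_k\}\cup\{\hat J\}$ established in the proof of Lemma~\ref{lem:sumw_jC_j:objectiveToAreasEquationHolds}, together with the constraint $z^i_C=\sum_{j\in[d]}y^i_{j,C}$, to obtain $\sum_{J\in\mathcal{J}_k(\vey)}w(J)=\sum_{j\in[d]}w_j\,y^i_{j,C^{\operatorname{int}}_k}+\sum_{C\in\mathcal{C}^{\operatorname{ext}}_{*,k}}\sum_{j\in[d]}w_j\,y^i_{j,C}$, where the second sum equals $w(\hat J)$ since at most one of its terms is nonzero. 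Hence $A^i_k=t_k\Big(\sum_{j\in[d]}w_j\,y^i_{j,C^{\operatorname{int}}_k}+\sum_{C\in\mathcal{C}^{\operatorname{ext}}_{*,k}}\sum_{j\in[d]}w_j\,y^i_{j,C}\Big)$ is linear in $\vey$, the $t_k$ and $w_j$ being constants.

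\emph{The form $X^i_k$.} By definition $X^i_k=w(\hat J)\cdot q^i_k$, which is a priori the product of the two linear forms $w(\hat J)=\sum_{C\in\mathcal{C}^{\operatorname{ext}}_{*,k}}\sum_{j\in[d]}w_j\,y^i_{j,C}$ and (by Lemma~\ref{lem:sumw_jC_j:criticalTimeSplitExternalCycles}) $q^i_k=\sum_{C\in\mathcal{C}^{\operatorname{ext}}_{*,k}}\sum_{p\in[p_{\max}]}p\,y^i_{C,R,p}$. The key step is to linearize this product using the variables $y^i_{j,C,R,p}=y^i_{j,C}\cdot y^i_{C,R,p}$ supplied by \eqref{eq:sumw_jC_j:binaryProduct}: since at most one $C^{*}\in\mathcal{C}^{\operatorname{ext}}_{*,k}$ is nonempty, and for that $C^{*}$ exactly one $j^{*}$ has $y^i_{j^{*},C^{*}}=1$ and exactly one $p^{*}$ has $y^i_{C^{*},R,p^{*}}=1$ (by \eqref{eq:Cmax:cycle_bounds} and \eqref{eq:sumw_jC_j:third}), all mixed products vanish except the single one $y^i_{j^{*},C^{*},R,p^{*}}$, whose contribution $w_{j^{*}}p^{*}$ is precisely $w(\hat J)q^i_k$; if no cycle of $\mathcal{C}^{\operatorname{ext}}_{*,k}$ is nonempty both sides are $0$. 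Therefore $X^i_k=\sum_{C\in\mathcal{C}^{\operatorname{ext}}_{*,k}}\sum_{j\in[d]}\sum_{p\in[p_{\max}]}w_j\,p\,y^i_{j,C,R,p}$, which is linear in $\vey$.

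I expect the $X^i_k$ case to be the main obstacle: one must carefully check that the quadratic-looking expression $w(\hat J)\,q^i_k$ really collapses to the above linear form over the product variables, which relies on combining the pairwise incompatibility of the cycles in $\mathcal{C}^{\operatorname{ext}}_{*,k}$, the single-job-per-external-cycle structure, and the defining properties of the $y^i_{j,C,R,p}$, with the cases ``$\hat C$ exists'' and ``$\hat C$ does not exist'' handled separately. The $A^i_k$ case is routine once $\mathcal{J}_k(\vey)$ has been pinned down via Lemma~\ref{lem:sumw_jC_j:objectiveToAreasEquationHolds}.
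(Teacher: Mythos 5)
Your proposal is correct and follows essentially the same route as the paper: $A^i_k$ is written as $t_k$ times the sum of weights of jobs assigned to $C^{\operatorname{int}}_k$ and to cycles in $\mathcal{C}^{\operatorname{ext}}_{*,k}$, and $X^i_k$ is the product $w(\hat J)\cdot q^i_k$ linearized through the variables $y^i_{j,C,R,p}$ from \eqref{eq:sumw_jC_j:binaryProduct}, yielding exactly the paper's two linear expressions. You spend a bit more effort justifying why the apparent quadratic collapses to a single term (via pairwise incompatibility of $\mathcal{C}^{\operatorname{ext}}_{*,k}$), which the paper states more tersely, but there is no substantive difference.
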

\begin{proof}
We divide the proof into two parts, according to the two area types.
\paragraph{Expressing Area $A$.}
Let us express the area of $A^i_k$, which is a rectangle.
As already pointed out, its length is $L = t_k$, which is a constant for fixed $k \in [|T| - 1]$.
Its height~$H$ is the sum of weights of all jobs whose completion time in the schedule $\sigma(\vey)$ is strictly larger than $t_k$ and
at most $t_{k+1}$, that is, the sum of weights of jobs assigned to the cycle $C^{\operatorname{int}}_k$ and the cycles in $\mathcal{C}^{\operatorname{ext}}_{*,k}$.
We naturally decompose these jobs into two groups---the internal cycle and the external cycle (note that only one external cycle from $\mathcal{C}^{\operatorname{ext}}_{*,k}$ can contain a job, since these are mutually incompatible).
We can express $H = \sum_{C \in \mathcal{C}^{\text{ext}}_{*,k}}\sum_{j = 1}^d w_j y_{j,C} + \sum_{j = 1}^d w_j y_{j,C^{\operatorname{int}}_k}$.
It follows that the area $A^i_k$ is $H\cdot L$, which is a linear function in the variables~$\vey$, since $L$ is a constant.
Altogether, we have
\begin{equation}
  A^i_k(\vey)
  =
  t_k \cdot \left( \sum_{C \in \mathcal{C}^{\text{ext}}_{*,k}}\sum_{j = 1}^d w_j y_{j,C} + \sum_{j = 1}^d w_j y_{j,C^{\operatorname{int}}_k} \right) \,.  \tag{$A^i_k$}
\end{equation}

\paragraph{Expressing Area $X$.}
The height of $X^i_k$ is $w_j$ if $y_{j,C} = 1$ for some $C \in \mathcal{C}^{\text{ext}}_{*,k}$, since this the weight of the job scheduled to a cycle in $\mathcal{C}^{\text{ext}}_{*,k}$ if there is such a nonempty cycle in $\sigma(\vey)$.
The width of this area is $p$ if $y_{C,R,p} = 1$.
This implies that $X$ can be expressed with a quadratic term in terms of variables $y_{j,C}$ and $y_{C,R,p}$.
We can linearize this term using the auxiliary variables $y_{j,C,R,p}$:
\begin{equation}
  X^i_k(\vey) = \sum_{j = 1}^d w_j \cdot \left( \sum_{C \in \mathcal{C}^{\text{ext}}_{*,k} } \sum_{p \in [p_{\max}]} p \cdot y_{j,C,R,p} \right) \tag{$X^i_k$} \enspace . \qedhere
\end{equation}
\end{proof}

Turning our attention to area $B$, we compute it using the approach of Knop and Koutecký~\cite[Section~2.4]{KnopKoutecky2017} as outlined above, which uses auxiliary aggregation variables $\alpha^i_{k,j}$ for $j \in [0,d]$ and $k \in [|T^i| - 1]$ whose meaning we will discuss later.
It is worth noting that they used the possibility to express the area of the 2D Gantt chart (see also~\cite{GoemansWilliamson2000}) using so called job slopes which are in close relation to the Smith ordering $\preceq^i_{\mathcal{R},k}$.
Recall that Smith's ordering requires that jobs in an optimal schedule are ordered by the Smith ratio $w_j / p^i_j$.
We call this ratio the \emph{slope} of the job type $j \in [d]$ because in a 2D Gantt chart it corresponds to the slope of the diagonal of the rectangle representing a job of this type.
We first extend the Smith order to a linear order $\prec^i_{\mathcal{R},k}$ which for a fixed $k \in [|T^i|-1]$ and a machine kind $i$ is any fixed linear extension of the Smith's ordering~$\preceq^i_{\mathcal{R},k}$.
We define the \emph{predecessor} operator $\pred^i_{\mathcal{R},k}$ for a job type $j \in [d]$ to be its predecessor in the linear order $\prec^i_{\mathcal{R},k}$ or $0$ if no such predecessor exists.
The \emph{successor} operator $\successor^i_{\mathcal{R},k}$ is the opposite to $\pred^i_{\mathcal{R},k}$, that is, $j = \successor^i_{\mathcal{R},k}(\pred^i_{\mathcal{R},k}(j))$ for all $j \in [d]$.
For the job type $j$ which comes last in $\preceq^i_{\RR,k}$ we define $\rho^i_{k,\successor^i_{\mathcal{R},k}(j)} = 0$ for technical reasons.
The variable $\alpha^i_{k,j}$ is intended to represent the total size of jobs scheduled to be processed in the interval $\interval{t_k}{t_{k+1}}$ (i.e., those assigned to~$C^{\operatorname{int}}_k$) preceding and including type $j$ in the order $\prec^i_{\mathcal{R},k}$ together with the amount of processing time needed to complete the job assigned to an external cycle in $\mathcal{C}^{\operatorname{ext}}_{*,k}$ if such a job exists).

Let $\hat{J}$ be the job assigned to an external cycle in $\mathcal{C}^{\operatorname{ext}}_{*,k}$, if it exists.
The variable $\alpha^i_{k,0}$ is the amount of time $\hat{J}$ is processed after the critical time $t_k$, or $0$ if $\hat{J}$ is not defined, that is:
\begin{equation}\label{eq:sumw_jC_j:alphaZero}
  \alpha^i_{k,0} = \sum_{C \in \mathcal{C}^{\text{ext}}_{*,k}} \sum_{p\in[p_{\max}]} p \cdot y^i_{C,R,p} \,.
\end{equation}
The remaining variables are enforced by constraints
\begin{equation}\label{eq:sumw_jC_j:last}
  \alpha^i_{k,j} = p^i_j y^i_{j,C} + \alpha^i_{k,\pred^i_{\mathcal{R},k}(j)}  \qquad\qquad \forall j \in [d]  \,.
\end{equation}
Note that if an external cycle started before $t_{k}$ and lasted past $t_{k+1}$ in $\sigma(\vey)$ then all of the above variables equal zero.
If this is the case, all three areas $A^i_k, B^i_k$, and $X^i_k$ are empty.

\begin{lemma}\label{lem:sumw_jC_j:objectiveVariablesFollow}
  Let $(\vex,\vey,\vez)$ satisfy \eqref{eq:Cmax:sumOfExes}--\eqref{eq:Cmax:incompatibleCyclesDisable} and \eqref{eq:sumw_jC_j:first}--\eqref{eq:sumw_jC_j:cycleVolumeBounds}.
  Then, there exists $\vey,\vealpha$ satisfying \eqref{eq:sumw_jC_j:binaryProduct}--\eqref{eq:sumw_jC_j:last}.
\end{lemma}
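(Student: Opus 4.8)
The plan is to observe that every variable introduced between \eqref{eq:sumw_jC_j:binaryProduct} and \eqref{eq:sumw_jC_j:last} is a deterministic function of coordinates already fixed in $(\vex,\vey,\vez)$, so the required extension can simply be written down and each of the three groups of constraints checked directly. Fix a machine kind $i$ (the statement, like the rest of this subsection, concerns one fixed kind, and the constraints \eqref{eq:sumw_jC_j:binaryProduct}--\eqref{eq:sumw_jC_j:last} never couple distinct kinds), and keep all coordinates of $(\vex,\vey,\vez)$ unchanged; it remains only to assign the new variables $y^i_{j,C,R,p}$, $\alpha^i_{k,0}$, and $\alpha^i_{k,j}$.

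First I would handle the product variables. For every potential external cycle $C\in\CC^{\text{ext}}$, every $j\in[d]$ and every $p\in[p_{\max}]$ set $y^i_{j,C,R,p}:=y^i_{j,C}\cdot y^i_{C,R,p}$. For $C$ external, \eqref{eq:Cmax:cycle_bounds} and \eqref{eq:Cmax:ziC} give $\sum_{j\in[d]}y^i_{j,C}=z^i_C\le 1$ with each $y^i_{j,C}$ a nonnegative integer, so $y^i_{j,C}\in\{0,1\}$; likewise $y^i_{C,R,p}\in\{0,1\}$ by \eqref{eq:sumw_jC_j:fourth} and integrality. Hence $y^i_{j,C,R,p}\in\{0,1\}$ is integral, and a four-case check on the values of the pair $(y^i_{j,C},y^i_{C,R,p})$ confirms the three inequalities of \eqref{eq:sumw_jC_j:binaryProduct}, since these are exactly the standard linearization of a product of two binary variables.

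Next I would assign the aggregation variables. For each $k\in[|T^i|-1]$, let $\alpha^i_{k,0}$ be defined by the right-hand side of \eqref{eq:sumw_jC_j:alphaZero}; this is an assignment, so that equation holds by construction, and $\alpha^i_{k,0}\ge 0$ and is integral because each $y^i_{C,R,p}$ is a nonnegative integer. Then process the job types in the linear order $\prec^i_{\mathcal{R},k}$: for $j\in[d]$ taken in increasing $\prec^i_{\mathcal{R},k}$-order, set $\alpha^i_{k,j}:=p^i_j\,y^i_{j,C^{\text{int}}_k}+\alpha^i_{k,\pred^i_{\mathcal{R},k}(j)}$. Because $\prec^i_{\mathcal{R},k}$ is a total order on $[d]$ whose least element has $\pred^i_{\mathcal{R},k}$-value $0$, and $\alpha^i_{k,0}$ has already been fixed, this recursion is well-founded, terminates, and assigns each $\alpha^i_{k,j}$ the unique value making \eqref{eq:sumw_jC_j:last} hold; integrality and nonnegativity are inherited step by step since $p^i_j\ge 1$ and the $y$-variables are nonnegative integers. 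The original variables still satisfy \eqref{eq:Cmax:sumOfExes}--\eqref{eq:Cmax:incompatibleCyclesDisable} and \eqref{eq:sumw_jC_j:first}--\eqref{eq:sumw_jC_j:cycleVolumeBounds}, and the enlarged vector now also satisfies \eqref{eq:sumw_jC_j:binaryProduct}--\eqref{eq:sumw_jC_j:last}, which is what is claimed. There is no genuine obstacle; the only two points deserving an explicit sentence are the finite case distinction verifying the linearization \eqref{eq:sumw_jC_j:binaryProduct} and the remark that the linear recursion \eqref{eq:sumw_jC_j:last} along $\prec^i_{\mathcal{R},k}$ is consistent because that order has a minimum whose predecessor index is $0$.
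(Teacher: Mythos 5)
Your proof is correct and takes essentially the same approach as the paper: set $y^i_{j,C,R,p}:=y^i_{j,C}\cdot y^i_{C,R,p}$ and compute $\vealpha$ via \eqref{eq:sumw_jC_j:alphaZero}--\eqref{eq:sumw_jC_j:last}. The paper states this in one sentence, whereas you also spell out why the relevant $y$-variables are binary (via \eqref{eq:Cmax:cycle_bounds}, \eqref{eq:Cmax:ziC}, and \eqref{eq:sumw_jC_j:fourth}) and why the recursion along $\prec^i_{\mathcal{R},k}$ is well-founded; this is a legitimate, more explicit rendering of the same argument.
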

\begin{proof}
  To see this it suffices to follow the intended meaning of the variables $y^i_{j,C,R,p}$ for $C \in \mathcal{C}^{\operatorname{ext}}$ and compute variables $\alpha^i_{k,q}$ as described in \eqref{eq:sumw_jC_j:alphaZero} and \eqref{eq:sumw_jC_j:last}.
  To that end, we set $y^i_{j,C,R,p} = y^i_{j,C} \cdot y^i_{C,R,p}$.
\end{proof}

Let us now move our attention to the notion of slopes.
We denote $\rho^i_{k,j}$ the ratio of the job type $j \in [d]$ on machines of kind $i$, that is, $\rho^i_{k,j} = w_j / p^i_j$ for all $k \in [|T^i|-1]$.
Note that we have $\rho^i_{k,j} \le \rho^i_{k,j'}$ if and only if $j \preceq^i_{\mathcal{R},k} j'$ and if $\rho^i_{k,j} \le \rho^i_{k,j'}$ and $j' \prec^i_{\mathcal{R},k} j$, then $\rho^i_{k,j} = \rho^i_{k,j'}$.
Let us denote $\rho^i_{k,\max} = \max_{j \in [d]} {w_j / p^i_j}$.

We are now ready to give a different description of the area $B^i_k$.
Introduce a new function
\begin{equation}\label{eq:sumw_jC_j:BHatDefinition}\tag{$\hat{B}^i_k$}
  2 \cdot \hat{B}^i_k(\vey,\ve\alpha)
  =
  \sum_{j = 1}^d \left[ \left(\rho^i_{k,j} - \rho^i_{k,\successor^i_{\mathcal{R},k}(j)}\right) \cdot (\alpha^i_{k,j})^2 \right]
  -
  \rho^i_{k,\max} \cdot \left( \alpha^i_{k,0} \right)^2
  +
  \sum_{j=1}^d w_j \cdot p^i_j \cdot y^i_{j,C^{\operatorname{int}}_k}
  \,.
\end{equation}
Before we prove that the function represents $B^i_k$ we show that it is separable convex in $(\vey, \vealpha)$.
\begin{lemma}\label{lem:sumw_jC_j:computingHatB}
  The function $\hat{B}^i_k$ is separable convex in $\vealpha$ and $\vey$ for all \mbox{$k \in [|T^i| - 1]$}.
\end{lemma}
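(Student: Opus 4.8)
The plan is to split $2\hat{B}^i_k$ into the three additive summands appearing in~\eqref{eq:sumw_jC_j:BHatDefinition} and show that each one is a separable convex function of the variables $(\vey,\vealpha)$; since a sum of separable convex functions is again separable convex (for each variable, collect all the univariate convex pieces acting on it and add them), this proves the lemma.

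First, the summand $\sum_{j=1}^d w_j p^i_j y^i_{j,C^{\operatorname{int}}_k}$ is a linear combination of single variables and is therefore separable convex. Second, consider $\sum_{j=1}^d\bigl(\rho^i_{k,j}-\rho^i_{k,\successor^i_{\mathcal R,k}(j)}\bigr)(\alpha^i_{k,j})^2$: each term is a univariate map $\alpha^i_{k,j}\mapsto c_j(\alpha^i_{k,j})^2$, which is convex exactly when $c_j\ge 0$. Since $\prec^i_{\mathcal R,k}$ orders the job types by non-increasing Smith ratio $\rho^i_{k,\cdot}$ (this is precisely the processing order used by Algorithm~\ref{alg:sigmaFromYWithOrder}), and since by convention $\rho^i_{k,\successor^i_{\mathcal R,k}(j)}=0$ for the last job type in the order, we get $\rho^i_{k,\successor^i_{\mathcal R,k}(j)}\le\rho^i_{k,j}$, i.e.\ $c_j\ge0$; hence this summand is separable convex.

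The remaining term $-\rho^i_{k,\max}(\alpha^i_{k,0})^2$ is the only one not manifestly convex, and dealing with it is the crux of the proof. The point is that $\alpha^i_{k,0}$ is not a free variable: by~\eqref{eq:sumw_jC_j:alphaZero} it equals $\sum_{C\in\mathcal C^{\operatorname{ext}}_{*,k}}\sum_{p\in[p_{\max}]}p\,y^i_{C,R,p}$, where all the $y^i_{C,R,p}$ are binary by~\eqref{eq:sumw_jC_j:fourth}, at most one $y^i_{C,R,p}$ per fixed $C$ is nonzero by~\eqref{eq:sumw_jC_j:third} (together with $z^i_C\le 1$), and the potential external cycles in $\mathcal C^{\operatorname{ext}}_{*,k}$ are pairwise incompatible (each has $t_k$ in its interior), so that~\eqref{eq:Cmax:incompatibleCyclesDisable} together with~\eqref{eq:Cmax:cycle_bounds} forces at most one $z^i_C$, hence at most one $y^i_{C,R,p}$ with $C\in\mathcal C^{\operatorname{ext}}_{*,k}$, to be nonzero at any integer point of $P^i$. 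Consequently, over $P^i\cap\Z^{d+d^i}$ the cross terms vanish and $(\alpha^i_{k,0})^2=\sum_{C,p}p^2\,(y^i_{C,R,p})^2=\sum_{C,p}p^2\,y^i_{C,R,p}$, so there $-\rho^i_{k,\max}(\alpha^i_{k,0})^2$ agrees with the linear expression $-\rho^i_{k,\max}\sum_{C,p}p^2\,y^i_{C,R,p}$. Replacing the non-convex term by this linear expression yields an honestly separable convex function that coincides with $\hat B^i_k$ on $P^i\cap\Z^{d+d^i}$, which is exactly what the extension-separable convex objective of the MIMO model requires. The main obstacle is precisely this last step: one must invoke the incompatibility and binary-indicator constraints of the surrounding model, rather than the bare formula~\eqref{eq:sumw_jC_j:BHatDefinition}, in order to absorb the $-\rho^i_{k,\max}(\alpha^i_{k,0})^2$ term; everything else reduces to the observation that quadratics with nonnegative coefficients and linear functions of single variables are convex.
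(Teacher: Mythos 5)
Your proof is correct and takes essentially the same route as the paper: decompose $2\hat B^i_k$ into the three summands, observe that the Smith-ratio differences are nonnegative so the first sum is separable convex, note the last sum is linear, and linearize the negative quadratic term $-\rho^i_{k,\max}(\alpha^i_{k,0})^2$ by substituting \eqref{eq:sumw_jC_j:alphaZero} and exploiting that at most one $y^i_{C,R,p}$ with $C\in\mathcal C^{\operatorname{ext}}_{*,k}$ is nonzero at any integer point of $P^i$. Your write-up is slightly more careful than the paper's (you justify \emph{why} at most one such indicator is nonzero via~\eqref{eq:Cmax:incompatibleCyclesDisable} and~\eqref{eq:Cmax:cycle_bounds}, and you make explicit that the substituted expression need only agree with $\hat B^i_k$ on $P^i\cap\Z^{d+d^i}$, as extension-separable convexity demands), but the underlying argument is identical.
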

\begin{proof}
Note that in the definition of the function $\hat{B}^i_k(\vealpha)$ by equation~\eqref{eq:sumw_jC_j:BHatDefinition} the middle term is quadratic with a negative coefficient, making it non-convex in terms of $\vealpha$.
The remaining two terms are separable convex: the first sum is a sum of non-negative multiples of convex quadratic functions, since \(\left(\rho^i_{k,j} - \rho^i_{k,\successor^i_{\mathcal{R},k}(j)}\right) \ge 0\) for all $k \in [|T^i| - 1]$ and $j \in [d]$, and the last sum is a linear term.
We rewrite the middle term \(-\rho^i_{k,\max} \cdot \left( \alpha^i_{0,k} \right)^2\) using~\eqref{eq:sumw_jC_j:alphaZero}, arriving at an equivalent term
\[
  - \rho^i_{k,\max} \cdot
  \left(
    \sum_{C \in \mathcal{C}^{\text{ext}}_{*,k}} \sum_{p\in[p_{\max}]} p \cdot y^i_{C,R,p}
  \right)^2\,,
\]
where the variables $y^i_{C,R,p}$ are binary.
Consequently, using the fact that exactly one term of the form \(p \cdot y^i_{C,R,p}\) is non-zero we have that
\[
  - \rho^i_{k,\max} \cdot
  \left(
    \sum_{C \in \mathcal{C}^{\text{ext}}_{*,k}} \sum_{p\in[p_{\max}]} p \cdot y^i_{C,R,p}
  \right)^2
  =
  - \rho^i_{k,\max} \cdot
  \left(
    \sum_{C \in \mathcal{C}^{\text{ext}}_{*,k}} \sum_{p\in[p_{\max}]} p^2 \cdot y^i_{C,R,p}
  \right)
\]
where the right-hand side is a linear term, since $p$ is a constant in the above expression.
Thus, $\hat{B}^i_k$ is separable convex in $\vealpha$ and $\vey$.
\end{proof}

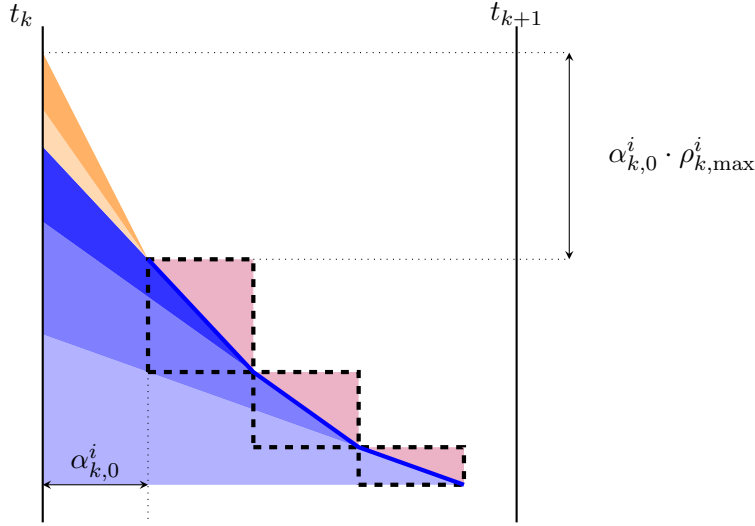
\begin{figure}[bt]
  \begin{minipage}{.65\textwidth}
     \usetikzlibrary{patterns}

\begin{tikzpicture}[yscale=.5,xscale=.7]
\tikzstyle{job}=[ultra thick,dashed]
\tikzstyle{fakeJob}=[dotted,red,ultra thick]

\fill[blue!30] (0,5) -- (8,1) -- (0,1) -- cycle;
\fill[blue!50] (0,8) -- (6,2) -- (0,5) -- cycle;
\fill[blue!80] (0,10) -- (4,4) -- (0,8) -- cycle;

\fill[orange!30] (0,11) -- (2,7) -- (0,10) -- cycle;
\fill[orange!60] (0,12.5) -- (2,7) -- (0,11) -- cycle;

\fill[purple!30] (4,4) -- (4,7) -- (2,7) -- cycle;
\fill[purple!30] (4,4) -- (6,4) -- (6,2) -- cycle;
\fill[purple!30] (6,2) -- (8,2) -- (8,1) -- cycle;

\draw[thick] (0,13.2) -- (0,0);
\draw[thick] (9,13.2) -- (9,0);
\node at (-.4,13.5) {$t_k$};
\node at (9,13.5) {$t_{k+1}$};

\draw[job] (2,7) -- (4,7) -- (4,4) -- (2,4) -- cycle;
\draw[job] (4,4) -- (6,4) -- (6,2) -- (4,2) -- cycle;
\draw[job] (6,2) -- (8,2) -- (8,1) -- (6,1) -- cycle;

\draw[blue, ultra thick] ((2,7) -- (4,4) -- (6,2) -- (8,1);

\draw[dotted] (2,4) -- (2,0);

\draw[dotted] (4,7) -- (10,7);
\draw[dotted] (0,12.5) -- (10,12.5);

\draw[<->,>=stealth] (10,7) to node[midway,xshift=1.5cm] {$\alpha^i_{k,0} \cdot \rho^i_{k,\max}$} (10,12.5);
\draw[<->,>=stealth] (0,1) to node[midway,yshift=.3cm] {$\alpha^i_{k,0}$} (2,1);


\end{tikzpicture}
  \end{minipage}
  \begin{minipage}{.35\textwidth}
      \caption{\label{fig:sumw_jC_jSegmentDetailAreaB}%
         Visualization of the right-hand side of the expression \eqref{eq:sumw_jC_j:BHatDefinition}.
      }
  \end{minipage}
\end{figure}

\begin{lemma}\label{lem:sumw_jC_j:equivalenceOfBs}
  Fix a machine of kind $i \in [\kappa]$ and let $(\vex,\vey,\vez,\vealpha)$ satisfy constraints \eqref{eq:Cmax:sumOfExes}--\eqref{eq:Cmax:incompatibleCyclesDisable} and \eqref{eq:sumw_jC_j:first}--\eqref{eq:sumw_jC_j:last}.
  Then $\hat{B}^i_k(\vealpha)$ expresses the size of area $B^i_k$ in the schedule $\sigma(\vey)$, for all $k \in [|T^i| - 1]$.
\end{lemma}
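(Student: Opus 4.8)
\textbf{Proof plan for Lemma~\ref{lem:sumw_jC_j:equivalenceOfBs}.}

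The plan is to show that $\hat{B}^i_k(\vealpha)$, as defined by~\eqref{eq:sumw_jC_j:BHatDefinition}, equals the area $B^i_k$ of the 2D Gantt chart restricted to the internal cycle $C^{\operatorname{int}}_k$, where (by definition) $B^i_k = \sum_{J \in C^{\operatorname{int}}_k} w(J)\cdot(C^{\sigma(\vey)}_J - t_k)$. The key geometric fact I would use is the standard ``area under the staircase'' identity for 2D Gantt charts (cf.~\cite{KnopKoutecky2017,GoemansWilliamson2000}): if jobs of $C^{\operatorname{int}}_k$ are ordered by $\prec^i_{\mathcal{R},k}$ and stacked from the top, then $\sum_J w(J)(C_J - t_k)$ equals the area enclosed between the horizontal axis (at level $0$), the ``staircase'' of job rectangles, and the vertical axis, all shifted so that $t_k$ is the origin. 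This area can be sliced horizontally: the contribution of job type $j$ is a triangle of slope $\rho^i_{k,j}$ sitting atop the prefix sum $\alpha^i_{k,\pred^i_{\mathcal{R},k}(j)}$ of processing times of all strictly-earlier job types (in $\prec^i_{\mathcal{R},k}$) together with the residual $\alpha^i_{k,0}$ contributed by the job $\hat{J}$ of the external cycle in $\mathcal{C}^{\operatorname{ext}}_{*,k}$ that overhangs past $t_k$.

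Concretely, I would first recall from Lemma~\ref{lem:sumw_jC_j:objectiveVariablesFollow} and the definitions~\eqref{eq:sumw_jC_j:alphaZero}--\eqref{eq:sumw_jC_j:last} that $\alpha^i_{k,0} = q^i_k$ is exactly the amount of $\hat{J}$ processed after $t_k$, and that $\alpha^i_{k,j} = \alpha^i_{k,0} + \sum_{j' \preceq^i_{\mathcal{R},k} j} p^i_{j'} y^i_{j',C^{\operatorname{int}}_k}$ is the total processing-time consumed in $[t_k,\cdot]$ up to and including type $j$. Then, using the telescoping identity
\[
\sum_{J \in C^{\operatorname{int}}_k} w(J)(C^{\sigma(\vey)}_J - t_k) = \tfrac12\sum_{j=1}^d \left(\rho^i_{k,j} - \rho^i_{k,\successor^i_{\mathcal{R},k}(j)}\right)(\alpha^i_{k,j})^2 - \tfrac12\,\rho^i_{k,\max}(\alpha^i_{k,0})^2 + \tfrac12\sum_{j=1}^d w_j p^i_j y^i_{j,C^{\operatorname{int}}_k},
\]
which is precisely $\hat{B}^i_k$ (multiplied by $2$ on both sides as written in~\eqref{eq:sumw_jC_j:BHatDefinition}), the lemma follows. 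To verify this identity I would: (i) expand the first sum by Abel summation, rewriting $\sum_j (\rho^i_{k,j} - \rho^i_{k,\successor(j)})(\alpha^i_{k,j})^2$ as $\rho^i_{k,\max}(\alpha^i_{k,0})^2 + \sum_j \rho^i_{k,j}\big((\alpha^i_{k,j})^2 - (\alpha^i_{k,\pred(j)})^2\big)$, where the $\rho^i_{k,\max}(\alpha^i_{k,0})^2$ boundary term is cancelled by the middle term of $\hat{B}^i_k$; (ii) factor $(\alpha^i_{k,j})^2 - (\alpha^i_{k,\pred(j)})^2 = (\alpha^i_{k,j} - \alpha^i_{k,\pred(j)})(\alpha^i_{k,j} + \alpha^i_{k,\pred(j)}) = p^i_j y^i_{j,C^{\operatorname{int}}_k}\,(\alpha^i_{k,j} + \alpha^i_{k,\pred(j)})$ by~\eqref{eq:sumw_jC_j:last}; (iii) observe $\rho^i_{k,j} p^i_j = w_j$, so the term becomes $w_j y^i_{j,C^{\operatorname{int}}_k}(\alpha^i_{k,j} + \alpha^i_{k,\pred(j)})$, and combining with the last sum $w_j p^i_j y^i_{j,C^{\operatorname{int}}_k} = w_j y^i_{j,C^{\operatorname{int}}_k}(\alpha^i_{k,j} - \alpha^i_{k,\pred(j)})$ yields $2 w_j y^i_{j,C^{\operatorname{int}}_k}\alpha^i_{k,j}$; (iv) recognize $w_j \alpha^i_{k,j}$ as exactly $w(J)(C^{\sigma(\vey)}_J - t_k)$ for the (single, by $y^i_{j,C}$ binary) job $J$ of type $j$ in $C^{\operatorname{int}}_k$, since the completion time of $J$ measured from $t_k$ equals the cumulative processing $\alpha^i_{k,j}$ (jobs before it in $\prec^i_{\mathcal{R},k}$ are scheduled first by Algorithm~\ref{alg:sigmaFromYWithOrder}, line~\ref{alg:sigmaFromYWithOrder:leftCritical}, plus the overhang $\alpha^i_{k,0}$ of $\hat{J}$). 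Summing over $j$ gives $2 B^i_k$.

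The main obstacle I anticipate is bookkeeping the role of the external-cycle overhang $\hat{J}$ correctly, both in the degenerate cases (no such $\hat{J}$, so $\alpha^i_{k,0} = 0$; or an external cycle spanning the whole interval $[t_k, t_{k+1}]$, in which case $C^{\operatorname{int}}_k$ is empty and all $\alpha^i_{k,j} = 0$ and every area vanishes — this is handled by the parenthetical remark after~\eqref{eq:sumw_jC_j:last}) and in making sure the ``slope from level $\alpha^i_{k,0}$'' accounting is geometrically sound, i.e., that $\hat{J}$'s completion time is genuinely separated out into the area $X^i_k$ and not double-counted in $B^i_k$. This requires carefully invoking the decomposition of Lemma~\ref{lem:sumw_jC_j:objectiveToAreasEquationHolds} and the fact that $\mathcal{J}_k(\vey) = \{\hat{J}\} \cup \{J \in C^{\operatorname{int}}_k\}$, together with the ordering guarantee of Lemma~\ref{lem:cyclesScheduleOrderedObjective} that jobs in $C^{\operatorname{int}}_k$ really are placed by $\preceq^i_{\mathcal{R},k}$ in $\sigma(\vey)$, so that Smith's rule makes the telescoped triangular-area formula exact. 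Once the overhang term is tracked consistently through the Abel-summation manipulation, the rest is routine algebra.
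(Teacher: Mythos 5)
Your overall strategy (Abel summation on the telescoping quadratics, factoring the difference of squares, and using $\rho^i_{k,j}p^i_j = w_j$) is essentially the same as the paper's, just run backward from $\hat{B}^i_k$ instead of forward from $B^i_k$. The displayed identity you want to verify is correct, and the area decomposition via Lemma~\ref{lem:sumw_jC_j:objectiveToAreasEquationHolds} is the right framework. However, your steps (iii) and (iv) have a genuine error: you treat $y^i_{j,C^{\operatorname{int}}_k}$ as binary, but it is not. Only $z^i_C$ and the external-cycle $y$-variables are binary; for an internal cycle, constraint~\eqref{eq:Cmax:chiMIMObounds} allows $y^i_{j,C^{\operatorname{int}}_k}$ to take any integer value up to $n_j$, and in general several jobs of the same type are packed consecutively into $C^{\operatorname{int}}_k$.

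Concretely, your step (iii) asserts $w_j p^i_j y^i_{j,C^{\operatorname{int}}_k} = w_j y^i_{j,C^{\operatorname{int}}_k}(\alpha^i_{k,j} - \alpha^i_{k,\pred(j)})$, but by~\eqref{eq:sumw_jC_j:last} the right side is $w_j p^i_j (y^i_{j,C^{\operatorname{int}}_k})^2$, which equals the left side only when $y^i_{j,C^{\operatorname{int}}_k}\in\{0,1\}$. Likewise, step (iv) identifies $w_j \alpha^i_{k,j}$ with the completion-time contribution of ``the'' job of type $j$, but when $m := y^i_{j,C^{\operatorname{int}}_k} > 1$ only the \emph{last} of these $m$ jobs completes at offset $\alpha^i_{k,j}$; the $m$ jobs actually contribute $w_j m \alpha^i_{k,\pred(j)} + w_j p^i_j \tfrac{m(m+1)}{2}$, which is not $w_j m \alpha^i_{k,j}$. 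The paper handles this multiplicity directly: it computes the per-type contribution as $w_j y^i_{j,C_k}\alpha^i_{k,\pred(j)} + w_j p^i_j\binom{y^i_{j,C_k}}{2}$, splits the binomial into $\tfrac12 w_j p^i_j(y^i_{j,C_k})^2 + \tfrac12 w_j p^i_j y^i_{j,C_k}$, and completes the square to match $\tfrac12\rho^i_{k,j}\bigl[(\alpha^i_{k,j})^2 - (\alpha^i_{k,\pred(j)})^2\bigr]$ plus the surviving linear term. Fixing your argument is easy — after step (ii), simply expand $\alpha^i_{k,j} + \alpha^i_{k,\pred(j)} = 2\alpha^i_{k,\pred(j)} + p^i_j y^i_{j,C_k}$ and add the linear term directly; this reproduces $2B^i_k$ with the correct $\binom{m}{2}$ count and no binarity assumption — but as written, steps (iii) and (iv) are wrong.
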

\begin{proof}
Fix a $k \in [|T^i| - 1]$ and let $C_k = C^{\operatorname{int}}_k$.
We begin by decomposing the area $B^i_k$.
Recall we have $B^i_k = \sum_{J \in C_k} w(J) \cdot \left( C^{\sigma(\vey)}_{J} - t_k \right)$ and in what follows refer to Figure~\ref{fig:sumw_jC_jSegmentDetailSumMember}.
\begin{figure}[bt]
  \begin{minipage}{.7\textwidth}
     \usetikzlibrary{patterns}

\begin{tikzpicture}[yscale=.5,xscale=.7]
\tikzstyle{job}=[ultra thick,dashed]

\draw[pattern=vertical lines, pattern color=blue] (0,8) -- (4,4) -- (0,4) -- cycle;
\draw[pattern=horizontal lines, pattern color=blue] (0,8) -- (6,2) -- (0,2) -- cycle;



\draw[thick] (0,9.2) -- (0,0);
\draw[thick] (9,9.2) -- (9,0);
\node at (-.4,9.5) {$t_k$};
\node at (9,9.5) {$t_{k+1}$};

\draw[job] (2,7) -- (4,7) -- (4,4) -- (2,4) -- cycle;
\draw[job] (4,4) -- (6,4) -- (6,2) -- (4,2) -- cycle;
\draw[job] (6,2) -- (8,2) -- (8,1) -- (6,1) -- cycle;

\draw[blue, ultra thick] (2,7) -- (4,4) -- (6,2) -- (8,1);

\begin{scope}[on background layer]
  \fill[gray!30] (0,2) -- (6,2) -- (4,4) -- (0,4) -- cycle;
\end{scope}

\draw[dotted] (8,2) -- (10,2);
\draw[dotted] (6,4) -- (10,4);
\draw[dotted] (-1,8) -- (10,8);
\draw[dotted] (-1,2) -- (0,2);

\draw[dotted] (4,2) -- (4,0);
\draw[dotted] (6,1) -- (6,0);
\draw[dotted] (6,4) -- (6,7);

\draw[<->,>=stealth] (-.5,2) to node[midway,xshift=-1cm] {$\rho^i_{k,j} \cdot \alpha^i_{k,j}$} (-.5,8);

\draw[<->,>=stealth] (9.5,2) to node[midway,xshift=1cm] {$w_j \cdot y^i_{j,C^{\operatorname{int}}_k}$} (9.5,4);
\draw[<->,>=stealth] (9.5,4) to node[midway,xshift=1.3cm] {$\rho^i_{k,j} \cdot \alpha^i_{k,\pred(j)}$} (9.5,8);

\draw[<->,>=stealth] (0,1.7) to node[midway,yshift=-.4cm] {$\alpha^i_{k,\pred^i_k(j)}$} (4,1.7);
\draw[<->,>=stealth] (0,0) to node[midway,yshift=-.4cm] {$\alpha^i_{k,j}$} (6,0);

\draw[<->,>=stealth] (4,5) to node[midway,yshift=.4cm,xshift=.3cm] {$p^i_j \cdot y^i_{j,C^{\operatorname{int}}_k}$} (6,5);
\end{tikzpicture}
  \end{minipage}
  \begin{minipage}{.29\textwidth}
      \caption{\label{fig:sumw_jC_jSegmentDetailSumMember}%
         An illustration of the areas computed in the proof of Lemma~\ref{lem:sumw_jC_j:equivalenceOfBs}.
         The task is to express the gray area as the difference of the horizontally striped area and the vertically striped area (note that these two are overlapping).
      }
  \end{minipage}
\end{figure}
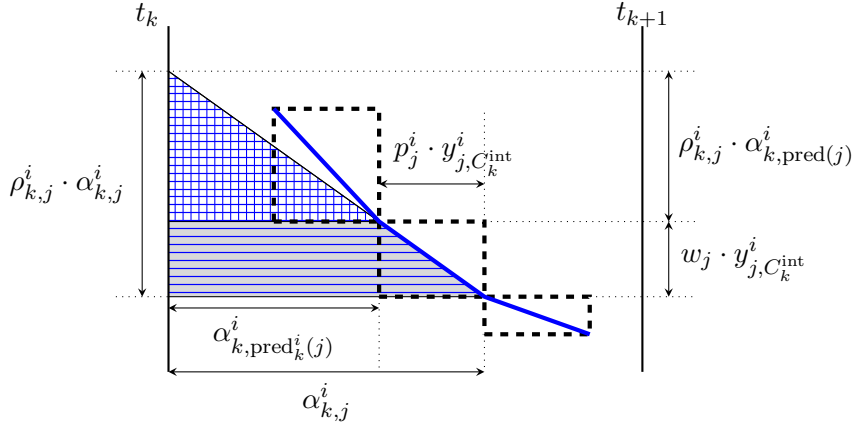
Denote by $\JJ_j$ the set of all jobs of type $j$.
Let us show that for each $j \in [d]$, we have
\[
  \sum_{J \in \mathcal{J}_j \cap C_k} w(J) \cdot \left( C^{\sigma(\vey)}_{J} - t_k \right)
  =
  w_j \cdot y^i_{j,C_k} \cdot \alpha^i_{k, \pred^i_{\mathcal{R},k}(j)} + \frac{w_j}{2} \cdot p^i_j \cdot \left( y^i_{j,C_k}\right)^2 + \frac{w_j}{2} \cdot p^i_j \cdot y^i_{j,C_k}
   \enspace .
\]
This is because
\begin{itemize}
  \item
  the first summand is the total weight of jobs of type $j$ assigned to the cycle $C_k$ (i.e., $w_j \cdot y^i_{j,C_k}$) times the total processing time required by the jobs assigned to the cycle $C_k$ preceding type $j$ in the Smith ordering $\prec^i_{\mathcal{R},k}$ and
  \item
  the remaining contribution of jobs of type $j$ is $w_j \cdot p^i_j \cdot \binom{y^i_{j,C_k}}{2}$, which we have split into two summands for convenience.
\end{itemize}
Now, the last summand is already the same as the last summand in the expression $\hat{B}^i_k$.
Thus, it remains to argue about the first two.
Recall that \( \alpha^i_{k,j} = \alpha^i_{k, \pred^i_{\mathcal{R},k}(j)} + p^i_j \cdot y^i_{j,C_k} \) and we compute
\begin{align*}
  &w_j \cdot y^i_{j,C_k} \cdot \alpha^i_{k, \pred^i_{\mathcal{R},k}(j)} + \frac{w_j}{2} \cdot p^i_j \cdot \left( y^i_{j,C_k}\right)^2 \\
  &=
  w_j \cdot y^i_{j,C_k} \cdot \alpha^i_{k, \pred^i_{\mathcal{R},k}(j)} + \frac{w_j}{2} \cdot p^i_j \cdot \left( y^i_{j,C_k}\right)^2 + \frac{w_j}{2p^i_j}\cdot\left( \alpha^i_{k, \pred^i_{\mathcal{R},k}(j)} \right)^2 - \frac{w_j}{2p^i_j}\cdot\left( \alpha^i_{k, \pred^i_{\mathcal{R},k}(j)} \right)^2 \\
  &=
  \frac{w_j}{2p^i_j} \left( 2p^i_j \cdot y^i_{j,C_k} \cdot \alpha^i_{k, \pred^i_{\mathcal{R},k}(j)} + \left( p^i_j \cdot y^i_{j,C_k} \right)^2 + \left( \alpha^i_{k,\pred(j)} \right)^2 \right) - \frac{w_j}{2p^i_j}\cdot\left( \alpha^i_{k, \pred^i_{\mathcal{R},k}(j)} \right)^2 \\
  &=
  \frac{w_j}{2p^i_j} \left( \alpha^i_{k, \pred^i_{\mathcal{R},k}(j)} + p^i_j \cdot y^i_{j,C_k} \right)^2 - \frac{w_j}{2p^i_j}\cdot\left( \alpha^i_{k, \pred^i_{\mathcal{R},k}(j)} \right)^2 \\
  &=
  \frac{w_j}{2p^i_j} \left( \alpha^i_{k,j} \right)^2 - \frac{w_j}{2p^i_j}\cdot\left( \alpha^i_{k, \pred^i_{\mathcal{R},k}(j)} \right)^2
  = \frac{1}{2} \cdot \rho^i_{k,j} \left[ \left(\alpha^i_{k,j}\right)^2 - \left(\alpha^i_{k, \pred^i_{\mathcal{R},k}(j)}\right)^2 \right] \,,
\end{align*}
where the last equality follows from $\rho^i_{k,j} = w_j / p^i_j$.
It remains to show that indeed we have
\[
  \sum_{j = 1}^d \rho^i_{k,j} \left[ \left(\alpha^i_{k,j}\right)^2 - \left(\alpha^i_{k, \pred^i_{\mathcal{R},k}(j)}\right)^2 \right]
  =
  \sum_{j = 1}^d \left[ \left(\rho^i_{k,j} - \rho^i_{k,\successor^i_{\mathcal{R},k}(j)}\right) \cdot (\alpha^i_{k,j})^2 \right]
  -
  \rho^i_{k,\max} \cdot \left( \alpha^i_{0,k} \right)^2 \,.
\]
To see this we expand the right hand side of the above equation, and denote by $D'$ the set $[d]$ minus the last job type $\hat{j}$ in the linear ordering $\prec^i_{\mathcal{R},k}$:
\begin{align*}
  &\sum_{j = 1}^d \left[ \left(\rho^i_{k,j} - \rho^i_{k,\successor^i_{\mathcal{R},k}(j)}\right) \cdot (\alpha^i_{k,j})^2 \right] - \rho^i_{k,\max} \cdot \left( \alpha^i_{0,k} \right)^2 \\
  &=
  \sum_{j = 1}^d \left( \rho^i_{k,j} (\alpha^i_{k,j})^2 \right) - \sum_{j = 1}^d \left( \rho^i_{k,\successor^i_{\mathcal{R},k}(j)} (\alpha^i_{k,j})^2 \right) - \rho^i_{k,\max} \cdot \left( \alpha^i_{0,k} \right)^2 \\
  &=
  \sum_{j = 1}^d \left( \rho^i_{k,j} (\alpha^i_{k,j})^2 \right) - \sum_{j \in D'} \left( \rho^i_{k,\successor^i_{\mathcal{R},k}(j)} (\alpha^i_{k,j})^2 \right) - \rho^i_{k,\max} \cdot \left( \alpha^i_{0,k} \right)^2 \\
  &=
  \sum_{j = 1}^d \left( \rho^i_{k,j} (\alpha^i_{k,j})^2 \right) - \sum_{j = 1}^d \left( \rho^i_{k,j} (\alpha^i_{k,\pred^i_{\mathcal{R},k}(j)})^2 \right) \,.
\end{align*}
This concludes the proof.
\end{proof}

We conclude that the resulting objective is a positive sum of separable convex functions plus a linear term, thus separable convex.
The correctness of the model defined by constraints~\eqref{eq:Cmax:sumOfExes}--\eqref{eq:Cmax:incompatibleCyclesDisable} and \eqref{eq:sumw_jC_j:first}--\eqref{eq:sumw_jC_j:last} and the objective
\[
  \min f^i(\vey, \vealpha) = \sum_{k = 1}^{|T^i| -1} \left( A^i_k(\vey) + \hat{B}^i_k(\vealpha,\vey) + X^i_k(\vey) \right)
\]
follows from Lemma~\ref{lem:modelPlacesCyclesCorrectly} and Lemmata \ref{lem:cyclesScheduleOrderedObjective}--\ref{lem:sumw_jC_j:equivalenceOfBs}.
Now we conclude with the properties of the model.
\begin{lemma}
\label{lem:MIMOPSumObjectives}
  Let $\mathcal{I}$ be an instance of $R | r^i_j,d^i_j | \sum w_jC_j$ or $R | r^i_j,d^i_j | \sum w_jF_j$ with $m$ machines of $\kappa$ kinds and $d$ job types with maximum job size $p_{\max}$.
  There is a MIMO model $\mathcal{S}$ for $\mathcal{I}$ with extension-separable convex objective functions and parameters
  \begin{tasks}[style=itemize](3)
    \task $\mathcal{S}(\Delta) = p_{\max}$,
    \task $\mathcal{S}(M) = \OhOp{(d)^2 \cdot p_{\max}}$,
    \task $\mathcal{S}(d) = d$,
    \task $\mathcal{S}(d^i) = \OhOp{(d)^2 \cdot p_{\max}}$,
    \task $\mathcal{S}(N) = m$, and
    \task $\mathcal{S}(\tau) = \kappa$.
  \end{tasks}
\end{lemma}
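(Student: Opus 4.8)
The plan is to assemble the model from the pieces already built in this subsection and then read off the six parameters; the substance is contained in the lemmas already proved, so the proof is mostly organisation and careful bookkeeping.

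\emph{Setting up the model.} For each machine kind $i\in[\kappa]$ let $P^i$ be the polytope cut out by the constraints \eqref{eq:Cmax:sumOfExes}--\eqref{eq:Cmax:incompatibleCyclesDisable} together with \eqref{eq:sumw_jC_j:first}--\eqref{eq:sumw_jC_j:last}; its coordinates are the configuration vector $\vex=(x^i_1,\dots,x^i_d)$, the cycle-assignment variables $y^i_{j,C}$ and indicators $z^i_C$, the split variables $y^i_{C,L,p},y^i_{C,R,p}$, their products $y^i_{j,C,R,p}$, and the aggregation variables $\alpha^i_{k,j}$. Let $\pi^i$ keep only $\vex$, let the multiplicity $\mu^i$ be the number of machines of kind $i$, let the target vector be $\ven=(n_1,\dots,n_d)$, and let the type-$i$ objective be $g^i(\vex,\vex')=\sum_{k=1}^{|T^i|-1}\bigl(A^i_k(\vey)+\hat{B}^i_k(\vey,\vealpha)+X^i_k(\vey)\bigr)$. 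By Lemma~\ref{lem:sumw_jC_j:computingAandX} the summands $A^i_k,X^i_k$ are linear in $\vey$, and by Lemma~\ref{lem:sumw_jC_j:computingHatB} each $\hat{B}^i_k$ is separable convex in $(\vey,\vealpha)$, so $g^i$ is separable convex; since $P^i$ is bounded, $f^i(\vex):=\min\{g^i(\vex,\vex'):(\vex,\vex')\in P^i\cap\Z^{d+d^i}\}$ is well defined, hence the objective is extension-separable convex in the sense of the $P$-representation.

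\emph{Correctness.} I would chain the lemmas: Lemma~\ref{lem:cyclesScheduleOrderedObjective} reduces to $\N$-regular schedules whose internal cycles are Smith-ordered; Lemmas~\ref{lem:modelPlacesCyclesCorrectly} and~\ref{lem:sumw_jC_j:criticalTimeSplitExternalCycles} give the equivalence between the existence of such a schedule of $\vex$ and the existence of $(\vey,\vez)$ feasible for the constraints, with $\sigma(\vey)$ from Algorithm~\ref{alg:sigmaFromYWithOrder} being one witness; Lemma~\ref{lem:sumw_jC_j:objectiveVariablesFollow} supplies the remaining variables, and Lemmas~\ref{lem:sumw_jC_j:objectiveToAreasEquationHolds}--\ref{lem:sumw_jC_j:equivalenceOfBs} show $g^i$ equals $\sum_{J}w(J)C_J$ over jobs placed on that machine. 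Summing over machine kinds, the feasible projections of $P^1,\dots,P^\kappa$ decompose $\ven$ exactly when the instance is feasible, and $\sum_i\sum_{\vex}f^i(\vex)\lambda^i_{\vex}$ equals the total $\sum w_jC_j$; this is the MIMO model for $R|r^i_j,d^i_j|\sum w_jC_j$. For $R|r^i_j,d^i_j|\sum w_jF_j$ I would observe that on a kind-$i$ machine the $\sum w_jF_j$-objective of a schedule differs from its $\sum w_jC_j$-objective by $-\sum_{j\in[d]}w_jr^i_jx^i_j$, a linear function of the configuration, which is absorbed into $g^i$ without leaving the separable convex class.

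\emph{Parameters.} Since $|T^i|\le 2d$ we have $|\CC^{\mathrm{int}}|=O(d)$ and $|\CC^{\mathrm{ext}}|=O((d)^2)$, so the number of configuration variables is $\mathcal{S}(d)=d$ and the number of projected-out variables is $\mathcal{S}(d^i)=O((d)^2p_{\max})$, the dominant contributions coming from the split-variable families $y^i_{C,L,p},y^i_{C,R,p},y^i_{j,C,R,p}$ and the $\alpha^i_{k,j}$; counting \eqref{eq:Cmax:sumOfExes}--\eqref{eq:sumw_jC_j:last} together with the box constraints gives $\mathcal{S}(M)=O((d)^2p_{\max})$ rows. All left-hand-side coefficients are $0$, $\pm1$, the sizes $p^i_j\le p_{\max}$, or the integers $1,\dots,p_{\max}$ weighting the split variables — the weights $w_j$ and the critical times $t_k$ enter only through right-hand sides or through the objective — so $\mathcal{S}(\Delta)=p_{\max}$. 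Finally $\mathcal{S}(N)=\|\vemu\|_1=m$ and $\mathcal{S}(\tau)=\kappa$, one polytope type per machine kind. The only point that needs genuine care is this last coefficient bound: one must verify that nothing of potentially large magnitude — a job weight, or a critical time, which can be as large as the guessed makespan — ever appears as an entry of a constraint matrix $A^i$, so that $\mathcal{S}(\Delta)$ is governed by $p_{\max}$ and not by the size of the input numbers. This is exactly the improvement over the formulation of Goemans and Rothvo{\ss} highlighted earlier, and is what makes parts~\ref{thm:implicitMIMO:huge} and~\ref{thm:implicitMIMO:gr} of Theorem~\ref{thm:implicitMIMO} applicable to these scheduling problems.
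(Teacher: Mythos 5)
Your proof is correct and takes essentially the same approach as the paper's: assemble the polytope $P^i$ from the constraints already established, invoke Lemmas~\ref{lem:cyclesScheduleOrderedObjective}--\ref{lem:sumw_jC_j:equivalenceOfBs} for correctness and separable convexity, and read off the six parameters by counting variables and constraints, noting in particular that the split variables indexed by $p\in[p_{\max}]$ dominate both $\mathcal{S}(M)$ and $\mathcal{S}(d^i)$ while $w_j$ and $t_k$ never enter the constraint matrix. The paper's own proof is essentially just the parameter count; your additional explicit verification of correctness and of the $\sum w_jF_j$ reduction matches the surrounding discussion in the text.
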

\begin{proof}
  The largest coefficient in the system is $p_{\max}$, as before $\mathcal{S}(N) = \|\vemu\|_1 = m$, and the projections discarding $\vey,\vez,\vealpha$ leave variables $\vex$ of dimension~$d$.
  There are $\kappa$ types in the presented MIMO model, one for each kind of a machine.
  As for the parameter $\mathcal{S}(M)$.
  We have
  \begin{itemize}
    \item $O((d)^2)$ constraints identical with the $R|r^i_j, d^i_j|C_{\max}$ model~\eqref{eq:Cmax:sumOfExes}--\eqref{eq:Cmax:incompatibleCyclesDisable} and
    \item $O((d)^2 p_{\max})$ new constraints~\eqref{eq:sumw_jC_j:first}--\eqref{eq:sumw_jC_j:last}, whose number is dominated by the number of constraints~\eqref{eq:sumw_jC_j:last}.
  \end{itemize}
  In total the number of conditions is governed by the second term $\OhOp{d^2 \cdot p_{\max}}$.
  Regarding the parameter $d^i$, i.e., the dimension of the variables projected out, we have the following.
  The dimension of $\vey$ is $\OhOp{(d)^2 \cdot p_{\max}}$.
  The dimension of $\vez,\vealpha$ is $\OhOp{(d)^2}$.
  These expressions belong to $\OhOp{(d)^2 \cdot p_{\max}}$.
\end{proof}
Before we discuss the algorithmic consequences of the presented MIMO model we show how to adapt the MIMO objective to express the $\sum w_jT_j$ scheduling objective.

\subsubsection{Sum of Weighted Tardiness} \label{sec:tardiness}
Recall that for a job $J$ of type $j$ with completion time $C_J$ its tardiness is $\max\{0, C_J - d^i(J)\}$ if $J$ is scheduled to run on a machine of kind~$i$.
Note that the auxiliary scheduling problem in each interval $\interval{t_k}{t_{k+1}}$ is solved by scheduling jobs in the following order: first come tardy jobs according to their Smith ratio $w_j/p^i(J)$, then other jobs follow in an arbitrary order (since they do not incur any penalty).
(Recall that for the $\sum w_j C_j$ the ordering for this auxiliary problem has been identical for all intervals.)
Again, we fix an arbitrary linear extension $\prec^i_{\mathcal{R},k}$ of this ordering.
See Figure~\ref{fig:sumw_jT_jSegment}.
Note the difference between area $A^i_k$ now and in $\sum w_jC_j$.
In the later objective, this area spans from the critical time $t_k$ to $0$, however, when we measure tardiness each job only contribute its weight times the length of the time interval between $t_k$ and its due date.
This, as we shall see, is not hard to incorporate in 2D Gannt charts and the objective function of our MIMO model.
\begin{figure}[bt]
  \begin{minipage}{.78\textwidth}
    \usetikzlibrary{patterns,decorations.pathreplacing}

\begin{tikzpicture}[yscale=.6,xscale=.85]
\tikzstyle{job}=[ultra thick,dashed]

\fill[green!30] (0,9) -- (-2,9) -- (-2,7) -- (0,7) -- cycle;
\fill[green!30] (0,4) -- (-4,4) -- (-4,2) -- (-2,2) -- (-2,1) -- (0,1) -- cycle;
\node[fill=white!10,text=green,inner sep=2pt] at (-1.5,8) {$A^i_k$};
\node[fill=white!10,text=green,inner sep=2pt] at (-1.5,2.5) {$A^i_k$};

\fill[pattern=north west lines,pattern color=blue!40] (0,7) -- (2,7) -- (4,7) -- (4,4) -- (6,4) -- (6,2) -- (8,2) -- (8,1) -- (0,1) -- cycle;
\node[fill=white!10,text=blue,inner sep=2pt] at (2,2) {$B^i_k$};

\fill[olive!40] (0,9) -- (0,7) -- (2,7) -- (2,9) -- cycle;
\node[fill=white!10,text=olive,inner sep=2pt] at (1,8) {$X^i_k$};

\draw[thick] (0,9.5) -- (0,-.3);
\draw[thick] (11,9.5) -- (11,-.3);
\node at (0,10) {$t_k$};
\node at (11,10) {$t_{k+1}$};

\draw[job] (-1,7) -- (-1,9) -- (2,9) -- (2,7) -- cycle;
\draw[job] (2,7) -- (4,7) -- (4,4) -- (2,4) -- cycle;
\draw[job] (4,4) -- (6,4) -- (6,2) -- (4,2) -- cycle;
\draw[job] (6,2) -- (8,2) -- (8,1) -- (6,1) -- cycle;
\draw[job] (8,1) -- (10,1) -- (10,0) -- (8,0) -- cycle;

\draw[blue, ultra thick] ((2,7) -- (4,4) -- (6,2) -- (8,1);

\draw[decoration={brace}, decorate] (2,7.2)  -- node[midway,yshift=9pt] {tardy jobs} (8,7.2);
\draw[decoration={brace}, decorate] (8,7.2)  -- node[midway,yshift=9pt] {non-tardy jobs} (10,7.2);
\draw[dotted] (8,7.2) -- (8,2);

\end{tikzpicture}
  \end{minipage}
  \begin{minipage}{.2\textwidth}
    \caption{\label{fig:sumw_jT_jSegment}%
    A 2D Gantt chart visualization of the decomposition of the objective $\sum w_jT_j$ between two consecutive critical times $t_k$ and $t_{k+1}$.
    }
  \end{minipage}
\end{figure}

In order to simplify some argumentation we add an auxiliary critical time $t = \sum_{J \in \mathcal{J}} p^i(J)$ to the set $T^i$.
This is to unify our reasoning about jobs that are scheduled after the last due date (which is possible when it comes to tardiness).
Furthermore, when it comes to tardiness, it is possible to schedule a job to run after its due date and therefore we adjust the definition of $\chi$ to capture this
\[
  \chi^i_{j,C} = 1
  \qquad\qquad\text{if and only if}\qquad\qquad
  r^i_j \le \operatorname{left}(C) \,.
\]
Of course, for external cycles we keep the limitations arising from job sizes.

We make no changes to the constraints  \eqref{eq:Cmax:sumOfExes}--\eqref{eq:Cmax:incompatibleCyclesDisable} and \eqref{eq:sumw_jC_j:first}--\eqref{eq:sumw_jC_j:last} and only alter the objective.
First, we give the description of three types of areas, analogously to $\sum w_j C_j$.
Intuitively the area $A^i_k(\vey)$ is the contribution of the jobs in $\JJ_k$ (the set of jobs with completion time in $\interval[open left]{t_k}{t_{k+1}}$) as if all of them were completed at time $t_k$, the area $B^i_k(\vey)$ is the remaining contribution of jobs in $\JJ_k \cap C^{\operatorname{int}}_k$, and finally $X^i_k(\vey)$ is the rest of the contribution of the job $\hat{J}$ scheduled to an external cycle in $\mathcal{C}^{\operatorname{ext}}_{*,k}$.
Here, we define $w^i_{k,j} = 0$ if $t_k \le d^i_j$ and $w^i_{k,j} = w_j$, otherwise; we extend this notion to individual jobs $J$ in a natural way, i.e., we set $w^i_k(J) = w^i_{k,j}$ if $J$ is of type $j$.
Recall that $q^i_k$ is the time that job $\hat{J}$ runs past $t_k$.
The areas are as follows
\begin{itemize}
  \item The area $A^i_k(\vey)$ is $\sum_{J \in \mathcal{J}_k(\vey)} \left( \left( t_k - d^i(J) \right) \cdot w^i_k(J) \right)$,
  \item The area $B^i_k(\vey)$ is $\sum_{J \in \mathcal{J}_k(\vey) \setminus \{ \hat{J} \} } \left( \left( C^{\sigma(\vey)}_J - t_k \right) \cdot w^i_{k,j} \right)$, and
  \item The area $X^i_k(\vey)$ is $q^i_k \cdot w^i_k(\hat{J})$.
\end{itemize}
Now, we arrive at
\begin{equation}\tag{objective}\label{eq:sumw_jT_j:objectiveToAreasEquation}
  \sum_{J \in \mathcal{J}(\vey)} w(J)T^{\sigma({\vey})}_J = \sum_{k = 1}^{|T^i| - 1} \left(A^i_k(\vey) + B^i_k(\vey) + X^i_k(\vey) \right).
\end{equation}

Similarly to Lemma~\ref{lem:sumw_jC_j:objectiveToAreasEquationHolds} the above decomposition of the objective holds for tardiness.
\begin{lemma}
  Let $(\vex,\vey,\vez)$ satisfy constraints \eqref{eq:Cmax:sumOfExes}--\eqref{eq:Cmax:incompatibleCyclesDisable} and \eqref{eq:sumw_jC_j:first}--\eqref{eq:sumw_jC_j:cycleVolumeBounds}.
  Then the equation~\eqref{eq:sumw_jT_j:objectiveToAreasEquation} holds for $\sigma(\vey)$ given by Algortihm~\ref{alg:sigmaFromYWithOrder} and gives the value of $\sigma(\vey)$ under $\sum w_j T_j$.
\end{lemma}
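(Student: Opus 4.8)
The plan is to mirror the proof of Lemma~\ref{lem:sumw_jC_j:objectiveToAreasEquationHolds}, only adapting the per-job bookkeeping to the tardiness penalty. Fix a machine of kind $i\in[\kappa]$. Since $\mathcal{J}(\vey)=\bigcup_{k=1}^{|T^i|-1}\mathcal{J}_k(\vey)$ is a partition (recall $\mathcal{J}_k(\vey)=\{J\in\mathcal{J}(\vey)\mid t_k<C^{\sigma(\vey)}_J\le t_{k+1}\}$), it suffices to show that for each $k\in[|T^i|-1]$ we have $\sum_{J\in\mathcal{J}_k(\vey)}w(J)T^{\sigma(\vey)}_J=A^i_k(\vey)+B^i_k(\vey)+X^i_k(\vey)$. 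As in Lemma~\ref{lem:sumw_jC_j:objectiveToAreasEquationHolds}, I would first invoke Lemma~\ref{lem:sumw_jC_j:criticalTimeSplitExternalCycles} (whose proof is unchanged under the altered $\chi^i_{j,C}$, since that only enlarges the set of admissible placements) to conclude that $\sigma(\vey)$ is a well-defined $\mathbb{N}$-regular schedule of $\vex$ and that $\mathcal{J}_k(\vey)=\{\hat J\}\cup\{J\in C^{\operatorname{int}}_k\}$, where $\hat J$ is the unique job assigned to a realization of an external cycle in $\mathcal{C}^{\operatorname{ext}}_{*,k}$ if one exists (and where, if some external cycle starts before $t_k$ and ends after $t_{k+1}$ in $\sigma(\vey)$, then $\mathcal{J}_k(\vey)=\emptyset$ and all three areas are $0$, exactly as for $\sum w_jC_j$).

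The one genuinely new ingredient is the observation that tardiness is ``type\hy uniform'' on each interval. Every due date $d^i_j$ is a critical time, hence no due date lies in the open interval $(t_k,t_{k+1})$. Therefore for a job $J$ of type $j$ with $C^{\sigma(\vey)}_J\in(t_k,t_{k+1}]$ one has $d^i_j\le t_k$ or $d^i_j\ge t_{k+1}$; in the first case $C^{\sigma(\vey)}_J>t_k\ge d^i_j$, so $J$ is tardy, and in the second case $C^{\sigma(\vey)}_J\le t_{k+1}\le d^i_j$, so $J$ is not tardy. By the definition of $w^i_{k,j}$ this is exactly the statement that $w^i_k(J)$ equals $w_j$ when $J$ is tardy and $0$ otherwise. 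Consequently the tardiness contribution of $J$ splits at $t_k$: the identity $w(J)T^{\sigma(\vey)}_J=w^i_k(J)\left((t_k-d^i_j)+(C^{\sigma(\vey)}_J-t_k)\right)$ holds both when $J$ is tardy (both sides equal $w_j(C^{\sigma(\vey)}_J-d^i_j)$) and when it is not (both sides are $0$).

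Summing the ``past'' part $w^i_k(J)(t_k-d^i_j)$ over $J\in\mathcal{J}_k(\vey)$ gives $A^i_k(\vey)$ by definition. For the ``future'' parts, the contribution of $\hat J$ is $w^i_k(\hat J)\left(C^{\sigma(\vey)}_{\hat J}-t_k\right)=w^i_k(\hat J)\,q^i_k=X^i_k(\vey)$ (this being $0$ precisely when $\hat J$ is not tardy or does not exist), and the sum of the future parts over the jobs in $C^{\operatorname{int}}_k$ is $B^i_k(\vey)$ by its definition. Adding the three contributions yields the claimed identity for interval $k$, and summing over $k$ gives~\eqref{eq:sumw_jT_j:objectiveToAreasEquation}. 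I expect the only delicate point to be the ``type\hy uniform'' observation of the second paragraph --- in particular checking the boundary behaviour when $d^i_j\in\{t_k,t_{k+1}\}$ against the precise definition of $w^i_{k,j}$ and the on\hy time convention $\rho(J)\le d^i(J)$; everything else is a direct transcription of the $\sum w_jC_j$ argument.
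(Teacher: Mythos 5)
Your approach matches the paper's: partition $\mathcal{J}(\vey)$ over the intervals $\interval[open left]{t_k}{t_{k+1}}$, split each job's tardiness contribution at $t_k$, and aggregate the ``past'' parts into $A^i_k$, the internal-cycle ``future'' parts into $B^i_k$, and the external-cycle ``future'' part of $\hat J$ into $X^i_k$. Your ``type-uniform'' observation (that on a fixed interval all jobs of a given type are either all tardy or all on time, since due dates are critical times) is the same key fact the paper relies on, just stated more explicitly, and your invocation of Lemma~\ref{lem:sumw_jC_j:criticalTimeSplitExternalCycles} to guarantee $\sigma(\vey)$ is a well-defined $\N$-regular schedule matches the implicit reliance in Lemma~\ref{lem:sumw_jC_j:objectiveToAreasEquationHolds}.

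The boundary delicacy you flag at the end is genuine and worth resolving rather than waving at. With the paper's definition $w^i_{k,j}=0$ iff $t_k\le d^i_j$, a job $J$ of type $j$ with $d^i(J)=t_k$ and $C^{\sigma(\vey)}_J\in(t_k,t_{k+1}]$ is tardy (since $\rho(J)>d^i(J)$) with $w(J)T_J=w(J)(C_J-t_k)>0$, yet $w^i_k(J)=0$, so the identity $w(J)T_J=w^i_k(J)\bigl((t_k-d^i(J))+(C_J-t_k)\bigr)$ fails. The definition should read $w^i_{k,j}=0$ iff $t_k<d^i_j$ (equivalently $w^i_{k,j}=w_j$ iff $d^i_j\le t_k$) so that $w^i_k(J)\ne0$ exactly when $J\in\mathcal{J}_k(\vey)$ is tardy; your argument then goes through verbatim. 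Relatedly, the paper's proof writes $w^i_{k-1}(J)$ in front of $(t_k-d^i(J))$ but then claims these terms aggregate to $A^i_k(\vey)$, which by its definition uses $w^i_k(J)$; you avoid that inconsistency by using $w^i_k$ throughout, which is the correct choice once the definition is tightened as above.
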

\begin{proof}
Observe that $\mathcal{J}(\vey)$ can be partitioned into $\bigcup_{k=1}^{|T^i|-1} \mathcal{J}_k(\vey)$ and thus it suffices to show that
\[
  \sum_{J \in \mathcal{J}_k(\vey)} w(J)T^{\sigma({\vey})}_J
  =
  A^i_k(\vey) + B^i_k(\vey) + X^i_k(\vey) \,.
\]
To see this observe that for a job $J \in \mathcal{J}_k(\vey)$ we split its contribution $w(J) \cdot T_J$ at time $t_k$ and we have
\[
  w(J)T^{\sigma({\vey})}_J
  =
  w(J) \cdot \max\{0, C^{\sigma(\vey)}_J - d^i(J)\}
  =
  w^i_k(J) \cdot \left( \max\{0, t_k - d^i(J)\} + \max\{0, C^{\sigma(\vey)}_J - t_k \} \right) \,.
\]
It follows from the definition of $w^i_k(J)$ that $\max\{0, t_k - d^i(J)\} = 0$ if and only if $w^i_{k-1}(J)= 0$.
Similarly, $\max\{0, C^{\sigma(\vey)}_J - t_k\} = 0$ if and only if $w^i_k(J) = 0$.
Hence,
\[
  w(J) \cdot \left( \max\{0, t_k - d^i(J)\} + \max\{0, C^{\sigma(\vey)}_J - t_k \} \right)
  =
  w^i_{k-1}(J) \cdot \left( t_k - d^i(J) \right) + w^i_{k}(J) \cdot \left( C^{\sigma(\vey)}_J - t_k \right) \,.
\]
Aggregating over all $\JJ_k(\vey)$, the first summands sum up to $A_k^i(\vey)$, and the second summands sum up to $B_k^i(\vey) + X_k^i(\vey)$, where $B_k^i(\vey)$ is obtained from jobs in $C^{\operatorname{int}}_k$ and $X_k^i(\vey)$ from $\hat{J}$.
\end{proof}

\begin{lemma}\label{lem:sumw_jT_j:computingAandX}
  Fix a machine of kind $i$ and let $(\vex, \vey, \vez)$ satisfy constraints~\eqref{eq:Cmax:sumOfExes}--\eqref{eq:Cmax:incompatibleCyclesDisable} and \eqref{eq:sumw_jC_j:first}--\eqref{eq:sumw_jC_j:binaryProduct}.
  The functions $A^i_k$ and $X^i_k$ are linear in $\vey$, for all $k \in [|T^i| - 1]$.
\end{lemma}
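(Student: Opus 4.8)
The plan is to follow the proof of Lemma~\ref{lem:sumw_jC_j:computingAandX} almost verbatim, splitting the argument according to the two area types and relying on the decomposition of the objective recorded in the preceding lemma (equation~\eqref{eq:sumw_jT_j:objectiveToAreasEquation}). The single point where the tardiness objective deviates from $\sum w_j C_j$ is area $A^i_k$: instead of a rectangle reaching from $t_k$ down to $0$, each job $J\in\mathcal{J}_k(\vey)$ now contributes only $(t_k-d^i(J))\cdot w^i_k(J)$, i.e.\ it is switched off whenever $t_k\le d^i(J)$. Since both $d^i_j$ and the quantity $w^i_{k,j}$ (equal to $w_j$ if $t_k>d^i_j$ and $0$ otherwise) are constants fixed by the instance and the index $k$ rather than by $\vey$, this change merely alters the coefficients of a linear form without destroying linearity.

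First I would recall, exactly as in Lemma~\ref{lem:sumw_jC_j:computingAandX}, that $\mathcal{J}_k(\vey)$ — the jobs completing in $\interval[open left]{t_k}{t_{k+1}}$ — is precisely the set of jobs assigned to $C^{\operatorname{int}}_k$ together with the at most one job assigned to an external cycle from $\mathcal{C}^{\operatorname{ext}}_{*,k}$ (these external cycles being pairwise incompatible, so at most one of them carries a job by~\eqref{eq:Cmax:incompatibleCyclesDisable}). This yields the closed form
\begin{equation*}
A^i_k(\vey)=\sum_{C\in\mathcal{C}^{\operatorname{ext}}_{*,k}}\sum_{j=1}^d (t_k-d^i_j)\,w^i_{k,j}\,y^i_{j,C}+\sum_{j=1}^d (t_k-d^i_j)\,w^i_{k,j}\,y^i_{j,C^{\operatorname{int}}_k},
\end{equation*}
which is linear in $\vey$ because $t_k$, $d^i_j$, and $w^i_{k,j}$ are constants. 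For area $X^i_k$ I would argue identically to the $\sum w_j C_j$ case: its height is the tardiness-measured weight $w^i_{k,j}$ of the job placed in the (unique candidate) external cycle $C\in\mathcal{C}^{\operatorname{ext}}_{*,k}$ when $y^i_{j,C}=1$, and its width $q^i_k$ equals $p$ exactly when $y^i_{C,R,p}=1$. The resulting quadratic term in the binary variables $y^i_{j,C}$ and $y^i_{C,R,p}$ is linearized through the product variables $y^i_{j,C,R,p}$ defined by~\eqref{eq:sumw_jC_j:binaryProduct}, giving
\begin{equation*}
X^i_k(\vey)=\sum_{j=1}^d w^i_{k,j}\left(\sum_{C\in\mathcal{C}^{\operatorname{ext}}_{*,k}}\sum_{p\in[p_{\max}]} p\cdot y^i_{j,C,R,p}\right),
\end{equation*}
again linear in $\vey$ (and in the auxiliary product variables, which are themselves pinned down by $\vey$).

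I do not anticipate a genuine obstacle here; the only care required is the bookkeeping for $A^i_k$, namely checking that for a job scheduled past its due date the split of its contribution at $t_k$ still sends the ``$\le t_k$'' part entirely into $A^i_k$ and the ``$>t_k$'' part into $B^i_k$ or $X^i_k$. But that is exactly what the preceding decomposition lemma for $\sum w_j T_j$ establishes, so once it is invoked the two displayed formulas complete the proof. I therefore expect the write-up to be a near-verbatim adaptation of the proof of Lemma~\ref{lem:sumw_jC_j:computingAandX}, with $w_j$ replaced by $w^i_{k,j}$ and the factor $t_k$ replaced by $t_k-d^i_j$ in the expression for $A^i_k$, and with the remainder of the presentation (the identities expressing $B^i_k$) deferred to the subsequent lemmas on $\hat{B}^i_k$.
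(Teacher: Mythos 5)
Your proposal is correct and follows the paper's proof essentially verbatim: both derive the same closed-form expressions for $A^i_k$ (substituting $w^i_{k,j}$ for $w_j$ and $t_k - d^i_j$ for $t_k$) and for $X^i_k$ (linearized via the product variables $y^i_{j,C,R,p}$ from~\eqref{eq:sumw_jC_j:binaryProduct}), and both justify them by the same observations about $w^i_{k,j}$ vanishing for non-tardy job types.
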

\begin{proof}
We claim that area $A^i_k$ can be expressed as follows:
\begin{equation}\tag{$A^i_k$} \label{eq:Aik_tardiness}
  A^i_k(\vey) = \sum_{j = 1}^d w^i_{k,j} \cdot \left( \left( t_k - d^i_j \right) \cdot \left( y^i_{j,C^{\text{int}}_k} + \sum_{C \in \mathcal{C}^{\text{ext}}_{*,k}} y^i_{j,C} \right) \right) \,.
\end{equation}
The correctness follows from the definition of the weights $w^i_{k,j}$.
Observe that if a job type $j$ is not tardy in the interval $\interval{t_k}{t_{k+1}}$, then its contribution towards the objective should be $0$ and this is enforced by $w^i_{k,j} = 0$.
Thus, it remains to argue about jobs of type $j$ which is already tardy in the interval $\interval{t_k}{t_{k+1}}$.
Let $J$ be such a job and let $j$ be its job type.
We have that $w^i_{k,j} = w(J)$ and $(t_k - d^i_j) \ge 0$.
Now the contribution of the job $J$ to the part $A^i_k$ is by definition $w(J) \cdot (t_k - d^i_j)$, and there are $y^i_{j,C^{\text{int}}_k}$ of such jobs in $C^{\text{int}}_k$, hence the contribution of jobs of type $j$ in $C^{\text{int}}_k$ towards $A^i_k$ is $w^i_{k,j}\cdot (t_k - d^i_j) \cdot y^i_{j,C^{\text{int}}_k}$,
It remains to account for the contribution of the job $\hat{J}$, which is accounted for by the remaining term $w^i_{k,j} \cdot \left(( t_k - d^i_j) \cdot \left( \sum_{C \in \mathcal{C}^{\text{ext}}_{*,k}} y^i_{j,C} \right) \right)$.

Area $X^i_k$ can be expressed as follows
\begin{equation}\tag{$X^i_k$}
  X^i_k(\vey) = \sum_{C \in \mathcal{C}^{\text{ext}}_{*,k}} \sum_{p \in [p_{\max}]} \sum_{j = 1}^d p \cdot w^i_{k,j} \cdot y^i_{j,C,R,p} \,.
\end{equation}
Again if the job $J$ assigned to a cycle $C$ in $\mathcal{C}^{\text{ext}}_{*,k}$ is not tardy (i.e., $t_{k+1} \le d^i(J)$), then the above sum is $0$ and thus it is correct.
It remains to argue the correctness in the case when $J$ is tardy.
Let $j$ be the job type of $J$ and let $p$ be the time $J$ needs to finish after $t_k$ in $\sigma(\vey)$.
It follows that the only nonzero summand in the above expression is $p \cdot w^i_{k,j} \cdot y^i_{j,C,R,p}$, where we have $w^i_{k,j} = w(J)$ and $y^i_{j,C,R,p} = 1$.
The correctness thus follows.
\end{proof}

The function $B^i_k(\vey, \vealpha)$ is identical to the one for $\sum w_jC_j$ except for the segment-specific weights defined above, hence it is separable convex in $(\vey, \vealpha)$.
The parameters of the resulting MIMO model are the same as in Lemma~\ref{lem:MIMOPSumObjectives}, since it only differs from it in the objective function.

\subsubsection{Algorithm for Objectives $\sum w_j C_j$, $\sum w_j F_j$, and $\sum w_j T_j$}
Let us summarize the properties of the MIMO models we have constructed:
\begin{lemma}
\label{lem:MIMORSumObjectives}
  Let $\mathcal{I}$ be an instance of $R | r^i_j,d^i_j | \mathcal{R}$ with $\mathcal{R} \in \left\{ \sum w_j C_j, \sum w_j F_j, \sum w_j T_j \right\}$ with $m$ machines of $\kappa$ kinds, $d$ job types, and denote $p_{\max} = \max_{1 \le j \le d} p^i_j$.
  There is a MIMO model $\mathcal{S}$ for $\mathcal{I}$ with extension-separable convex objective functions and parameters
  \begin{tasks}[style=itemize](3)
    \task $\mathcal{S}(\Delta) = p_{\max}$,
    \task $\mathcal{S}(M) = \OhOp{(d)^2 \cdot p_{\max}}$,
    \task $\mathcal{S}(d) = d$,
    \task $\mathcal{S}(d^i) = \OhOp{(d)^2 \cdot p_{\max}}$,
    \task $\mathcal{S}(N) = m$, and
    \task $\mathcal{S}(\tau) = \kappa$. \qed
  \end{tasks}
\end{lemma}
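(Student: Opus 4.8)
The plan is to treat this as a consolidation of the models built in the preceding subsections, so essentially no new work is needed beyond bookkeeping. For the objectives $\sum w_j C_j$ and $\sum w_j F_j$ the statement is already contained in Lemma~\ref{lem:MIMOPSumObjectives}: the MIMO model is the one given by constraints \eqref{eq:Cmax:sumOfExes}--\eqref{eq:Cmax:incompatibleCyclesDisable} together with \eqref{eq:sumw_jC_j:first}--\eqref{eq:sumw_jC_j:last}, whose correctness follows from Lemma~\ref{lem:modelPlacesCyclesCorrectly} and Lemmata~\ref{lem:cyclesScheduleOrderedObjective}--\ref{lem:sumw_jC_j:equivalenceOfBs}, and whose objective $f^i(\vey,\vealpha)=\sum_{k=1}^{|T^i|-1}\bigl(A^i_k(\vey)+\hat{B}^i_k(\vealpha,\vey)+X^i_k(\vey)\bigr)$ is extension-separable convex, being a non-negative combination of separable convex functions (Lemmata~\ref{lem:sumw_jC_j:computingAandX} and~\ref{lem:sumw_jC_j:computingHatB}) plus a linear term. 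For $\sum w_j F_j$ I would add that, since release times here are independent of the assignment, the difference $\sum w_j C_j - \sum w_j F_j$ restricted to a machine of kind~$i$ equals $\sum_{j\in[d]} w_j r^i_j x^i_j$, a linear function of the configuration vector; adding it to $f^i$ preserves extension-separable convexity and changes none of the six parameters.

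For $\sum w_j T_j$ I would point to Section~\ref{sec:tardiness}: the model keeps the \emph{same} constraints \eqref{eq:Cmax:sumOfExes}--\eqref{eq:Cmax:incompatibleCyclesDisable} and \eqref{eq:sumw_jC_j:first}--\eqref{eq:sumw_jC_j:last}, appending only the auxiliary critical time $t=\sum_{J\in\mathcal{J}}p^i(J)$ to $T^i$ and relaxing the indicators $\chi^i_{j,C}$ to permit a job past its due date; neither change affects the largest coefficient, and each still contributes only $\OhOp{d^2}$ further rows and variables, which are dominated. The objective becomes $\sum_{k=1}^{|T^i|-1}\bigl(A^i_k(\vey)+B^i_k(\vey,\vealpha)+X^i_k(\vey)\bigr)$ with the segment-specific weights $w^i_{k,j}$, and I would invoke Lemma~\ref{lem:sumw_jT_j:computingAandX} for linearity of $A^i_k$ and $X^i_k$ and the remark that $B^i_k$ has the same shape as in the $\sum w_j C_j$ case, so the objective is again extension-separable convex.

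For the parameter count I would simply repeat the bookkeeping from the proof of Lemma~\ref{lem:MIMOPSumObjectives}, since only the objective differs across the three problems: the largest coefficient in any constraint is $1$, some $p^i_j\le p_{\max}$, or $p_{\max}$, so $\mathcal{S}(\Delta)=p_{\max}$; the $\OhOp{d^2}$ constraints inherited from the $C_{\max}$ model plus $\OhOp{d^2 p_{\max}}$ new ones (dominated by \eqref{eq:sumw_jC_j:last}) give $\mathcal{S}(M)=\OhOp{d^2 p_{\max}}$; the projections $\pi^i$ discard $\vey,\vez,\vealpha$ and the product variables $y^i_{j,C,R,p}$, together of dimension $\OhOp{d^2 p_{\max}}$, leaving the configuration vector $\vex$ of dimension $d$, so $\mathcal{S}(d)=d$ and $\mathcal{S}(d^i)=\OhOp{d^2 p_{\max}}$; finally $\mathcal{S}(N)=\|\vemu\|_1=m$ and there is one polytope type per machine kind, so $\mathcal{S}(\tau)=\kappa$. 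The only non-routine point I would be careful about is that the objective, once all auxiliary variables are substituted, is \emph{separable} convex rather than merely convex --- in particular that the negative quadratic term $-\rho^i_{k,\max}(\alpha^i_{k,0})^2$ in $\hat{B}^i_k$ collapses to a linear term via the binary variables $y^i_{C,R,p}$; this is exactly Lemma~\ref{lem:sumw_jC_j:computingHatB}, and it carries over verbatim to the tardiness weights, so the consolidation goes through.
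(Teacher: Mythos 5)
Your proposal is correct and follows essentially the same route as the paper: the lemma is indeed stated as a consolidation, with the $\sum w_j C_j$ and $\sum w_j F_j$ cases delegated to Lemma~\ref{lem:MIMOPSumObjectives} and the tardiness case to the Section~\ref{sec:tardiness} discussion, which changes only the objective (and trivially extends $T^i$ and relaxes $\chi^i_{j,C}$) while leaving all six parameter bounds unchanged. Your explicit remark that $\sum w_j C_j - \sum w_j F_j$ reduces to a linear term $\sum_j w_j r^i_j x^i_j$ makes explicit something the paper only gestures at, but it is the same argument.
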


The following theorem then follows by the application of parts 1, 3 and 4 of Theorem~\ref{thm:implicitMIMO}.
\begin{theorem}
  Fix a scheduling objective $\mathcal{R} \in \left\{ \sum w_jC_j, \sum w_jT_j, \sum w_jF_j \right\}$.
  The problem $R | r^i_j,d^i_j | \mathcal{R}$ with~$m$ machines of~$\kappa$ kinds and $d$ job types with maximum job size $p_{\max}$ admits a fixed-parameter algorithm for parameters
\begin{itemize}
  \item $m + d + p_{\max}$, and,
  \item $d + p_{\max}$. \qed
\end{itemize}
\end{theorem}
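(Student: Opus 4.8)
The plan is to combine the modeling lemma (Lemma~\ref{lem:MIMORSumObjectives}) with the algorithmic results for MIMO (Theorem~\ref{thm:implicitMIMO}). Given an instance $\mathcal{I}$ of $R|r^i_j,d^i_j|\mathcal{R}$ with $\mathcal{R}\in\{\sum w_jC_j,\sum w_jF_j,\sum w_jT_j\}$, I would first invoke Lemma~\ref{lem:MIMORSumObjectives} to construct a MIMO system $\mathcal{S}$ with extension-separable convex objective functions and with $\mathcal{S}(\Delta)=p_{\max}$, $\mathcal{S}(M)=\mathcal{S}(d^i)=\OhOp{d^2 p_{\max}}$, $\mathcal{S}(d)=d$, $\mathcal{S}(\tau)=\kappa$, and $\mathcal{S}(N)=m$. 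Note that, in contrast to the $C_{\max}$ case, no outer binary search over the objective value is required: the models of Section~\ref{sec:polyobj} (and the tardiness variant in Section~\ref{sec:tardiness}) express $\mathcal{R}$ \emph{exactly} as a separable convex function in the configuration and auxiliary variables, so a single MIMO solve suffices. One should also record the elementary observation that an extension-separable convex function is in particular convex, since it is the partial minimization of a separable convex function $g^i$ over the polytope $P^i$ in the projected-out coordinates; hence both the ``convex'' regime of part~\ref{thm:implicitMIMO:micp} and the ``extension-separable convex'' regime of part~\ref{thm:implicitMIMO:huge} of Theorem~\ref{thm:implicitMIMO} apply to $\mathcal{S}$.

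For the parameter $m+d+p_{\max}$ I would apply part~\ref{thm:implicitMIMO:micp}: its running time $(N(d+D))^{\Oh(N(d+D))}\hat L^{\Oh(1)}$ with $N=m$ and $D=\mathcal{S}(d^i)=\OhOp{d^2 p_{\max}}$ is fixed-parameter tractable in $m$, $d$, and $p_{\max}$; the required evaluation oracle for the (convex) objective amounts, for a fixed $\vex$, to solving a separable convex integer program in the $\OhOp{d^2 p_{\max}}$ auxiliary variables, which by Dadush--Vempala in dimension $\mathcal{S}(d^i)$ stays within the same parameter dependence. For the parameter $d+p_{\max}$ I would apply part~\ref{thm:implicitMIMO:huge}: its running time $(Md\Delta)^{\Oh(M^2 d+d^2 M)}\hat L^{\Oh(1)}$ with $M=\OhOp{d^2 p_{\max}}$, $d=d$, $\Delta=p_{\max}$ is fixed-parameter tractable in $d$ and $p_{\max}$ only (so $m$, and hence $\ven$ and $\vemu$, may be given in binary), the huge $N$-fold machinery of Theorem~\ref{thm:hugenfold} handling the extension-separable convex objective directly with $\vex'$ playing the role of the extra brick coordinates. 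In both cases $\hat L=L+\langle\ven,f_{\max},N\rangle$ is polynomial in the binary input size, as $f_{\max}$, the maximum of $|f^i|$ over a single machine's configurations, is bounded by a polynomial in $n$, $w_{\max}$, and the largest processing time.

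The genuinely substantial content has already been done: the refined cycle-decomposition theory (Lemmas~\ref{lem:cyclesScheduleCMax}--\ref{lem:cyclesScheduleOrderedObjective}), the expressibility of $\sum w_jC_j$, $\sum w_jF_j$, and $\sum w_jT_j$ as separable convex objectives via the $\alpha$-variables and the 2D Gantt chart decomposition (Lemmas~\ref{lem:sumw_jC_j:objectiveToAreasEquationHolds}--\ref{lem:sumw_jC_j:equivalenceOfBs}), and the proximity theorem (Theorem~\ref{thm:proximity}) underpinning part~\ref{thm:implicitMIMO:huge}. What remains is bookkeeping of the parameter dependence. The only mild obstacle I anticipate is making the objective-evaluation oracle demanded by part~\ref{thm:implicitMIMO:micp} precise, i.e., checking that minimizing $g^i(\vex,\cdot)$ over $\{\vex' : (\vex,\vex')\in P^i\}\cap\Z^{d^i}$ is itself solvable within the claimed time bound; this is immediate from convex integer minimization being \FPT parameterized by the dimension $\mathcal{S}(d^i)=\OhOp{d^2 p_{\max}}$.
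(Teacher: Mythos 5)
Your overall plan — combine Lemma~\ref{lem:MIMORSumObjectives} with parts~\ref{thm:implicitMIMO:micp} and~\ref{thm:implicitMIMO:huge} of Theorem~\ref{thm:implicitMIMO}, noting that no outer binary search is needed since the objective is encoded exactly — is exactly what the paper does (the paper's proof is a one-liner citing ``parts 1, 3 and 4,'' where the reference to part~4 is spurious, as that part requires a linear or fixed-charge objective and produces a third parameterization $d+\kappa$ that does not appear in the statement). For the $d+p_{\max}$ parameterization your argument via part~\ref{thm:implicitMIMO:huge} and Theorem~\ref{thm:hugenfold} is clean.

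There is, however, a genuine flaw in the step where you invoke part~\ref{thm:implicitMIMO:micp}. You assert that an extension-separable convex function $f^i$ is convex because it is a partial minimization of the separable convex $g^i$; but by the paper's definition that minimization is over $P^i\cap\Z^{d+d^i}$, i.e., over \emph{integer} $\vex'$, and partial minimization of a convex function over a non-convex (lattice) set need not be convex (e.g.\ $g(x,y)=(x-2y)^2$ on $0\le x\le 2$, $y\in\{0,1\}$ gives $f(0)=f(2)=0$, $f(1)=1$). Consequently, feeding an evaluation oracle for $f^i$ to the Dadush--Vempala algorithm is not justified: the algorithm requires a genuinely convex objective. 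The correct route, and the one the paper implicitly takes, is not to project out $\vex'$ at all: Lemma~\ref{lem:MIMO_as_nfold} encodes the MIMO instance as an $N$-fold IP of dimension $N(d+D+M)$ whose objective on each brick is $g^i$ itself, hence the overall objective is a \emph{separable} convex function on the full variable vector and is certainly convex. Dadush--Vempala then applies on dimension $m\cdot\Oh(d^2 p_{\max})$, which is FPT in $m+d+p_{\max}$, recovering the first bullet without any appeal to convexity of the projected $f^i$.
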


\subsubsection{Introducing Speeds into the Model}
Recall that with the $C_{\max}$ objective we used the right-hand sides to introduce speeds to the model, and this was sufficient because for the objective $C_{\max}$ the ``relative `position of a critical time during processing of a job assigned to a realization of an external cycle'' is irrelevant.
In the case of ordered objectives this changes.
Now, when speeds come to play it is not possible to assume a job in an external cycle is split by the corresponding critical time into integral chunks ($\mathbb{N}$-regularity).
The weaker notion of $\frac{\mathbb{N}}{s}$-regularity is not sufficient, as we want to bound the number of variables in our model independently of the value $s$.
The good news is that there is only a limited number of (fractional parts of the) ``shifts'', i.e., possibilities of how a unit of job size can be split by a critical time in an optimal schedule, as we are about to see.
On the other hand, we cannot afford to guess these shifts beforehand and thus we have to be able to express the shift corresponding to a schedule encoded by $\vey$.
In order to do so we introduce the notion of a \emph{busy period} of a nonempty (external) cycle on a machine.
Fix a machine of kind~$i$ and let $s$ be its speed.
Let $C$ be a nonempty external cycle in some schedule $\sigma$ and its decomposition $\DD$ on this machine, and let $J$ be the job assigned to~$C$.
We say that a \emph{busy period} of $C$ starts at time $t(J)$ (cf. Lemma~\ref{lem:leftAlignedScheduleCMax}); recall that $t(J)$ is the latest critical time preceding the execution of $J$ before which the machine is idle (or $0$ if the machine is not idle prior to the execution of~$J$).
Observe now that if $t_k$ is a critical time contained in $C$, then $t_k$ splits $J$ into two parts such that $\alpha + \beta$ of $J$'s size is processed at time $t_k$, where $\alpha \in \N$ and $0 \le \beta = \fract(s \cdot (t_k - t(J))) < 1$.
This is because $s \cdot (t_k - t(J))$ is the total size of jobs a machine processes after $t(J)$ and since the sizes of all jobs ($p^i_j$) is a positive integer.
Note that the value of $\beta$ depends only on the alignment of $t(J)$ with respect to $t_k$; consequently, if the fractional part of $s \cdot (t_k - t(J))$ is the same as the fractional part of $s \cdot (t_k - t_\ell)$ the busy period for $J$ might as well started at $t_\ell$ and the result would be the same.

Observe that the size of a job $J$ assigned to an external cycle $C = C^{\operatorname{ext}}_{\ell,k}$ is split naturally into three parts as follows.
\begin{itemize}
  \item The first part is the amount of the size of $J$ which is already processed at time $t_\ell$; we denote this $\bar{\alpha} + \bar{\beta}$ and we stress that $\bar{\alpha} \in \mathbb{N}$ and $0 \le \bar{\beta} < 1$.
  \item The second part is possibly empty and is of length $s \cdot (t_k - t_\ell)$, that is, the amount of $J$'s size $p^i(J)$ processed between the critical times defining the cycle $C$.
  \item The last part of the total size of $J$ processed after $t_k$; again denote $\alpha + \beta$ with $\alpha \in \mathbb{N}$ and $0 \le \beta < 1$.
\end{itemize}

When it comes to the computation of the value of the objective we observe that the area $A^i_k$ remains completely the same.
Furthermore, the areas $B^i_k$ and $X^i_k$ remain nearly the same (essentially, we only stretch our arguments by the speed factor).
Notice that for computation of $B^i_k$ and $X^i_k$ we needed for each job type its \emph{exact} completion time after we subtract $t_k$.
This is again possible using the shifts $\beta$ and thanks to fractional MIMO objective.

Note that the number of different possible shifts is of order $d^2$, since it is a function only of the two critical times---the one starting the busy period and the last one in the external cycle preceding the current time interval $\interval{t_\ell}{t_{\ell+1}}$ for some $\ell \in [|T| - 1]$ (i.e., a possibly nonempty internal cycle).
Let $\beta_{\hat{\ell}, \ell}$ denote the fractional part of the job size to be processed after the critical time $t_{\ell}$ (which we call \emph{$t_\ell$ busy shift}) if the busy period started at time $t_{\hat{\ell}}$, that is,
\[
  \beta_{\hat{\ell}, \ell} = 1 - \fract(s \cdot (t_\ell - t_{\hat{\ell}})) \,.
\]
Let $\mathcal{B}^{i,s}$ be the set of all possible shifts with respect to machine kind $i$ and speed $s$.
Moreover, we make $\BB^{i,s}$ ``symmetric'' by adding to it $(1-\beta)$ for every $\beta \in \BB^{i,s}$, and also add $0$ to it.
This allows us to give an analogue of Lemma~\ref{lem:cyclesScheduleOrderedObjective}, which says that it is sufficient to restrict our attention to $(\N+\BB^{i,s})$-regular schedules, where by $\N+\BB^{i,s}$ we denote the set $\left\{ \alpha + \beta \mid \alpha \in \N, \beta \in \BB^{i,s} \right\}$.
\begin{lemma}
	\label{lem:cyclesScheduleOrderedObjectiveSpeeds}
	Fix a machine of kind $i$ with speed $s$ and an ordered objective $\mathcal{R}$, and let $\vex \in \N^d$.
	If there is a schedule $\sigma'$ of $\vex$ with value $\varphi$ under $\mathcal{R}$, then there is an $(\mathbb{N}+\BB^{i,s})$-regular schedule $\sigma$ of $\vex$ with value at most~$\varphi$ under $\mathcal{R}$ such that each cycle of $\sigma$ is a realization of some potential cycle $C \in \CC$, and the jobs in each internal cycle $C = C^{\operatorname{int}}_k$ of $\sigma$ are ordered by $\preceq^i_{\mathcal{R},k}$.
\end{lemma}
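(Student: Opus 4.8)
The plan is to follow the proof of Lemma~\ref{lem:cyclesScheduleOrderedObjective} essentially verbatim, replacing only its final $\N$-regularization step by one that produces $(\N+\BB^{i,s})$-regularity. Concretely, starting from $\sigma'$ and an arbitrary cycle decomposition $\DD'$, I would: (i) split every external cycle so that afterwards each external cycle holds at most one job, which only refines $\DD'$ and changes neither $\sigma'$ nor its value; (ii) for every $\ell\in[|T^i|-1]$ reorder the jobs strictly contained in $[t_\ell,t_{\ell+1}]$ according to $\preceq^i_{\RR,\ell}$, which cannot increase the objective by Lemma~\ref{lem:ordered} (Smith's rule~\cite{Smith1956}); and (iii) merge the internal cycles lying in a common interval $[t_k,t_{k+1}]$ into a single internal cycle and re-sort it by $\preceq^i_{\RR,k}$, exactly as in Lemma~\ref{lem:PCmaxCanonicalCycleDecomposition}, again not increasing the objective and yielding a regular decomposition (at most $4d-3$ cycles, one internal cycle per interval, one job per external cycle, and each cycle a realization of some potential cycle in $\CC$).

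The genuinely new ingredient is the left-alignment. Mirroring the last part of the proof of Lemma~\ref{lem:cyclesScheduleCMax}, I would repeatedly take the cycle $C\in\DD'$ with the smallest $\lambda(C)\notin\N+\BB^{i,s}$ and shift it, together with its jobs, leftward until it abuts either a critical time — which is integral, hence in $\N\subseteq\N+\BB^{i,s}$ because $0\in\BB^{i,s}$ — or the completion time of the cycle immediately preceding it. To see that this latter value lands in $\N+\BB^{i,s}$, I invoke the busy-period analysis developed in the paragraph preceding this lemma: after shifting, the schedule decomposes into maximal busy periods, each of which (by the reasoning behind Lemma~\ref{lem:leftAlignedScheduleCMax}) begins at a critical time $t_{\hat\ell}$, and within such a busy period the portion of any job processed after an interior critical time $t_\ell$ has size $\alpha+\beta_{\hat\ell,\ell}$ with $\alpha\in\N$ and $\beta_{\hat\ell,\ell}=1-\fract(s(t_\ell-t_{\hat\ell}))\in\BB^{i,s}$; consequently the completion time of an external cycle's job, and hence the start of the cycle following it, has offset in $\BB^{i,s}$ — which is precisely why $\BB^{i,s}$ is symmetrized by adjoining $1-\beta$ for each $\beta$ and why $0$ is added. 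Since the smallest offending start strictly increases with each shift, the process terminates in at most $|\DD'|$ steps; left-aligning never increases a completion time and every ordered objective $\RR\in\mathfrak{C}_{\text{poly}}\setminus\{\ell_p\}$ is monotone nondecreasing in the completion times, so the objective does not increase. The resulting $\sigma$ then has an $(\N+\BB^{i,s})$-regular decomposition in which each cycle realizes some $C\in\CC$ and each internal cycle $C^{\operatorname{int}}_k$ is $\preceq^i_{\RR,k}$-ordered, which is the claim.

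I expect the left-alignment step to be the main obstacle: unlike in the speedless setting, completion times are no longer integral, so one must argue carefully that the only fractional offsets arising at cycle boundaries are the finitely many busy shifts recorded in $\BB^{i,s}$, and the symmetrization together with the inclusion of $0$ are exactly what makes the induction on ``the smallest $\lambda(C)\notin\N+\BB^{i,s}$'' close. A secondary point needing a short argument — vacuous for $\sum w_jC_j$ and $\sum w_jF_j$, where all orders $\preceq^i_{\RR,\ell}$ coincide, but not for $\sum w_jT_j$ — is that shifting a cycle into a different interval must not spoil correct ordering; this is handled by re-running the reordering of step (ii) after the shifts, which is safe by Lemma~\ref{lem:ordered}.
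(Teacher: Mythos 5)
Your proposal follows essentially the same route as the paper's own (very brief) proof: the paper simply says the procedure is identical to that of Lemma~\ref{lem:cyclesScheduleOrderedObjective} (split external cycles, Smith-reorder each interval, left-align), and then makes the single observation that left-alignment yields starts in $\N+\BB^{i,s}$, pointing to the symmetry of $\BB^{i,s}$ as the reason the offset incurred at an external cycle (a job processed for $\alpha+\beta$ size units after $t_k$, hence $\bar\alpha+(1-\beta)$ before) is still recorded in $\BB^{i,s}$.

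Your write-up is a genuine expansion of that one-paragraph argument rather than a different approach. Two things you add are worth flagging. First, your busy-period justification of why $\rho(C')$ lands in $\N+\BB^{i,s}$ is essentially the paper's intent, but be careful with units: $\beta_{\hat\ell,\ell}$ is a fractional amount of \emph{size}, whereas $\lambda(C)$ is a \emph{time}; the completion time of the job split at $t_\ell$ is $t_\ell + (\alpha+\beta_{\hat\ell,\ell})/s$, and the paper's phrase ``$\lambda(C)\in\N+\BB^{i,s}$'' implicitly works in $s$-scaled time so that the offset is measured in size units. Your sentence conflates these; it should say that the $s$-scaled offset, equivalently the fractional part of the remaining job size, lies in $\BB^{i,s}$. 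Second, your remark about $\sum w_j T_j$ -- that left-shifting a cycle into a new interval may invalidate the $\preceq^i_{\RR,k}$-ordering established beforehand, which is resolved by re-running the reordering step -- is a legitimate subtlety that the paper's proof of Lemma~\ref{lem:cyclesScheduleOrderedObjective} itself glosses over; your fix (alternate reorder and left-align, both monotone non-increasing in the objective, until a fixed point) is the right repair and applies equally to the speedless lemma.
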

\begin{proof}
The procedure is identical to the one described in the proof of Lemma~\ref{lem:cyclesScheduleOrderedObjective}.
It suffices to observe that the left-aligning of cycles results in a schedule which satisfies $\lambda(C) \in \N+\BB^{i,s}$ for all $C \in \DD'$.
Note that this is the point where we are using the symmetry of $\BB^{i,s}$, since if $J$ is a job in an external cycle which is split by $t_k$ and $J$ is processed for $\alpha+\beta$ size units after $t_k$ with $\alpha \in \N$, $0 \leq \beta <1$, it means that $J$ is processed for some $\bar{\alpha}+\bar{\beta}$ size units before $t_k$ where $\bar{\alpha} \in \N$ and $\bar{\beta} = 1-\beta$.
\end{proof}

Before we proceed let us briefly mention the basis of the MIMO model for polynomial objectives and input given in an $\ves$-representation.
As the basis we use the constraints \eqref{eq:Cmax:sumOfExes}--\eqref{eq:Cmax:incompatibleCyclesDisable} and \eqref{eq:sumw_jC_j:second}--\eqref{eq:sumw_jC_j:fourth}.

Now, our aim is to design binary variables $y^{i,s}_{C,\hat{\ell},\ell}$ with $C = C^{\operatorname{ext}}_{k, \ell}$ for some $1 \le k \le \ell$ and every $1\le\hat{\ell}\le k$ taking value $1$ if and only if the external cycle $C$ is nonempty and the busy shift for $C$ is $\beta_{\hat{\ell},\ell}$.
To that end we add the following constraints to our MIMO model.
We now use variables $\vey^{i,s}$ instead of $\vey^i$ in order to clearly keep track of the current machine kind~$i$ and its speed~$s$; similarly for $\vez^{i,s}$.
\begin{align}
  \sum_{1 < \hat{\ell} < \ell} y^{i,s}_{C,\hat{\ell},\ell} &= z^{i,s}_C & \forall \ell: t_\ell \in T, \forall C \in \mathcal{C}^{\operatorname{ext}}_{*, \ell} \label{eq:sumw_jC_jSpeeds:1}\\
  z^{i,s}_{\ell} &= \sum_{C \in \mathcal{C}^{\operatorname{ext}}: C \text{ contains } t_\ell}  z^{i,s}_C  & \forall \ell: t_\ell \in T \label{eq:sumw_jC_jSpeeds:2}\\
  y^{i,s}_{C,\hat{\ell},\ell} &\le z^{i,s}_k  & \forall (k, \hat{\ell}, \ell) : t_\ell \in T \land 1 < \hat{\ell} < k < \ell \label{eq:sumw_jC_jSpeeds:3}\\
  y^{i,s}_{C,\hat{\ell},\ell} &\le 1-z^{i,s}_{\hat{\ell}}  & \forall (\hat{\ell}, \ell) : t_\ell \in T \land 1 < \hat{\ell} < \ell \label{eq:sumw_jC_jSpeeds:4}\\
  0 \le y^{i, s}_{C,\hat{\ell}, \ell} &\le 1  & \forall (\hat{\ell}, \ell): t_\ell \in T, \forall C \in \mathcal{C}^{\operatorname{ext}}_{*, \ell}, 1 < \hat{\ell} < \ell \label{eq:sumw_jC_jSpeeds:5}
\end{align}

We are going to see that the variables $y^{i,s}_{C,\hat{\ell},\ell}$ characterize the busy shift of the external cycle $C$ at time $t_\ell$.
However, before we do so, we adjust the constraints \eqref{eq:sumw_jC_j:first} that we used to indicate a correct split of processing of a job assigned to an external cycle.
Now, we take the busy shift variables $y^{i,s}_{C,\hat{\ell},\ell}$ into account.
\begin{equation}\label{eq:sumw_jC_jSpeeds:6}
  \left( \sum_{p \in [p_{\max}] } p y^{i,s}_{C,L,p} \right) + \left( \sum_{1 < \hat{\ell} < \ell \text{ with } \beta_{\hat{\ell},{\ell}} \neq 0} y^{i,s}_{C,\hat{\ell},\ell} \right) + \left( \sum_{p \in [p_{\max}]} p y^{i,s}_{C,R,p} \right) = \sum_{j \in [d]} p^i_j y^{i,s}_{j,C}    \qquad \forall \ell \in [|T^i|], \forall C \in \mathcal{C}^{\text{ext}}_{*, \ell} \,.
\end{equation}
We stress here that we have added new variables $y^{i,s}_{C,L,0}, y^{i,s}_{C,R,0}$.
Furthermore, for technical reasons, we add the following constraints to ensure that if a job is assigned to an external cycle $C^{\operatorname{ext}}_{k,l}$, then a nonzero fraction of its size is executed prior to the critical time $t_k$.
Note that this is a natural assumption, since otherwise one can use a different cycle decomposition of an identical schedule.
Up until now this property was implied (note that $p \ge 1$ in all sums involved in \eqref{eq:sumw_jC_j:first}), however, in order to be able to prove that the variables $y^{i,s}_{C,\hat{\ell},\ell}$ are set correctly, we have to ensure this.
To that end we add the constraints
\begin{align}
  \sum_{p \in [p_{\max}]  \text{ with } p > s \cdot (t_k - t_{\ell})} y^{i,s}_{C,L,p} + \sum_{\hat{\ell} \in [\ell] \text{ with } \beta_{\hat{\ell},\ell} \neq 0} y^{i,s}_{C,\hat{\ell},k}   &\ge z^{i,s}_{C}
  && \forall k \in [|T^i|], \forall 1 < \ell \le k, C = C^{\text{ext}}_{\ell,k} \label{eq:sumw_jC_jSpeeds:7}
  \\
  \sum_{p \in [p_{\max}]} y^{i,s}_{C,R,p} + \sum_{\hat{\ell} \in [\ell] \text{ with } \beta_{\hat{\ell},k} \neq 0} y^{i,s}_{C,\hat{\ell},k}   &\ge z^{i,s}_{C}
  && \forall k \in [|T^i|], \forall 1 < \ell \le k, C = C^{\text{ext}}_{\ell,k}
  \label{eq:sumw_jC_jSpeeds:8} \,.
\end{align}

As usually, the last type of conditions we discuss now are the constraints constraining the total ``volume'' of the jobs assigned to be scheduled in between two critical times.
This time we make rather straightforward combination of conditions \eqref{eq:CmaxSpeed:cycleVolumeBounds} and \eqref{eq:sumw_jC_j:cycleVolumeBounds} as follows
\begin{equation}\label{eq:sumw_jC_jSpeeds:cycleVolumeBounds}
  \sum_{j \in [d]} \sum_{t_\ell \lhd C \lhd t_k} p^i_j \cdot y^{i,s}_{j,C}
  +
  \sum_{C \in \mathcal{C}^{\operatorname{ext}}_{> \ell,k}} \sum_{p \in [p_{\max}]} p \cdot y^{i,s}_{C, L, p}
  \le \left\lfloor s \cdot (t_k - t_\ell) \right\rfloor
  \qquad \forall k \in [|T^i|], 1 \le \ell \le k \,.
\end{equation}

\begin{lemma}\label{lem:sumw_jC_jSpeeds:betaAlignLemma}
  Let $C = C^{\operatorname{ext}}_{k,\ell}$ be a nonempty external cycle.
  If the busy period for $C$ starts in the critical time $t_{\bar{\ell}}$, then the variable $y^{i,s}_{C,\hat{\ell},\ell}$ is set to $1$ for exactly one $\bar{\ell} \le \hat{\ell} < \ell$ with $\beta_{\hat{\ell}, \ell} = \beta_{\bar{\ell}, \ell}$; otherwise all of these variables are set to $0$.
  Furthermore, we have $z^{i,s}_k = 1$ if and only if there exists a nonempty external cycle $C$ containing $t_k$.
\end{lemma}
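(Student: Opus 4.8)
The plan is to follow the template of the validity lemmas already established for the polynomial objectives without speeds, most notably Lemma~\ref{lem:sumw_jC_j:criticalTimeSplitExternalCycles} and the left-alignment analysis of Lemma~\ref{lem:leftAlignedScheduleCMax}, adapted to the busy-shift variables $y^{i,s}_{C,\hat\ell,\ell}$ and to $(\N+\BB^{i,s})$-regular schedules, whose existence is guaranteed by Lemma~\ref{lem:cyclesScheduleOrderedObjectiveSpeeds}. I would first dispatch the ``Furthermore'' clause. Any two potential external cycles that share an interior critical time $t_k$ are incompatible, so in an admissible schedule at most one of them is nonempty; hence in \eqref{eq:sumw_jC_jSpeeds:2} the right-hand sum reduces to the single term $z^{i,s}_C$ for the unique nonempty external cycle $C$ through $t_k$, and to $0$ if there is none. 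Combined with the fact, carried over verbatim from the $C_{\max}$ model via \eqref{eq:Cmax:cycle_bounds}, that $z^{i,s}_C = 1$ exactly when $C$ is nonempty, this yields $z^{i,s}_k = 1$ iff some nonempty external cycle contains $t_k$.

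For the main claim I would argue both directions. In the forward direction, starting from an $(\N+\BB^{i,s})$-regular schedule of $\vex$ and a nonempty external cycle $C = C^{\operatorname{ext}}_{k,\ell}$ whose busy period (in the sense of Lemma~\ref{lem:leftAlignedScheduleCMax}) starts at $t_{\bar\ell}$, I would set $y^{i,s}_{C,\bar\ell,\ell} = 1$ and all other $y^{i,s}_{C,\cdot,\ell}$ to $0$; then $\hat\ell := \bar\ell$ satisfies $\bar\ell \le \hat\ell < \ell$ and $\beta_{\hat\ell,\ell} = \beta_{\bar\ell,\ell}$ trivially, and one checks that this assignment, together with the assignment of the remaining (old and left/right) variables inherited from Lemma~\ref{lem:sumw_jC_j:criticalTimeSplitExternalCycles}, satisfies \eqref{eq:sumw_jC_jSpeeds:1}--\eqref{eq:sumw_jC_jSpeeds:8} and \eqref{eq:sumw_jC_jSpeeds:cycleVolumeBounds}. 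The only genuinely new verifications are that $t_{\bar\ell}$ lies outside the interior of every nonempty external cycle (so that \eqref{eq:sumw_jC_jSpeeds:4} holds) and that a strictly positive portion of the job of $C$ is processed before the left interior critical time of $C$ (so that \eqref{eq:sumw_jC_jSpeeds:7}--\eqref{eq:sumw_jC_jSpeeds:8} hold), both of which follow from the integrality of job sizes and the definition of a busy period. In the validity direction, given feasible $\vey^{i,s}, \vez^{i,s}$ I would run the speed-aware variant of Algorithm~\ref{alg:sigmaFromYWithOrder}: if $z^{i,s}_C = 0$ then \eqref{eq:sumw_jC_jSpeeds:1} forces all $y^{i,s}_{C,\cdot,\ell}$ to $0$; if $z^{i,s}_C = 1$ then \eqref{eq:sumw_jC_jSpeeds:1} leaves exactly one index $\hat\ell < \ell$ with $y^{i,s}_{C,\hat\ell,\ell} = 1$, and \eqref{eq:sumw_jC_jSpeeds:4} gives $z^{i,s}_{\hat\ell} = 0$, so $t_{\hat\ell}$ is interior to no nonempty external cycle and in particular lies to the left of the interior of $C$. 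Using the split/volume equations \eqref{eq:sumw_jC_jSpeeds:6} and the positive-prefix constraints \eqref{eq:sumw_jC_jSpeeds:7}--\eqref{eq:sumw_jC_jSpeeds:8} I would then show, exactly as in Lemma~\ref{lem:sumw_jC_j:criticalTimeSplitExternalCycles}, that $\sigma(\vey)$ is admissible and $(\N+\BB^{i,s})$-regular, and that the fractional amount of the job of $C$ processed past $t_\ell$ equals $\beta_{\hat\ell,\ell}$; since by construction that amount also equals $\beta_{\bar\ell,\ell}$ for the actual busy-period start $t_{\bar\ell}$ of $C$ in $\sigma(\vey)$, we obtain $\beta_{\hat\ell,\ell} = \beta_{\bar\ell,\ell}$, and the range $\bar\ell \le \hat\ell < \ell$ is read off from which indices of equal shift value survive \eqref{eq:sumw_jC_jSpeeds:3}--\eqref{eq:sumw_jC_jSpeeds:4}.

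The main obstacle I expect is the bookkeeping in the validity direction that ties the surviving index $\hat\ell$ to the geometry of the left-aligned schedule $\sigma(\vey)$, rather than treating it as an isolated and possibly inconsistent choice. Concretely, one must show that in $\sigma(\vey)$ the machine is continuously busy from $t_{\hat\ell}$ up to the job of $C$, so that the busy period really begins no later than $t_{\hat\ell}$ and in particular $\hat\ell \ge \bar\ell$; this amounts to proving, via \eqref{eq:sumw_jC_jSpeeds:3} and \eqref{eq:sumw_jC_jSpeeds:cycleVolumeBounds}, that every internal potential cycle strictly between $t_{\hat\ell}$ and the left interior critical time of $C$ is filled to capacity. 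This is the speed-aware analogue of the prefix-busyness argument of Lemma~\ref{lem:leftAlignedScheduleCMax}, and it is precisely where the symmetrization of $\BB^{i,s}$ used in Lemma~\ref{lem:cyclesScheduleOrderedObjectiveSpeeds} is needed, since it lets the fraction $\bar\beta$ processed just before a critical time be matched with the fraction $\beta = 1 - \bar\beta$ processed just after.
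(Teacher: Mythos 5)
Your dispatch of the ``Furthermore'' clause is correct and matches the paper's. For the main claim, however, there are two problems. First, the forward direction you propose---setting $y^{i,s}_{C,\bar\ell,\ell}=1$ for the actual busy-period start $\bar\ell$---can violate \eqref{eq:sumw_jC_jSpeeds:3}: that constraint, together with \eqref{eq:sumw_jC_jSpeeds:4}, forces the single nonzero variable to be $y^{i,s}_{C,\hat\ell,\ell}$ where $\hat\ell$ is the \emph{last} critical time prior to $t_\ell$ with $z^{i,s}_{\hat\ell}=0$, and this index can be strictly larger than $\bar\ell$ (the paper explicitly flags this: a job can end exactly at a critical time $t$ without $t$ being a busy-period start, giving $z^{i,s}_t=0$). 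Since the lemma only asserts properties of the unique surviving $\hat\ell$, a forward direction is not needed; the entire content is the ``validity'' direction, where one reads $\hat\ell$ off the constraints and shows it has the stated shape.

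Second, and more substantially, your argument that $\beta_{\hat\ell,\ell}=\beta_{\bar\ell,\ell}$ is not closed. You assert that both quantities ``equal the fractional amount of the job of $C$ processed past $t_\ell$,'' but that is exactly what needs proof. The missing step is an integrality argument, not a consequence of the symmetrization of $\BB^{i,s}$. Concretely: $\bar\ell$ itself is a candidate (the machine is idle just before $t_{\bar\ell}$, so no external cycle is interior to it and $z^{i,s}_{\bar\ell}=0$), and since $\hat\ell$ is the \emph{last} candidate before $t_\ell$, \eqref{eq:sumw_jC_jSpeeds:3}--\eqref{eq:sumw_jC_jSpeeds:4} give $\bar\ell \le \hat\ell$ directly. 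Then, because the busy period for $C$ covers $[t_{\bar\ell},t_\ell]$ with no idle time and no job crosses $t_{\hat\ell}$ (again $z^{i,s}_{\hat\ell}=0$), the interval $[t_{\bar\ell},t_{\hat\ell}]$ is exactly filled by integer-size jobs, so $\sum_{t_{\bar\ell}\lhd C'\lhd t_{\hat\ell}}\sum_{j} p^i_j\,y^{i,s}_{j,C'} = s\cdot(t_{\hat\ell}-t_{\bar\ell})$ is integral. Hence $\fract(s(t_\ell - t_{\hat\ell})) = \fract(s(t_\ell - t_{\bar\ell}))$, i.e.\ $\beta_{\hat\ell,\ell}=\beta_{\bar\ell,\ell}$. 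This integrality step is the crux of the paper's proof. The symmetrization of $\BB^{i,s}$ plays no role here; it is what makes Lemma~\ref{lem:cyclesScheduleOrderedObjectiveSpeeds} (existence of an $(\N+\BB^{i,s})$-regular optimum) go through, and that lemma is upstream of the current one.
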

\begin{proof}
We begin the proof by observing the auxiliary variables $z^{i,s}_k$, i.e., we prove that $z^{i,s}_k = 1$ if and only if there exists a nonempty external cycle $C$ containing $t_k$.
In order to see this recall first that for an external cycle $C$ the variable $z^{i,s}_C$ is set to $1$ if and only if $C$ is nonempty in the current assignment (this follows from~\eqref{eq:Cmax:cycle_bounds}).
Furthermore, there is at most one nonempty (external) cycle containing any critical time due to their mutual incompatibility, by~\eqref{eq:Cmax:incompatibleCyclesDisable}.
It follows that the right-hand side of \eqref{eq:sumw_jC_jSpeeds:2} is $0$ if and only if all of the summands are $0$ (i.e., when none of the potential external cycles containing $t_k$ is realized) while it is $1$ if and only if some potential external cycle containing $t_k$ is realized.

Now constraints \eqref{eq:sumw_jC_jSpeeds:1} imply that exactly one variable $y^{i,s}_{C, \hat{\ell}, \ell}$ is set to $1$ if a potential external cycle $C$ is realized.
From \eqref{eq:sumw_jC_jSpeeds:4} and \eqref{eq:sumw_jC_jSpeeds:3} it follows that $y^{i,s}_{C, \hat{\ell}, \ell}$ can be set to $1$ if and only if
\begin{itemize}
  \item no potential external cycles containing $t_{\hat{\ell}}$ is realized, and
  \item every critical time between $t_{\hat{\ell}}$ and $t_{\ell}$ is contained in some nonempty external cycle.
\end{itemize}
Note that if the busy period for $C$ starts at $t_{\bar{\ell}}$, then no potential external cycle containing $t_{\bar{\ell}}$ is realized due to the discussion following \eqref{eq:sumw_jC_jSpeeds:7}.
However, the converse is not true!
This can happen for a critical time $t$ if the last job of the schedule produced by our routine ends exactly at time $t$.
We claim that if the busy period for $C$ starts at $t_{\bar{\ell}}$ and every external cycle containing a critical time $t_{\tilde{\ell}}$ is empty, then $\beta^{i,s}_{\hat{\ell}, \ell} = \beta^{i,s}_{\tilde{\ell}, \ell}$ (and $\beta_{\bar{\ell},\hat{\ell}} = 0$).
Note that if this is true, then we are done, since \eqref{eq:sumw_jC_jSpeeds:4} and \eqref{eq:sumw_jC_jSpeeds:3} imply that $y^{i,s}_{C, \hat{\ell}, \ell}$ can be set to $1$ only if $t_{\hat{\ell}}$ is the last critical time prior to $t_\ell$ which does not contain any nonempty external cycle.

Let $t_{\hat{\ell}} = t_{\hat{\ell}_1}, \ldots, t_{\hat{\ell}_q}$ be the critical times between $t_{\hat{\ell}}$ and $t_\ell$ such that no potential external cycle containing a critical time $\bar{t}_{\hat{\ell}_p}$ is realized for $1 \le p \le q$.
Now, the above claim is equivalent to $\beta^{i,s}_{\hat{\ell}, \ell} = \beta^{i,s}_{\hat{\ell}_p, \ell}$ for all $p = 1, \ldots, q$.
Suppose $1 < p < q$.
Since the busy period for $C$ does not start in $t_{\hat{\ell}_p}$, in an optimal schedule there is no idle time between $t_{\hat{\ell}_1}$ and $t_{\hat{\ell}_p}$.
It follows that the size of all of the jobs assigned to cycles between $t_{\hat{\ell}_1}$ and $t_{\hat{\ell}_p}$ completely fills the available processing time on the machine, that is,
\[
  \sum_{t_{\hat{\ell}} \lhd C \lhd t_{\hat{\ell}_p}} \sum_{j = 1}^d p^i_j \cdot y^{i,s}_{j,C} = s \cdot \left( t_{\hat{\ell}_p} - t_{\hat{\ell}} \right) \,.
\]
Consequently, $\fract(s \cdot (t_\ell - t_{\hat{\ell}})) = \fract(s \cdot (t_\ell - t_{\hat{\ell}_p}))$, since the left-hand side of the above expression is integral.
The claim follows and so does the lemma.
\end{proof}

\begin{figure}[bt]

\begin{tikzpicture}

\fill[red!20,draw=black] (4,.1) -- (5.4,.1) -- (5.4,4) -- (4,4) -- cycle;
\draw[draw=black,dotted] (3,.1) -- (5.4,.1) -- (5.4,2) -- (3,2) -- cycle;
\fill[blue!40,draw=black] (4,2) -- (4.6,2) -- (4.6,4) -- (4,4) -- cycle;
\draw[dotted] (3.8,.1) -- (3.8,2);
\draw[dotted] (4.6,.1) -- (4.6,4);

\draw [decorate,decoration={brace,amplitude=3pt,mirror}] (3,.1) -- (3.8,.1);
\node (unitLabel) at (1.2,-.5) {\footnotesize size unit};
\draw[->,>=stealth] (unitLabel.east) to[out=0,in=270] (3.4,-.1);

\draw [decorate,decoration={brace,amplitude=3pt,mirror}] (4,.1) -- (4.6,.1);
\node[text width=3cm] (shiftLabel) at (7,-.5) {\footnotesize critical time busy period shift \mbox{$1 - \operatorname{frac}(s \cdot (t_k - t_\ell))$}};
\draw[->,>=stealth] (shiftLabel.west) to[out=180,in=270] (4.3,-.1);

\fill[gray!60] (0,3) -- (4,3) -- (4,3.7) -- (0,3.7) -- cycle;
\node at (2,3.3) {busy period};

\draw[thick] (0,0) to (0,5) node[yshift = .2cm] {$s \cdot t_{\ell}$};
\draw[thick] (4,0) to (4,5) node[yshift = .2cm] {$s \cdot t_{k}$};;

\end{tikzpicture}
  \caption{\label{fig:criticalTimeBusyPeriodShift}%
  A visualization of a busy period shift.
  }
\end{figure}
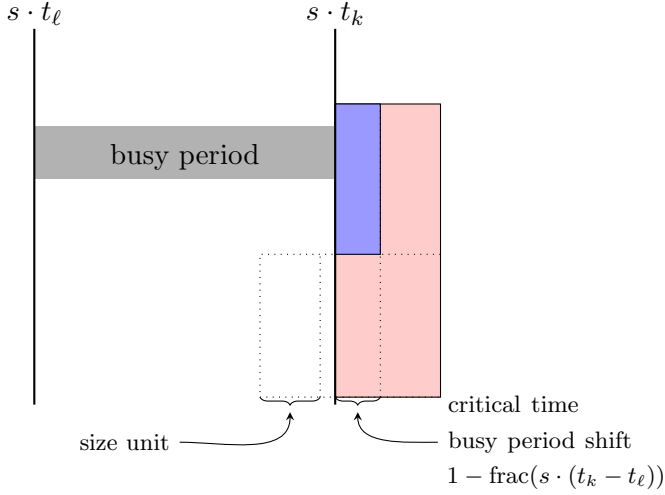

We are now going to use the new variables to extend the previous definition of the objective functions to the setting with speeds.

\paragraph{Completion Time.}
In the following discussion refer to the segment of the 2D Gannt chart for the $\sum w_jC_j$ objective in Figure~\ref{fig:sumw_jC_jSegmentWithBusyShift} (which itself is a simple combination of Figures~\ref{fig:sumw_jC_jSegment} and \ref{fig:criticalTimeBusyPeriodShift}).
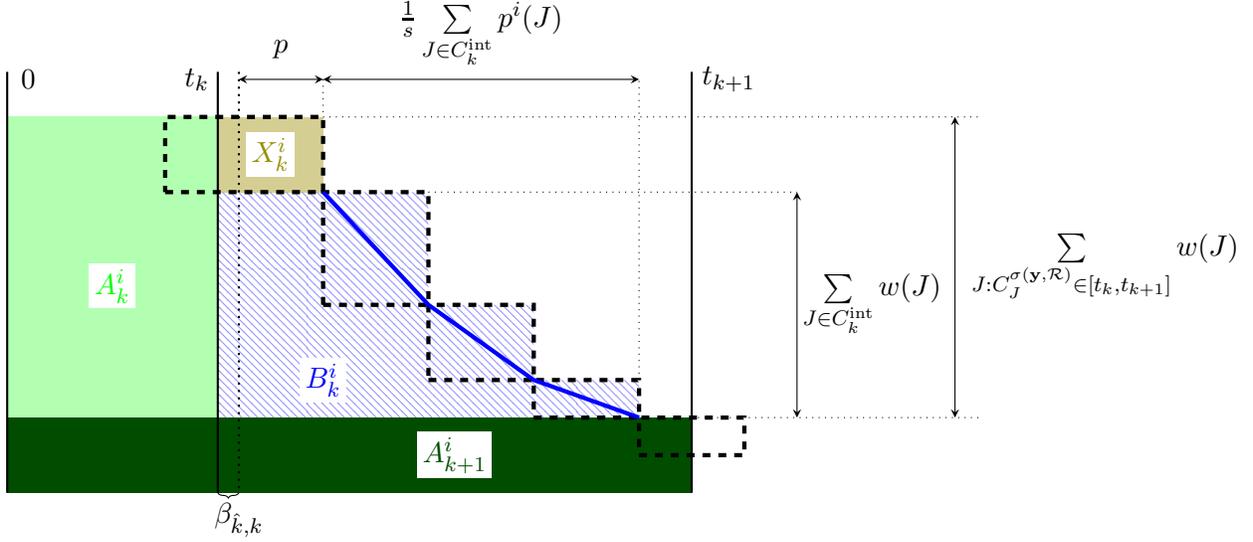
\begin{figure}[bt]
  \usetikzlibrary{patterns}

\begin{tikzpicture}[yscale=.5,xscale=.7]
\tikzstyle{job}=[ultra thick,dashed]
\tikzstyle{fakeJob}=[dotted,red,ultra thick]

\fill[green!30!black] (-4,-1) -- (-4,1) -- (9,1) -- (9,-1) -- cycle;
\fill[green!30] (0,1) -- (0,9) -- (-4,9) -- (-4,1) -- cycle;

\fill[olive!40] (0,9) -- (0,7) -- (2,7) -- (2,9) -- cycle;

\fill[pattern=north west lines,pattern color=blue!40] (0,7) -- (2,7) -- (4,7) -- (4,4) -- (6,4) -- (6,2) -- (8,2) -- (8,1) -- (0,1) -- cycle;

\draw[thick,dotted] (0.4,10.2) -- (0.4,-1);
\draw[thick] (0,10.2) -- (0,-1);
\draw[thick] (9,10.2) -- (9,-1);
\node at (-.4,10) {$t_k$};
\node at (9.7,10) {$t_{k+1}$};
\draw[thick] (-4,10.2) -- (-4,-1);
\node at (-3.6,10) {$0$};

\draw [decorate,decoration={brace,amplitude=2pt,mirror}] (0,-1) -- (.4,-1) node[below] {$\beta_{\hat{k},k}$};

\draw[dotted] (2,10) -- (2,9);
\draw[dotted] (8,2) -- (8,10);
\draw[<->,>=stealth] (0.4,10) to node[midway,yshift=.4cm] {$p$} (2,10);
\draw[<->,>=stealth] (2,10) to node[midway,yshift=.6cm] {$\frac{1}{s} \sum\limits_{J \in C^{\operatorname{int}}_k} p^i(J)$} (8,10);
\draw[<->,>=stealth] (11,7) to node[midway,xshift=1cm] {$\sum\limits_{J \in C^{\operatorname{int}}_k} w(J)$} (11,1);
\draw[<->,>=stealth] (14,9) to node[midway,xshift=2cm] {$\sum\limits_{J : C^{\sigma(\vey,\mathcal{R})}_J \in \interval{t_k}{t_{k+1}}} w(J)$} (14,1);

\draw[job] (-1,7) -- (-1,9) -- (2,9) -- (2,7) -- cycle;
\draw[job] (2,7) -- (4,7) -- (4,4) -- (2,4) -- cycle;
\draw[job] (4,4) -- (6,4) -- (6,2) -- (4,2) -- cycle;
\draw[job] (6,2) -- (8,2) -- (8,1) -- (6,1) -- cycle;
\draw[job] (8,1) -- (10,1) -- (10,0) -- (8,0) -- cycle;

\draw[dotted] (4,7) -- (11,7);
\draw[dotted] (9,1) -- (14.5,1);
\draw[dotted] (2,9) -- (14.5,9);

\draw[blue, ultra thick] ((2,7) -- (4,4) -- (6,2) -- (8,1);

\node[fill=white!10,text=blue,inner sep=2pt] at (2,2) {$B^i_k$};
\node[fill=white!10,text=green!30!black,inner sep=2pt] at (4.5,0) {$A^i_{k+1}$};
\node[fill=white!10,text=green,inner sep=2pt] at (-2,4.5) {$A^i_k$};
\node[fill=white!10,text=olive,inner sep=2pt] at (1,8) {$X^i_k$};

\end{tikzpicture}
  \caption{\label{fig:sumw_jC_jSegmentWithBusyShift}%
  An effect of the busy period shift on the expression of $\sum w_jC_j$ for the segment between $t_k$ and $t_{k+1}$.
  }
\end{figure}
First observe that the area marked $A$ remains the same.
On the other hand, areas $B$ and $X$ are indeed affected by the current shift.
We now show that still their area (i.e., the respective contribution to the objective) can be expressed with only a minor change using the newly added variables $y^{i,s}_{C,\hat{\ell},\ell}$.
Before we do so, we add further auxiliary variables $y^{i,s}_{k,p,\beta,j}$, for each $t_k \in T, p \in [p_{\max}], \beta \in \mathcal{B}^{i,s}, j \in [d]$, which express the overall shift at $t_k$ using the variables $y^{i,s}_{C,\hat{\ell},\ell}$.
To that end we would like to enforce the nonlinear constraints
\[
  y^{i,s}_{k,p,\beta,j} = \sum_{C \in \mathcal{C}^{\operatorname{ext}}_{*,k}} \sum_{p \in [p_{\max}]} \sum_{\hat{k}: \beta = \beta_{\hat{k},k}}  y^{i,s}_{j,C,R,p} \cdot y^{i,s}_{C,\hat{k},k}
  \qquad\qquad
  \forall k : t_k \in T \,.
\]
Fortunately, the multiplication in the above constraint can be linearized, since only binary variables are involved in it (via a simple trick we have already seen).
Indeed adding the following equivalent system of linear constraints (for every $k \in [2,|T|]$) does the job:\dkcom{either this of the former four constraints---needs to be numbered as it is used in proofs}
\begin{equation}\label{eq:sumw_jC_jSpeeds:9}
\begin{rcases}
  y^{i,s}_{C,k,p,\beta,j} &\ge - 1 + y^{i,s}_{j,C,R,p} + y^{i,s}_{C,\hat{k},k} \\
  y^{i,s}_{C,k,p,\beta,j} &\le y^{i,s}_{j,C,R,p} \\
  y^{i,s}_{C,k,p,\beta,j} &\le y^{i,s}_{C,\hat{k},k}  \\ 
  y^{i,s}_{k,p,\beta,j} &= \sum_{C \in \mathcal{C}^{\operatorname{ext}}_{*,k}} y^{i,s}_{C,k,p,\beta,j} \\
  y^{i,s}_{k,\beta} &= \sum_{j = 1}^d \sum_{p \in [p_{\max}]} y^{i,s}_{k,p,\beta,j}
\end{rcases}
\forall C \in \mathcal{C}^{\operatorname{ext}}_{*,k}, \forall 1 \le \hat{k} < k, \forall \beta \in \mathcal{B}^{i,s} : \beta = \beta_{\hat{k},k}, \forall p \in [p_{\max}] \forall j \in [d]
\end{equation}
It is straightforward to verify that the first three constraints assure that $y^{i,s}_{C,k,p,\beta,j} = y^{i,s}_{j,C,R,p} \cdot y^{i,s}_{C,\hat{k},k}$.
We conclude that $y^{i,s}_{k,p,\beta,j} = 1$ if and only if there is a job of type $j$ scheduled to an external cycle $C \in \mathcal{C}^{\operatorname{ext}}_{*,k}$ and the size to be processed after $t_k$ is exactly $p + \beta$.
Finally, we have that $y^{i,s}_{k,\beta} = 1$ if and only if there exists $j \in [d]$ and $p \in [p_{\max}]$ such that $y^{i,s}_{k,p,\beta,j} = 1$ (and $y^{i,s}_{k,\beta} = 0$ otherwise).

\begin{algorithm}[bt]
	\SetKwProg{Def}{def}{:}{}
	\SetKwFunction{cycleHandler}{handleCycle}
	\SetKwFunction{scheduleEnd}{endOf}
	\DontPrintSemicolon
	\Def{\cycleHandler{$i, \sigma, C, \vey_C$}}{
		\If{$C \in \mathcal{C}^{\operatorname{int}}$}{
			\nlset{Order}\label{alg:sigmaFromYWithOrderAndSpeeds:leftCritical}\ForEach{$j \in [d]$ in order $\preceq^i_{\mathcal{R},\leftCritical(C)}$}{
				\For{$\ell = 1$ \KwTo $\vey_{j,C}$}{
					$t \leftarrow \max($\scheduleEnd{$\sigma$}$, \leftCritical(C))$ \;
					$\sigma \leftarrow \sigma \cup \left\{ \left( j, \interval{t}{t + \frac{p^i_j}{s^i}} \right) \right\}$ \;
				}
			}
		}
		\Else{
      Let $k$ be such that $C \in \mathcal{C}^{\operatorname{ext}}_{*,k}$ \;
			Let $p,\beta,j$ be such that $y_{k,p,\beta,j} = 1$ \;
			\nlset{ext}\label{alg:sigmaFromYWithOrderAndSpeeds:externalCritical}$\sigma(\vey) \leftarrow \sigma(\vey) \cup \left\{ \left( j, \interval{t_k - \frac{p^i_j - p - \beta}{s^i}}{t_k + \frac{p + \beta}{s^i}} \right) \right\}$ \;
		}
	}

	\caption{\label{alg:sigmaFromYWithOrderAndSpeeds}
		We only redefine the \texttt{handle\_cycle} function, the rest of the algorithm is identical to Algorithm~\ref{alg:sigmaFromY}.
		The algorithm computes a left aligned schedule from a vector $\vey$ by placing jobs in internal cycles in the orders $\preceq^i_{\mathcal{R},k}$ and furthermore takes speeds into account.
	}
\end{algorithm}

\begin{lemma}\label{lem:sumw_jC_jSpeeds:sigmaFromY}
  Fix a machine of kind $i \in [\kappa]$ and speed $s$ and let $\vex \in \mathbb{N}^d$.
  There exists an \mbox{$(\mathbb{N} + \mathcal{B}^{i,s})$-regular} schedule of $\vex$ if and only if there exist $\vey,\vez$ such that all of the constraints \eqref{eq:Cmax:sumOfExes}--\eqref{eq:Cmax:incompatibleCyclesDisable}, \eqref{eq:sumw_jC_j:second}--\eqref{eq:sumw_jC_j:fourth}, and \eqref{eq:sumw_jC_jSpeeds:1}--\eqref{eq:sumw_jC_jSpeeds:9} are satisfied.
  Moreover, the schedule $\sigma(\vey)$ given by Algorithm~\ref{alg:sigmaFromYWithOrderAndSpeeds} is one such schedule.
  Furthermore, $y^{i,s}_{C,k,p,\beta,j} = 1$ for a potential external cycle $C = C^{\operatorname{ext}}_{\ell,k}$ for $\ell,k \in \left\{ 2, \ldots, |T^i| - 1 \right\}$ with $\ell \le k$
  if and only if
  there is a job $J$ of type $j$ assigned to a realization of $C$ in $\sigma(\vey)$ such that at time $t_k$ exactly $\frac{p^i_j - p - \beta}{s}$ units of the total processing time of the job $J$ on the fixed machine are processed.
\end{lemma}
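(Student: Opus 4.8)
The plan is to mirror the proof of Lemma~\ref{lem:sumw_jC_j:criticalTimeSplitExternalCycles}, replacing $\N$-regularity by $(\N+\BB^{i,s})$-regularity and layering in the bookkeeping of busy shifts supplied by Lemma~\ref{lem:sumw_jC_jSpeeds:betaAlignLemma}. Fix a machine of kind $i$ with speed $s$. For the forward direction, suppose $\vex$ admits an $(\N+\BB^{i,s})$-regular schedule; by Lemma~\ref{lem:cyclesScheduleOrderedObjectiveSpeeds} we may assume it is regular, that each cycle realizes a potential cycle, and that internal cycles are ordered by $\preceq^i_{\mathcal{R},k}$. Exactly as in Lemma~\ref{lem:modelPlacesCyclesCorrectly} and Lemma~\ref{lem:sumw_jC_j:criticalTimeSplitExternalCycles}, the associated cycle decomposition yields an assignment of $\vex,\vey,\vez$ and of the split variables $y^{i,s}_{C,L,p}, y^{i,s}_{C,R,p}$ (now also allowing $p=0$) satisfying \eqref{eq:Cmax:sumOfExes}--\eqref{eq:Cmax:incompatibleCyclesDisable}, \eqref{eq:sumw_jC_j:second}--\eqref{eq:sumw_jC_j:fourth}, and the volume bound \eqref{eq:sumw_jC_jSpeeds:cycleVolumeBounds}; here one uses that, by $(\N+\BB^{i,s})$-regularity, a job $J$ of type $j$ in an external cycle $C=C^{\operatorname{ext}}_{\ell,k}$ is split at $t_k$ into $p+\beta$ size units processed after $t_k$, with $p\in[0,p_{\max}]$ integral and $\beta\in\BB^{i,s}$, and symmetrically at $t_\ell$. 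One then sets $y^{i,s}_{C,\hat\ell,\ell}=1$ for the index $\hat\ell$ at which the busy period of $C$ starts (the symmetry of $\BB^{i,s}$ being what lets this single variable also record the shift at $t_\ell$), and derives the auxiliary variables $z^{i,s}_k$, $y^{i,s}_{C,k,p,\beta,j}$, $y^{i,s}_{k,p,\beta,j}$, $y^{i,s}_{k,\beta}$ directly from their defining equations \eqref{eq:sumw_jC_jSpeeds:1}--\eqref{eq:sumw_jC_jSpeeds:9}. Verifying \eqref{eq:sumw_jC_jSpeeds:6}--\eqref{eq:sumw_jC_jSpeeds:8} then amounts to translating the facts that a positive fraction of $J$ is processed before $t_k$ and that the shift recorded by $y^{i,s}_{C,\hat\ell,\ell}$ equals $1-\fract(s\cdot(t_k-t(J)))$.

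For the backward direction I would start from feasible $\vex,\vey,\vez$ and run Algorithm~\ref{alg:sigmaFromYWithOrderAndSpeeds} to obtain $\sigma(\vey)$, and show it is an admissible $(\N+\BB^{i,s})$-regular schedule of $\vex$ with the stated structure. The argument decomposes as in Lemma~\ref{lem:sumw_jC_j:criticalTimeSplitExternalCycles}: constraints \eqref{eq:Cmax:sumOfExes}--\eqref{eq:Cmax:incompatibleCyclesDisable} guarantee that every job lands in some potential cycle, that the $z$-variables track nonemptiness, that incompatible cycles are not simultaneously realized, and (via the speed-adjusted $\chi^i_{j,C}$) that no job starts before its release time; the internal-cycle branch of the algorithm uses the order $\preceq^i_{\mathcal{R},\leftCritical(C)}$, which is the correct one since $\leftCritical(C^{\operatorname{int}}_k)=k$. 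The genuinely new ingredient is the external-cycle branch: by \eqref{eq:sumw_jC_jSpeeds:6} together with \eqref{eq:sumw_jC_jSpeeds:9} the triple $p,\beta,j$ with $y^{i,s}_{k,p,\beta,j}=1$ is well defined and satisfies $p^i_j=(p+\beta)+(\text{amount processed before }t_k)$, and by Lemma~\ref{lem:sumw_jC_jSpeeds:betaAlignLemma} the fractional shift $\beta$ coincides with the one forced by the actual busy period, so inserting $J$ on the interval $[t_k-\tfrac{p^i_j-p-\beta}{s},\,t_k+\tfrac{p+\beta}{s}]$ is consistent with the left-aligned placement of the preceding cycles. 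Non-overlap of intervals and the bound on completion times then follow from the volume constraints \eqref{eq:sumw_jC_jSpeeds:cycleVolumeBounds} just as \eqref{eq:sumw_jC_j:cycleVolumeBounds} was used in the speedless case, and $(\N+\BB^{i,s})$-regularity holds because each cycle begins either at the end of the previous one or at a critical time, so $\lambda(C)\in\N+\BB^{i,s}$ throughout. The "furthermore" clause is then immediate from the meaning of $y^{i,s}_{C,k,p,\beta,j}$ established en route.

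The main obstacle will be getting Lemma~\ref{lem:sumw_jC_jSpeeds:betaAlignLemma} to apply cleanly, in particular the degenerate case flagged in its proof, where a critical time $t_{\hat\ell}$ fails to lie in any nonempty external cycle merely because the last job of the schedule ends exactly at $t_{\hat\ell}$, rather than because the busy period truly restarts there. Constraints \eqref{eq:sumw_jC_jSpeeds:7}--\eqref{eq:sumw_jC_jSpeeds:8} are exactly what prevents this from corrupting the recorded shift, and the key verification is that $\beta_{\hat\ell,\ell}=\beta_{\bar\ell,\ell}$ whenever no idle time separates $t_{\bar\ell}$ from $t_{\hat\ell}$; once this is in hand the remainder is a routine, if lengthy, adaptation of the speedless proofs. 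A secondary nuisance is that Algorithm~\ref{alg:sigmaFromYWithOrderAndSpeeds} no longer visibly produces non-overlapping intervals, since its external branch writes an interval extending to the left of $t_k$, so one must argue that the left-alignment of the internal cycles preceding that external cycle leaves precisely the room that \eqref{eq:sumw_jC_jSpeeds:cycleVolumeBounds} certifies is available.
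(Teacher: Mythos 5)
Your outline matches the paper's proof in both structure and substance: forward and backward directions are handled exactly as the paper does (forward: read off $\vey,\vez$ from the cycle decomposition of a given $(\N+\BB^{i,s})$-regular schedule and check each constraint block; backward: run Algorithm~\ref{alg:sigmaFromYWithOrderAndSpeeds} and use the volume constraints together with Lemma~\ref{lem:sumw_jC_jSpeeds:betaAlignLemma} to show the output is an admissible $(\N+\BB^{i,s})$-regular schedule with the correct split variables). You also correctly flag the two genuine subtleties the paper must wrestle with — the degenerate case in Lemma~\ref{lem:sumw_jC_jSpeeds:betaAlignLemma} where a critical time lies outside every nonempty external cycle only because the busy period ends exactly there, and the fact that the external branch of the algorithm writes an interval extending left of $t_k$ so non-overlap must be certified by \eqref{eq:sumw_jC_jSpeeds:cycleVolumeBounds}.

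One small imprecision worth fixing: in the forward direction you say to set $y^{i,s}_{C,\hat\ell,\ell}=1$ at ``the index $\hat\ell$ at which the busy period of $C$ starts.'' The paper instead sets it at the largest $\hat\ell < \ell$ such that $t_{\hat\ell}$ is not interior to any nonempty external cycle and every strictly intermediate critical time is — which, because of the degenerate case you yourself flag, need not be the start of the busy period (it may be a strictly later $t_{\hat\ell}$ at which the prefix happens to align exactly). That is precisely what forces constraints \eqref{eq:sumw_jC_jSpeeds:3}--\eqref{eq:sumw_jC_jSpeeds:4}, and Lemma~\ref{lem:sumw_jC_jSpeeds:betaAlignLemma} then guarantees $\beta_{\hat\ell,\ell}=\beta_{\bar\ell,\ell}$ where $\bar\ell$ indexes the true busy-period start. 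You clearly know this since you raise the issue later, but the earlier sentence would be wrong if read literally. Otherwise the proposal is the paper's proof in outline.
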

\begin{proof}
Suppose there exists an $(\mathbb{N} + \mathcal{B}^{i,s})$-regular schedule of $\vex$ and let $\mathcal{D}$ be its cycles (i.e., the potential cycles realized in the $(\mathbb{N} + \mathcal{B}^{i,s})$-regular schedule).
Now, we use the arguments of Lemma~\ref{lem:leftAlignedScheduleCMax} to derive the vector $\vex$ and some entries in the vector $\vey$ that together satisfy conditions \eqref{eq:Cmax:sumOfExes}--\eqref{eq:Cmax:incompatibleCyclesDisable}.
The rest of the $\vey$ variables is centered around external cycles realized in~$\mathcal{D}$.
Let $C$ be an external cycle realized in~$\mathcal{D}$ and assume it is a realization of the potential cycle in~$\mathcal{C}^{\operatorname{ext}}_{*,k}$.
Let $J$ be the job assigned to $C$ in the assumed $(\mathbb{N} + \mathcal{B}^{i,s})$-regular schedule of $\vex$.
Let $p_L$ be the amount of the size of $J$ that is already processed at time $t_k$; note that $p_L < p^i(J) / s$ and let $p_R = (p^i(J) / s) - p_L$.
Formally, we have
\begin{equation}\label{eq:sumw_jC_jSpeeds:settingPLandPR}\tag{$p_L, p_R$}
  p_R = \frac{C(J) - t_k}{s}
  \qquad\text{and}\qquad
  p_L = \frac{p^i(J) - s \cdot (C(J) - t_k)}{s} = \frac{p^i(J)}{s} - (C(J) - t_k) \,,
\end{equation}
where $C(J)$ is the completion time of the job $J$.
Note that we have $0 < p_L, p_R < p^i(J) / s$.
In order to satisfy conditions \eqref{eq:sumw_jC_j:second}--\eqref{eq:sumw_jC_j:fourth} we have to set one variable $y^i_{C,L,p}$ to one (i.e., for some $p \in \{ 0, \ldots, p_{\max} \}$) as well as one variable $y^i_{C,R,p}$, since we have $z^i_C = 1$.
Let $p = \lfloor s \cdot p_L \rfloor$ and $q = \lfloor s \cdot p_R \rfloor$ and let us set
\[
  y^i_{C,L,p} = 1 \qquad\qquad\text{and}\qquad\qquad y^i_{C,R,q} = 1
\]
and all other variables $y^i_{C,L,\hat{p}},y^i_{C,R,\hat{p}}$ we set to $0$.
Clearly, such an assignment satisfies conditions \eqref{eq:sumw_jC_j:second}--\eqref{eq:sumw_jC_j:fourth}.
Now, it remains to argue about conditions \eqref{eq:sumw_jC_jSpeeds:1}--\eqref{eq:sumw_jC_jSpeeds:cycleVolumeBounds}.
Satisfying \eqref{eq:sumw_jC_jSpeeds:2} is straightforward---we have to set the variable $z^{i,s}_\ell$ to $1$ if and only if in~$\mathcal{D}$ there exists an external cycle containing the critical time $t_\ell$.
We set the variable $y^{i,s}_{C,\hat{\ell},k}$ to $1$ (recall that $C$ is an external cycle realized in~$\mathcal{D}$ and $t_k$ is the last critical time contained in its interior) if and only if
\begin{enumerate}[label=(\emph{\alph*})]
  \item\label{it:sumw_jC_jSpeeds:settingYLK:1}
  the critical time $t_{\hat{\ell}}$ is not contained in (the interior of) an external cycle in~$\mathcal{D}$ and
  \item\label{it:sumw_jC_jSpeeds:settingYLK:2}
  all critical times $t_\ell$ with $\hat{\ell} < \ell < k$ are contained in (the interior of) an external cycle in~$\mathcal{D}$,
\end{enumerate}
otherwise we set it to $0$.
Clearly, the conjunction of the the two hold for exactly one $\hat{\ell}$ (thus, we have verified \eqref{eq:sumw_jC_jSpeeds:1}, since we have $z^{i,s}_C = 1$).
The condition \eqref{eq:sumw_jC_jSpeeds:4} is fulfilled by~\ref{it:sumw_jC_jSpeeds:settingYLK:1}, since by the above setting we have $z^{i,s}_{\hat{\ell}} = 0$ as $t_{\hat{\ell}}$ is not contained in an external cycle in~$\mathcal{D}$.
The condition \eqref{eq:sumw_jC_jSpeeds:3} is fulfilled by~\ref{it:sumw_jC_jSpeeds:settingYLK:2}, since by the above setting we have $z^{i,s}_{\ell} = 1$ for all $\ell$ for which the critical time $t_\ell$ is contained in the interior of an external cycle in~$\mathcal{D}$.
Clearly, the above setting assigns the discussed variables to either $0$ or $1$ and thus~\eqref{eq:sumw_jC_jSpeeds:5} follows.
Now, we verify~\eqref{eq:sumw_jC_jSpeeds:6}.
First observe that by the above setting of $p,q$ we have that $p^i(J) - 1 \le p + q \le p^i(J)$.
Furthermore, we have that $p + q = p^i(J)$ if and only if $t_k$ and $t(J)$ (the critical time in which the busy period of $J$ starts) align, that is, when $\operatorname{frac}(s \cdot (t_k - t(J))) = 0$.
Thus, the condition~\eqref{eq:sumw_jC_jSpeeds:6} can be verified for $C$, since using the above assignment, it boils down to verify \(p + q + (\operatorname{frac}(s \cdot (t_k - t(J)))) = p^i(J)\).
Verification of~\eqref{eq:sumw_jC_jSpeeds:7} and~\eqref{eq:sumw_jC_jSpeeds:8} goes back to their intended meaning.
First, the left-hand side of~\eqref{eq:sumw_jC_jSpeeds:7} is at least one if either
\begin{itemize}
  \item $p > s \cdot (t_k - t_\ell)$, where $\ell$ is chosen such that $C = C^{\operatorname{ext}}_{\ell,k}$,
  or
  \item $\beta_{\hat{\ell},\ell} \neq 0$, where $\ell$ is chosen such that $C = C^{\operatorname{ext}}_{\ell,k}$ and $t_{\hat{\ell}} = t(J)$.
\end{itemize}
In order to see that at least one of the above conditions holds, we observe that, since $C$ is a realization of $C^{\operatorname{ext}}_{\ell,k}$, we have $\frac{p^i(J)}{s} - (C(J) - t_\ell) > 0$, where $C(J)$ is the completion time of $J$.
This implies that
\[
  0 < \frac{p^i(J)}{s} - (C(J) - t_k) - (t_k - t_\ell) = p_L - (t_k - t_\ell)
\]
and consequently
\[
  p_L > t_k - t_\ell \,.
\]
Now, since $p = \lfloor s \cdot p_L \rfloor$, we get that $p \ge \lfloor s \cdot (t_k - t_\ell) \rfloor$.
Thus, either $p > s \cdot (t_k - t_\ell)$ in which case the left hand side of~\eqref{eq:sumw_jC_jSpeeds:7} amounts to at least one, as the sum contains the variable $y^{i,s}_{C,L,p}$; or $p - s \cdot (t_k - t_\ell) < 0$.
In the later case, we get $t(J)$ and $t_\ell$ do not align, since if that is the case, then $t_\ell$ is not contained in the interior of $C$ and thus we obtain a contradiction with $C$ being a realization of $C^{\operatorname{ext}}_{\ell,k}$.
As now $t(J)$ and $t_\ell$ do not align, we conclude that $y^{i,s}_{C,\hat{\ell},k}$ is contained in the second sum in~\eqref{eq:sumw_jC_jSpeeds:7} and the above constructed assignment sets it to $1$.
Thus, we have verified the condition~\eqref{eq:sumw_jC_jSpeeds:7} for $C$.
One can verify~\eqref{eq:sumw_jC_jSpeeds:8} by a similar argument.
We know that $z^{i,s}_{C} = 1$ and thus $t_k$ is in the interior of the cycle $C$.
Suppose now that $y^{i,s}_{C,R,p} = 0$ for all $p \in [p_{\max}]$, that is, $0 < p_R < 1$.
We know by Lemma~\ref{lem:sumw_jC_jSpeeds:betaAlignLemma} that $p_R = \beta_{\hat{\ell},k}$ for some $\hat{\ell} \in [\ell]$.
Furthermore, since $p_R > 0$ we get that $\beta_{\hat{\ell},k} \neq 0$ and thus \eqref{eq:sumw_jC_jSpeeds:8} is satisfied for $C$.

Now, we set all other variables (e.g., those associated with a potential external not realized in~$\mathcal{D}$) to $0$.
This way we satisfy conditions \eqref{eq:sumw_jC_jSpeeds:1}--\eqref{eq:sumw_jC_jSpeeds:8}:
All equations evaluate both left and right hand sides to $0$ thus we clearly satisfy conditions \eqref{eq:sumw_jC_jSpeeds:1} and \eqref{eq:sumw_jC_jSpeeds:5}--\eqref{eq:sumw_jC_jSpeeds:8}.
Observe further that \eqref{eq:sumw_jC_jSpeeds:2} sets the variable $z^{i,s}_\ell$ to $0$ if and only if all external cycles containing $t_\ell$ are not realized in $\mathcal{D}$; otherwise we have set it to $1$ (note that there is at most one external cycle containing the critical time $t_\ell$, since these cycles are incompatible).
Consequently, we have verified conditions~\eqref{eq:sumw_jC_jSpeeds:2}.
Moreover, we know that $z^{i,s}_\ell \in \{0,1\}$ for all $\ell \in [|T|]$.
Thus, the right hand-sides in \eqref{eq:sumw_jC_jSpeeds:3} and \eqref{eq:sumw_jC_jSpeeds:4} are in $\{0,1\}$ and thus satisfied for $C$ not realized in $\mathcal{D}$, i.e., with $y^{i,s}_{C, \hat{\ell}, \ell} = 0$.

It remains to verify~\eqref{eq:sumw_jC_jSpeeds:cycleVolumeBounds}.
This directly follow from $\mathcal{D}$ being a cycle decomposition of an admissible ($(\mathbb{N} + \mathcal{B}^{i,s})$-regular) schedule for $\vex$.
Fix $\ell,k$ such that $1 \le \ell < k \le |T^i|$.
First, if no cycle in~$\mathcal{D}$ is a realization of a potential cycle in $\mathcal{C}^{\operatorname{ext}}_{\ge,\ell}$, then the left hand side of~\eqref{eq:sumw_jC_jSpeeds:cycleVolumeBounds} is the total size of all jobs scheduled for processing between $t_\ell$ and $t_k$.
Let $a$ be the value of the left hand side.
We know that $a / s \le t_k - t_\ell$, since~$\mathcal{D}$ is a cycle decomposition of an admissible schedule.
This is equivalent to $a \le s \cdot (t_k - t_\ell)$ and, since $a \in \mathbb{N}$, even to $a \le \lfloor s \cdot (t_k - t_\ell) \rfloor$ and thus~\eqref{eq:sumw_jC_jSpeeds:cycleVolumeBounds} follows in this case.
Second, suppose there is a cycle in~$\mathcal{D}$ realizing a potential external cycle $C^{\operatorname{ext}}_{\hat{\ell},k}$ for some $\hat{\ell}$ with $\ell \le \hat{\ell} \le k$.
Again let $a$ denote the total size of all the jobs $\mathcal{D}$ assigns to cycles between $t_\ell$ and $t_k$ (i.e., $a$ is the value of the first sum in~\eqref{eq:sumw_jC_jSpeeds:cycleVolumeBounds}).
Let $p$ be such that we have set the value of $y^{i,s}_{C,L,p}$ to $1$.
Now, again we have $(a+p)/s \le (t_k - t_\ell)$ which yields $a+p \le s \cdot (t_k - t_\ell)$ and, since $a,p \in \mathbb{N}$, we get $a+p \le \lfloor s \cdot (t_k - t_\ell) \rfloor$.
This finishes the first part of the proof.

Now, suppose we are given vectors $\vex,\vey,\vez$ that together fulfill the presented model, i.e., satisfy the conditions \eqref{eq:Cmax:sumOfExes}--\eqref{eq:Cmax:incompatibleCyclesDisable}, \eqref{eq:sumw_jC_j:second}--\eqref{eq:sumw_jC_j:fourth}, and \eqref{eq:sumw_jC_jSpeeds:1}--\eqref{eq:sumw_jC_jSpeeds:cycleVolumeBounds}.
Observe that it is possible to extend $\vey$ so that it satisfies \eqref{eq:sumw_jC_jSpeeds:9} (as already discussed, the first three conditions are equivalent to a product of binary variables and the last condition can be directly fulfilled using the new variable it introduces).
As we already know, conditions \eqref{eq:Cmax:sumOfExes}--\eqref{eq:Cmax:incompatibleCyclesDisable} ensure that
\begin{itemize}
  \item
  $x_j$ is the total number of jobs of type $j$ to be processed on the fixed machine,
  \item
  $z_C = 1$ for an external cycle $C$ if $\sum_{j \in [d]} y_{j,C} = 1$ (that is, if exactly one job is assigned to the potential cycle $C$), and
  \item
  if for a potential external cycle $C$ we have $y_{j,C} = 1$ for some $j \in [d]$, then $y_{j,C,L,q} = 1$ and $y_{j,C,R,p} = 1$ for some $p,q \in \left\{ 0,\ldots, p^i_j \right\}$.
\end{itemize}
It is straightforward to verify that the conditions \eqref{eq:sumw_jC_jSpeeds:cycleVolumeBounds} applied for $\ell \in [|T| - 1]$ and $k = \ell + 1$ implies that for a potential internal cycle $C_\ell \in \mathcal{C}^{\operatorname{int}}$ we have
\[
 \sum_{j \in [d]} p^i_j \cdot y_{j,C} \le \lfloor s \cdot (t_{\ell + 1} - t_\ell) \rfloor
 \qquad\text{consequently}\qquad
 \sum_{j \in [d]} \frac{p^i_j}{s} \cdot y_{j,C} \le t_{\ell + 1} - t_\ell \,.
\]
It follows from our discussion in the first part of the proof that if
$\vey$ satisfies \eqref{eq:sumw_jC_jSpeeds:1}--\eqref{eq:sumw_jC_jSpeeds:5} and Lemma~\ref{lem:sumw_jC_jSpeeds:betaAlignLemma}, then for a potential external cycle $C \in \mathcal{C}^{\operatorname{ext}}_{*,\ell}$ to which a job $J$ is assigned we have $y_{C,\hat{\ell},\ell} = 1$ for some $\hat{\ell} \le \ell$ with $\beta_{\bar{\ell},\hat{\ell}} = 1$, where $t_{\bar{\ell}} = t(J)$.

Let $\sigma(\vey, \mathcal{R})$ be the schedule output by Algorithm~\ref{alg:sigmaFromYWithOrderAndSpeeds}.
We claim that $\sigma(\vey, \mathcal{R})$ is an admissible schedule.
Note that each job scheduled in $\sigma(\vey, \mathcal{R})$ can only be processed after its release time.
Let $J$ be a job assigned to a potential cycle $C$ in $\vey$ and let $t_k = \operatorname{right}(C)$ and $t_\ell = t(J)$.
  Suppose that $C$ is a potential internal cycle.
  Then, \eqref{eq:sumw_jC_jSpeeds:cycleVolumeBounds} when applied to $t_\ell$ and $t_k$ together with the above observation implies that $J$ is assigned to a realization of $C$ in $\sigma(\vey, \mathcal{R})$.
  Consequently, $J$ must finish prior to $t_k \le d^i(J)$ (which holds due to \eqref{eq:chiDefinition}).
  Now suppose $C$ is a potential external cycle $C = C^{\operatorname{ext}}_{\hat{\ell},k}$ and let $\hat{\mathcal{C}}$ be the collection of potential cycles $\hat{C}$ with $\sum_{j \in [d]} y_{j,\hat{C}} \ge 1$ and $t_\ell \le \operatorname{left}(\hat{C})$ and $\hat{\ell} \le \operatorname{right}(\hat{C})$.
  Suppose further that $y_{k,p,\beta,j} = 1$ and note that in this case we have $y_{k,p,\beta,j} = y_{C,k,p,\beta,j}$.
  Now, \eqref{eq:sumw_jC_jSpeeds:cycleVolumeBounds} applied to $t_{k}$ and $t_\ell$ implies that
  \[
    \sum_{j \in [d]} \left( \sum_{C \in \hat{\mathcal{C}}} p^i_j \cdot y_{j,C} \right) + p \le \lfloor s \cdot (t_{k} - t_\ell) \rfloor \,.
  \]
  As a consequence we get
  \[
    \frac{\left( \sum_{j \in [d]} \sum_{C \in \hat{\mathcal{C}}} p^i_j \cdot y_{j,C} \right) + p}{s} - (t_{k} - t_\ell) = \beta_{\ell,k} + q \,,
  \]
  for some $q \in \mathbb{N}$.
  Finally, Lemma~\ref{lem:sumw_jC_jSpeeds:betaAlignLemma} implies that $\beta = \beta_{\ell,k}$.
  We conclude that $\sigma(\vey, \mathcal{R})$ is an admissible schedule and exactly $\frac{p + \beta}{s}$ units of $p^i(J)$ are already processed at time $t_{k}$.
\end{proof}

Now, we know our new model yields (via $\sigma(\vey, \mathcal{R})$) a feasible schedule with stronger properties for jobs assigned to a (realization of) potential external cycle.
We are going to use these properties to yet again compute the objective---sum of weighted completion times.
This we do similarly to before, since we know that
\[
  \sum_{J \in \mathcal{J}(\vex)} w(J) \cdot C^{\sigma(\vey,\mathcal{R})}_J
  =
  \sum_{k \in [|T^i|]} \left( A^{i,s}_k(\vey) + B^{i,s}_k(\vey) + X^{i,s}_k(\vey) \right) \,.
\]
The areas (quantities) $A^{i,s}_k, B^{i,s}_k, X^{i,s}_k$ are defined as in the previous section.
Again we first begin with computation of simpler linear parts of the objective and later we move our attention to the convex nonlinear part ($B^{i,s}_k$).
\begin{lemma}\label{lem:sumw_jC_jSpeeds:computingAandX}
  Fix a machine of kind $i$ with speed $s$ and let $(\vex, \vey, \vez)$ satisfy constraints~\eqref{eq:Cmax:sumOfExes}--\eqref{eq:Cmax:incompatibleCyclesDisable}, \eqref{eq:sumw_jC_j:second}--\eqref{eq:sumw_jC_j:fourth}, and \eqref{eq:sumw_jC_jSpeeds:1}--\eqref{eq:sumw_jC_jSpeeds:9}.
  The functions $A^{i,s}_k$ and $X^{i,s}_k$ are linear in $\vey$, for all $k \in [|T^i| - 1]$.
\end{lemma}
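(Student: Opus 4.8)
The plan is to mirror the proof of Lemma~\ref{lem:sumw_jC_j:computingAandX} from the speed-free model, adjusting only the expression for $X$ to incorporate the busy shift, and observing that $A$ carries over verbatim. As in that proof, it suffices to exhibit explicit linear expressions in $\vey$ for $A^{i,s}_k$ and $X^{i,s}_k$ and to verify, using Lemma~\ref{lem:sumw_jC_jSpeeds:sigmaFromY} and Lemma~\ref{lem:sumw_jC_jSpeeds:betaAlignLemma}, that they evaluate to the respective areas in $\sigma(\vey,\mathcal R)$.

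For the area $A^{i,s}_k$, note (cf. the discussion preceding Figure~\ref{fig:sumw_jC_jSegmentWithBusyShift}) that $A^i_k$ is a rectangle whose width is the constant $t_k$ and whose height is the total weight of the jobs completing in $(t_k,t_{k+1}]$ --- namely the jobs assigned to $C^{\operatorname{int}}_k$ together with the (unique) job assigned to a cycle in $\mathcal{C}^{\operatorname{ext}}_{*,k}$. Neither quantity depends on how the external job is split by $t_k$, so the same formula as in the speed-free case applies:
\[
  A^{i,s}_k(\vey) = t_k \cdot \left( \sum_{C \in \mathcal{C}^{\operatorname{ext}}_{*,k}} \sum_{j=1}^d w_j\, y^{i,s}_{j,C} + \sum_{j=1}^d w_j\, y^{i,s}_{j,C^{\operatorname{int}}_k} \right),
\]
which is linear in $\vey$ since $t_k$ is a constant. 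Correctness is exactly the argument of Lemma~\ref{lem:sumw_jC_j:computingAandX}, invoking Lemma~\ref{lem:sumw_jC_jSpeeds:sigmaFromY} to identify which jobs complete in $(t_k,t_{k+1}]$.

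For the area $X^{i,s}_k$, recall it is a rectangle of height $w(\hat J)$ and width $q^i_k$, where $\hat J$ is the job (if any) assigned to the nonempty external cycle in $\mathcal{C}^{\operatorname{ext}}_{*,k}$ and $q^i_k$ is the time $\hat J$ runs past $t_k$. By Lemma~\ref{lem:sumw_jC_jSpeeds:sigmaFromY}, if such $\hat J$ of type $j$ exists then exactly $\frac{p+\beta}{s}$ of its processing time occurs after $t_k$, where the triple $(p,\beta,j)$ is the unique one with $y^{i,s}_{k,p,\beta,j}=1$ (and all these variables vanish if no external cycle in $\mathcal{C}^{\operatorname{ext}}_{*,k}$ is nonempty). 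Hence $q^i_k = \frac{p+\beta}{s}$ and $w(\hat J)=w_j$, so
\[
  X^{i,s}_k(\vey) = \sum_{j=1}^d w_j \cdot \left( \sum_{p \in \{0\}\cup[p_{\max}]} \sum_{\beta \in \mathcal{B}^{i,s}} \frac{p+\beta}{s} \cdot y^{i,s}_{k,p,\beta,j} \right),
\]
which is linear in $\vey$ because $w_j$, $p$, $\beta$ and $s$ are constants.

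The main obstacle is entirely bookkeeping: one must be careful that the linearized product variables $y^{i,s}_{k,p,\beta,j}$ introduced in~\eqref{eq:sumw_jC_jSpeeds:9} genuinely encode ``a type-$j$ job in an external cycle of $\mathcal{C}^{\operatorname{ext}}_{*,k}$ with $p+\beta$ units processed after $t_k$'' --- but this has already been verified in Lemma~\ref{lem:sumw_jC_jSpeeds:sigmaFromY} together with Lemma~\ref{lem:sumw_jC_jSpeeds:betaAlignLemma} --- and that the index ranges for $p$ (now including $p=0$, thanks to the extra variables $y^{i,s}_{C,L,0}, y^{i,s}_{C,R,0}$) and for $\beta$ (the symmetrized shift set $\mathcal{B}^{i,s}$) are used consistently. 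Beyond these checks no further structural work is needed; $\hat B^{i,s}_k$ is treated separately and is not part of this lemma.
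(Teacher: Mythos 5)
Your proposal is correct and takes essentially the same route as the paper: observe that $A^{i,s}_k$ is unchanged from the speed-free case and give the same linear formula, then express $X^{i,s}_k$ as a linear combination of the linearized product variables $y^{i,s}_{k,p,\beta,j}$ with coefficients $w_j(p+\beta)/s$, invoking Lemma~\ref{lem:sumw_jC_jSpeeds:sigmaFromY} to certify the intended meaning of those variables. Your only deviation is the index range for $p$, which you write as $\{0\}\cup[p_{\max}]$ rather than $[p_{\max}]$; this is in fact the more careful choice, since the model explicitly introduces $y^{i,s}_{C,R,0}$ and the amount of size processed after $t_k$ can have integer part zero, and it quietly repairs an apparent off-by-one in the paper's displayed formula and in the stated ranges of~\eqref{eq:sumw_jC_jSpeeds:9}.
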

\begin{proof}
  First note that by the definition of $A^{i,s}_k$ this part is equal to the $t_k$ multiple of sum of weights of all the jobs assigned to $C^{\operatorname{int}}_k$ and $\mathcal{C}^{\operatorname{ext}}_{*,k}$.
  Thus, following the same arguments as in Lemma~\ref{lem:sumw_jC_j:computingAandX} we arrive to
  \begin{equation}\tag{$A^{i,s}_k$}
    A^{i,s}_k
    =
    t_k \cdot \left( \sum_{j \in [d]} w_j \cdot y_{j, C^{\operatorname{int}}_k} + \sum_{C \in \mathcal{C}^{\operatorname{ext}}_{*,k}}\sum_{j \in [d]} w_j \cdot y_{j,C} \right) \,.
  \end{equation}
  That is, this part remains unaffected by the presence of speed.

  The part $X^{i,s}_k$ is going to be fractional.
  Suppose there is a job $J$ of type~$\hat{j}$ assigned to a potential cycle in $\mathcal{C}^{\operatorname{ext}}_{*,k}$, since otherwise we have $X^{i,s}_k = 0$ (and we will show that in such a case our formula is correct as well).
  We already know (Lemma~\ref{lem:sumw_jC_jSpeeds:sigmaFromY}) that the completion time of $J$ in $\sigma(\vey, \mathcal{R})$ is exactly \( t_k + \frac{p+\beta}{s} \) if and only if $y_{k,p,\beta,j} = 1$.
  Thus, in such a case we have $X^{i,s}_k = \frac{p+\beta}{s} \cdot w(J)$.
  We set
  \begin{equation}\tag{$X^{i,s}_k$}
    X^{i,s}_k
    =
    \frac{1}{s} \cdot \sum_{p \in [p_{\max}]}\sum_{\beta \in \mathcal{B}^{i,s}}\sum_{j \in [d]} \left( w_j \cdot \left(p+\beta\right) \cdot y_{k,p,\beta,j} \right) \,.
  \end{equation}
  Clearly, the above expression is a linear function in $\vey$.
  If there is no job assigned to an external cycle in $\mathcal{C}^{\operatorname{ext}}_{*,k}$, then we have $y_{k,p,\beta,j} = 0$ for all $p \in [p_{\max}], \beta \in \mathcal{B}^{i,s}, j\in [d]$.
  Thus, the right hand side of the above expression is $0$ and so in such a case it is a correct expression as well.
  Now, if there is a job $J$ (of type~$\hat{j}$) assigned to a realization of an external cycle in $\mathcal{C}^{\operatorname{ext}}_{*,k}$, then exactly one variable $y_{k,\hat{p},\hat{\beta},\hat{j}} = 1$ (and the rest of $y_{k,p,\beta,j}$ is set to $0$, due to~\eqref{eq:sumw_jC_jSpeeds:8}).
  Consequently, the right hand side becomes \(\frac{1}{s} \cdot w_{\hat{j}} \cdot (\hat{p} + \hat{\beta})\) which is correct, since we know that $\frac{\hat{p}+\hat{\beta}}{s}$ is the difference between the completion time of $\hat{J}$ in $\sigma(\vey,\mathcal{R})$ and the critical time $t_k$.
\end{proof}

Now, to keep our argumentation as simple as possible and close to the arguments given in the previous section, we in fact compute the $s$ multiple of the part $B^{i,s}_k$.
This allows us to keep the idea behind the $\ve\alpha^{i,s}$ variables as before; we only introduce a specialized variable for every $\beta \in \mathcal{B}^{i,s}$.
In fact, we redefine $\alpha^{i,s}_{k,\beta,0}$ as follows
\begin{equation}\label{eq:sumw_jC_jSpeeds:alphaZero}
  \alpha^{i,s}_{k,\beta,0}
  =
  \sum_{j \in [d]} \sum_{p \in [p_{\max}]} p \cdot y^{i,s}_{k,p,\beta,j} \qquad\qquad \forall k \in \{2,\ldots,|T^i|\}, \, \forall \beta \in \mathcal{B}^{i,s}
\end{equation}
and similarly for the other variables, that is, we have
\begin{equation}\label{eq:sumw_jC_jSpeeds:last}
  \alpha^{i,s}_{k,\beta,j}
  =
  p^i_j\cdot y^{i,s}_{j,C} + \alpha^{i,s}_{k,\beta,\pred^i_{k,\mathcal{R}}(j)} \qquad\qquad \forall j \in [d], \, \forall \beta \in \mathcal{B}^{i,s} \,,
\end{equation}
where $C = C^{\operatorname{int}}_k$.
Although now $\ve\alpha$ is defined differently, the idea behind its definition remains the same.
Observe that $\alpha^{i,s}_{k,\beta,j}$ is in fact the $s$-multiple of the completion time in $\sigma(\vey,\mathcal{R})$ minus $\beta$ of the last job assigned to $C^{\operatorname{int}}_k$.

Similarly to Lemma~\ref{lem:sumw_jC_j:objectiveVariablesFollow} we observe that if there is a solution for our model without the variables we added in order to compute the objective, then it is possible to compute the values satisfying constraints \eqref{eq:sumw_jC_jSpeeds:8}--\eqref{eq:sumw_jC_jSpeeds:last}.
\begin{lemma}\label{lem:sumw_jC_jSpeeds:objectiveVariablesFollow}
  It for a vector $\vex \in \mathbb{N}^d$ there exist integral vectors $\vey,\vez$ satisfying constraints~\eqref{eq:Cmax:sumOfExes}--\eqref{eq:Cmax:incompatibleCyclesDisable}, \eqref{eq:sumw_jC_j:first}--\eqref{eq:sumw_jC_j:fourth}, and \eqref{eq:sumw_jC_jSpeeds:1}--\eqref{eq:sumw_jC_jSpeeds:9}, then there exists an integral $\vey,\ve\alpha$ satisfying~\eqref{eq:Cmax:sumOfExes}--\eqref{eq:Cmax:incompatibleCyclesDisable}, \eqref{eq:sumw_jC_j:first}--\eqref{eq:sumw_jC_j:fourth}, and \eqref{eq:sumw_jC_jSpeeds:1}--\eqref{eq:sumw_jC_jSpeeds:last}.
  \qed
\end{lemma}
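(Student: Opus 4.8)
The plan is to proceed exactly as in the proof of Lemma~\ref{lem:sumw_jC_j:objectiveVariablesFollow}: keep the integral vectors $\vey$ and $\vez$ untouched and simply \emph{define} the aggregation vector $\vealpha$ by the two families of equalities \eqref{eq:sumw_jC_jSpeeds:alphaZero} and \eqref{eq:sumw_jC_jSpeeds:last}. These are the only constraints in the range \eqref{eq:sumw_jC_jSpeeds:1}--\eqref{eq:sumw_jC_jSpeeds:last} that mention $\vealpha$ at all, so once we check that such a $\vealpha$ exists and is integral, the claim follows because every other constraint in the stated system holds for $(\vey,\vez)$ by hypothesis.

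First I would argue that \eqref{eq:sumw_jC_jSpeeds:alphaZero}--\eqref{eq:sumw_jC_jSpeeds:last} determine $\vealpha$ uniquely and consistently. Fix a machine kind $i$, a speed $s$, a critical-time index $k$, and a shift $\beta \in \mathcal{B}^{i,s}$. Equation \eqref{eq:sumw_jC_jSpeeds:alphaZero} fixes the base value $\alpha^{i,s}_{k,\beta,0}$ outright, and then \eqref{eq:sumw_jC_jSpeeds:last} computes the values $\alpha^{i,s}_{k,\beta,j}$ one at a time, in the order given by the linear extension $\prec^i_{\mathcal{R},k}$ of the Smith order, each new value being obtained from that of its predecessor together with $p^i_j y^{i,s}_{j,C}$. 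Since $\prec^i_{\mathcal{R},k}$ is a total order on $[d]$ whose first element has predecessor $0$, this is a well-founded recursion and no circularity arises, so the defining equations indeed admit a (unique) solution $\vealpha$.

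Second, I would verify integrality. In \eqref{eq:sumw_jC_jSpeeds:alphaZero} the right-hand side is $\sum_{j \in [d]}\sum_{p \in [p_{\max}]} p \cdot y^{i,s}_{k,p,\beta,j}$, where each $p$ is an integer and, by \eqref{eq:sumw_jC_jSpeeds:9}, each $y^{i,s}_{k,p,\beta,j}$ is binary; hence $\alpha^{i,s}_{k,\beta,0}\in\mathbb{N}$. Proceeding by induction along $\prec^i_{\mathcal{R},k}$, if the predecessor value is integral then $\alpha^{i,s}_{k,\beta,j} = p^i_j \cdot y^{i,s}_{j,C} + \alpha^{i,s}_{k,\beta,\pred^i_{\mathcal{R},k}(j)}$ is integral as well, since $p^i_j$ is a positive integer and $y^{i,s}_{j,C}$ is integral; moreover all values are nonnegative. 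Thus $\vealpha\in\mathbb{N}$.

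Finally, since the constraints \eqref{eq:Cmax:sumOfExes}--\eqref{eq:Cmax:incompatibleCyclesDisable}, \eqref{eq:sumw_jC_j:first}--\eqref{eq:sumw_jC_j:fourth}, and \eqref{eq:sumw_jC_jSpeeds:1}--\eqref{eq:sumw_jC_jSpeeds:9} do not involve $\vealpha$, they continue to hold for the unchanged $\vey,\vez$, so the extended assignment $(\vey,\vez,\vealpha)$ satisfies the full system in the statement. There is essentially no obstacle here; the only points needing a moment's care are confirming that the recursion \eqref{eq:sumw_jC_jSpeeds:last} is well founded along the Smith order (with base case $\alpha^{i,s}_{k,\beta,0}$) and that the product variables $y^{i,s}_{k,p,\beta,j}$ appearing in \eqref{eq:sumw_jC_jSpeeds:alphaZero} are already supplied by the hypothesis through \eqref{eq:sumw_jC_jSpeeds:9} — both of which are immediate.
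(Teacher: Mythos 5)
Your proposal is correct and matches the paper's intended argument: the paper dispatches this lemma with a bare $\qed$, following the same template as the proof of Lemma~\ref{lem:sumw_jC_j:objectiveVariablesFollow}, which amounts exactly to what you wrote — fix $\vey,\vez$, and use the well-founded recursion \eqref{eq:sumw_jC_jSpeeds:alphaZero}--\eqref{eq:sumw_jC_jSpeeds:last} along $\prec^i_{\mathcal{R},k}$ to read off an integral $\vealpha$, noting that no other constraint in the system involves $\vealpha$. One small imprecision: binariness of $y^{i,s}_{k,p,\beta,j}$ does not follow from \eqref{eq:sumw_jC_jSpeeds:9} alone (it also needs the incompatibility constraint \eqref{eq:Cmax:incompatibleCyclesDisable} to ensure the sum over $C$ has at most one nonzero term), but this is immaterial here since non-negative integrality, which is immediate, is all the induction step requires.
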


Recall that $\rho^i_{k,j}$ is the slope of job type $j$ without taking the speed of the particular machine into account, that is, $\rho^i_{k,j} = w_j / p^i_j$.
This remains correct now, since we are going to express the $s$ multiple of $B^{i,s}_k$ (i.e., we locally stretch the time in order to work directly with the integral job sizes of the stretched time axis).
Now, we are ready to define the sought alternative view on $B^{i,s}_k$ that allows us to compute it; to that end we set $2 \cdot \frac{1}{s} \cdot \hat{B}^{i,s}_k(\vey,\ve\alpha) = $
  \begin{equation*}
    \sum_{\beta \in \mathcal{B}^{i,s}} \left[
    \sum_{j = 1}^d \left[ \left(\rho^i_{k,j} - \rho^i_{k,\successor^i_{\mathcal{R},k}(j)}\right) \cdot (\alpha^{i,s}_{k,\beta,j} + \beta \cdot y^{i,s}_{k,\beta})^2 \right]
    -
    \rho^i_{k,\max} \cdot \left( \alpha^{i,s}_{k,0} + \beta \cdot y^{i,s}_{k,\beta} \right)^2
    \right] \\
    +
    \sum_{j=1}^d w_j \cdot p^i_j \cdot y^{i,s}_{j,C^{\operatorname{int}}_k}
    \,.
  \end{equation*}
Observe that we have
\begin{align*}
  \left( \alpha^{i,s}_{k,\beta,j} + \beta \cdot y^{i,s}_{k,\beta} \right)^2
  &=
  \left( \alpha^{i,s}_{k,\beta,j} \right)^2
  +
  2 \cdot \alpha^{i,s}_{k,\beta,j} \cdot \beta \cdot y^{i,s}_{k,\beta}
  +
  \left( \beta \cdot y^{i,s}_{k,\beta} \right)^2 \\
  &=
  \left( \alpha^{i,s}_{k,\beta,j} \right)^2
  +
  2 \cdot \beta \cdot \alpha^{i,s}_{k,\beta,j}
  +
  \beta^2 \cdot y^{i,s}_{k,\beta}
\end{align*}
for all $j = 0, \ldots, d$.
This follows easily from the fact that $y^{i,s}_{k,\beta} \in \{0,1\}$ and furthermore $y^{i,s}_{k,\beta} = 1$ if and only if $\alpha^{i,s}_{k,\beta,j} > 0$.
Note that the above definition of $\hat{B}^{i,s}_k(\vey,\ve\alpha)$ differs from the one without speed only by a (fractional) constant multiple and additional linear terms.
Consequently, it remains separable convex in variables $\vey$ and $\ve\alpha$ (cf. Lemma~\ref{lem:sumw_jC_j:computingHatB}) and it is equivalent to the expression $B^{i,s}_k$ (by the same arguments as in Lemma~\ref{lem:sumw_jC_j:equivalenceOfBs}).
\dkinline{is a concluding lemma needed for $\hat{B}^{i,s}_k$?}

This finishes the description of all model changes needed.

\paragraph{Tardiness.}
We take the same approach as previously, that is, we first compute the penalty for jobs scheduled after the critical time $t_k$ that are already late.
That is, we define $w^i_{k,j} = 0$ if $t_k \le d_j$ and $w^i_{k,j} = w_j$, otherwise.
Now, the green area can be computed as follows
\begin{equation}\tag{$A^{i,s}_k$}
  A^{i,s}_k(\vey)
  =
 \sum_{j = 1}^d w^i_{k,j} \left( y^i_{j,C^{\text{int}}_k} + \sum_{C \in \mathcal{C}^{\text{ext}}_{*,k}} y^i_{j,C} \right) (t_k - d^i_j) \,.
\end{equation}
Finally, $B^{i,s}_k(\vey, \vealpha), X^{i,s}_k(\vey)$ remain nearly the same as in the previous section.
The only subtle difference is that now we use weights $w^i_{k,j}$ instead of $w^i_j$; note that we use these weights when defining the slopes $\rho^i_{k,j}$.

\subsubsection{Handling Fractionality of Objective Functions} \label{sec:fractionality}
In Section~\ref{sec:preliminaries} we have defined MIMO only for objective functions which are integer-valued over integer points, i.e., $\forall \vex \in \Z^{d+d^i}\colon f^i(\vex) \in \Z$ for each $i \in [\tau]$.
This condition is not satisfied in the aforementioned models with speeds because some terms of the objective function are scaled by a fractional factor $1/s$ (or $1/s^2$ in the case of $\sum w_jC_j$ objective in the $\ves$-representation).
Hence, to satisfy the integrality condition, it suffices to multiply all the objective functions by the least common multiple of the denominators $q$ of the fractions (one for each distinct speed) $p/q=1/s$ (with $p,q$ coprime).
This increases the maximum value of the objective function at most by a multiplicative factor of $\Oh(2^{\langle s \rangle \cdot \tau})$ and hence the factor $\log f_{\max}$ in the time complexity of solving MIMO increases at most by a multiplicative factor of $\langle s \rangle \cdot \tau$, that is, polynomially in the input encoding.

We summarize the parameters of the above three models.
In fact, we have added $O((d)^4)$ new variables ($\ve\alpha$); thus, we conclude the following MIMO parameters.
\begin{lemma}
\label{lem:MIMORSumObjectivesSpeeds}
  Let $\mathcal{I}$ be an instance of $R | r^i_j,d^i_j| \mathcal{R}$ in its $\ves$-representation with $\mathcal{R} \in \left\{ \sum w_j C_j, \sum w_j F_j, \sum w_j T_j \right\}$ with $m$ machines of $\kappa$ kinds, speed vectors $\ves_1, \ldots, \ves_{\bar{\tau}}$, hence $\tau = \kappa \cdot \bar{\tau}$ types of machines, and $d$ job types, and denote $p_{\max} = \max_{1 \le j \le d} (\max p_{ij})$.
  There is a MIMO model $\mathcal{S}$ for $\mathcal{I}$ with extension-separable convex objective functions and
  \begin{tasks}[style=itemize](3)
    \task $\mathcal{S}(\Delta) = p_{\max}$,
    \task $\mathcal{S}(M) = \OhOp{(d)^4 \cdot p_{\max}}$,
    \task $\mathcal{S}(d) = d$,
    \task $\mathcal{S}(d^i) = \OhOp{(d)^4 \cdot p_{\max}}$,
    \task $\mathcal{S}(N) = m$, and
    \task $\mathcal{S}(\tau) = \tau = \kappa \cdot \bar{\tau}$. \qed
  \end{tasks}
\end{lemma}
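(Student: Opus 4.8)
This lemma is the bookkeeping summary of the model assembled in this subsection, so the plan is to (i) record the model and invoke the earlier lemmas for its correctness, and (ii) read off the six parameters. For each machine kind $i \in [\kappa]$ and each of the $\bar\tau$ speeds $s$ of that kind I would let $P^{i,s}$ be the polytope cut out by the constraints \eqref{eq:Cmax:sumOfExes}--\eqref{eq:Cmax:incompatibleCyclesDisable}, \eqref{eq:sumw_jC_j:second}--\eqref{eq:sumw_jC_j:fourth} and \eqref{eq:sumw_jC_jSpeeds:1}--\eqref{eq:sumw_jC_jSpeeds:last}, let $\pi^{i,s}$ discard every variable except $\vex$, set the multiplicity equal to the number of machines of this kind and speed, and take $\ven$ as the target vector; the objective $f^{i,s}$ is $\sum_k \bigl(A^{i,s}_k(\vey) + \hat B^{i,s}_k(\vey,\vealpha) + X^{i,s}_k(\vey)\bigr)$, where for $\sum w_jF_j$ this differs from the $\sum w_jC_j$ case only by a linear term in $\vex$ and for $\sum w_jT_j$ one replaces $w_j$ by the segment weights $w^i_{k,j}$ in the $A$-term and in the slopes. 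Correctness --- feasible points of the resulting MIMO system are in value-preserving correspondence with admissible schedules --- is then the chain Lemma~\ref{lem:cyclesScheduleOrderedObjectiveSpeeds} (restriction to $(\N+\BB^{i,s})$-regular schedules), Lemma~\ref{lem:sumw_jC_jSpeeds:betaAlignLemma} (busy-shift variables behave as intended), Lemma~\ref{lem:sumw_jC_jSpeeds:sigmaFromY} (solutions are realized by $\sigma(\vey,\mathcal R)$), Lemma~\ref{lem:sumw_jC_jSpeeds:objectiveVariablesFollow}, Lemma~\ref{lem:sumw_jC_jSpeeds:computingAandX} and the ensuing discussion that $\hat B^{i,s}_k$ is separable convex and equals $B^{i,s}_k$; the integrality requirement on $f^{i,s}$ is met via the scaling of Section~\ref{sec:fractionality}, which changes only $f_{\max}$ and not the constraint matrix.

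Reading off the first three parameters is immediate: $\mathcal{S}(d)=d$ since the projections keep only $\vex$; $\mathcal{S}(N)=\|\vemu\|_1=m$; and $\mathcal{S}(\tau)=\kappa\bar\tau$, one polytope per (kind, speed) pair. For $\mathcal{S}(\Delta)=p_{\max}$ the key observation to verify is that in every constraint of the three blocks above the left-hand-side coefficients are only $1$, a job size $p^i_j\le p_{\max}$, or a multiplier $p\in[p_{\max}]$; the only large quantities that appear, the right-hand sides $\lfloor s\cdot(t_k-t_\ell)\rfloor$, and the only fractional quantities, the shifts $\beta\in\BB^{i,s}$, sit on the right-hand side and inside $f^{i,s}$ respectively, and never enter the coefficient matrix.

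The remaining two parameters, $\mathcal{S}(M)$ and $\mathcal{S}(d^i)$, are a counting exercise. The ``old'' variable groups ($\vey,\vez$ and the $y_{C,L,p},y_{C,R,p}$ families) contribute $\Oh(d^2 p_{\max})$ variables and as many constraints, exactly as in Lemma~\ref{lem:MIMOPSumObjectives}; the new families $y^{i,s}_{C,\hat\ell,\ell}$, $y^{i,s}_{k,p,\beta,j}$, $y^{i,s}_{C,k,p,\beta,j}$ and $\alpha^{i,s}_{k,\beta,j}$ are indexed by $\Oh(d^2)$ external cycles (or $\Oh(d)$ critical times), $\Oh(d^2)$ possible shifts, $\Oh(d)$ job types and $p_{\max}$ length values, and every constraint family \eqref{eq:sumw_jC_jSpeeds:1}--\eqref{eq:sumw_jC_jSpeeds:last} has size bounded by a constant times the number of variables it binds; collecting dominant terms yields $\mathcal{S}(M)=\mathcal{S}(d^i)=\Oh(d^4 p_{\max})$. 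I expect the only real obstacle to be this accountancy: confirming that the floors and the $\beta$'s stay out of the constraint matrix so that $\Delta$ remains $p_{\max}$, and tallying the index ranges in \eqref{eq:sumw_jC_jSpeeds:9} and \eqref{eq:sumw_jC_jSpeeds:last} precisely enough to land on the claimed polynomial degree in $d$ rather than a larger one.
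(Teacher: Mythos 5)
Your proposal is correct and follows essentially the same route the paper takes (which is to say: the paper treats this lemma as a bookkeeping corollary of the preceding discussion and provides little beyond the remark that $O(d^4)$ new auxiliary variables have been added). One caveat: your phrase that the new variable families ``are indexed by $\Oh(d^2)$ external cycles, $\Oh(d^2)$ possible shifts, $\Oh(d)$ job types and $p_{\max}$ length values,'' read literally as a product, would give $\Oh(d^5 p_{\max})$; what saves the claimed degree is that in the quantifier of~\eqref{eq:sumw_jC_jSpeeds:9} the shift $\beta$ is not a free index but is pinned to $\beta=\beta_{\hat k,k}$, so the true free indices of the largest family are $C$ ($\Oh(d^2)$), $\hat k$ ($\Oh(d)$), $p$ ($\Oh(p_{\max})$) and $j$ ($\Oh(d)$). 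You flag exactly this tallying as the remaining obstacle, and it does resolve as claimed, so the proposal stands.
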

Again, applying parts 1, 3 and 4 of Theorem~\ref{thm:implicitMIMO} we obtain the following.
\begin{theorem}
  Fix an ordered objective $\mathcal{R} \in \left\{ \sum w_jC_j, \sum w_jT_j, \sum w_jF_j \right\}$.
  Problem $R | r^i_j,d^i_j| \mathcal{R}$ in $\ves$-representation with $m$ machines of $\tau$ types, $d$ job types with maximum job size $p_{\max}$, and speeds, admits a fixed-parameter algorithm for parameters
\begin{itemize}
  \item $m + d + p_{\max}$ and
  \item $d + p_{\max}$ \qed
\end{itemize}
\end{theorem}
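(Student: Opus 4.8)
The plan is to derive the theorem directly from the modeling lemma for ordered objectives with speeds (Lemma~\ref{lem:MIMORSumObjectivesSpeeds}) together with the two applicable parts of Theorem~\ref{thm:implicitMIMO}, exactly as was done for the makespan objective in Theorem~\ref{thm:FPT:MIMOPCmax}. First I would invoke Lemma~\ref{lem:MIMORSumObjectivesSpeeds}: given an instance $\mathcal{I}$ of $R|r^i_j,d^i_j|\mathcal{R}$ in its $\ves$-representation, it produces in polynomial time a MIMO system $\mathcal{S}$ with an extension-separable convex objective and with parameters $\mathcal{S}(\Delta)=p_{\max}$, $\mathcal{S}(M)=\OhOp{d^4 p_{\max}}$, $\mathcal{S}(d)=d$, $\mathcal{S}(D)=\max_i\mathcal{S}(d^i)=\OhOp{d^4 p_{\max}}$, $\mathcal{S}(N)=m$, and $\mathcal{S}(\tau)=\tau$. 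Before feeding $\mathcal{S}$ into Theorem~\ref{thm:implicitMIMO} I would remove the fractionality of the objective as described in Section~\ref{sec:fractionality}: multiplying all $f^i$ by the least common multiple of the speed denominators keeps them integral on integer points and increases $f_{\max}$, hence $\hat L = L + \la \ven, f_{\max}, N\ra$, only by a factor polynomial in the input encoding; in particular $\hat L$ stays polynomially bounded in $|\mathcal{I}|$.

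For the parameter $d+p_{\max}$ I would apply part~3 of Theorem~\ref{thm:implicitMIMO}, which solves MIMO with an extension-separable convex objective in time $(\mathcal{S}(M)\,\mathcal{S}(d)\,\mathcal{S}(\Delta))^{\Oh(\mathcal{S}(M)^2\mathcal{S}(d)+\mathcal{S}(d)^2\mathcal{S}(M))}\,\hat L^{\Oh(1)}$. Since $\mathcal{S}(M)=\OhOp{d^4 p_{\max}}$, $\mathcal{S}(d)=d$, and $\mathcal{S}(\Delta)=p_{\max}$ are all bounded by a computable function of $d+p_{\max}$, and the remaining factor $\hat L^{\Oh(1)}$ is polynomial in $|\mathcal{I}|$, this is an \FPT running time. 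For the parameter $m+d+p_{\max}$ I would instead use part~1 of Theorem~\ref{thm:implicitMIMO}: an extension-separable convex function is in particular convex, since partial minimization of a jointly convex function over a polyhedral slice is convex, and it comes with an evaluation oracle because computing $f^i(\vex)$ is a polynomial-time separable convex minimization over that slice; the resulting running time $\bigl(\mathcal{S}(N)(\mathcal{S}(d)+\mathcal{S}(D))\bigr)^{\Oh(\mathcal{S}(N)(\mathcal{S}(d)+\mathcal{S}(D)))}\,\hat L^{\Oh(1)}$ is \FPT once $\mathcal{S}(N)=m$, $\mathcal{S}(d)=d$, and $\mathcal{S}(D)=\OhOp{d^4 p_{\max}}$ are bounded by the parameter. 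Note that $\tau$ never needs to be a parameter here: it influences only $L$ (through the $\tau$ right-hand side vectors) and hence only the polynomial factor $\hat L^{\Oh(1)}$; also, strictly speaking the first bullet follows a fortiori from the second.

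The only points requiring care — and which are already settled by earlier results — are (i) that the constructed MIMO system correctly encodes $\mathcal{R}$ in the presence of speeds and release/due dates, which is the content of Lemmas~\ref{lem:cyclesScheduleOrderedObjectiveSpeeds}, \ref{lem:sumw_jC_jSpeeds:betaAlignLemma}, \ref{lem:sumw_jC_jSpeeds:sigmaFromY}, and \ref{lem:sumw_jC_jSpeeds:computingAandX} together with the construction of $\hat B^{i,s}_k$, and (ii) that the resulting objective is genuinely extension-separable convex after the fractionality fix, which follows from the separable convexity of $\hat B^{i,s}_k$ and the linearity of the $A$- and $X$-area terms. Given these, the present theorem is a bookkeeping step: read off the MIMO parameters from Lemma~\ref{lem:MIMORSumObjectivesSpeeds} and verify that each is bounded by a function of the claimed parameter. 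I do not expect a genuine obstacle; the difficulty has been front-loaded into the structural and modeling lemmas, so the proof here consists of assembling them and citing the correct parts of Theorem~\ref{thm:implicitMIMO}.
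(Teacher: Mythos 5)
Your proof is correct and matches the paper's approach; the paper's own ``proof'' is just the sentence ``Again, applying parts 1, 3 and 4 of Theorem~\ref{thm:implicitMIMO} we obtain the following,'' so you have supplied the detail the paper leaves implicit, including the fractionality-scaling step from Section~\ref{sec:fractionality} and the observation that the second bullet subsumes the first.

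One technical point to tighten. You justify applying Part~\ref{thm:implicitMIMO:micp} by arguing that an extension-separable convex $f^i(\vex)=\min_{\vex':(\vex,\vex')\in P^i\cap\Z^{d+d'}} g^i(\vex,\vex')$ is itself convex because ``partial minimization of a jointly convex function over a polyhedral slice is convex.'' That statement holds for minimization over the \emph{real} slice, but the definition here minimizes over \emph{integer} points, and integer partial minimization need not preserve convexity of the projected function. The correct (and what the paper implicitly does) justification is that Lemma~\ref{lem:MIMO_as_nfold} keeps the auxiliary variables $\vex'$ as genuine variables in the huge $N$-fold formulation, so the objective there is the separable convex function $g^i$, which is convex on $\R^{Nt}$; Dadush--Vempala is then run on that extended convex integer program, and the minimization over $\vex'$ is performed by the solver rather than baked into the objective. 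With that substitution the argument for Part~\ref{thm:implicitMIMO:micp} is airtight, and the rest of your bookkeeping (including that $\mathcal{S}(D)=\OhOp{d^4 p_{\max}}$ must appear in the first bullet's parameter, and that $\tau$ influences only the polynomial factor $\hat L^{\Oh(1)}$) is exactly right.
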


\subsection{Bin Packing}
In the following subsection we show that several variants of \textsc{Bin Packing} in the high-multiplicity setting can be modeled as MIMO.

\subsubsection{Problem Definitions}
\label{ssec:applications:definitions}
In the {\sc Multiple Knapsack} problem we are given positive integers $B,d,d', \tau$, vectors $\veb^i \in \N^{d'}$ of knapsack capacities and $\mu^i$ of knapsack multiplicities for each knapsack type $i \in [\tau]$, and vectors $\ves_j$ of item sizes for $j \in [d]$.
The task is to partition the $n = \sum_{j=1}^d n_j$ items into $B = \sum_{i=1}^\tau \mu^i$ bins such that sum of items (which are vectors) packed into a bin does not exceed its capacity (in any dimension).
\textsc{Bin Packing} is the case when the dimension of an item $d'$ is equal to $1$, the number of bin types $\tau$ is $1$, and we do binary search over the number $n$ of bins necessary to pack all the items in order to find the smallest such value.
Another variant of \textsc{Bin Packing} generalized by \textsc{Multiple Knapsack} is \textsc{Bin Packing with Cardinality Constraints}, where items are one-dimensional but each bin additionally has a limit on the \emph{number} of items it can pack.
This is modeled as \textsc{Multiple Knapsack} by representing each item as a $2$-dimensional vector, with the first dimension being the item's size, and the second dimension being $1$, and with the second coordinate of the knapsack capacity tuple representing the limit on the number of items it can pack, hence $d=2$, $\tau=1$, and we find the smallest necessary number of bins using binary search.
\textsc{Cutting stock} is the a related problem where $d'=1$, $\tau$ is the number of roll lengths, $\mu^i=n$ and with the $i$-th objective function being a fixed-charge objective incurring a cost $c^i \in \N$ for each bin of size $i$ which is used.

In the {\sc Bin Packing with General Cost Structures} (GCBP)~\cite{AnilyEtAl1994,EpsteinLevin2012} problem we are given~$n$ items with integer sizes $s_1, \dots, s_n$ and a monotonically non-decreasing concave function $f \colon [n] \to \R_{\geq 0}$ with $f(0) = 0$.
The cost of a bin containing $p$ items is $f(p)$.
The task is to find a packing of all items into (at most) $B$ bins each of which contains items of total size at most the common integer capacity such that the total cost is minimized.

\subsubsection{Multiple Knapsack}
The following constraints define the polytope $P^i$ of possible configurations of items in a knapsacks of type $i \in [\tau]$:
\begin{align*}
  \sum_{j = 1}^d s_{j, \delta} x_j &\le b_{\delta}^i & \forall \delta \in \left[ d' \right] \\
  x_j &\ge 0 & \forall j \in [d]
\end{align*}
There is no objective since we only have to decide whether a packing into $B$ knapsacks of a given type exists.
In summary, the properties of the model are:
\begin{lemma}
  Let $\II$ be an instance of~\textsc{Multiple Knapsack}, and let $\sigma=\max_{i \in [d]}(\|\ves_i\|_{\infty})$.
  There exists a MIMO model of $\II$ with an empty objective function and with parameters
  \begin{tasks}[style=itemize](3)
    \task $\mathcal{S}(\Delta) = \sigma$,
    \task $\mathcal{S}(M) = d'$,
    \task $\mathcal{S}(d) = d$,
    \task $\mathcal{S}(d^i) = 0$,
    \task $\mathcal{S}(N) = B$, and
    \task $\mathcal{S}(\tau) = \tau$. \qed
  \end{tasks}
\end{lemma}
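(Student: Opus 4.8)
The plan is to exhibit the MIMO model explicitly, check that it faithfully encodes \textsc{Multiple Knapsack}, and then simply read off the parameters of its $P$-representation. For each knapsack type $i \in [\tau]$, take $P^i \subseteq \R^d$ to be the polytope defined by the inequalities $\sum_{j=1}^d s_{j,\delta} x_j \le b^i_\delta$ for $\delta \in [d']$ together with the box constraints $x_j \ge 0$ for $j \in [d]$; since no auxiliary variables are required, the projection $\pi^i$ is the identity and $X^i = \pi^i(P^i) \cap \Z^d = P^i \cap \Z^d$ is precisely the set of item configurations that fit into a knapsack of type $i$. Set the target vector to $\ven = (n_1, \dots, n_d)$, the multiplicities to $\mu^1, \dots, \mu^\tau$, and leave the objective empty, so the MIMO instance is a feasibility instance.

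For correctness I would argue the following equivalence. A packing of the $n$ items into the $B = \sum_{i=1}^\tau \mu^i$ bins respecting all capacities is exactly a choice, for each bin, of a configuration in the $X^i$ corresponding to that bin's type, subject to (a) the number of bins of type $i$ being $\mu^i$, and (b) the coordinate-wise sum of all chosen configurations being $\ven$. Letting $\lambda^i_\vex \in \N$ record the number of bins of type $i$ packed according to configuration $\vex \in X^i$, condition (a) is $\sum_{\vex \in X^i} \lambda^i_\vex = \mu^i$ for each $i$, and condition (b) is $\sum_{i=1}^\tau \sum_{\vex \in X^i} \vex\,\lambda^i_\vex = \ven$; this is verbatim the MIMO feasibility system, so feasible packings and feasible MIMO solutions are in bijection.

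Finally, reading off the parameters: each $P^i$ is presented by $m^i = d'$ inequalities (the $d$ nonnegativity constraints are treated as box constraints and do not count toward $M$), so $\mathcal{S}(M) = \max_i m^i = d'$; the largest coefficient appearing in any constraint matrix $A^i$ is $\max_{j \in [d],\, \delta \in [d']} s_{j,\delta} = \max_{j \in [d]} \|\ves_j\|_\infty = \sigma$ (the box constraints contribute only the coefficient $1 \le \sigma$), giving $\mathcal{S}(\Delta) = \sigma$; the ambient dimension is $d$ with no extra variables, so $\mathcal{S}(d) = d$ and $\mathcal{S}(d^i) = 0$; the number of bricks is $N = \|\vemu\|_1 = \sum_{i=1}^\tau \mu^i = B$, so $\mathcal{S}(N) = B$; and there are $\tau$ distinct polytope types, so $\mathcal{S}(\tau) = \tau$. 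There is no genuine obstacle in this lemma; the only points requiring a word of care are that the nonnegativity constraints $x_j \ge 0$ are accounted for as box constraints rather than contributing to $M$ (and leave $\Delta$ unchanged), and that $\mathcal{S}(d^i) = 0$ is consistent with the $P$-representation convention $P^i \subseteq \R^{d + d^i} = \R^d$.
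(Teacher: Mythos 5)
Your construction is exactly the one the paper uses: the same polytope $P^i$ defined by the $d'$ knapsack inequalities plus nonnegativity, no auxiliary variables, no objective, target $\ven$ and multiplicities $\mu^i$, and the same reading-off of parameters (including the observation that the $x_j \ge 0$ constraints are absorbed as variable bounds rather than counted in $M$). Your added justification of the bijection between packings and MIMO solutions is correct and merely makes explicit what the paper leaves implicit.
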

Applying parts 1, 3 and 4 of Theorem~\ref{thm:implicitMIMO} gives:
\begin{theorem}\label{thm:vector_bin_packing}
  Let $\sigma=\max_{i \in [d]}(\|\ves_i\|_{\infty})$.
  \textsc{Multiple Knapsack} is fixed-parameter tractable parameterized by
  \begin{itemize}
  \item $d+d'+ B$,
  \item $d+d'+\sigma$, and,
  \item $d+d'+\tau$ if $\sigma$ is given in binary. \qed
  \end{itemize}
\end{theorem}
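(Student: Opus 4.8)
The plan is essentially bookkeeping on top of the modeling lemma above, which already supplies a MIMO model $\SSS$ of the given \textsc{Multiple Knapsack} instance $\II$ with empty objective and with $\SSS(\Delta)=\sigma$, $\SSS(M)=d'$, $\SSS(d)=d$, $\SSS(d^i)=0$ (hence $\SSS(D)=\max_i\SSS(d^i)=0$), $\SSS(N)=B$, and $\SSS(\tau)=\tau$. Since \textsc{Multiple Knapsack} asks only for feasibility and $\SSS$ is a feasibility MIMO, a decomposition of the item-multiplicity vector $\ven$ into $\SSS$ exists if and only if $\II$ admits a valid packing, so it suffices to solve $\SSS$. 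Two small observations make Theorem~\ref{thm:implicitMIMO} directly applicable: first, the empty objective is the linear objective $f^i(\vex)=\vezero\cdot\vex$, so it is simultaneously admissible for the ``linear'' case of Parts~\ref{thm:implicitMIMO:micp}, \ref{thm:implicitMIMO:huge} and~\ref{thm:implicitMIMO:gr}; second, the quantity $\hat L=L+\la\ven,f_{\max},N\ra$ occurring in all three running-time bounds is polynomial in $|\II|$, because $f_{\max}=0$, $N=B$, and $\ven$, the capacities $\veb^i$ and the encoding of $\sigma$ are all part of the input.

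With this set-up I would simply read off the three bullets from the corresponding parts of Theorem~\ref{thm:implicitMIMO}. For the first bullet, Part~\ref{thm:implicitMIMO:micp} solves $\SSS$ in time $(\SSS(N)(\SSS(d)+\SSS(D)))^{\Oh(\SSS(N)(\SSS(d)+\SSS(D)))}\hat{L}^{\Oh(1)}=(Bd)^{\Oh(Bd)}\poly(|\II|)$, which is \FPT in $B+d$, hence also in $d+d'+B$. For the second bullet, Part~\ref{thm:implicitMIMO:huge} solves $\SSS$ in time $(\SSS(M)\SSS(d)\SSS(\Delta))^{\Oh(\SSS(M)^2\SSS(d)+\SSS(d)^2\SSS(M))}\hat{L}^{\Oh(1)}=(d'd\sigma)^{\Oh((d')^2d+d^2d')}\poly(|\II|)$, which is \FPT in $d+d'+\sigma$. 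For the third bullet, Part~\ref{thm:implicitMIMO:gr} solves $\SSS$ in time $(\SSS(\tau)\SSS(d)\SSS(D)\SSS(M)\log\SSS(\Delta))^{\SSS(\tau)^{(\SSS(d)+\SSS(D))^{\Oh(1)}}}\hat{L}^{\Oh(1)}$; substituting $\SSS(D)=0$, $\SSS(M)=d'$ and $\SSS(\Delta)=\sigma$ and using the estimate $(\log\alpha)^\beta\le 2^{\beta^2/2}+\alpha^{o(1)}$ as in the remarks following Theorem~\ref{thm:implicitMIMO} turns this bound into $g(\tau,d,d')\cdot\poly(|\II|)$ once $\sigma$ is presented in unary, i.e.\ an \FPT algorithm in $d+d'+\tau$.

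I do not expect a genuine obstacle: all the real content is hidden in the modeling lemma and in Theorem~\ref{thm:implicitMIMO}, and the theorem here is a three-line corollary of them. The one point that deserves care is the last bullet, where Part~\ref{thm:implicitMIMO:gr} is the only tool in our toolbox that keeps the dependence on $\tau$ polynomial, and — exactly as for that part of Theorem~\ref{thm:implicitMIMO} — it yields an \FPT algorithm only when the largest item size $\sigma$ is given in unary; one should make sure this encoding assumption is stated when invoking that case.
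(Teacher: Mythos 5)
Your proposal is correct and follows exactly the same route as the paper, which simply applies Parts~\ref{thm:implicitMIMO:micp}, \ref{thm:implicitMIMO:huge} and~\ref{thm:implicitMIMO:gr} of Theorem~\ref{thm:implicitMIMO} to the MIMO model from the preceding lemma (treating the empty objective as the zero linear objective). One thing your bookkeeping surfaces that is worth flagging: Part~\ref{thm:implicitMIMO:gr} is \FPT only when $\Delta=\sigma$ is given in \emph{unary}, so the third bullet's phrase ``if $\sigma$ is given in binary'' is a typo for ``unary,'' consistent with the analogous wording in Theorem~\ref{thm:FPT:MIMOPCmax}; your proof correctly requires unary there.
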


\subsubsection{Bin Packing with General Cost Structures}
Let us turn to the \textsc{Bin Packing with General Cost Structures} problem.
There is only one type of bin, hence $\tau=1$.
The polytope $P^1$ describing the set of configurations of items of a bin is given simply by the knapsack constraint
\begin{align*}
  \sum_{j = 1}^d s_{j} x_j &\le b \\
                       x_j &\ge 0 & \forall j \in [d] \enspace.
\end{align*}
The objective function is $f^1(\vex) = f(\sum_{j=1}^d x_j)$, and it is in general concave, since if it is linear the problem is trivial.
Summarizíng the properties of the above model:
\begin{lemma}
  Let $\II$ be an instance of \textsc{Bin Packing with General Cost Structures}. There is a MIMO model of $\II$ with a concave objective and parameters
  \begin{tasks}[style=itemize](3)
    \task $\mathcal{S}(\Delta) = \sigma$,
    \task $\mathcal{S}(M) = 1$,
    \task $\mathcal{S}(d) = d$,
    \task $\mathcal{S}(d^i) = 0$,
    \task $\mathcal{S}(N) = B$, and
    \task $\mathcal{S}(\tau) = 1$. \qed
  \end{tasks}
\end{lemma}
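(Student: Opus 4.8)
The plan is to verify that the model sketched above the lemma statement is a valid $P$-representation of a MIMO instance and then read off its parameters. First I would put $\II$ into high-multiplicity form: group the $n$ items by size, obtaining $d$ item types with sizes $s_1,\dots,s_d$ and multiplicities $n_1,\dots,n_d$ where $n=\sum_j n_j$, and set the target vector $\ven=(n_1,\dots,n_d)$. The MIMO system has a single type ($\tau=1$), multiplicity $\mu^1=B$, and polytope $P^1=\{\vex\in\R^d\mid \sum_{j=1}^d s_j x_j\le b,\ \vex\ge\vezero\}$ with trivial projection $\pi^1=\mathrm{id}$ (so $d^1=0$ and $X^1=P^1\cap\Z^d$). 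Since the item sizes are positive integers, $\sum_j s_j x_j\le b$ together with $\vex\ge\vezero$ forces $\vex\in[0,b]^d$, so $P^1$ is indeed a polytope. The objective is $f^1(\vex)=f\big(\sum_{j=1}^d x_j\big)$; as $f$ is concave and $\vex\mapsto\sum_j x_j$ is linear, $f^1$ is concave, which is exactly what part~\ref{thm:implicitMIMO:ch} of Theorem~\ref{thm:implicitMIMO} requires. (If $f$ is rational- rather than integer-valued, scale it by the common denominator as in Section~\ref{sec:fractionality}; this affects only $f_{\max}$ and leaves all MIMO parameters untouched.)

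Next I would check the correspondence between feasible MIMO solutions and packings. A feasible $\velambda^1\in\N^{X^1}$ with $\sum_{\vex}\lambda^1_{\vex}=B$ and $\sum_{\vex}\vex\,\lambda^1_{\vex}=\ven$ describes, for each configuration $\vex$, a collection of $\lambda^1_{\vex}$ bins packed according to $\vex$: the knapsack inequality defining $P^1$ guarantees each such bin is feasible, the equality $\sum_{\vex}\vex\,\lambda^1_{\vex}=\ven$ guarantees every item is placed, and exactly $B$ bins are used (some possibly the empty configuration $\vezero$, which lies in $X^1$ as it satisfies all constraints). The MIMO objective $\sum_{\vex}f^1(\vex)\lambda^1_{\vex}=\sum_{\vex}f(\sum_j x_j)\lambda^1_{\vex}$ equals the total packing cost because $f(0)=0$, so empty bins contribute nothing. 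Conversely, any packing of all items into at most $B$ bins of capacity $b$ yields such a $\velambda^1$ by counting how many bins realize each configuration and padding with empty bins up to $B$, at the same cost. Any configuration appearing with positive multiplicity satisfies $\sum_j x_j\le\|\ven\|_1=n$, so $f$ is only ever evaluated inside its domain $\{0,\dots,n\}$ (alternatively one first extends $f$ to a monotone concave function on $[0,b]$); hence the discrepancy between ``at most $B$ bins'' and MIMO's exact requirement $\|\velambda^1\|_1=\mu^1=B$ causes no loss.

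Finally I would read off the parameters of the $P$-representation. The only genuine linear constraint is the knapsack inequality $\sum_j s_j x_j\le b$, with the box constraints $\vezero\le\vex$ recorded as variable bounds (exactly as in the \textsc{Multiple Knapsack} model), giving $\mathcal{S}(M)=1$; its largest coefficient is $\max_j s_j=\sigma$, so $\mathcal{S}(\Delta)=\sigma$. There are no auxiliary variables, so $\mathcal{S}(d^i)=0$ and $\mathcal{S}(d)=d$; the total number of bins is $\mu^1=B$, so $\mathcal{S}(N)=B$; and a single bin type gives $\mathcal{S}(\tau)=1$. I do not expect a real obstacle: the construction is a direct transcription, and the only points that need a word of care are the interaction between ``at most $B$ bins'' and MIMO's exact multiplicity (handled by $f(0)=0$ and empty bins) and the domain/integrality of $f$ (handled by the observation that only configurations with $\sum_j x_j\le n$ occur, together with the scaling of Section~\ref{sec:fractionality}).
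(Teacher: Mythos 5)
Your proposal matches the paper's construction exactly — same polytope, same objective $f^1(\vex)=f(\sum_j x_j)$, same parameter accounting (with non-negativity treated as bounds rather than counted rows, consistent with the paper's \textsc{Multiple Knapsack} model). The paper simply states the model and the lemma with no separate proof, so your extra verifications (the ``at most $B$'' versus exactly $B$ issue handled by $f(\vezero)=0$, the domain and integrality of $f$) are welcome elaboration rather than a different route.
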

Now Part~\ref{thm:implicitMIMO:ch} of Theorem~\ref{thm:implicitMIMO}, the only part applicable to a MIMO with concave objectives, yields the following theorem.
\begin{theorem} \label{thm:bin_packing_gcs}
  \textsc{Bin Packing with General Cost Structures} is fixed-parameter tractable when parameterized by $d$ and all $s_i$ are given in unary.
\qed
\end{theorem}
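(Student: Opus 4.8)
I would read off exactly the one‑type MIMO system of the lemma preceding Theorem~\ref{thm:bin_packing_gcs}: a single configuration type whose configuration set is $\CC=\{\vex\in\N^d:\ves^{\intercal}\vex\le b\}$, presented by the knapsack polytope $P^1=\{\vex\ge\vezero:\ves^{\intercal}\vex\le b\}$ with the identity projection, the single objective $f^1(\vex)=f(\|\vex\|_1)$ (concave, being $f$ composed with the linear map $\vex\mapsto\|\vex\|_1$), multiplicity $\mu^1=B$, and target $\ven$. From the $P$‑representation, $\mathcal{S}(\Delta)=\sigma:=\max_i s_i$, $\mathcal{S}(M)=1$, $\mathcal{S}(d)=d$, $\mathcal{S}(d^1)=0$, $\mathcal{S}(\tau)=1$. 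Since $f^1$ is concave, Part~\ref{thm:implicitMIMO:ch} of Theorem~\ref{thm:implicitMIMO} is the applicable one, and it runs in time $(\mathcal{S}(N)\,d\log\sigma)^{\Oh(\mathcal{S}(N)\,d)}\hat L^{\Oh(1)}$ with $\hat L$ polynomial in the (unary) input; because $\sigma\le\hat L$, the bound $(\log\alpha)^\beta\le 2^{\beta^2/2}+\alpha^{o(1)}$ of~\cite[Hint 3.18]{CyganEtAl2015} turns a $(\log\sigma)^{\Oh(\mathcal{S}(N)d)}$ factor into $2^{\Oh((\mathcal{S}(N)d)^2)}+\sigma^{o(1)}$, so the whole procedure is fixed‑parameter in $d$ \emph{provided} $\mathcal{S}(N)$ is bounded in terms of $d$. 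The naive choice $\mathcal{S}(N)=B$ (the number of bins) is not, so the crux of the proof is a reduction that removes $B$ from the exponent.

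\textbf{Bounding the number of bins.} First the easy cases: if some $s_i>b$ the instance is infeasible, and if $b\ge\sum_j s_j n_j$ then, since $f$ is monotone, concave and $f(0)=0$, hence subadditive, placing every item into one bin is optimal; so we may assume $b<\sum_j s_j n_j$ and in particular that every relevant configuration has $\|\vecc\|_\infty\le b$. Now add the empty configuration $\vezero$ (of cost $0$) to $\CC$, so that "at most $B$ bins" and "exactly $B$ bins" coincide, and consider the configuration LP $\min\{\sum_{\vecc}f(\|\vecc\|_1)y_\vecc:\sum_\vecc\vecc\,y_\vecc=\ven,\ \sum_\vecc y_\vecc=B,\ \vey\ge\vezero\}$; its objective is \emph{linear} in $\vey$ because $f(\|\vecc\|_1)$ is a constant per configuration. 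Solve it by column generation: the separation problem, given duals $(\bm\alpha,\beta)$, is to find $\vecc\in\CC$ minimising $f(\|\vecc\|_1)-\bm\alpha^{\intercal}\vecc-\beta$, which after guessing $\|\vecc\|_1$ among the polynomially many values in $\{0,\dots,\lfloor b/\min_i s_i\rfloor\}$ becomes a knapsack ILP in dimension $d$ with $\Oh(1)$ rows and coefficients at most $\sigma$, solvable in $\poly(\sigma,d,\la\veb,\bm\alpha\ra)$ time. A basic optimum $\vey^\ast$ has $|\suppo(\vey^\ast)|\le d+1$. The key step is a proximity statement in the spirit of Lemma~\ref{lem:frac} and Theorem~\ref{thm:proximity}: there is an \emph{integer} optimum obtained from $\lfloor\vey^\ast\rfloor$ by altering only $g(d)$ configuration‑slots (using that the per‑brick matrix $E^1_2$ of the $N$‑fold encoding has Graver elements of $\ell_\infty$‑norm $\Oh(\sigma^3)$). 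Granting this, one fixes the "bulk" multiplicities $\max\{0,\lfloor y^\ast_\vecc\rfloor-g(d)\}$ (binary integers), subtracts their contribution from $\ven$ and from $B$, and is left with a residual GCBP‑type instance using $\Oh(g(d))$ bins; Part~\ref{thm:implicitMIMO:ch} applied to this residual instance has $\mathcal{S}(N)=\Oh(g(d))$, which by the displayed $(\log\alpha)^\beta$ bound yields the claimed fixed‑parameter running time. If $B$ is to be minimised rather than given, wrap the whole argument in a binary search over the number of bins.

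\textbf{The main obstacle.} The hard part is exactly the proximity statement: Theorem~\ref{thm:proximity} is proved for separable convex objectives, whereas the per‑configuration cost here is concave. The route I would take is to exploit that, on the configuration level, the objective $\sum_\vecc f(\|\vecc\|_1)y_\vecc$ is linear (hence trivially separable convex), and to re‑run the "rise/fall" construction of Section~\ref{sec:huge} with this linear objective, so that the proximity bound is governed only by $d$, $s=\Oh(1)$, and $\|E\|_\infty$. The delicate point is that the resulting bound must not secretly depend on the bin capacity $b$ (equivalently on the magnitude of configuration entries): if $b$ cannot be bounded in terms of $d$ and $\sigma$, one must instead split into the dichotomy \enquote{capacity large $\Rightarrow$ few nearly‑full bins by subadditivity} versus \enquote{capacity small $\Rightarrow$ only $(b+1)^d$ distinct configurations}, and handle the second case by a configuration‑ILP with few rows and few variables solved via Part~\ref{thm:implicitMIMO:micp}. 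Getting this quantitative, together with the bookkeeping for the empty‑configuration padding (so that \enquote{minimise over all packings} and \enquote{use exactly $B$ bins} remain equivalent after the reduction), is where essentially all the work lies; everything after the bound $\mathcal{S}(N)\le g(d)$ is a direct appeal to Part~\ref{thm:implicitMIMO:ch}.
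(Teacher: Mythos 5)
Your modeling step coincides with the paper's: the one-type MIMO given by the knapsack polytope, the concave objective $f^1(\vex)=f(\sum_j x_j)$, and the parameters $\mathcal{S}(\Delta)=\sigma$, $\mathcal{S}(M)=1$, $\mathcal{S}(d)=d$, $\mathcal{S}(d^1)=0$, $\mathcal{S}(N)=B$, $\mathcal{S}(\tau)=1$, with Part~\ref{thm:implicitMIMO:ch} as the only applicable part. The paper's proof, however, stops there: Theorem~\ref{thm:bin_packing_gcs} is obtained by a direct appeal to Part~\ref{thm:implicitMIMO:ch} on this model (whose stated parameterization includes $N$), and no reduction of the number of bins is attempted. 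Your reading of Part~\ref{thm:implicitMIMO:ch}'s running time, and hence your worry that $\mathcal{S}(N)=B$ is not bounded in terms of $d$, is fair; but the repair you propose to remove $B$ from the exponent is where your argument has a genuine, unfilled gap.

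The crux of your plan --- an integer optimum within $g(d)$ configuration-slots of a rounded-down basic optimum of the configuration LP --- is exactly the statement you never prove, and it cannot be obtained by ``re-running'' the machinery of Section~\ref{sec:huge} with the configuration-level linear objective. The configuration LP is linear in $\vey$ in the paper's setting too (its cost vector is $f^i(\vecc)$ per column), so linearity at that level buys nothing; convexity of the brick objective is used elsewhere: Lemma~\ref{lem:fhat_convexity} needs convexity of $f^i$ to get $f^i\le\hat f^i$ (and for a concave $f(\|\vex\|_1)$ even the equality at integral bricks can fail, since an integral point of the knapsack polytope may have a strictly cheaper fractional decomposition into configurations), and Lemma~\ref{lem:cycle_inequality} rests on separable convexity via Proposition~\ref{prop:superadditivity} and the univariate exchange inequality, which reverses for concave costs; these are the engine of Theorem~\ref{thm:proximity}. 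Concave per-bin costs are precisely the regime where such rounding/local-exchange arguments break --- which is why the paper treats concave objectives by enumerating vertices of the integer hull (Part~\ref{thm:implicitMIMO:ch}), at a cost exponential in the dimension $N(d+D)$, rather than by the $N$-fold/proximity route. Your fallback dichotomy does not rescue the plan either: with unary sizes the capacity $b$ is only polynomially bounded, so ``at most $(b+1)^d$ configurations'' is an $n^{\Oh(d)}$ (XP) bound and the resulting configuration ILP has that many variables, not few, so Part~\ref{thm:implicitMIMO:micp} does not apply as you suggest. As written, your argument yields the theorem only under the extra assumption that the number of bins is bounded by a function of $d$; the missing proximity lemma for concave costs is the real content and is not supplied.
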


\subsection{Surfing}
\label{sec:surfing}
Here we discuss the \textsc{Surfing} problem, which demonstrates the utility of MIMO beyond scheduling and bin packing, and particularly the usefulness of part~\ref{thm:implicitMIMO:huge} of Theorem~\ref{thm:implicitMIMO} in handling many types of locations.

In the \textsc{Surfing} problem we have $d'$ \emph{commodities} and $d''$ \emph{servers} (``service providers''), with each server $j \in [d'']$ declaring a \emph{supply vector} $\ves_j \in \N^{d'}$ indicating how much of each commodity it is capable of supplying.
Moreover, $N$ is a large number of \emph{surfers} of $\tau$ types such that there are $\mu^i \in \N$ surfers of type $i$ and $N = \mu^1 + \cdots + \mu^\tau$, and for each type $i \in [\tau]$ there is a \emph{demand vector} $\vedelta^i \in \N^{d'}$ with respect to the commodities, a \emph{capacity vector} $\vegamma^i \in \N^{d''}$ with respect to the servers, and a cost vector $\vecc^i \in \N^{d' \cdot d''}$ with respect to commodity-server pairs.
For example, $\delta^i_j$ is the demand of a surfer of type $i \in [\tau]$ for commodity $j \in [d']$, $\gamma^i_j$ is a bound on the total amount of all commodities received by a surfer of type $i \in [\tau]$ from the server $j \in [d'']$, and $c^i_{j',j''}$ is the cost per one unit of commodity $j' \in [d']$ received from server $j'' \in [d'']$ by a surfer of type $i \in [\tau]$.
The task is to determine, for each surfer, how much commodity they should buy from each server so as to satisfy the surfer's demand for each commodity, stay within capacity bounds, and minimize total cost, while also staying within each server's supply.
We denote $\vedelta = (\vedelta^1, \dots, \vedelta^\tau)$, $\vegamma = (\vegamma^1, \dots, \vegamma^\tau)$, and $\vecc = (\vecc^1, \dots, \vecc^\tau)$.

We model \textsc{Surfing} as MIMO in the following way.
We let $d = d' \cdot d''$ and we let $\ven = (\ves_1, \dots, \ves_{d''})$.
Let $x_{jk}$ be a variable describing how much commodity $j$ a given surfer is buying from the server $k$.
The polytope $P^i$ describing assignments satisfying the surfer's demands and staying within bounds is given by the following constraints:
\begin{align*}
\sum_{k=1}^{d''} x_{jk} &= \delta^i_j & \forall j \in [d'] \\
\sum_{j=1}^{d'} x_{jk} &\leq \gamma^i_k & \forall k \in [d''] \\
\vex & \geq \vezero\,, &
\end{align*}
and the objective function of a surfer of type $i \in [\tau]$ is $f^i(\vex) = \sum_{j=1}^{d'} \sum_{k=1}^{d''} c_{jk}^i x_{jk}$.
It remains to deal with the fact that we do not have to use up all the available supply, so we introduce a ``slack'' surfer type $\tau+1$ with only non-negativity constraints and no objective.
\begin{lemma}
There is a MIMO model $\SSS$ for \textsc{Surfing} with parameters
\begin{tasks}(3)
\task $\SSS(\Delta) = 1$,
\task $\SSS(M) = d'+ d''$,
\task $\SSS(d) = d = d' \cdot d''$,
\task $\SSS(d^i) = 0$,
\task $\SSS(N) = N$,
\task $\SSS(\tau) = \tau+1$. \qedhere
\end{tasks}
\end{lemma}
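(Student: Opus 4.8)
The plan is to verify that the model $\SSS$ recalled just above correctly encodes \textsc{Surfing} and then to read off its parameters; no new machinery is needed beyond a careful treatment of the slack type $\tau+1$, which is exactly what converts the equality constraint of MIMO into the ``stay within each server's supply'' (i.e.\ ``$\le$'') constraint of \textsc{Surfing}. Since $d^i = 0$ for every type, each projection $\pi^i$ is the identity and $X^i = P^i \cap \Z^d$, so a feasible MIMO solution is precisely a choice, for each type $i \in [\tau+1]$, of a multiset of $\mu^i$ integer points of $P^i$ whose overall coordinate-wise sum equals $\ven$; recall also that the objective of type $\tau+1$ is identically $0$ and that of type $i \le \tau$ is the linear function $f^i(\vex) = \sum_{j,k} c^i_{jk} x_{jk}$, with $\ven = (\ves_1,\dots,\ves_{d''})$ so that $\ven_{(j,k)} = (\ves_k)_j$.

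First I would set up the correspondence. Given a feasible \textsc{Surfing} solution, assign to each surfer of type $i$ the vector $\vex \in \Z^d_{\ge 0}$ whose $(j,k)$-entry $x_{jk}$ is the amount of commodity $j$ that surfer buys from server $k$; membership $\vex \in P^i$ is exactly the demand constraints $\sum_k x_{jk} = \delta^i_j$ together with the capacity constraints $\sum_j x_{jk} \le \gamma^i_k$. Let $\lambda^i_{\vex}$ be the number of type-$i$ surfers assigned $\vex$, so $\sum_\vex \lambda^i_\vex = \mu^i$. The supply constraints say precisely that $\veq := \sum_{i=1}^{\tau} \sum_\vex \vex\,\lambda^i_\vex$ satisfies $\vezero \le \veq \le \ven$; hence the single slack surfer of type $\tau+1$ can be assigned the integer vector $\ven - \veq \ge \vezero$, which lies in $P^{\tau+1}$, making the total sum equal to $\ven$ while contributing $0$ to the objective. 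Conversely, from any feasible MIMO solution, deleting the type-$(\tau+1)$ part yields valid per-surfer assignments (demands and capacities hold by $\vex \in P^i$), and since the slack contribution is non-negative the real surfers' coordinate sums are $\le \ven$, i.e.\ within supply; the objective values agree on both sides by construction. One small point: $P^i$ for $i \le \tau$ is a genuine (bounded) polytope since $0 \le x_{jk} \le \gamma^i_k$, whereas for $P^{\tau+1}$ one records the redundant box $\vezero \le \vex \le \ven$ to make it bounded, which changes no coefficient.

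Finally I would read off the parameters. No auxiliary variables are introduced, so $\SSS(d) = d = d'\cdot d''$ and $\SSS(d^i) = 0$. For each $i \le \tau$ the matrix $A^i$ consists of the $d'$ demand rows and the $d''$ capacity rows (the non-negativity and the added upper bound being recorded in the box $\vel^i \le \vex \le \veu^i$), and $P^{\tau+1}$ contributes only box constraints, giving $\SSS(M) = d' + d''$ (writing each equality as two inequalities would only replace this by $2d'+d''$). Every entry of every $A^i$ lies in $\{-1,0,1\}$, so $\SSS(\Delta) = 1$; there are the $\tau$ surfer types plus the one slack type, so $\SSS(\tau) = \tau+1$; and $\SSS(N) = \sum_{i=1}^{\tau+1}\mu^i = N+1$, which we treat as $N$ since $N$ is large (the additive constant is immaterial for the intended application of Part~\ref{thm:implicitMIMO:huge} of Theorem~\ref{thm:implicitMIMO}). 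The only step needing any care is the bookkeeping around the slack type; everything else is a direct transcription, so I do not expect a genuine obstacle.
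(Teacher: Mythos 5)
Your proof is correct and takes essentially the same approach as the paper, which gives no explicit argument beyond the model description (the \texttt{\textbackslash qedhere} immediately after the parameter list). Your extra bookkeeping — noting that $P^{\tau+1}$ needs the box $\vezero \le \vex \le \ven$ to actually be a polytope, that equalities double to $2d'+d''$ inequality rows, and that the slack type makes $\|\vemu\|_1 = N+1$ — is slightly more careful than the stated parameter values, and you correctly observe that each of these constant-factor or additive-constant deviations is immaterial to the subsequent application of Part~\ref{thm:implicitMIMO:huge}.
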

Applying part~\ref{thm:implicitMIMO:huge} of Theorem~\ref{thm:implicitMIMO} then yields that
\begin{theorem} \label{thm:surfing}
\textsc{Surfing} can be solved in time $(d^2)^{\Oh(d^2)} \la N, \ven, \vedelta, \vegamma, \vecc \ra^{\Oh(1)}$, i.e., single-exponential in $d=d' \cdot d''$ and polynomial in the binary encoding of the rest of the input data. \qed
\end{theorem}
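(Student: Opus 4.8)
The plan is to apply part~\ref{thm:implicitMIMO:huge} of Theorem~\ref{thm:implicitMIMO} to the MIMO model for \textsc{Surfing} constructed in the preceding lemma; all of the real work has already been done in establishing that part (ultimately the proximity theorem for huge $N$-fold IP), so what remains is to check admissibility and to carry out the parameter bookkeeping. First I would verify that the model meets the hypotheses of part~\ref{thm:implicitMIMO:huge}: the objective of each surfer type $i$ is $f^i(\vex)=\sum_{j\in[d']}\sum_{k\in[d'']}c^i_{jk}x_{jk}$, which is linear (hence in particular extension-separable convex), and the slack type $\tau+1$ carries the zero objective, which is also linear; since $\vecc\in\N^{d'\cdot d''}$ and $\vegamma\in\N^{d''}$ all $f^i$ are integer-valued on integer points, as MIMO requires. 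Thus part~\ref{thm:implicitMIMO:huge} is applicable.

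Next I would substitute the model parameters recorded in the lemma, namely $\SSS(\Delta)=1$, $\SSS(d)=d=d'\cdot d''$, $\SSS(M)=d'+d''$, $\SSS(D)=0$, $\SSS(\tau)=\tau+1$, $\SSS(N)=N$, into the running time $(Md\Delta)^{\Oh(M^2d+d^2M)}\hat L^{\Oh(1)}$ of part~\ref{thm:implicitMIMO:huge}. Using $d',d''\ge 1$, hence $M=d'+d''\le d'd''+1=d+1=\Oh(d)$, together with $\Delta=1$, the base is a polynomial in $d$ and the exponent is a constant-degree polynomial in $d$; this yields a bound that is single-exponential in $d=d'\cdot d''$. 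For the factor $\hat L=L+\la\ven,f_{\max},N\ra$ with $L=\la\Delta,\veb^1,\dots,\veb^\tau\ra$, note that in the model the right-hand sides $\veb^i$ merely collect the entries of $\vedelta^i$ and $\vegamma^i$, so $L=\Oh(\la\vedelta,\vegamma\ra)$; and every feasible $\vex\in X^i$ satisfies $0\le x_{jk}\le\gamma^i_k$, so $f_{\max}=\max_i\max_{\vex\in X^i}|f^i(\vex)|\le d\cdot\|\vecc\|_\infty\cdot\|\vegamma\|_\infty$, whence $\la f_{\max}\ra=\Oh(\log d+\la\vecc,\vegamma\ra)$. Finally, as the engine of part~\ref{thm:implicitMIMO:huge} (Theorem~\ref{thm:hugenfold}) depends on the number of brick types only polynomially, the slack-augmented type count $\tau+1$ enters polynomially, and $N$ enters polynomially through $\la\vemu\ra$. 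Altogether the time bound is single-exponential in $d'\cdot d''$ and polynomial in $\la N,\ven,\vedelta,\vegamma,\vecc\ra$, as claimed.

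The one point that genuinely needs a careful argument — as opposed to routine substitution — is the \emph{correctness} of the model, i.e.\ that feasible MIMO decompositions of $\ven=(\ves_1,\dots,\ves_{d''})$ are in cost-preserving correspondence with valid \textsc{Surfing} assignments. Here the subtlety is that a decomposition forces the total purchased amount of each commodity $j$ from each server $k$, summed over all surfers, to equal the supply $s_{k,j}$ exactly, whereas \textsc{Surfing} only requires staying \emph{within} supply; this is precisely what the extra ``slack'' surfer type $\tau+1$ (with only non-negativity constraints, zero objective, and a single unit of multiplicity so that $\sum_{\vex\in X^{\tau+1}}\lambda^{\tau+1}_\vex=\mu^{\tau+1}$) resolves, by absorbing the unused supply at no cost. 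One must check that this added unit does not disturb the multiplicity constraints of the genuine types and that the bijection respects objective values — both immediate once the slack type is set up as above. I expect this correctness check, rather than any part of the complexity analysis, to be the main (though still light) obstacle.
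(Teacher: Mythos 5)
Your proposal takes essentially the same route as the paper, which in fact offers no explicit proof beyond ``applying part~\ref{thm:implicitMIMO:huge} of Theorem~\ref{thm:implicitMIMO}'' followed by \(\qed\); you correctly apply that part, substitute the parameters \(\SSS(\Delta)=1\), \(\SSS(M)=d'+d''\), \(\SSS(d)=d'd''\), \(\SSS(\tau)=\tau+1\), and verify that \(\hat L\) stays polynomial in the binary encoding, which is exactly what the paper's \(\qed\) elides. One small observation: you are careful to say only that the exponent is a ``constant-degree polynomial in \(d\)'' rather than pinning it to \(\Oh(d^2)\), which is prudent — substituting \(M=d'+d''\le d+1\) into \((Md\Delta)^{\Oh(M^2d+d^2M)}\) actually gives an exponent that can reach \(\Oh(d^3)\) in the degenerate case \(\min(d',d'')=1\), so the displayed \((d^2)^{\Oh(d^2)}\) in the theorem is a slight underestimate of the formula, though the headline claim (single-exponential in \(d\), polynomial in everything else) is unaffected.
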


\section{Research Directions} \label{sec:researchdirections}
Our work poses several research directions.
Due to Part~\ref{thm:implicitMIMO:gr} of Theorem~\ref{thm:implicitMIMO} and the work of Goemans and Rothvo{\ss}~\cite{GoemansRothvoss2014} we now have a fairly good understanding of the parameterized complexity of many high-multiplicity scheduling problems in the setting with few job types and unary job sizes.
However, a glaring large class where we do not know anything beyond small job sizes are scheduling problems with objectives from $\mathfrak{C}_{\text{poly}}$, the simplest ones being $P || \sum w_j C_j$ or $P || \ell_2$.
More generally, could Part~\ref{thm:implicitMIMO:gr} of Theorem~\ref{thm:implicitMIMO} be extended to the optimization of separable convex objectives?
On the negative side, our hardness results hint at (parameterized) intractability of scheduling with few job types but many machine types.
This leads us to ask whether $R || C_{\max}$ with many machine types is \NP-hard for a constant number of job types~$d$, or perhaps \Wh{1} parameterized by $d$?

\section*{Acknowledgments}
Part of this work was carried out during the workshop ``Scheduling Meets Fixed-Parameter Tractability'' held at the Lorenz Center in Leiden in February 2018.
We thank the Lorenz Center for providing a welcoming and stimulating environment.

\clearpage
\pagebreak
\appendix
\section{Appendix: Omitted Proofs} \label{sec:technical}

\subsection{Proofs for Section~\ref{sec:preliminaries}}
\begin{proof}[Proof of Lemma~\ref{lem:MIMO_as_nfold}]
	Let $E_1 = (I~\vezero) \in \Z^{d \times (d + D + M)}$ where $I$ is the $(d\times d)$-identity matrix and $\vezero$ is a $(d \times (D+M))$-all-zero matrix, let $E_1^i = E_1$ for each $i \in [\tau]$, and let $\veb^0 = \ven$.
	The last $M$ coordinates of each brick will play the role of slack variables in order to model inequalities in the system $A^i \vex \leq \veb^i$.
	For each $i \in [\tau]$, obtain $E_2^i$ from $A^i$ by adding $M-m^i$ zero rows and $D-d^i$ zero columns, and then appending from the right the $M \times M$ identity matrix, ensuring $E_2^i$ has $M$ rows and $d+D+M$ columns, and append $M-m^i$ zeroes to $\veb^i$.
	Formally extend the objective function $f^i$ to $d+D+M$ dimensions by making it ignore the last $M+D-d^i$ dimensions.
	For each $i \in [\tau]$, define $\vel^i = \{-\infty\}^{d+d^i} \times \{0\}^{M+D-d^i}$ and $\veu^i = \{+\infty\}^{d+d^i} \times \{0\}^{D-d^i} \times \{+\infty\}^{M}$.
	Let $\mu^i$ be the number of bricks of type $i$, for each $i \in [\tau]$.

	It is easy to check that, for each brick $j \in [N]$ of type $i \in [\tau]$ of the resulting huge $N$-fold formulation, $\vex^j$ restricted to the first $d+d^i$ coordinates can take on exactly the values of $P^i \cap \Z^{d+d^i}$.
	Moreover, the objective value of the brick is exactly the objective value of corresponding point of $P^i \cap \Z^{d+d^i}$ in the MIMO problem.
	Finally, by the definition of $E_1$, the sum of the restrictions of all bricks to the first $d$ coordinates is exactly $\ven$.
	This shows that we have reduced the MIMO instance to a huge $N$-fold IP instance, and the bounds are clearly as stated in the Lemma.
\end{proof}

\subsection{Proofs for Section~\ref{sec:part4}}
We build on the following theorem of Goemans and Rothvo{\ss}~\cite{GoemansRothvoss2014}:

\begin{proposition}[{Structure Theorem~\cite{GoemansRothvoss2014}}]
\label{prop:structurethm}
  Let $P = \{\vex \mid A\vex \leq \veb\} \subseteq \R^d$ be a polytope with $A \in \Z^{m \times d}$ and $\veb \in \Z^m$ such that all coefficients are bounded by $\Delta$ in absolute value.
  Then there exists a set $Y \subseteq P \cap \Z^d$ of size $|Y| \leq S := m^d d^{\Oh(d)} (\log \Delta)^d$ that can be computed in time $S^{\Oh(1)}$ with the following property.
  For every vector $\ven = \sum_{\vex \in P \cap \Z^d} \lambda_\vex \vex$ with $\velambda \in \N^{P \cap \Z^d}$, there exists a vector $\hat{\velambda} \in \N^{P \cap \Z^d}$ such that $\ven = \sum_{\vex \in P \cap \Z^d} \hat{\lambda}_\vex \vex$, and,
  \begin{tasks}(3)
    \task $\hat{\lambda}_{\vex} \in \{0,1\} \, \forall \vex \not\in Y$,
    \task $|\suppo(\hat{\velambda}) \cap Y| \leq 2^{2d}$,
    \task $|\suppo(\hat{\velambda}) \setminus Y| \leq 2^{2d}$.
  \end{tasks}
\end{proposition}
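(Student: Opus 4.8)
The plan is to follow the geometric approach of Goemans and Rothvo{\ss}: cover the integer points of $P$ by a \emph{small} family of parallelepipeds that have integral vertices and are \emph{unimodular} (their edge vectors form a $\Z$-basis of $\Z^d$), and take $Y$ to be the set of all vertices of all these parallelepipeds. First I would pass to the integer hull $P_I := \conv(P \cap \Z^d)$: since $Y \subseteq P \cap \Z^d$ and every integer point of $P$ lies in $P_I$, nothing is lost, and by the classical bound of Cook, Hartmann, Kannan and McDiarmid the number of vertices of $P_I$ is at most $m^d d^{\Oh(d)}(\log\Delta)^{d-1}$ (the same holds with $P$ replaced by any polyhedron obtained from it by adding finitely many halfspaces of the same bit-complexity, which matters for the peeling below).

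The heart of the proof is the covering lemma. I would build the parallelepipeds by ``peeling'' $P_I$ in $\Oh(d\log\Delta)$ rounds: in each round, at every vertex $v$ of the current residual polyhedron one triangulates the tangent cone at $v$ into $2^{\Oh(d)}$ simplicial cones and, in each simplicial cone, inserts one \emph{maximal} unimodular parallelepiped anchored at $v$ that still fits inside the residual polyhedron; then one recurses on what remains after removing these parallelepipeds. Because $P_I$ has diameter at most $\Delta^{\Oh(d)}$, after $\Oh(d\log\Delta)$ rounds the residual polyhedron is empty, and each round contributes at most (number of vertices of the residual polyhedron) $\times\, 2^{\Oh(d)}$ parallelepipeds; together with the vertex bound this yields a family of at most $q \le m^d d^{\Oh(d)}(\log\Delta)^{d}$ cells, computable in time $S^{\Oh(1)}$. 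Setting $Y$ to be the $\le 2^d q$ vertices of these cells gives $|Y| \le S$.

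Given the covering, I would argue as follows. Assign each $\vex \in P \cap \Z^d$ to a single cell $\Pi_{j(\vex)} \ni \vex$ and write $\vex$ as a convex combination of the integral vertices of $\Pi_{j(\vex)}$. For any $\ven = \sum_\vex \lambda_\vex \vex$ with $\velambda \in \N^{P\cap\Z^d}$, substitution gives $\ven = \sum_{c \in Y}\nu_c c$ with $\nu \ge \vezero$, and, grouping by cell, $\ven = \sum_j \ven_j$ where $\ven_j$ is a non-negative combination of the $2^d$ corners of $\Pi_j$ with total weight $\Lambda_j := \sum_{\vex:j(\vex)=j}\lambda_\vex \in \Z$. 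Writing $\ven/\Lambda$ ($\Lambda := \sum_j \Lambda_j$) as a convex combination of the points $\ven_j/\Lambda_j$ and applying Carath\'{e}odory reduces the number of ``active'' cells to at most $d+1$. Inside each active cell I would round the corner coefficients down, which leaves an integral remainder lying in an integer dilate (by a factor $\le 2^d$) of that single unimodular parallelepiped; unimodularity then lets each such remainder be written as a sum of at most $2^d$ distinct integral points of $\Pi_j \subseteq P$, each with coefficient $1$. Collecting everything, the rounded-down part is supported on the $\le (d+1)2^d$ corners of the active cells, hence $|\suppo(\hat\velambda)\cap Y| \le 2^{2d}$, and the remainders contribute at most $(d+1)2^d \le 2^{2d}$ further integral points off $Y$, each with coefficient in $\{0,1\}$, as required.

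By far the delicate steps are in the covering lemma: making the number of parallelepipeds depend on $\Delta$ only through $(\log\Delta)^d$ (rather than $\mathrm{poly}(\Delta)$) is what forces the dyadic peeling and the use of the Cook et al.\ vertex count, and insisting on \emph{unimodular} parallelepipeds together with a clean assignment $\vex \mapsto j(\vex)$ is exactly what is needed so that, after the Carath\'{e}odory reduction, each rounding remainder stays inside a single cell and decomposes exactly into integral points used once; arranging the parallelepipeds so that this integrality bookkeeping goes through cleanly is the main obstacle. Once the covering lemma is in hand, the remaining steps (corner re-expression, Carath\'{e}odory, rounding, unimodular decomposition of the remainder) are routine.
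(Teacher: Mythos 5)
The paper does not prove Proposition~\ref{prop:structurethm}; it is quoted directly from Goemans and Rothvo{\ss}~\cite{GoemansRothvoss2014} and invoked as a black box, so the comparison below is against the original argument. Your high-level plan---cover $P\cap\Z^d$ by a small family of parallelepipeds with integral corners, let $Y$ be the set of those corners, write each $\vex$ as a convex combination of the corners of its cell, aggregate, apply Carath\'eodory, and round---is indeed the skeleton of their argument. But insisting on \emph{unimodular} parallelepipeds (edge vectors forming a $\Z$-basis of $\Z^d$) is fatal to the cardinality bound. If $b_1,\dots,b_d$ is a $\Z$-basis, then the closed cell $v+\{\sum_i\mu_i b_i:\mu_i\in[0,1]\}$ contains \emph{exactly} the $2^d$ corner lattice points and no other integer points; a unimodular parallelepiped therefore covers at most $2^d$ points of $P\cap\Z^d$. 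Since $P$ can have $\Delta^{\Theta(d)}$ integer points (e.g.\ the box $[0,\Delta]^d$, with $m=2d$), your covering would require on the order of $\Delta^{\Theta(d)}$ cells, which is exponentially larger than the required $m^d d^{\Oh(d)}(\log\Delta)^d$. The content of the covering lemma is precisely that the parallelepipeds may contain exponentially many interior lattice points, i.e., be very far from unimodular; unimodularity and the stated bound on $|Y|$ cannot hold simultaneously. This also undercuts the part of your argument that relied on unimodularity to make ``each rounding remainder stay inside a single cell and decompose exactly into integral points used once''---the real delicacy in Goemans--Rothvo{\ss} lies exactly in carrying out the Carath\'eodory step and the rounding so that only $2^{2d}$ extra off-$Y$ lattice points are needed, each with coefficient~$1$, even though the per-cell remainders are generally not corners of the cell.

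A secondary gap: removing parallelepipeds from $P_I$ does not leave a polyhedron, so ``the vertices of the current residual polyhedron'' and the claimed $\Oh(d\log\Delta)$-round termination are not well-posed as stated. Goemans--Rothvo{\ss} obtain the $(\log\Delta)^d$ dependence not by iteratively peeling a shrinking region but by installing a dyadic ladder of (non-unimodular) cells inside each simplicial cone at each vertex of the integer hull, so that each dyadic distance scale from the vertex is handled by one generation of cells; this sidesteps the need to re-enumerate vertices of an ill-defined residual.
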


We start by showing a natural extension of the Structure Theorem (Proposition~\ref{prop:structurethm}) to the multitype setting.

\begin{lemma}[Multitype Structure Lemma]
\label{lem:structure}
  Let $P^1, \dots, P^\tau$ be a $P$-representation of $X^1, \dots, X^\tau$.
  Then, for each $i \in [\tau]$, there exists a set $Y^i \subseteq P^i \cap \Z^{d+d^i}$ of size $|Y^i| \leq S^i := (m^i)^{(d+d^i)} (d+d^i)^{\Oh(d+d^i)} (\log \Delta)^{d+d^i}$ that can be computed in time $(S^i)^{\Oh(1)}$ with the following property.
  For every vector $\ven = \sum_{i=1}^\tau \sum_{(\vex, \vex') \in P^i \cap \Z^{d+d^i}} \lambda^i_{(\vex, \vex')} \vex$ with non-negative integral $\velambda$, there exists a non-negative integral vector $\hat{\velambda} $ such that $\ven = \sum_{i=1}^\tau \sum_{(\vex, \vex') \in P^i \cap \Z^{d+d^i}} \hat{\lambda}^i_{(\vex, \vex')} \vex$, and, for each $i \in [\tau]$,
  \begin{tasks}(2)
    \task $\hat{\lambda}^i_{(\vex, \vex')} \in \{0,1\} \, \forall (\vex, \vex') \not\in Y^i$,
    \task $|\suppo(\hat{\velambda}^i) \cap Y^i| \leq 2^{2(d+d^i)}$,
    \task $|\suppo(\hat{\velambda}^i) \setminus Y^i| \leq 2^{2(d+d^i)}$,
    \task \label{lem:structure:1norm} $\|\hat{\velambda}^i\|_1 = \|\velambda^i\|_1$.
  \end{tasks}
\end{lemma}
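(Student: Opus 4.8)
The plan is to reduce the multitype statement to a single application of the original (single-type) Structure Theorem (Proposition~\ref{prop:structurethm}). The key observation is that a decomposition of $\ven$ into configurations of several types, together with the multiplicity bookkeeping, can be encoded as a decomposition of a single higher-dimensional vector into integer points of a single polytope $P$ obtained as a suitable "disjoint union" of the polytopes $P^1,\dots,P^\tau$. The extra coordinates will serve two purposes: first, to record \emph{which} type a configuration belongs to (so that integer points of $P$ are in bijection with $\bigsqcup_i (P^i\cap\Z^{d+d^i})$), and second, to carry the multiplicity constraint $\|\hat\velambda^i\|_1 = \|\velambda^i\|_1$ as part of the target vector rather than as an external constraint.

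Concretely, first I would embed each $P^i\subseteq\R^{d+d^i}$ into a common space $\R^{d+D+\tau}$ (recall $D=\max_i d^i$): pad the auxiliary coordinates with zeros up to dimension $d+D$, and append $\tau$ new "indicator" coordinates, setting the $i$-th indicator to $1$ and all others to $0$ for points coming from $P^i$. Call the resulting polytope $Q^i$; it is a bounded polyhedron described by the inequalities of $P^i$ (with zero columns added), the equalities fixing the padding and the indicator block, and it has the property that $Q^i\cap\Z^{d+D+\tau}$ is exactly the image of $P^i\cap\Z^{d+d^i}$. Let $P=\conv\big(\bigcup_{i=1}^\tau Q^i\big)$; since each $Q^i$ is a $0/1$-indicator slice, the integer points of $P$ are precisely $\bigcup_i (Q^i\cap\Z^{d+D+\tau})$, i.e.\ no "mixed" integer points are created. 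The coefficients of a (linear) inequality description of $P$ are bounded by $\poly(\Delta, 1)$ — here one must be a little careful, as a naive disjunctive formulation à la Balas could blow up $\Delta$; instead I would exploit that the $Q^i$ live in disjoint indicator slices, so $P$ is covered by its faces $Q^i$ and one can argue about $P\cap\Z^{d+D+\tau}$ directly without needing a small-coefficient description of all of $P$. Then I would augment the target: take $\tilde\ven=(\ven, \|\velambda^1\|_1,\dots,\|\velambda^\tau\|_1)\in\Z^{d+\tau}$, where the last $\tau$ coordinates are read off from the given decomposition, and observe that $\tilde\ven = \sum_{\vew\in P\cap\Z^{d+D+\tau}} \lambda_\vew\,\pi(\vew)$ where $\pi$ projects onto the first $d$ coordinates \emph{and} the $\tau$ indicator coordinates; the indicator coordinates automatically sum to $\|\velambda^i\|_1$ for type $i$. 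Applying Proposition~\ref{prop:structurethm} to $P$ (with the appropriate dimension $d+D+\tau$, number of rows, and $\log\Delta$) yields a single set $Y\subseteq P\cap\Z^{d+D+\tau}$ and a sparse $\hat\velambda$ reproducing $\tilde\ven$. Splitting $Y$ and $\hat\velambda$ according to the indicator slices gives $Y^i$ and $\hat\velambda^i$; properties (a)–(c) follow directly from the single-type statement restricted to each slice, and property~\ref{lem:structure:1norm} follows because the $i$-th indicator coordinate of $\tilde\ven$ is preserved by $\hat\velambda$, which forces $\sum_{\vex} \hat\lambda^i_{(\vex,\vex')} = \|\velambda^i\|_1$.

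For the size and running time bounds I would either push the $\tau$ extra coordinates through Proposition~\ref{prop:structurethm} (giving bounds in terms of $d+D+\tau$ and then noting $2^{2(d+D+\tau)}$ splits into a per-type factor $2^{2(d+d^i)}$ times something depending on $\tau$) or, cleaner, apply the Structure Theorem to each $Q^i$ separately: since $Q^i$ is affinely isomorphic to $P^i$ up to the fixed indicator coordinates, one gets $Y^i$ of size $S^i = (m^i)^{d+d^i}(d+d^i)^{\Oh(d+d^i)}(\log\Delta)^{d+d^i}$ and the $2^{2(d+d^i)}$ bounds on the type-$i$ support exactly as claimed, with the global decomposition obtained by concatenation. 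The only thing the per-type approach does \emph{not} give for free is the coupling across types needed to keep $\ven$ consistent; but this is handled by noticing that fixing the multiplicities $\|\velambda^i\|_1$ and the partial sums $\sum_{\vex\in X^i}\vex\,\lambda^i_\vex$ reduces the problem to $\tau$ \emph{independent} single-type instances, each with its own fixed target — and the Structure Theorem applied to each independently yields the claim. I would phrase the proof via this second, decoupled route, as it keeps the bounds clean and transparent.

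The main obstacle is the interaction between the \emph{coefficient size} $\Delta$ and the disjunctive/union construction: this is exactly the subtlety flagged in the Remark after the proof idea for Part~\ref{thm:implicitMIMO:gr}, where Goemans and Rothvo{\ss}'s disjunctive formulation increases $\Delta$, and the authors explicitly circumvent it by using the Structure Theorem directly on each $P^i$. Accordingly, the delicate point in the write-up is to invoke Proposition~\ref{prop:structurethm} \emph{per type} (so $\Delta$ is never inflated) and to separately verify that the multiplicity constraint $\|\hat\velambda^i\|_1 = \|\velambda^i\|_1$ and the aggregate target $\ven$ can both be maintained — the former because we may simply \emph{prescribe} $\|\velambda^i\|_1$ as the right-hand side of an extra equality inside each $P^i$'s decomposition instance (adding a single row, harmless to the bounds, though one should double-check whether this is even needed or whether it already follows from the target), and the latter because the $\tau$ per-type targets $\ven^i := \sum_{\vex\in X^i}\vex\,\lambda^i_\vex$ of the \emph{given} decomposition sum to $\ven$, so reassembling the $\tau$ sparse decompositions yields a sparse decomposition of $\ven$ with all four properties.
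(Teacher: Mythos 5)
Your final, decoupled route is exactly the paper's proof: split $\ven$ into per-type targets $\ven^i$ (together with the multiplicities $\mu^i = \|\velambda^i\|_1$ from the given decomposition), apply the single-type Structure Theorem to each $P^i$ separately so that $\Delta$ is never inflated, and concatenate the resulting per-type decompositions. You are also right to abandon the single-polytope union/disjunctive formulation for precisely the reason you flag.

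The one point you leave open needs to be closed, though: the constraint $\|\hat\velambda^i\|_1 = \|\velambda^i\|_1$ does \emph{not} already follow from matching the target $\ven^i$, since Proposition~\ref{prop:structurethm} gives no control over $\|\hat\velambda\|_1$. So the extra bookkeeping you tentatively suggest is genuinely necessary. The paper's version of your ``extra row'' idea is to replace each $P^i$ by $\{(1,\vex,\vex') : (\vex,\vex') \in P^i\}$, i.e.\ prepend a coordinate that is identically $1$; then the per-type target becomes $(\mu^i,\ven^i)$, and because every configuration contributes exactly $1$ to the first coordinate, any decomposition of $(\mu^i,\ven^i)$ automatically uses exactly $\mu^i$ configurations. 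This is harmless to the bounds (one extra dimension, one extra equality row). With that fix stated, your proof is complete and is the paper's proof.
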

\begin{proof}
  First, extend each $P^i$ with a coordinate which is always $1$, i.e., replace $P^i$ with $\{(1, \vex, \vex') \mid (\vex, \vex') \in P^i\}$.
  This only increases the dimension by $1$ and requires an additional equality constraint.
  Then, apply the Structure Theorem to each $P^i$ individually.
  Let $\mu^i = \|\velambda^i\|_1$ and $N = \|\velambda\|_1$.
  Now observe that if $(N,\ven) = \sum_{i=1}^\tau \sum_{(1, \vex, \vex') \in P^i \cap \Z^{d+d^i}} \lambda^i_{(1, \vex, \vex')} (1,\vex)$, then there is a decomposition of $(N, \ven)$ into $\tau$ summands $(\mu^i, \ven^i) = \sum_{(1, \vex, \vex') \in P^i \cap \Z^{d+d^i}} \lambda^i_{(1, \vex, \vex')} (1,\vex)$ to which the statement of the Structure Theorem applies directly, and we obtain all points except for point~\ref{lem:structure:1norm}.
  To argue this last point, observe that both decompositions $\velambda^i$ and $\hat{\velambda}^i$ decompose $\ven^i$ into $\mu^i$ points of $P^i \cap \Z^{d+d^i}$ and the claim holds.
  Thus, for each $i \in [\tau]$, $Y^i$ is obtained by applying Proposition~\ref{prop:structurethm} to the extended polytope $P^i$ and then projecting out the first coordinate of each element of the computed set.
\end{proof}

\begin{proof}[Proof of Theorem~\ref{thm:implicitMIMO}, part~\ref{thm:implicitMIMO:gr}]
  First, compute the sets $Y^1, \dots, Y^\tau$ from Lemma~\ref{lem:structure}.
  Our goal now is to set up an ILP in small dimension which will correspond to an optimal solution $\velambda$ with the properties of Lemma~\ref{lem:structure} (i.e., most of its support lies in $\bigcup_i Y^i$).
  Fix such an optimal $\velambda$.
  For each $i \in [\tau]$, guess a subset $Z^i \subseteq Y^i$ which satisfies $|Z^i| \leq 2^{2(d+d^i)}$ and $|\suppo(\velambda^i) \setminus Z^i| \leq 2^{2(d+d^i)}$.
  Also guess the number $\bar{\mu}^i = |\suppo(\velambda^i \setminus Z^i)|$.
  Note that there are $\prod_{i=1}^\tau (S^i)^{\Oh(2^{2(d+D)})}$ choices.
  For each guess, we construct an IP with few variables and a linear objective function, solve it, and then pick the best obtained value across all guesses and transform it into a solution of MIMO.

  Introduce a variable $\lambda^i_{(\vez, \vez')}$ for each $i \in [\tau]$ and each $(\vez, \vez') \in Z^i$.
  Additionally, for each $i \in [\tau]$, introduce $\bar{\mu}^i$ vectors of variables $(\vex, \vex')_j^i$.
  Then, consider the following constraints:
  \begin{align}
    A^i (\vex, \vex')_j^i &\leq \veb^i & \forall i \in [\tau], \, j \in [\bar{\mu}^i] \label{eq:mimo:gr:polytope}\\
    \sum_{i=1}^{\tau} \left[ \sum_{(\vez, \vez') \in Z^i} \lambda^i_{(\vez, \vez')} \vez  + \sum_{j=1}^{\bar{\mu}^i} \vex_j^i \right] & = \ven \label{eq:mimo:gr:decomp}\\
    \sum_{(\vez, \vez') \in Z^i} \lambda^i_{(\vez, \vez')} &= \mu^i - \bar{\mu}^i & \forall i \in [\tau] \text{ with } \vezero \not\in X^i \label{eq:mimo:gr:nozero}\\
    \sum_{(\vez, \vez') \in Z^i} \lambda^i_{(\vez, \vez')} &\leq \mu^i - \bar{\mu}^i & \forall i \in [\tau] \text{ with } \vezero \in X^i \label{eq:mimo:gr:zero} \\
    \lambda_{(\vez, \vez')}^i & \in \N & \forall i \in [\tau], \, \forall (\vez, \vez') \in Z^i \label{eq:mimo:gr:confs}\\
    (\vex, \vex')_j^i & \in \Z^{d+d^i} & \forall i \in [\tau], \, \forall j \in [\bar{\mu}^i], \label{eq:mimo:gr:polytope2}
  \end{align}
  and, depending on the objective of the MIMO instance, solve with one of the objectives
  \begin{align*}
    \text{linear}(\velambda, \vex, \vex') &= \sum_{i=1}^{\tau} \left[ \sum_{(\vez, \vez') \in Z^i} \lambda^i_{(\vez, \vez')} (\vew^i \vez)  + \sum_{j=1}^{\bar{\mu}^i} \vew^i \vex_j^i \right], \, \text{ or}, \\
    \text{fixed-charge}(\velambda, \vex, \vex') &= \sum_{i=1}^{\tau} \left[ \sum_{(\vez, \vez') \in Z^i} \lambda^i_{(\vez, \vez')} c^i  + \bar{\mu}^i c^i \right] \enspace .
\end{align*}
The constraints~\eqref{eq:mimo:gr:polytope} and~\eqref{eq:mimo:gr:polytope2} ensure that the variable vectors $(\vex, \vex')_j^i$ assume values from $P^i \cap \Z^{d+d^i}$, for each $i \in [\tau]$, enforcing the meaning that these variables represent the part of solution $\hat{\velambda}$ whose support does not lie in $Y^i$ (cf. Lemma~\ref{lem:structure}).
The constraint~\eqref{eq:mimo:gr:decomp} ensures that the solution indeed corresponds to a decomposition of $\ven$ into points from $\bigcup (P^i \cap \Z^{d+d^i})$.
The constraints~\eqref{eq:mimo:gr:nozero}--\eqref{eq:mimo:gr:zero} ensure that the number of non-zero configurations of type $i$ is at most $\mu^i$.

Let $(\velambda, \vex, \vex')$ be an optimum of the model above, computed using an algorithm for ILP in small dimension (cf.~\cite{FrankTardos1987,Kannan1987}).
We construct a solution $\velambda^*$ of MIMO as follows.
For each $i \in [\tau]$ and each $\vez \in \Z^d$ such that $\vez \in \pi^i(P^i)$, let
$$\lambda^*(i, \vez) = \left(\sum_{\substack{\vez' \in \Z^{d^i}:\\ (\vez, \vez') \in P^i}} \lambda(i, \vez, \vez') + \sum_{\substack{j \in [\bar{\mu}^i],\\ (\vex, \vex')_j^i = (\vez, \vez')}} 1 \right),$$
and let $\lambda^*(i, \mathbf{0}) = \mu^i - \sum_{\vez \in (\pi^i(P^i) \setminus \mathbf{0}) \cap \Z^d} \lambda^*(i, \vez)$.
We argue that $\velambda^*$ is an optimal solution.

First, consider a linear objective.
Observe that any solution of MIMO induces a decomposition $\ven = \sum_{i=1}^\tau \ven^i$, and that this decomposition fully determines the objective function, which becomes $\sum_{i=1}^\tau \vew^i \ven^i$.
Furthermore, part~\ref{lem:structure:1norm} of Lemma~\ref{lem:structure} guarantees that we can (almost) restrict our attention to the special sets $Y^i$ without ruling out any decomposition of $\ven$ into $\ven^i$.
Second, considering a fixed-charge objective, observe that the constraint~\eqref{eq:mimo:gr:zero} and our separate handling of $\lambda^*(i, \mathbf{0})$ encodes this objective appropriately.

Regarding runtime, the number of times we solve the ILP constructed above is equal to the number of guesses of the sets $Z^i$ and the numbers $\bar{\mu}^i$, which is
\begin{align*}
    \prod_{i=1}^\tau (S^i)^{\Oh(2^{2(d+D)})} & \leq \left((M)^{(d+d^{\max})} (d+D)^{\Oh(d+D)} (\log \Delta)^{(d+D)}) \right)^{\tau 2^{\Oh(d+D)}} \\
                                             & \leq \left( (M + d + D + \log \Delta)^{d+d_{\max}} \right)^{\tau 2^{\Oh(d+D)}} \\
                                             & \leq \left(M + d + D + \log \Delta\right)^{\tau^{(d+D)^{\Oh(1)}}} \enspace .
\end{align*}
The ILP we have constructed has dimension $$p=\sum_{i=1}^{\tau} (|Z^i| + \bar{\mu}^i(d+d^i)) \leq \tau 2^{2(d+D} + 2^{2(d+D}(d+D) \leq (\tau + d + D) 2^{2(d+D)},$$
and can be solved in time $p^{\Oh(p)} \hat{L}^{\Oh(1)}$ by Kannan's algorithm~\cite{Kannan1987} (recall that $\hat{L} = \la N, \ven, \veb^1, \dots, \veb^\tau, \Delta, f_{\max} \ra$).
so the total runtime is upper bounded by $(d D M \log \Delta)^{\tau^{(d+D)^{\Oh(1)}}} \hat{L}^{\Oh(1)}$.
\end{proof}

\subsection{Hardness}
\begin{repproposition}{prop:hardness}
   Solving MIMO systems
  \begin{enumerate}
    \item \label{prop:hardness:wh1} is \Wh{1} parameterized by $d$ only, even if $\|X\|_\infty$ is given in unary;
    \item \label{prop:hardness:fixedcharge} with a fixed-charge objective is \NPh even with $d=1$ and $\|X\|_\infty = 1$ (but with large penalties);
    \item \label{prop:hardness:concave} with a separable concave quadratic objective is \textbf{a)} \NPh even with $d=2$ and $\|X\|_\infty = 1$, and \textbf{b)} \Wh{1} parameterized by $d$ even when the largest coefficient of the objective is given in unary and $\|X\|_\infty = 1$.
  \end{enumerate}
\end{repproposition}
\begin{proof}
  \noindent \textbf{Part~\ref{prop:hardness:wh1}.}
  Consider the \textsc{Unary Bin Packing} problem, in which we have $n$ items of sizes $o_1, \dots, o_n \in \N$ with $\max_{i \in [n]} o_i = O \leq \poly(n)$, a capacity $B \in \N$, and an integer $k \in \N$, and we ask whether the items can be packed into $k$ bins of capacity $B$.
  Jansen et al.~\cite{JansenEtAl2013} have shown that \textsc{Unary Bin Packing} is \Wh{1} parameterized by $k$, even for \emph{tight} instances where $\sum_{i=1}^n o_i = kB$.

  We shall construct a MIMO instance with $n$ types.
  We let $P^i$, for each $i \in [n]$, be defined by the system
  \begin{align}
    \sum_{j=1}^k x_j & = o_i,            & \label{eq:hardness:sum} \\
    \sum_{j=1}^k y_j & = 1,              & \notag\\
                 x_j & \leq O y_j,       & \forall j \in [k], \label{eq:hardness:bigm} \\
                \vex & \geq \mathbf{0},  & \label{eq:hardness:nonneg}
  \end{align}
  and we let $d=k$, $d^i = k$, and $\mu^i = 1$, for each $i \in [n]$, and define the $\vey$ variables to be the one which are discarded by the projection $\pi^i$.
  Finally, we let $\ven$ be a $k$-dimensional vector of all $B$.

  It is easy to see that $\pi^i(P^i) \cap \Z^k = \{(o_i, 0, \dots, 0), (0, o^i,0, \dots, 0), \dots, (0, \dots, 0, o_i)\}$ and each element encodes the bin into which item $i$ is assigned.
  Thus, the MIMO instance is feasible if and only if there exists an assignment of items to bins such that the sum of item sizes of each bin is $B$.

  \medskip

  \noindent \textbf{Part~\ref{prop:hardness:concave}b).}
  We will continue working with the MIMO instance constructed above.
  Recall that minimizing a concave function is equivalent with maximizing a convex function, which is the perspective we shall take here.
  Our goal now is to model the constraint~\eqref{eq:hardness:bigm} which involves the big coefficient $O$ by the objective.
  Let each $P^i$ now be only defined by the constraints~\eqref{eq:hardness:sum} and~\eqref{eq:hardness:nonneg}, so $d=k$ and $d^i = 0$ for each $i \in [n]$.
  Let $f^i(\vex) = \sum_{j=1}^k f_j^i(x_j)$ with $f^i_j(x_j) = (x_j - \frac{o_i}{2})^2$.
  This means each $f^i_j$ is maximized exactly at the endpoints of the feasible interval $[0,o_i]$.
  Thus, a solution with value $\sum_{i=1}^n (\frac{o_i}{2})^2$ must have each $x_j \in \{0,o_i\}$ and thus corresponds to a bin packing; it is easy to check that no better value is attainable, which finishes the proof.

  \medskip

  \noindent \textbf{Part~\ref{prop:hardness:fixedcharge}.}
  In the \textsc{Partition} problem we are again given $n$ numbers $o_1, \dots, o_n$, however now their size can be large.
  The task is to decide whether there is a subset $I \subseteq [n]$ of indices such that $\sum_{i \in I} o_i = \sum_{i \not\in I} o_i$.
  We again construct a MIMO instance with $n$ types.
  This time, each $P^i$ is simply a segment defined by $0 \leq x \leq o_i$, and let $\mu^i = 1$ for each $i \in [n]$.
  Let $a = \frac{1}{2} \sum_{i=1}^n o_i$.
  Then, for each $i \in [n]$, let $f^i(x) = o_i$ if $x \neq 0$.
  We claim that there is a solution of value $a$ if and only if the \textsc{Partition} instance was a ``yes'' instance.

  In one direction, let $I$ be a solution of the \textsc{Partition} instance.
  Then setting $\lambda(i, o_i)=1$ if $i \in I$ and $\lambda(i, 0)=1$ otherwise clearly defines a decomposition of $a$ into elements $o_i$ and has objective value $a$.
  In the other direction, assume for contradiction that there was a solution of value $a$ but the instance was a ``no'' instance.
  Let $I$ be the set of indices of the types $i \in [n]$ for which $\lambda(i,0) = 0$.
  By our definition of the $f^i$'s it must hold that $\sum_{i \in I} o_i = a$, which means that $I$ certifies the instance was a ``yes'' instance, a contradiction.

  \medskip

  \noindent \textbf{Part~\ref{prop:hardness:concave}a).}
  We continue with the \textsc{Partition} problem.
  This time we let $P^i$ be defined by $x_1 + x_2 = o_i$ and $x_1, x_2 \geq 0$, and let $\mu^i = 1$ for each $i$.
  Again, let $a = \frac{1}{2} \sum_{i=1}^n o_i$, and let $\ven=(a,a)$.
  Our goal now is to use a separable concave quadratic objective to enforce that either $x_1 = o_i$ or $x_2 = o_i$.
  This is done similarly as in the case of \textsc{Unary Bin Packing}: we let $f^i_1(x_1) = (x_1 - \frac{o_i}{2})^2$ and similarly $f^i_1(x_2) = (x_2 - \frac{o_i}{2})^2$.
  It is easy to check that the only way to get $2 (\frac{o_i}{2})^2$ is if either $x_1$ or $x_2$ is $o_i$.
  Thus, it is enough to check whether a solution exists with value $2 \sum_{i=1}^n (\frac{o_i}{2})^2$.
  By the arguments above, this is the case if and only if the \textsc{Partition} instances was a ``yes'' instance.
\end{proof}

Recall Lemma~\ref{lem:MIMO_as_nfold}, which states that MIMO reduces to huge $N$-fold IP.
So far we have used it to obtain positive results by encoding MIMO as $N$-fold IP and then applying various \FPT algorithms to the obtained IPs.
Now it will be useful to obtain hardness of $N$-fold IP from Proposition~\ref{prop:hardness}.
Specifically, applying Lemma~\ref{lem:MIMO_as_nfold} to the hardness instances of the previous proposition gives the existence of hard $N$-fold instances with parameters $r=M$, $s=d$, $t=d+D+M$, $\|E\|_\infty = \Delta$, and $N = \|\vemu\|_1\|_1$, which implies that
\begin{corollary}
  Solving $N$-fold IPs
  \begin{enumerate}
    \item is \Wh{1} parameterized by $r,s$, and $t$ when $\|E\|_\infty$ is unary.
    \item is \NPh with a fixed-charge objective even with $r=s=t=1$ and $\|X\|_\infty = 1$ (but with large penalties).
    \item with a separable concave quadratic objective is \textbf{a)} \NPh even with $r=t=2$ and $s=\|E\|_\infty = 1$ (but with large coefficients in the objective), and \textbf{b)} \Wh{1} parameterized by $r$ and $t$ even when $s=\|E\|_\infty = 1$ and when the largest coefficient of the objective is given in unary.
  \end{enumerate}
\end{corollary}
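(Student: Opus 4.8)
The plan is to read the four claimed lower bounds straight off Lemma~\ref{lem:MIMO_as_nfold}. That lemma is, in particular, a polynomial-time (in fact linear-time) many-one reduction from a MIMO system in its $P$-representation to an equivalent generalized $N$-fold IP, translating the $P$-representation parameters into block parameters as $r=d$, $s=\Oh(M)$, $t=d+D+M$, $\|E\|_\infty=\Delta$, and $N=\|\vemu\|_1$. Its proof also extends each objective $f^i$ from $\R^d$ to the $t$ coordinates of a brick by letting it ignore the $D-d^i$ auxiliary coordinates and the $M$ slack coordinates; hence the extended objective is linear whenever $f^i$ was, separable convex or concave whenever $f^i$ was (adjoining coordinates on which the function is constant preserves separable convexity and concavity), and its largest coefficient is unchanged. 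In every construction in the proof of Proposition~\ref{prop:hardness} each of the $n$ item types has multiplicity $\mu^i=1$, so $N=n=\tau$ is polynomial in the input and the resulting huge $N$-fold IP is an explicit generalized $N$-fold IP of polynomial size; the reduction is therefore an ordinary polynomial-time reduction, and a parameterized reduction whenever the source problem carries a parameter, since $r,s,t$ are then bounded by a computable function of that parameter.

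Running the four hardness instances through this reduction gives the four statements. For part~(1) I start from the MIMO system built from \textsc{Unary Bin Packing} in part~\ref{prop:hardness:wh1}: there $d=k$, $d^i=k$ so $D=k$, the system~\eqref{eq:hardness:sum}--\eqref{eq:hardness:nonneg} has $\Oh(k)$ rows so $M=\Oh(k)$, and $\Delta=O\le\poly(n)$ is given in unary; Lemma~\ref{lem:MIMO_as_nfold} then produces an $N$-fold IP with $r,s,t=\Oh(k)$ and $\|E\|_\infty=O$ in unary, so the \W{1}-hardness of \textsc{Unary Bin Packing} parameterized by $k$ transfers. For part~(2) I start from the fixed-charge MIMO instance of part~\ref{prop:hardness:fixedcharge}: here $d=1$, $d^i=0$ so $D=0$, each type's polytope is a one-dimensional segment so $M=\Oh(1)$, $\Delta=1$, and the penalties $c^i=o_i$ are arbitrarily large, yielding an $N$-fold IP with $r,s,t=\Oh(1)$, $\|E\|_\infty=1$, a fixed-charge objective with large penalties, which is \NPh because \textsc{Partition} is. Parts~(3a) and~(3b) are the same two arguments with the separable concave quadratic MIMO instances of part~\ref{prop:hardness:concave}: from \textsc{Partition} we get $d=2$, $D=0$, $M=\Oh(1)$, $\Delta=1$, and objective coefficients of magnitude $\Theta(o_i^2)$ (large), hence $r,t=\Oh(1)$, $s=\|E\|_\infty=1$, and \NP-hardness; from \textsc{Unary Bin Packing} we get $d=k$, $D=0$, $M=\Oh(1)$, $\Delta=1$, and objective coefficients polynomially bounded in $n$ (hence unary), so $s=\|E\|_\infty=1$, $r,t=\Oh(k)$, and \W{1}-hardness parameterized by $r$ and $t$ with the largest objective coefficient in unary.

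To pin down the exact small constants stated in the corollary ($r=s=t=1$ in part~(2); $r=t=2$, $s=1$ in part~(3a)), I would, in those constant-dimension cases, transcribe the $N$-fold IP by hand rather than route through the slack-variable padding of Lemma~\ref{lem:MIMO_as_nfold}: when $d^i=0$ and there is at most one type-local constraint on a constant number of coordinates, the construction of Proposition~\ref{prop:hardness} is already essentially an $N$-fold IP — one variable (resp.\ two variables) per brick with nonnegativity bounds, at most one type-local row, and one global row $\sum_i\vex^i=\ven$ — so $s=1$, $t=1$ (resp.\ $t=2$), $r\le 2$, $\|E\|_\infty=1$. Done this way the fixed-charge objective is also genuinely fixed-charge (on a one-variable brick, ``nonzero brick'' and ``nonzero coordinate'' coincide), which sidesteps the mild point that the objective padded by Lemma~\ref{lem:MIMO_as_nfold} is constant in the slack coordinates and hence fixed-charge only on the first $d$ coordinates.

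The only thing needing attention — hence the main, though entirely routine, obstacle — is the bookkeeping: checking in each of the four constructions that $d$, $D$, and $M$ are bounded by a constant or by a computable function of the source parameter, so that the block parameters come out as claimed and the map is a legitimate \W{1}-reduction; confirming that the extended objective is of the promised type with unchanged largest coefficient; and, for the tight constants, selecting the cleanest inequality description of the low-dimensional polytopes. None of this requires a new idea beyond Lemma~\ref{lem:MIMO_as_nfold} and Proposition~\ref{prop:hardness}.
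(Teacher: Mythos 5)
Your approach is, at its core, the paper's: reduce from the hardness instances of Proposition~\ref{prop:hardness} by converting MIMO to an $N$-fold IP. But you go further and, rightly, notice that a blind invocation of Lemma~\ref{lem:MIMO_as_nfold} does \emph{not} yield the tight constants claimed in the corollary. Concretely, for the \textsc{Partition}-based instance of part~(2), $d=1$, $D=0$, and the segment $0 \le x \le o_i$ has $m^i=2$ inequality rows, so the lemma outputs $r=d=1$, $s=\Oh(M)$, $t=d+D+M=3$ rather than $r=s=t=1$; similarly for part~(3a), the simplex $P^i=\{x_1+x_2=o_i,\,x_1,x_2\ge 0\}$ gives $m^i=4$ and hence $t=6$, not $t=2$. (The paper's own statement just before the corollary asserts parameters ``$r=M$, $s=d$'' — which disagrees with Lemma~\ref{lem:MIMO_as_nfold}'s stated $r=d$, $s=2M$ and also does not produce the claimed constants — so the paper is papering over the same gap you found.) Your fix, transcribing the low-dimensional polytopes directly as $N$-fold IPs (bounds in $\vel,\veu$ rather than slack rows, at most one type-local row in $E^i_2$, $E^i_1$ the $d\times d$ identity) is exactly what is needed, and your observation that the lemma's padded objective ignores the slack coordinates and is therefore not \emph{literally} fixed-charge is another real subtlety the paper suppresses.

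Two small imprecisions to clean up: in your part~(3b) bookkeeping, $M$ is $\Oh(k)$, not $\Oh(1)$, once the $k$ nonnegativity rows of $P^i$ are counted — your conclusion $s=1$ is still right, but only because you are (implicitly) routing nonnegativity into the bounds rather than into $E^i_2$, which is the direct transcription, not the lemma. And for part~(2), if you insist on a literal $s=1$ you should say explicitly that $E^i_2$ is a single zero row (the brick has no type-local equality at all), which is a harmless padding. With those two points tightened, your argument is complete and, if anything, more careful than the paper's one-sentence citation of Lemma~\ref{lem:MIMO_as_nfold}.
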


\bibliographystyle{abbrvnat}
\bibliography{scheduling}

\end{document}